\documentclass{article}

\usepackage{amsmath,amsfonts,amssymb,amsthm,hyperref,enumerate,multicol,bm}
\usepackage{calligra,indentfirst,epsfig,lettrine}
\usepackage{caption}
\usepackage{hyperref}
\usepackage{relsize}
\usepackage[format=hang]{caption}
% \captionsetup{justification=raggedright,singlelinecheck=false}
\usepackage{tabls}
\usepackage{array}
\usepackage{longtable}
\calligra

%\DeclareFontFamily{U}{mathc}{}
%\DeclareFontShape{U}{mathc}{m}{it}%
%{<->s*[1.03] mathc10}{}
\addtolength{\hoffset}{-0.55cm}\setlength
{\hoffset}{-0.5in}
\addtolength{\voffset}{-0.45cm}\setlength{\voffset}{-0.5in}
\addtolength{\textwidth}{1.5cm} \addtolength{\textheight}{1.7cm}
\setlength{\textwidth}{6.8in} \setlength{\textheight}{9 in}
\setlength {\topmargin}{0in} \setlength{\evensidemargin}{0.45in}
\setlength{\oddsidemargin}{0.45in} \setlength {\columnsep}{6mm}

\usepackage{comment}
\setcounter{MaxMatrixCols}{50}
\newcommand{\Mod}[1]{\ (\mathrm{mod}\ #1)}

\newcommand{\etal}{\textit{et al.}}
\newcommand{\qbin}[3]{\genfrac{[}{]}{0pt}{}{#1}{#2}_{#3}}
\makeatletter
\newcommand*{\rom}[1]{\expandafter\@slowromancap\romannumeral #1@}
\makeatother
\usepackage{mathtools}
\usepackage{graphicx}
\usepackage{subfigure}
\usepackage{xcolor}
\usepackage{tensor}
\usepackage{color, soul}
\usepackage{BOONDOX-uprscr}

\DeclareMathAlphabet{\mathpzc}{OT1}{pzc}{m}{it}

\usepackage{upgreek}

\allowdisplaybreaks
\numberwithin{equation}{section}

\newtheorem{thm}{Theorem}[section]

\newtheorem{prop}{Proposition}[section]
\newtheorem{cor}{Corollary}[section]
\newtheorem{lemma}{Lemma}[section]
\newtheorem{example}{Example}[section]
\newtheorem{remark}{Remark}[section]
% \linespread{1.2}

\begin{document}

\title{On Galois LCD codes and LCPs of codes
\\ over mixed alphabets}

\author{Leijo Jose{\footnote{Email address:~\urlstyle{same}\href{mailto:leijoj@iiitd.ac.in}{leijoj@iiitd.ac.in}}} ~ and ~Anuradha Sharma{\footnote{Corresponding Author, Email address:~\urlstyle{same}\href{mailto:anuradha@iiitd.ac.in}{anuradha@iiitd.ac.in}} }  \\
 {Department of  Mathematics, IIIT-Delhi}\\{New Delhi 110020, India}}
\date{}
\maketitle
\date{}

	\maketitle
\begin{abstract}\label{Abstract}
Let $\mathtt{R}$ be a finite commutative chain ring with the maximal ideal $\gamma\mathtt{R}$ of nilpotency index $e\geq 2,$ and let $\check{\mathtt{R}}=\mathtt{R}/\gamma^{s}\mathtt{R}$ for some positive integer $ s< e.$ In this paper, we study and characterize Galois $\mathtt{R}\check{\mathtt{R}}$-LCD codes of an arbitrary block-length.  We  show that each weakly-free $\mathtt{R}\check{\mathtt{R}}$-linear code is monomially equivalent to a Galois $\mathtt{R}\check{\mathtt{R}}$-LCD code when $|\mathtt{R}/\gamma\mathtt{R}|>4,$ while it is monomially equivalent to a   Euclidean $\mathtt{R}\check{\mathtt{R}}$-LCD code when  $|\mathtt{R}/\gamma\mathtt{R}|>3.$  We also  obtain enumeration formulae for all Euclidean and Hermitian $\mathtt{R}\check{\mathtt{R}}$-LCD codes of an arbitrary block-length.   With the help of these enumeration formulae, we classify all Euclidean $\mathbb{Z}_4 \mathbb{Z}_{2}$-LCD codes   and $\mathbb{Z}_9 \mathbb{Z}_{3}$-LCD codes of block-lengths $(1,1),$ $(1,2),$ $(2,1),$ $(2,2),$ $(3,1)$ and $(3,2)$  and all Hermitian $\frac{\mathbb{F}_{4}[u]}{\langle u^2\rangle} \;\mathbb{F}_{4}$-LCD codes of block-lengths $(1,1),$ $(1,2),$ $(2,1)$ and $(2,2)$ up to monomial equivalence. Apart from this, we  study and characterize LCPs of 
$\mathtt{R}\check{\mathtt{R}}$-linear codes.  We  further study a direct sum masking scheme constructed using  LCPs of $\mathtt{R}\check{\mathtt{R}}$-linear codes and obtain its security threshold against fault injection and side-channel attacks.  We also discuss another application of LCPs of $\mathtt{R}\check{\mathtt{R}}$-linear codes in coding for the noiseless two-user adder channel.
\end{abstract}
{\bf Keywords:} Codes over mixed alphabets; Galois LCD codes; Permutation equivalence of codes; LCPs;  Direct sum masking schemes. 
\\{\bf 2020 Mathematics Subject
 Classification:} 11T71, 94B60.

\section{Introduction}
Linear codes with complementary duals (or LCD codes) over finite fields are  introduced  by Massey \cite{Massey}. He provided an algebraic characterization of LCD codes and showed that asymptotically good LCD codes exist. He also showed that binary LCD codes provide an optimum linear coding solution for the two-user binary adder channel. Later, Bringer \etal~\cite{Bringer} showed that LCD codes play a significant role in counter-measures to passive and active side-channel analyses on embedded cryptosystems. Since then,  the study of LCD codes has attracted much attention \cite{Harada,Carlet1}. Recently, Liu~\etal~\cite{Liu2018} studied Galois LCD codes over finite fields. They characterized the generator polynomials and defining sets of Galois LCD constacyclic codes and derived a necessary and sufficient condition for a constacyclic code to be Galois LCD. They also identified two classes of Galois LCD   maximum distance separable (MDS) codes and a class of Hermitian LCD MDS codes within the family of constacyclic codes. 
Liu and Liu  \cite{Liu}  studied Euclidean  LCD codes over finite commutative chain rings.   The problem of determining   an enumeration formula for  LCD codes over finite commutative chain rings has recently gathered significant interest, as it is useful in classifying these codes up to monomial equivalence \cite{Harada,Yadav1}. 
 
 In another related direction, Ngo \etal~\cite{Ngo} introduced linear complementary pairs (LCPs) of codes over finite fields. They also studied a direct sum masking scheme using LCPs of codes over finite fields  and obtained its security threshold against fault injection and side-channel attacks.  Later, Carlet~\etal~\cite{Carlet2018} studied and characterized LCPs of constacyclic and quasi-cyclic codes. They proved that for an LCP $(C,D)$ of constacyclic codes,  the code $C$ and the Euclidean dual code of $D$ are monomially equivalent.  They also observed that this result holds for a special class of quasi-cyclic codes, \textit{viz.} 2D cyclic codes, and not for quasi-cyclic codes in general. They also obtained a linear programming bound for binary LCPs of codes.  
 In a recent work, Hu and Liu \cite{Hu2021} studied LCP of codes over finite commutative rings. They also characterized LCPs of  constacyclic and quasi-cyclic codes over finite chain rings. In the same work, they gave a characterization of LCPs of codes over  finite principal ideal rings. Under suitable conditions, they  established a judging criterion for pairs of cyclic codes over the principal ideal ring $\mathbb{Z}_k$ to form an LCP and obtained an MDS LCP of cyclic codes over $\mathbb{Z}_{15}.$

Recently, Benbelkacem \etal~\cite{Benbel} generalized the notion of LCD codes  and studied Euclidean complementary dual  codes in the ambient space with mixed binary and quaternary alphabets, which we will refer to as Euclidean $\mathbb{Z}_4\mathbb{Z}_2$-linear  codes with complementary duals (or Euclidean $\mathbb{Z}_4\mathbb{Z}_2$-LCD codes). They provided methods to construct binary LCD codes by studying Gray images of  Euclidean $\mathbb{Z}_4\mathbb{Z}_2$-LCD codes. They also obtained several distance-optimal binary LCD codes as Gray images of Euclidean $\mathbb{Z}_4\mathbb{Z}_2$-LCD codes.  Rif\`{a} and Pujol \cite{Rifa} first introduced and studied $\mathbb{Z}_4\mathbb{Z}_2$-linear codes  as abelian translation-invariant propelinear codes. Later, Borges \etal~\cite{Borges1} thoroughly investigated $\mathbb{Z}_4\mathbb{Z}_2$-linear codes. They obtained the fundamental parameters of these codes and standard forms of their generator and parity-check matrices. They also studied  binary Gray images of $\mathbb{Z}_4\mathbb{Z}_2$-linear codes.  Since then, there has been an enormous interest in the study of $\mathbb{Z}_4\mathbb{Z}_2$-linear codes and their generalizations \cite{Siap3,Siap4,Borges}. Recently, Liu and Hu \cite{Liu2024} studied LCPs of $\mathbb{Z}_4\mathbb{Z}_2$-linear codes and  derived a necessary and sufficient condition under which two $\mathbb{Z}_4\mathbb{Z}_2$-linear codes form an LCP. They also studied their binary Gray images and  constructed binary LCPs of codes in certain special cases. They further  discussed an application of LCPs of $\mathbb{Z}_4\mathbb{Z}_2$-linear codes to the noiseless two-user adder channel.
% In a related work, Hou \etal~\cite{Hou} studied cyclic $\frac{\mathbb{Z}_2[u]}{<u^3>}\;\mathbb{Z}_2$-linear codes and their special class consisting of LCD codes. They also constructed a class of asymptotically good cyclic $\frac{\mathbb{Z}_2[u]}{<u^3>}\;\mathbb{Z}_2$-linear codes and provided  some sufficient conditions under which a  $\frac{\mathbb{Z}_2[u]}{<u^3>}\;\mathbb{Z}_2$-linear code is LCD. They further obtained a class of $\frac{\mathbb{Z}_2[u]}{<u^3>}\;\mathbb{Z}_2$-LCD codes whose Gray images are binary LCD codes and obtained optimal binary LCD codes from this particular class of codes.

In a related work \cite{Siapz2}, Aydogdu and Siap  generalized $\mathbb{Z}_4\mathbb{Z}_2$-linear codes to $\mathbb{Z}_{2^s}\mathbb{Z}_{2}$-linear codes and obtained  standard forms of their generator and parity-check matrices, where $s \geq 2$ is an integer. They also obtained two upper bounds on  Lee distances of $\mathbb{Z}_{2^s}\mathbb{Z}_{2}$-linear codes. Later, Aydogdu and Siap \cite{Siap1}  generalized $\mathbb{Z}_{2^s}\mathbb{Z}_{2}$-linear codes to $\mathbb{Z}_{p^r}\mathbb{Z}_{p^s}$-linear codes and obtained standard forms of their generator and parity-check matrices, where  $p$ is a prime number and $r,s$ are positive integers with $r > s.$  They also obtained two upper bounds on their Lee distances. 

Now, let $\mathtt{R}$ be a finite commutative chain ring with the maximal ideal $\gamma\mathtt{R}$ of nilpotency index $e\geq 2,$ and let $\check{\mathtt{R}}=\mathtt{R}/\gamma^{s}\mathtt{R}$ be the quotient ring, where $s $ is a positive integer satisfying $s<e.$   For positive integers $\mathpzc{a}$ and $\mathpzc{b},$ the direct sum $\mathtt{R}^{\mathpzc{a}}\oplus \check{\mathtt{R}}^{\mathpzc{b}}$ can be viewed as an $\mathtt{R}$-module. We will refer to $\mathtt{R}$-submodules of $\mathtt{R}^{\mathpzc{a}}\oplus \check{\mathtt{R}}^{\mathpzc{b}}$ as $\mathtt{R}\check{\mathtt{R}}$-linear codes of block-length $(\mathpzc{a},\mathpzc{b}).$ Borges \etal~\cite{Borges} studied $\mathtt{R}\check{\mathtt{R}}$-linear and cyclic  $\mathtt{R}\check{\mathtt{R}}$-linear codes and their Euclidean dual codes.   In another related work, Bajalan \etal~\cite{Bajalan1} determined a standard form of the parity-check matrix for  $\mathtt{R}\check{\mathtt{R}}$-linear codes and 
described the structure of Euclidean self-dual $\mathtt{R}\check{\mathtt{R}}$-linear codes. By defining a distance-preserving Gray map from the ambient space with mixed alphabets $\mathtt{R}$ and $\check{\mathtt{R}}$ into a Hamming space over a finite
field, they obtained two upper bounds on homogeneous distances of $\mathtt{R}\check{\mathtt{R}}$-linear codes. 
Recently, Bajalan \etal~\cite{Bajalan}  characterized  Galois $\mathtt{R}\check{\mathtt{R}}$-LCD codes  under a certain constraint (see \cite[Th. 1]{Bajalan}) and Galois-invariant $\mathtt{R}\check{\mathtt{R}}$-linear codes. They also derived a generalized Delsarte's Theorem for $\mathtt{R}\check{\mathtt{R}}$-linear codes.  Further, by placing a Gray map from an ambient space with mixed alphabets of $\mathtt{R}$ and $\check{\mathtt{R}}$ into a Hamming space over a finite field \cite{Jitman}, they studied  Gray images of  $\mathtt{R}\check{\mathtt{R}}$-linear  codes  and established conditions under which the Gray image of a Galois $\mathtt{R}\check{\mathtt{R}}$-LCD code is a Galois LCD code in certain special cases.  In a recent work, Bajalan \etal~\cite{Bajalan2023} studied LCPs of $\mathtt{R}\check{\mathtt{R}}$-linear codes. They also  provided a characterization of LCPs of $\mathtt{R}\check{\mathtt{R}}$-linear codes, which extends Theorem 3.19 of Liu and Hu \cite{Liu2024}. Motivated by the works mentioned above, we will study and characterize Galois $\mathtt{R}\check{\mathtt{R}}$-LCD codes  and  obtain explicit enumeration formulae for  Euclidean and Hermitian $\mathtt{R}\check{\mathtt{R}}$-LCD codes. With the help of these enumeration formulae and using Magma, we will classify these codes up to monomial equivalence in certain specific cases. Besides this, we will study LCPs of 
$\mathtt{R}\check{\mathtt{R}}$-linear codes and provide a characterization of LCPs of $\mathtt{R}\check{\mathtt{R}}$-linear codes, which is different from the one derived in Theorem 1 of Bajalan \etal~\cite{Bajalan2023} and Theorem 3.19 of Liu and Hu \cite{Liu2024}. We will also study a   direct sum masking scheme using LCPs of $\mathtt{R}\check{\mathtt{R}}$-linear codes and obtain its security threshold against fault injection and side-channel attacks. We will  discuss another application of LCPs of $\mathtt{R}\check{\mathtt{R}}$-linear codes to the noiseless two-user adder channel.

 This paper is organized as follows: In Section \ref{Preliminaries}, we state  some preliminaries needed to prove our main results. 
In Section \ref{Section3}, we first observe that the constraint imposed  in \cite[Th. 1]{Bajalan} is not necessary for the existence of a Galois $\mathtt{R}\check{\mathtt{R}}$-LCD code (see Example \ref{Example4.2}). Without assuming the constraint imposed in \cite[Th. 1]{Bajalan}, we further show that Theorem  1 of Bajalan \textit{et al.} \cite{Bajalan}  holds (Theorems \ref{Theorem4.1}-\ref{Theorem4.2} and Remarks \ref{Rk1}-\ref{Rk3}).  We also show that every weakly-free $\mathtt{R}\check{\mathtt{R}}$-linear code  of block-length $(\mathpzc{a},\mathpzc{b})$ is monomially equivalent to a Galois $\mathtt{R}\check{\mathtt{R}}$-LCD code  when $|\mathtt{R}/\gamma \mathtt{R}|>4$ (Corollary \ref{corH}). We further show that every weakly-free $\mathtt{R}\check{\mathtt{R}}$-linear code  of block-length $(\mathpzc{a},\mathpzc{b})$ is monomially equivalent to a Euclidean $\mathtt{R}\check{\mathtt{R}}$-LCD code  when $|\mathtt{R}/\gamma \mathtt{R}|>3$ (Corollary \ref{Eucequivalence}). In Section \ref{SectionE}, we obtain an enumeration formula for Euclidean $\mathtt{R}\check{\mathtt{R}}$-LCD codes of block-length $(\mathpzc{a},\mathpzc{b})$ (Theorem \ref{Theorem4.4}). In Section \ref{SectionH}, we obtain an  enumeration formula for  Hermitian $\mathtt{R}\check{\mathtt{R}}$-LCD codes of block-length $(\mathpzc{a},\mathpzc{b})$ (Theorem \ref{Theorem4.8}).  With the help of these enumeration formulae and using Magma \cite{Magma}, we classify all Euclidean $\mathbb{Z}_4 \mathbb{Z}_{2}$-LCD codes  and $\mathbb{Z}_9 \mathbb{Z}_{3}$-LCD codes of block-lengths $(1,1),$ $(1,2),$ $(2,1),$ $(2,2),$ $(3,1)$ and $(3,2),$ and all Hermitian $\frac{\mathbb{F}_{4}[u]}{\langle u^2\rangle} \;\mathbb{F}_{4}$-LCD codes of block-lengths $(1,1),$ $(1,2),$ $(2,1)$ and $(2,2)$ up to monomial equivalence (Tables \ref{table1}-\ref{table3}).   Generator matrices and Lee distances of all monomially inequivalent codes belonging to above-mentioned classes of $\mathtt{R}\check{\mathtt{R}}$-linear codes are provided  in Appendices A-C. In Section \ref{LCP}, we study linear complementary pairs (LCPs) of $\mathtt{R}\check{\mathtt{R}}$-linear codes and derive a necessary and sufficient condition under which a pair of $\mathtt{R}\check{\mathtt{R}}$-linear codes  form an LCP (Theorem \ref{Thm6.3}). It is worth noting that Theorem \ref{Thm6.3} provides a characterization of LCPs of $\mathtt{R}\check{\mathtt{R}}$-linear codes, which is different from the one derived in Theorem 1 of Bajalan \etal~\cite{Bajalan2023} and Theorem 3.19 of Liu and Hu \cite{Liu2024}. In Section \ref{DSM}, we study a direct sum masking scheme based on LCPs of $\mathtt{R}\check{\mathtt{R}}$-linear codes and obtain its security threshold against fault injection and side-channel attacks (Theorem \ref{Thm6.04}). We also provide an example, where the direct sum masking scheme constructed using an LCP of non-separable $\mathtt{R}\check{\mathtt{R}}$-linear codes outperforms its counterparts constructed using the corresponding LCPs of separable $\mathtt{R}\check{\mathtt{R}}$-linear codes and linear codes over $\mathtt{R}$ in terms of the security threshold (see Example \ref{Eg.6.1}).  In Section \ref{tuser}, we discuss an application of LCPs of $\mathtt{R}\check{\mathtt{R}}$-linear codes to the noiseless two-user adder channel, which extends Section 6 of Liu and Hu \cite{Liu2024}.   In Section \ref{Conclusion}, we mention a brief conclusion and discuss some future research directions. 

\section{Some preliminaries}\label{Preliminaries} In this section, we will state some preliminaries needed to derive our main results. For this, we assume, throughout this paper, that $\mathtt{S}$ and $\mathtt{R}$ are finite commutative chain rings such that $\mathtt{R}$ is the Galois extension of $\mathtt{S}$ of degree $w\geq 2,$ \textit{i.e.,} there exists a monic basic irreducible polynomial $f(x)$  of degree $w$ over $\mathtt{S}$  such that $\mathtt{R}\simeq\mathtt{S}[x]/\langle f(x)\rangle$ \cite{MacDonald}. We further assume that the chain ring $\mathtt{R}$ has maximal ideal $\gamma \mathtt{R}$ of nilpotency index $e \geq 2,$  \textit{i.e.,} $e \geq 2$ is the smallest integer satisfying $\gamma^e=0.$

Now, let $\mathscr{M}$ be a finite $\mathtt{R}$-module. A non-empty subset $\{v_1,v_2,\ldots,v_n\}$ of $\mathscr{M}$ is said to be independent over $\mathtt{R}$ if there exist $r_1, r_2,\ldots,r_n\in\mathtt{R}$ satisfying  $r_1v_1+r_2v_2+\cdots+r_nv_n=0,$ then we must have $r_iv_i=0$ for $1\leq i\leq n.$ A non-empty  subset $\{v_1,v_2,\ldots,v_n\}$ of $\mathscr{M}$ is called a basis of $\mathscr{M}$ if the set $\{v_1,v_2,\ldots,v_n\}$ spans $\mathscr{M}$ and is independent over $\mathtt{R}.$   Further, we say that a non-zero vector $v\in\mathscr{M}$ has period $\gamma^i$ if it satisfies $\gamma^iv=0$ and $\gamma^{i-1}v\neq0,$ where $1\leq i\leq e.$ 
% \begin{remark}\cite{Honold1}\label{period} Let $\mathscr{M}$ be a finite $\mathtt{R}$-module.
% If $v\in\mathscr{M}$ has period $\gamma^i$ for $1\leq i\leq e,$ then $v\mathtt{R}\simeq\mathtt{R}/\gamma^{i}\mathtt{R}$ as $\mathtt{R}$-modules. Further, if $\{v_1,v_2,\ldots,v_n\}$ is a basis of $\mathscr{M}$ and the element $v_i \in \mathscr{M}$ has period $\gamma^{a_i}$ for $1 \leq i \leq n,$ then we have $\mathscr{M}\simeq \mathtt{R}/\gamma^{a_1}\mathtt{R} \oplus \mathtt{R}/\gamma^{a_2}\mathtt{R} \oplus \cdots \oplus \mathtt{R}/\gamma^{a_n}\mathtt{R} .$
% \end{remark}
\begin{lemma}\label{Lemma4.5}
Let $\mathscr{M}$ be an $\mathtt{R}$-module, and let $v_1,v_2\in\mathscr{M}$ be such that the period of $v_1$ is $\gamma^i$ and the period of $v_2$ is $\gamma^j,$ where $1\leq j\leq i\leq e.$ The following hold.
\begin{enumerate}
    \item[(a)] The set $\{v_1,v_2\}$ is independent over $\mathtt{R}$ if and only if the set $\{v_1-r_1v_2,v_2\}$ is independent over $\mathtt{R}$ for all $r_1\in\mathtt{R}.$
    \item[(b)] The set $\{v_1,v_2\}$ is independent over $\mathtt{R}$ if and only if the set $\{v_1,v_2-r_2v_1\}$ is independent over $\mathtt{R}$ for all $r_2\in\gamma^{i-j}\mathtt{R}.$
    \item[(c)] If $i=j$ and the set $\{v_1,v_2\}$ is  independent over $\mathtt{R},$ then  for all $w_1,w_2\in\mathscr{M},$ the set $\{v_1+\gamma^{e-i+1}w_1,v_2+\gamma^{e-i+1}w_2\}$ is also independent over $\mathtt{R}$ and  both  the elements $v_1+\gamma^{e-i+1}w_1$ and $v_2+\gamma^{e-i+1}w_2$ have the same period $\gamma^{i}.$  The converse also holds.
\end{enumerate}
\end{lemma}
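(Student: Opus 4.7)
The plan is to prove each part directly from the definition of independence, exploiting the chain-ring structure $\mathtt{R} \supseteq \gamma \mathtt{R} \supseteq \gamma^{2} \mathtt{R} \supseteq \cdots$ of annihilators. For part (a), the forward implication is the substantive direction: fixing $r_1 \in \mathtt{R}$ and starting from $s_1(v_1 - r_1 v_2) + s_2 v_2 = 0$, I would rewrite this as $s_1 v_1 + (s_2 - s_1 r_1) v_2 = 0$, apply independence of $\{v_1, v_2\}$ to obtain $s_1 v_1 = 0$ and $(s_2 - s_1 r_1) v_2 = 0$, and then use $j \leq i$ to deduce $s_1 \in \gamma^{i} \mathtt{R} \subseteq \gamma^{j} \mathtt{R}$; this forces $s_1 r_1 v_2 = 0$, whence $s_1(v_1 - r_1 v_2) = 0$ and $s_2 v_2 = 0$. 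The reverse direction is immediate by specializing $r_1 = 0$ in the universal statement on the right-hand side. Part (b) follows the same template with the roles of $v_1$ and $v_2$ interchanged: from $a v_1 + b(v_2 - r_2 v_1) = 0$ rearranged as $(a - b r_2) v_1 + b v_2 = 0$, independence yields $b \in \gamma^{j} \mathtt{R}$, and the restriction $r_2 \in \gamma^{i-j}\mathtt{R}$ in the hypothesis is exactly what guarantees $b r_2 \in \gamma^{j}\mathtt{R} \cdot \gamma^{i-j}\mathtt{R} = \gamma^{i}\mathtt{R}$, so $b r_2 v_1 = 0$. The converse again uses $r_2 = 0 \in \gamma^{i-j}\mathtt{R}$.

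For part (c), I would first verify that the perturbed vectors still have period $\gamma^{i}$ by computing $\gamma^{i}(v_k + \gamma^{e-i+1} w_k) = \gamma^{i} v_k + \gamma^{e+1} w_k = 0$ and $\gamma^{i-1}(v_k + \gamma^{e-i+1} w_k) = \gamma^{i-1} v_k + \gamma^{e} w_k = \gamma^{i-1} v_k \neq 0$ for $k = 1, 2$. For the independence claim, from $a(v_1 + \gamma^{e-i+1} w_1) + b(v_2 + \gamma^{e-i+1} w_2) = 0$ I would prove by induction on $k = 0, 1, \ldots, i - 1$ that $a, b \in \gamma^{k+1}\mathtt{R}$. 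At each step, multiplying the relation by $\gamma^{i-k-1}$ makes the perturbation term $\gamma^{e-k}(a w_1 + b w_2)$ vanish (at the base $k = 0$ this is just $\gamma^{e}(a w_1 + b w_2) = 0$, and for $k \geq 1$ the inductive hypothesis $a, b \in \gamma^{k} \mathtt{R}$ lifts $a w_1 + b w_2$ into $\gamma^{k} \mathscr{M}$, so $\gamma^{e-k}(a w_1 + b w_2) \in \gamma^{e} \mathscr{M} = 0$). Once the perturbation vanishes, independence of $\{v_1, v_2\}$ forces $\gamma^{i-k-1}a \in \gamma^{i}\mathtt{R}$ and similarly for $b$, giving $a, b \in \gamma^{k+1}\mathtt{R}$. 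At $k = i - 1$ we obtain $a, b \in \gamma^{i}\mathtt{R}$, so $a v_1 = b v_2 = 0$ and the perturbations $\gamma^{e-i+1} a w_k$ lie in $\gamma^{e+1}\mathscr{M} = 0$; hence $a(v_1 + \gamma^{e-i+1} w_1) = 0$ and $b(v_2 + \gamma^{e-i+1} w_2) = 0$. The converse drops out by applying the forward implication to $\{v_1 + \gamma^{e-i+1}w_1,\, v_2 + \gamma^{e-i+1} w_2\}$ with perturbations $-w_1, -w_2$, since these also have period $\gamma^{i}$ by the same computation and their further perturbation recovers $\{v_1, v_2\}$.

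The main obstacle is the inductive bookkeeping in (c): the exponent $e - i + 1$ is tight, and one must carefully coordinate the $\gamma$-adic valuation that $a, b$ gain at each step with the multiplier $\gamma^{i-k-1}$ applied to the relation, so that the perturbation lands in $\gamma^{e}\mathscr{M} = 0$ while enough of the relation survives to invoke independence of $\{v_1, v_2\}$. Parts (a) and (b) are comparatively routine once the annihilator formulation of independence is combined with the hypotheses $j \leq i$ and $r_2 \in \gamma^{i-j}\mathtt{R}$, respectively.
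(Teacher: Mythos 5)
Your proof is correct. The paper itself offers no argument for this lemma (it is dismissed as ``a straightforward exercise''), so there is no authorial route to compare against; your write-up is a valid and complete filling-in of the details. Two points that your sketch uses implicitly are worth making explicit when you write it out, and both are fine: (i) for an element $v$ of period $\gamma^{i}$ in an $\mathtt{R}$-module over a chain ring, the annihilator is \emph{exactly} $\gamma^{i}\mathtt{R}$ (it is an ideal, hence some $\gamma^{k}\mathtt{R}$, and the period condition pins down $k=i$) --- this is what converts ``$s_1v_1=0$'' into ``$s_1\in\gamma^{i}\mathtt{R}$'' in parts (a) and (b); and (ii) in the inductive step of part (c), the passage from $\gamma^{i-k-1}a\in\gamma^{i}\mathtt{R}$ to $a\in\gamma^{k+1}\mathtt{R}$ uses that the chain $\mathtt{R}\supsetneq\gamma\mathtt{R}\supsetneq\cdots\supsetneq\gamma^{e}\mathtt{R}=0$ is strictly decreasing up to index $e$, so that writing $a=\gamma^{m}u$ with $u$ a unit forces $m\geq k+1$; since $i\leq e$ this never degenerates. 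The exponent bookkeeping in (c) is exactly right: at stage $k$ the multiplier $\gamma^{i-k-1}$ turns the perturbation into $\gamma^{e-k}(aw_1+bw_2)$, which dies because $aw_1+bw_2\in\gamma^{k}\mathscr{M}$, and the terminal case $k=i-1$ delivers $a,b\in\gamma^{i}\mathtt{R}$, killing both the $v$-terms and the $\gamma^{e-i+1}w$-terms. The converse of (c) via re-perturbing by $-w_1,-w_2$ is also sound, since the ambient hypothesis already guarantees $v_1,v_2$ have period $\gamma^{i}$, hence so do the perturbed vectors.
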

\begin{proof} Its proof is a straightforward exercise.
% It is easy to see that the $\mathtt{R}$-submodules $\langle v_1,v_2\rangle = \langle v_1-r_1v_2,v_2\rangle=\langle v_1,v_2-r_2v_1\rangle.$
% \begin{enumerate}
% \item [(a)]Suppose $\{v_1,v_2\}$ is independent. Clearly, the period of $v_1-r_1v_2$ is $\gamma^i.$ Now suppose there exists $c_1,c_2\in\mathtt{R},$ such that
% $c_1(v_1-r_1v_2)+c_2v_2=0$ $\textit{i.e.,}$ $c_1v_1+(c_2-r_1c_1)v_2 =0.$ Since $\{v_1,v_2\}$ is independent, we have $c_1v_1=0$ and $(c_2-r_1c_1)v_2 =0.$ Since the period of $v_1$ is $\gamma^i$ and the period of $v_2$ is $\gamma^j,$ we must have
% $c_1\in\gamma^i\mathtt{R}$ and $c_2-r_1c_1\in\gamma^j\mathtt{R},$ further this would imply $c_2\in\gamma^j\mathtt{R}.$ Thus, we have if $c_1(v_1-r_1v_2)+c_2v_2=0,$ then $c_1(v_1-r_1v_2)=0$ and $c_2v_2=0.$ The proof of the converse part follows using the same argument.
% \item[(b)] Suppose $\{v_1,v_2\}$ is independent. Clearly, the period of $v_2-r_2v_1$ is $\gamma^j.$ Now suppose there exists $c_1,c_2\in\mathtt{R},$ such that 
% $c_1v_1+c_2(v_2-r_2v_1)=0~\textit{i.e.,}~(c_1-c_2r_2)v_1+c_2v_2 =0.$ Since $\{v_1,v_2\}$ is independent with respective periods, we must have $c_1-c_2r_2\in\gamma^i\mathtt{R}\text{ and }c_2\in\gamma^j\mathtt{R},$ further this would imply $c_1\in\gamma^i\mathtt{R}.$ Thus, we have if $c_1v_1+c_2(v_2-r_2v_1)=0,$ then $c_1v_1=0$ and $c_2(v_2-
% r_2v_1)=0.$ The proof of the converse part follows using the same argument.
% \end{enumerate}
\end{proof}
% \begin{lemma}\label{Lemma4.6}
% Let $\mathscr{M}$ be an $\mathtt{R}$-module, and let $\{v_1,v_2\}$ be an independent subset of $\mathscr{M}.$ Suppose that both $v_1$ and $v_2$ have the same period $\gamma^{i},$ where $1 \leq i \leq e.$  For all $w_1,w_2\in\mathscr{M},$ the set $\{v_1+\gamma^{i+1}w_1,v_2+\gamma^{i+1}w_2\}$ is also an independent subset of $\mathscr{M}$ and  both  the elements $v_1+\gamma^{i+1}w_1$ and $v_2+\gamma^{i+1}w_2$ have the same period $\gamma^{i}.$  The converse also holds.
% \end{lemma}
% \begin{proof}Proof is trivial.
% Cleary, $v_1+\gamma^{i+1}w_1$ and $v_2+\gamma^{i+1}w_2$ have the same period $\gamma^{e-i}.$ Now suppose there exists $c_1,c_2\in\mathtt{R},$ such that
% $c_1(v_1+\gamma^{i+1}w_1)+c_2(v_2+\gamma^{i+1}w_2)=0.$ If $c_1,c_2\in\gamma^{e-i}\mathtt{R},$ we are done. On the contrary, suppose $c_1,c_2\in\gamma^{j}\mathtt{R}$ and $c_1,c_2\notin\gamma^{j+1}\mathtt{R}$ for some $1\leq j\leq e-i-1.$ Then, we have $\gamma^jc'_1(v_1+\gamma^{i+1}w_1)+\gamma^jc'_2(v_2+\gamma^{i+1}w_2)=0.$ Without loss of generality, assume that $c_1'\notin\gamma\mathtt{R}.$ Further, by multiplying the previous equation by $\gamma^{e-i-j-1},$ we get that
% $\gamma^{e-i-1}c'_1v_1+\gamma^{e-i-1}c'_2v_2=0.$ Since $\{v_1,v_2\}$ is independent, we have $
% \gamma^{e-i-1}c'_1\in\gamma^{e-i}\mathtt{R}$ which holds only when  
% $c'_1\in\gamma\mathtt{R}$ which is a contradiction. The proof of the converse part follows using the same argument.
% \end{proof}
From now on, let $\check{\mathtt{R}}$ denote the quotient ring $\mathtt{R}/\gamma^{s}\mathtt{R}$ for some positive integer $s < e.$  Clearly, $\check{\mathtt{R}}$ is also a chain ring with the maximal ideal generated by the element $\gamma+\gamma^{s}\mathtt{R}\in \check{\mathtt{R}}$ of nilpotency index $s.$ One can easily see that the  ring  $\check{\mathtt{R}}$ can be embedded into $\mathtt{R}.$ So for our convenience,   we will identify each element $a+\gamma^{s}\mathtt{R} \in \check{\mathtt{R}}$ with its representative $a \in \mathtt{R},$ and we will perform addition and multiplication in $\check{\mathtt{R}}$ modulo $\gamma^{s}.$   Under this identification, we assume that the  Teichm$\ddot{u}$ller sets of $\mathtt{R}$ and $\check{\mathtt{R}}$ are equal, which we will denote by $\mathcal{T}.$  Similarly, we also assume that the residue fields of $\mathtt{R}$ and $\check{\mathtt{R}}$ are equal, which we will denote by $\overline{\mathtt{R}}.$
Further, let  $\mathpzc{a}$ and $\mathpzc{b}$ be fixed positive integers satisfying $\mathpzc{a}+\mathpzc{b}=\mathpzc{n},$ and  let us define the ring $\mathbf{M}=\mathtt{R}^\mathpzc{a}\oplus \check{\mathtt{{R}}}^\mathpzc{b}.$ Note that each element $m \in \mathbf{M}$ can be uniquely expressed as an $\mathpzc{n}$-tuple $m=(\mathtt{c}_1,\mathtt{c}_2,\ldots,\mathtt{c}_\mathpzc{a}|\mathtt{d}_{1},\mathtt{d}_{2},\ldots,\mathtt{d}_\mathpzc{b})\in\mathbf{M},$ where $\mathtt{c}_i\in\mathtt{R}$ for $1\leq i\leq\mathpzc{a}$ and $\mathtt{d}_j\in\check{\mathtt{R}}$ for $1\leq j\leq\mathpzc{b}.$ The $\mathpzc{a}$-tuple $(\mathtt{c}_1,\mathtt{c}_2,\ldots,\mathtt{c}_\mathpzc{a})\in \mathtt{R}^{\mathpzc{a}}$ is called the first block of $m$ and the $\mathpzc{b}$-tuple $(\mathtt{d}_{1},\mathtt{d}_{2},\ldots,\mathtt{d}_\mathpzc{b})\in \check{\mathtt{R}}^{\mathpzc{b}}$ is called the second block of $m.$ One can easily observe that the ring $\mathbf{M}$ can be viewed as an $\mathtt{R}$-module under the component-wise addition and the component-wise scalar multiplication, defined as follows:
\vspace{-2mm}\begin{eqnarray*} m_1+m_2 &=&(\mathtt{c}_1+\mathtt{c}'_1,\mathtt{c}_2+\mathtt{c}'_2,\ldots,\mathtt{c}_\mathpzc{a}+\mathtt{c}'_\mathpzc{a}|
\mathtt{d}_{1}+\mathtt{d}'_{1},\mathtt{d}_{2}+\mathtt{d}'_{2},\ldots,\mathtt{d}_\mathpzc{b}+\mathtt{d}'_\mathpzc{b})\text{ ~and}\\ \mathtt{r}m_1 & = & (\mathtt{r}\mathtt{c}_1,\mathtt{r}\mathtt{c}_2,\ldots,\mathtt{r}\mathtt{c}_\mathpzc{a}|\mathtt{r}\mathtt{d}_{1},\mathtt{r}\mathtt{d}_{2},\ldots,\mathtt{r}\mathtt{d}_\mathpzc{b})\vspace{-2mm}\end{eqnarray*}
for all $m_1=(\mathtt{c}_1,\mathtt{c}_2,\ldots,\mathtt{c}_\mathpzc{a}|\mathtt{d}_{1},\mathtt{d}_{2},\ldots,\mathtt{d}_\mathpzc{b})$ and $m_2=(\mathtt{c}'_1,\mathtt{c}'_2,\ldots,\mathtt{c}'_\mathpzc{a}|\mathtt{d}'_{1},\mathtt{d}'_{2},\ldots,\mathtt{d}'_\mathpzc{b})$ in $\mathbf{M}$ and  $\mathtt{r}\in\mathtt{R},$ where $\mathtt{c}_{i}+\mathtt{c}'_{i}$ and $\mathtt{r}\mathtt{c}_{i}$ are to be computed modulo $\gamma^{e}$ for $1 \leq i \leq \mathpzc{a},$ and $\mathtt{d}_{j}+\mathtt{d}'_{j}$ and $\mathtt{r}\mathtt{d}_{j}$ are to be computed modulo $\gamma^{s}$ for $1 \leq j \leq \mathpzc{b}.$

Now, a non-empty  subset $\mathtt{C}$ of $\mathbf{M}$ is called an $\mathtt{R}\check{\mathtt{R}}$-linear code of block-length $(\mathpzc{a},\mathpzc{b})$ if it is an $\mathtt{R}$-submodule of  $\mathbf{M}.$ Further, an $\mathtt{R}\check{\mathtt{R}}$-linear code $\mathtt{C}$ of block-length $(\mathpzc{a},\mathpzc{b})$ is said to be weakly-free  (see Remark 1 of Bajalan \etal~\cite{Bajalan2023}) if it has a generator matrix of the form 
\vspace{-1mm}\begin{equation}\label{weaklyfree}
\left[\begin{array}{c|c}
    \mathcal{A}&\mathcal{B} \\ 
    \gamma^{e-s}\mathcal{E}& \mathcal{F} 
\end{array}\right],\vspace{-1mm}
\end{equation}
where  $k_0$ and $\ell_0$ are non-negative integers, rows of the  matrix $\mathcal{A}\in M_{k_0\times \mathpzc{a}}(\mathtt{R})$ are independent over $\mathtt{R}$ and have period $\gamma^e,$ rows of the matrix $\mathcal{F}\in M_{\ell_0\times \mathpzc{b}}(\check{\mathtt{R}})$ are independent over $\check{\mathtt{R}}$ and have period $\gamma^s,$ and the matrices $ \mathcal{B}\in M_{k_0\times \mathpzc{b}}(\check{\mathtt{R}})$ and  $\gamma^{e-s}\mathcal{E}\in M_{\ell_0\times \mathpzc{a}}(\mathtt{R}),$ (throughout this paper,  $M_{m\times n}(Y)$ denotes the set of all $m \times n$ matrices over $Y$). Further, a weakly-free $\mathtt{R}\check{\mathtt{R}}$-linear code of block-length $(\mathpzc{a},\mathpzc{b})$ is said to be of the type $\{k_0,\underbrace{0,0,\ldots, 0}_{e-1 \text{ times}}; \ell_0,\underbrace{0,0, \ldots, 0}_{s-1 \text{ times}}\}$ if it has a generator matrix of the form \eqref{weaklyfree}. One can easily observe that the $\mathtt{R}$-module $\mathbf{M}$ is a weakly-free $\mathtt{R}\check{\mathtt{R}}$-linear code of the type 
$\{\mathpzc{a},\underbrace{0,0,\ldots, 0}_{e-1 \text{ times}}; \mathpzc{b},\underbrace{0,0, \ldots, 0}_{s-1 \text{ times}}\}.$

Two $\mathtt{R}\check{\mathtt{R}}$-linear codes $\mathtt{C}$ and $\mathtt{D}$ of block-length $(\mathpzc{a},\mathpzc{b})$ are said to be monomially equivalent if one code can be obtained from the other by
a combination of operations of the following four types:
\begin{itemize}
\vspace{-2mm}\item[A.]  Permutation of the $\mathpzc{a}$ coordinate positions in the first block of the code.
\vspace{-2mm}\item[B.]  Permutation of the $\mathpzc{b}$ coordinate positions in the second block of the code.
\vspace{-2mm}\item[C.] Multiplication of the code symbols appearing in a given coordinate position of the first block by a unit in  $\mathtt{R}$. 
\vspace{-2mm}\item[D.] Multiplication of the code symbols appearing in a given coordinate position of the second block by a unit in $\check{\mathtt{R}}$.
\vspace{-2mm}\end{itemize}
Otherwise, the codes $\mathtt{C}$ and $\mathtt{D}$ are said to be monomially inequivalent. Two $\mathtt{R}\check{\mathtt{R}}$-linear codes of block-length $(\mathpzc{a},\mathpzc{b})$ are said to be permutation equivalent if one code can be obtained from the other by a combination of  operations of the Types A and B as defined above.  For non-negative integers $k_0$ and $\ell_0,$  we see, by Proposition 3.2 of Borges \etal~\cite{Borges}, that a weakly-free $\mathtt{R}\check{\mathtt{R}}$-linear code  of block-length $(\mathpzc{a},\mathpzc{b})$ and of the type $\{k_0,\underbrace{0,0,\ldots, 0}_{e-1 \text{ times}}; \ell_0,\underbrace{0,0, \ldots, 0}_{s-1 \text{ times}}\}$ is permutation equivalent to a code with a generator matrix in the standard form 
\vspace{-2mm}\begin{equation*}
\left[\begin{array}{cc|cc}
    I_{k_0} & A & 0 & B \\ 
    0 & \gamma^{e-s} C & I_{\ell_0} & D 
\end{array}\right],
\vspace{-2mm}\end{equation*}
where $A\in M_{k_0\times (\mathpzc{a}-k_0)}(\mathtt{R}),~\gamma^{e-s} C\in M_{\ell_0\times (\mathpzc{a}-k_0)}(\mathtt{R}),~B\in M_{k_0\times (\mathpzc{b}-\ell_0)}(\check{\mathtt{R}})$ and $D\in M_{\ell_0\times (\mathpzc{b}-\ell_0)}(\check{\mathtt{R}}).$

Now, to study the dual codes of $\mathtt{R}\check{\mathtt{R}}$-linear codes of block-length $(\mathpzc{a},\mathpzc{b}),$ let $Aut_\mathtt{S}(\mathtt{R})$ be the set consisting of all $\mathtt{S}$-automorphisms of $\mathtt{R}.$   By Theorem XV.2 of \cite{MacDonald}, we see that $Aut_\mathtt{S}(\mathtt{R})$ is a cyclic group of order $w$ under the composition operation, which is called the automorphism group of $\mathtt{R}$ over $\mathtt{S}.$   A generator $\sigma$ of the automorphism group $Aut_\mathtt{S}(\mathtt{R})$ is called the Frobenius $\mathtt{S}$-automorphism of $\mathtt{R}.$ Note that  $Aut_\mathtt{S}(\mathtt{R})=\{\sigma^h:0\leq h\leq w-1\}.$ Further, one can easily see that  the chain ring $\check{\mathtt{R}}$ is the Galois extension of the chain ring $\check{\mathtt{S}}=\mathtt{S}/\gamma^s \mathtt{S}$ of degree $w \geq 2.$ Working as above, we see that the automorphism group $Aut_{\check{\mathtt{S}}}(\check{\mathtt{R}})$ of $\check{\mathtt{R}}$ over $\check{\mathtt{S}}$ is also a cyclic group of order $w$ under the composition operation with the restriction map $\check{\sigma}=\sigma\restriction_{\check{\mathtt{R}}}$ as the Frobenius automorphism of $\check{\mathtt{R}}$ over $\check{\mathtt{S}}.$  Throughout this paper, we assume, without any loss of generality, that $\sigma=\check{\sigma},$ which implies that $Aut_\mathtt{S}(\mathtt{R})=Aut_{\check{\mathtt{S}}}(\check{\mathtt{R}}).$ Now, for an integer $h$ satisfying  $0\leq h\leq w-1,$  let us define a  map $\langle \cdot,\cdot\rangle_h:\mathbf{M}\times\mathbf{M}\rightarrow\mathtt{R}$ as
\vspace{-2mm}$$\langle m_1,m_2\rangle_h=\sum\limits_{i=1}^\mathpzc{a}\mathtt{c}_i\sigma^h(\mathtt{c}'_i)+\gamma^{e-s}\sum\limits_{j=1}^{\mathpzc{b}}\mathtt{d}_j\sigma^h(\mathtt{d}'_j)
\vspace{-2mm}$$ for all $m_1=(\mathtt{c}_1,\mathtt{c}_2,\ldots,\mathtt{c}_\mathpzc{a}|\mathtt{d}_{1},\mathtt{d}_2,\ldots,\mathtt{d}_\mathpzc{b}),~m_2=(\mathtt{c}'_1,\mathtt{c}'_2,\ldots,\mathtt{c}'_\mathpzc{a}|\mathtt{d}'_{1},\mathtt{d}'_2,\ldots,\mathtt{d}'_\mathpzc{b})\in\mathbf{M}.$
One can easily verify that the map $\langle \cdot,\cdot\rangle_h$ is a non-degenerate $\sigma^h$-sesquilinear form on $\mathbf{M}$ and is called the $h$-Galois inner product on $\mathbf{M}$ \cite{Bajalan}. In particular, when $h=0,$ the form $\langle\cdot,\cdot\rangle_0$ is symmetric and   is called the Euclidean inner product on $\mathbf{M}.$ On the other hand, when $w$ is even and $h=\frac{w}{2},$ the form $\langle\cdot,\cdot\rangle_{\frac{w}{2}}$ is  Hermitian and reflexive, and is called the Hermitian inner product on $\mathbf{M}.$ 
Further, the $h$-Galois dual code $\mathtt{C}^{\perp_h}$ of an $\mathtt{R}\check{\mathtt{R}}$-linear code $\mathtt{C}(\subseteq\mathbf{M})$ is defined as \vspace{-1mm}$$\mathtt{C}^{\perp_h}=\{m_1\in\mathbf{M}:\langle m_1,m_2\rangle_h=0 \text{ for all }m_2\in\mathtt{C}\}.\vspace{-1mm}$$ Clearly, $\mathtt{C}^{\perp_h}$ is also an $\mathtt{R}\check{\mathtt{R}}$-linear code.  The code $\mathtt{C}$ is said to be an $h$-Galois $\mathtt{R}\check{\mathtt{R}}$-linear code with complementary dual (or an $h$-Galois $\mathtt{R}\check{\mathtt{R}}$-LCD code) if it satisfies $\mathtt{C}\cap \mathtt{C}^{\perp_h}=\{0\}.$ In particular,  $h$-Galois $\mathtt{R}\check{\mathtt{R}}$-LCD codes coincide with   Euclidean $\mathtt{R}\check{\mathtt{R}}$-LCD codes when $h=0,$ while $h$-Galois $\mathtt{R}\check{\mathtt{R}}$-LCD codes match with  Hermitian $\mathtt{R}\check{\mathtt{R}}$-LCD codes when $w$ is even and $h=\frac{w}{2}.$ 
Bajalan \etal  ~\cite{Bajalan} observed that each $h$-Galois $\mathtt{R}\check{\mathtt{R}}$-LCD code $\mathtt{C}$ of block-length $(\mathpzc{a}, \mathpzc{b})$ is weakly-free. The following lemma states this result.
\begin{lemma}\cite{Bajalan}\label{Lemma4.4}
An $h$-Galois $\mathtt{R}\check{\mathtt{R}}$-LCD code $\mathtt{C}$ of block-length $(\mathpzc{a}, \mathpzc{b})$ is weakly-free. As a consequence, the code $\mathtt{C}$ is permutation equivalent to a code with a generator matrix in the standard form 
\begin{align}\label{Equation4.3}
\mathtt{G}=\left[\begin{array}{cc|cc}
    I_{k_0} & A & 0 & B \\ 
    0 & \gamma^{e-s} C & I_{\ell_0} & D 
\end{array}\right]
\end{align}
for some non-negative integers $k_0$ and $\ell_0,$ where $A\in M_{k_0\times (\mathpzc{a}-k_0)}(\mathtt{R}),~\gamma^{e-s} C\in M_{\ell_0\times (\mathpzc{a}-k_0)}(\mathtt{R}),~B\in M_{k_0\times (\mathpzc{b}-\ell_0)}(\check{\mathtt{R}})$ and $D\in M_{\ell_0\times (\mathpzc{b}-\ell_0)}(\check{\mathtt{R}}).$ 
\end{lemma}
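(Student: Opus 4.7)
My plan is to first establish that $\mathtt{C}$ is weakly-free and then to obtain the standard-form generator matrix via Proposition~3.2 of Borges \etal~(quoted in the paragraph preceding the lemma). The crux of the weakly-free claim is to show that the $h$-Galois LCD hypothesis forces $\mathtt{C}$ to be a direct $\mathtt{R}$-module summand of $\mathbf{M} = \mathtt{R}^\mathpzc{a} \oplus \check{\mathtt{R}}^\mathpzc{b}$; the weakly-free structure then falls out of the uniqueness of the indecomposable decomposition over the local chain ring $\mathtt{R}$.

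First, I would establish the cardinality identity $|\mathtt{C}| \cdot |\mathtt{C}^{\perp_h}| = |\mathbf{M}|$. Since the $h$-Galois form on $\mathbf{M}$ is non-degenerate, the map $m \mapsto \langle \cdot, m \rangle_h$ is a $\sigma^h$-semilinear bijection from $\mathbf{M}$ onto $\mathrm{Hom}_\mathtt{R}(\mathbf{M}, \mathtt{R})$. Decomposing $\mathbf{M}$ into its cyclic $\mathtt{R}$-summands and using the classical identification $\mathrm{Hom}_\mathtt{R}(\mathtt{R}/\gamma^i\mathtt{R}, \mathtt{R}) \cong \gamma^{e-i}\mathtt{R}$, one obtains $|\mathrm{Hom}_\mathtt{R}(N, \mathtt{R})| = |N|$ for every finite $\mathtt{R}$-module $N$; applied to $N = \mathbf{M}/\mathtt{C}$, this yields $|\mathtt{C}^{\perp_h}| = |\mathbf{M}|/|\mathtt{C}|$. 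Combined with the LCD condition $\mathtt{C} \cap \mathtt{C}^{\perp_h} = \{0\}$, inclusion-exclusion forces $|\mathtt{C} + \mathtt{C}^{\perp_h}| = |\mathbf{M}|$, hence $\mathtt{C} \oplus \mathtt{C}^{\perp_h} = \mathbf{M}$.

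Next, I would invoke the Krull--Schmidt theorem for finitely generated modules over the local chain ring $\mathtt{R}$: the indecomposable $\mathtt{R}$-summands of $\mathbf{M}$ are exactly $\mathtt{R}$ (with multiplicity $\mathpzc{a}$) and $\check{\mathtt{R}}$ (with multiplicity $\mathpzc{b}$), so being a direct summand forces $\mathtt{C} \cong \mathtt{R}^{k_0} \oplus \check{\mathtt{R}}^{\ell_0}$ for some $0 \leq k_0 \leq \mathpzc{a}$ and $0 \leq \ell_0 \leq \mathpzc{b}$. Picking generators realising this decomposition gives $k_0$ elements $v_1, \ldots, v_{k_0} \in \mathtt{C}$ of period $\gamma^e$ and $\ell_0$ elements $u_1, \ldots, u_{\ell_0} \in \mathtt{C}$ of period $\gamma^s$. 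Since every second-block entry is annihilated by $\gamma^s$, the first block $\mathcal{A}_i$ of each $v_i$ must itself have period $\gamma^e$, while $\gamma^s u_j = 0$ forces the first block of $u_j$ to lie in $\gamma^{e-s}\mathtt{R}^\mathpzc{a}$, which we record as $\gamma^{e-s}\mathcal{E}_j$. Assembling these generators as rows produces a generator matrix of the weakly-free shape \eqref{weaklyfree}; the required independence of the rows of $\mathcal{A}$ over $\mathtt{R}$ and of the rows of $\mathcal{F}$ over $\check{\mathtt{R}}$ translates the freeness of the two summands in the decomposition $\mathtt{C} \cong \mathtt{R}^{k_0} \oplus \check{\mathtt{R}}^{\ell_0}$, with Lemma~\ref{Lemma4.5} used for any row cleanup needed to guarantee that the $\mathcal{F}_j$ retain period $\gamma^s$.

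The permutation-equivalent standard form \eqref{Equation4.3} is then immediate from Proposition~3.2 of Borges \etal, applied to the weakly-free code of type $\{k_0, 0, \ldots, 0; \ell_0, 0, \ldots, 0\}$. The main technical obstacle I anticipate lies in the last step of the weakly-free part: passing from the abstract isomorphism $\mathtt{C} \cong \mathtt{R}^{k_0} \oplus \check{\mathtt{R}}^{\ell_0}$ to a concrete generator matrix genuinely of the form \eqref{weaklyfree}, where one must verify that the $u_j$ can be chosen so that the second blocks $\mathcal{F}_j$ are simultaneously of period $\gamma^s$ and independent over $\check{\mathtt{R}}$. This requires carefully exploiting Lemma~\ref{Lemma4.5}(c) to separate the $\mathtt{R}$-free rows cleanly from the $\check{\mathtt{R}}$-torsion rows without destroying either the period information or the independence, and is the most delicate point in the argument.
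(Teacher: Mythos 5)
The paper offers no proof of this lemma at all (it is imported from Bajalan \etal~\cite{Bajalan}), so your argument must stand on its own. Its first three steps do: the counting identity $|\mathtt{C}^{\perp_h}|=|\mathbf{M}|/|\mathtt{C}|$ via non-degeneracy and $|\mathrm{Hom}_{\mathtt{R}}(N,\mathtt{R})|=|N|$, the resulting decomposition $\mathtt{C}\oplus\mathtt{C}^{\perp_h}=\mathbf{M}$, and the Krull--Schmidt identification $\mathtt{C}\cong\mathtt{R}^{k_0}\oplus\check{\mathtt{R}}^{\ell_0}$ are all correct, and the final reduction to \eqref{Equation4.3} via Proposition 3.2 of Borges \etal~\cite{Borges} is legitimate once weakly-freeness is in hand.

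The gap is exactly where you flag it, and it is genuine rather than cosmetic: the abstract isomorphism $\mathtt{C}\cong\mathtt{R}^{k_0}\oplus\check{\mathtt{R}}^{\ell_0}$ does \emph{not} by itself yield a generator matrix of the form \eqref{weaklyfree}, so the independence and period conditions on $\mathcal{A}$ and $\mathcal{F}$ cannot simply be ``translated from the freeness of the two summands.'' For example, the submodule of $\mathtt{R}\oplus\check{\mathtt{R}}$ generated by $(\gamma^{e-s}\,|\,0)$ is isomorphic to $\check{\mathtt{R}}$, yet every one of its elements has zero second block, so it admits no weakly-free generator matrix; and Lemma \ref{Lemma4.5}(c) only allows perturbations by multiples of $\gamma^{e-i+1}$, which cannot repair a second block of deficient period. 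What actually closes the gap is a second use of the direct-summand property, via the purity identity $\mathtt{C}\cap\gamma^{j}\mathbf{M}=\gamma^{j}\mathtt{C}$ (valid for any direct summand, since a component lying in $\mathtt{C}\cap\mathtt{D}$ must vanish). If some $u_{j}$ had second block of period less than $\gamma^{s}$, then $\gamma^{s-1}u_{j}$ would be a nonzero element of $\mathtt{C}\cap\gamma^{e-1}\mathbf{M}=\gamma^{e-1}\mathtt{C}=\sum_{i}\gamma^{e-1}\mathtt{R}v_{i}$ (note $\gamma^{e-1}u_{j}=0$ as $s\leq e-1$), which contradicts the independence of the basis $\{v_{i},u_{j}\}$; the same device, applied to $\gamma^{s-1-\nu}\sum_{j}s_{j}u_{j}$ where $\nu$ is the minimal valuation of the $s_{j}$, gives the independence of the rows of $\mathcal{F}$ over $\check{\mathtt{R}}$. (The independence of the rows of $\mathcal{A}$ over $\mathtt{R}$ needs no purity: a relation $\sum r_{i}\mathcal{A}_{i}=0$ makes $\sum r_{i}v_{i}$ a $\gamma^{s}$-torsion element, forcing $r_{i}\in\gamma^{e-s}\mathtt{R}$ by independence of the $v_{i}$, and bootstrapping this estimate through the second block forces $r_{i}=0$.) With these additions your plan becomes a complete proof; as written, the last step asserts more than the cited tools deliver.
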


Two $\mathtt{R}\check{\mathtt{R}}$-linear codes $\mathtt{C}$ and $\mathtt{D}$ of block-length $(\mathpzc{a},\mathpzc{b})$ form a linear complementary pair of codes (or an LCP of codes) if their direct sum $\mathtt{C}\oplus \mathtt{D}=\mathbf{M}.$
If an $\mathtt{R}\check{\mathtt{R}}$-linear code $\mathtt{C}$ of block-length $(\mathpzc{a},\mathpzc{b})$ is $h$-Galois LCD, then the codes $\mathtt{C}$ and $\mathtt{C}^{\perp_h}$ form an LCP. In this paper, we will introduce and study a direct sum masking scheme based on  LCPs of $\mathtt{R}\check{\mathtt{R}}$-linear codes and obtain its security threshold against fault injection and side-channel attacks. We will also discuss another application of LCPs of  $\mathtt{R}\check{\mathtt{R}}$-linear codes in coding for the two-user adder channel.

From now on, we will follow the same notations as introduced in Section \ref{Preliminaries}. In the following section, we will study $h$-Galois $\mathtt{R}\check{\mathtt{R}}$-LCD codes of block-length $(\mathpzc{a}, \mathpzc{b}).$

\section{$h$-Galois $\mathtt{R}\check{\mathtt{R}}$-LCD codes of block-length $(\mathpzc{a}, \mathpzc{b})$}\label{Section3}
Here, we first recall, by Lemma \ref{Lemma4.4}, that every $h$-Galois $\mathtt{R}\check{\mathtt{R}}$-LCD code of block-length $(\mathpzc{a}, \mathpzc{b})$ is weakly-free. However, the converse of this result is not true in general, \textit{i.e.,} every weakly-free $\mathtt{R}\check{\mathtt{R}}$-linear code of block-length $(\mathpzc{a}, \mathpzc{b})$ need not  be an $h$-Galois $\mathtt{R}\check{\mathtt{R}}$-LCD code. The following example illustrates this. 
\begin{example}\label{Example4.1}
Let $\mathbb{F}_9$ be the finite field of order $9$ with a primitive element $\zeta.$  Let $\mathtt{R}$ be the quasi-Galois ring $\mathbb{F}_9[u]/\langle u^2\rangle,$ and let   $\check{\mathtt{R}}=\mathbb{F}_9.$  Note that the ring $\mathtt{R}$ is the Galois extension of $\mathtt{S}=\mathbb{F}_3[u]/\langle u^2 \rangle$ of degree $w=2,$ and the map $\sigma:\mathtt{R}\rightarrow\mathtt{R},$ defined as $\sigma(a_0+a_1u)=a_0^3+a_1^3u$ for all $a_0+a_1u\in\mathtt{R}$ with $a_0,a_1 \in \mathbb{F}_9,$ is the Frobenius $\mathtt{S}$-automorphism of   $\mathtt{R}.$ Let us take $\mathbf{M}=\mathtt{R}\oplus\check{\mathtt{R}}^2.$   Let $\mathtt{C}_1(\subseteq\mathbf{M})$ be an $\mathtt{R}\check{\mathtt{R}}$-linear code  with a generator matrix
$[0~|~2~~\zeta ].$
Clearly, the code $\mathtt{C}_1$ is of the type $\{0,0;1\},$ and hence it is a weakly-free code. Note that $\mathtt{C}_1=\{(0|0,0),(0|2,\zeta),(0|2\zeta,\zeta^2),(0|2\zeta^2,\zeta^3),(0|2\zeta^3,2),(0|1,2\zeta),(0|\zeta,2\zeta^2),(0|\zeta^2,2\zeta^3),(0|\zeta^3,1)\}.$ 
One can easily verify that $\langle(0|2,\zeta),c\rangle_1=0$ for all $c\in\mathtt{C}_1.$ This implies that the codeword $(0|2,\zeta)\in\mathtt{C}_1\cap\mathtt{C}_1^{\perp_1},$ and hence the code $\mathtt{C}_1$ is not $1$-Galois $\mathtt{R}\check{\mathtt{R}}$-LCD. 
\end{example}We  will show, in Corollary \ref{corH}, that when $|\overline{\mathtt{R}}|>4,$
 every weakly-free $\mathtt{R}\check{\mathtt{R}}$-linear code  of block-length $(\mathpzc{a},\mathpzc{b})$ is monomially equivalent to an $h$-Galois  $\mathtt{R}\check{\mathtt{R}}$-LCD code.  Throughout this paper, let $\sigma^h(V)$ denote the matrix whose $(i,j)$-th entry is $\sigma^h({v}_{i,j})$ if ${v}_{i,j}$ is the $(i,j)$-th entry of the matrix $V$. Now, let  $\mathtt{C}(\subseteq \mathbf{M})$ be a weakly-free $\mathtt{R}\check{\mathtt{R}}$-linear code  with a generator matrix $\mathtt{G}$ of the form \eqref{Equation4.3}.  Bajalan \etal~\cite[Th. 1]{Bajalan}  assumed that $A\sigma^h(C)^T+\iota\big(B\sigma(D)^T\big)\in M_{k_0\times \ell_0}(\gamma \mathtt{R})$ and derived three equivalent necessary and sufficient conditions under which the code $\mathtt{C}$ is  $h$-Galois LCD, where the map $\iota:\check{\mathtt{R}}\rightarrow\mathtt{R},$ defined as $a\mapsto\gamma^{e-s}\iota(a)$ for all $a\in\check{\mathtt{R}},$ is the embedding  of $\check{\mathtt{R}}$ into $\mathtt{R}$.  The following example illustrates that the condition $A\sigma^h(C)^T+\iota\big(B\sigma^h(D)^T\big)\in M_{k_0\times \ell_0}(\gamma \mathtt{R})$ is not necessary for the existence of an $h$-Galois $\mathtt{R}\check{\mathtt{R}}$-LCD code of block-length $(\mathpzc{a}, \mathpzc{b}).$ 
\begin{example}\label{Example4.2}
%Let $\mathtt{R}=\mathbb{F}_4[u]/\langle u^2\rangle.$ Here we have $\check{\mathtt{R}}=\mathbb{F}_4.$ Let   $\lambda$ be the automorphism of $\mathtt{R}$ as defined in Example \ref{Example4.1}. 
Let $\mathbb{F}_8=\{0,1,\xi,\xi^2,\xi^3=\xi+1,\xi^4,\xi^5,\xi^6\}$ be the finite field of order $8$ with a primitive element $\xi.$ Let $\mathtt{R}$ be the quasi-Galois ring $\mathbb{F}_8[u]/\langle u^2\rangle,$ and let  $\check{\mathtt{R}}=\mathbb{F}_8.$ Note that the ring $\mathtt{R}$ is the Galois extension of $\mathtt{S}=\mathbb{F}_2[u]/\langle u^2\rangle$ of degree $w=3,$ and the map $\sigma:\mathtt{R}\rightarrow\mathtt{R},$ defined as $\sigma(a_0+a_1u)=a_0^2+a_1^2u$ for all $a_0+a_1u\in\mathtt{R}$ with $a_0,a_1 \in \mathbb{F}_8,$  is the Frobenius $\mathtt{S}$-automorphism of $\mathtt{R}.$ Let us define $\mathbf{M}=\mathtt{R}^3\oplus \check{\mathtt{R}}^3.$ Let  $\mathtt{C}_2(\subseteq \mathbf{M})$ be an $\mathtt{R}\check{\mathtt{R}}$-linear code with a generator matrix
\vspace{-2mm}$$\left[\begin{array}{cc|cc}
    I_1 & A & 0 & B\\ 
    0 &  uC & I_2 & D
\end{array}\right]=\left[\begin{array}{ccc|ccc}
    1 & 1 & 1 & 0 & 0 & 1\\ 
    0 & u & 0 & 1 & 0 & 0\\
    0 & 0 & u & 0 & 1 & 0
\end{array}\right],\vspace{-2mm}$$ where $A=[ 1 ~ 1],$ $B=[1],$ $C=\left[\begin{array}{cccc}
    1 &0 \\
     0 &1
\end{array}\right]$ and $D=\left[\begin{array}{c}
    0   \\
    0  
\end{array}\right].$
It is easy to see that  $A\sigma(C)^T+\iota\big(B\sigma(D)^T\big)=[
    1 ~~ 1]\notin M_{1\times2}(u\mathtt{R}).$ However, one can easily see that the code $\mathtt{C}_2$ is $1$-Galois $\mathtt{R}\check{\mathtt{R}}$-LCD.
\end{example}

Thus it is essential to derive necessary and sufficient conditions under which a weakly-free  $\mathtt{R}\check{\mathtt{R}}$-linear code of block-length $(\mathpzc{a}, \mathpzc{b})$ is an $h$-Galois $\mathtt{R}\check{\mathtt{R}}$-LCD code, without assuming the constraint $A\sigma^h(C)^T+\iota\big(B\sigma^h(D)^T\big)\in M_{k_0\times \ell_0}(\gamma \mathtt{R})$.  To do this, we will first define an operation on the set $\mathfrak{M}$ of matrices whose rows are elements of $\mathbf{M}.$ Here, we first note that any  matrix in $\mathfrak{M}$ is of the form $[E~| ~F],$ where $E$ is a matrix over $\mathtt{R}$ with $\mathpzc{a}$ columns and $F$ is a matrix over $\check{\mathtt{R}}$ with $\mathpzc{b}$ columns. Now, for $G=[E~|~F]$ and $H=[U~|~V]$ in $\mathfrak{M},$ we define \vspace{-1mm}
$$G \diamond \sigma^h(H)^T= E\sigma^h(U)^T+\gamma^{e-s} F\sigma^h(V)^T,\vspace{-1mm}$$ 
where $Z^{T}$ denotes the  transpose of the matrix $Z.$ Clearly, $G \diamond \sigma^h(H)^T$ is a matrix over $\mathtt{R}.$ We further see that if $\mathtt{G}$ is a matrix of the form \eqref{Equation4.3}, then 
\vspace{-2mm}\begin{align}\label{Equation4.7}
\mathtt{G}\diamond\sigma^h(\mathtt{G})^T& =\left[\begin{array}{cccc}
    I_{k_0} & A & 0 & B \\
    0 & \gamma^{e-s} C & I_{\ell_0} & D 
\end{array}\right]
\left[\begin{array}{cc}
    I_{k_0} & 0 \\
    \sigma^h(A)^T & \gamma^{e-s}\sigma^h(C)^T\\
    0 & \gamma^{e-s} I_{\ell_0}\\
    \gamma^{e-s} \sigma^h(B)^T & \gamma^{e-s} \sigma^h(D)^T
\end{array}\right]\nonumber\\
& =\left[\begin{array}{cc}
     I_{k_0}+A\sigma^h(A)^T+\gamma^{e-s} B\sigma^h(B)^T & \gamma^{e-s}\big(A\sigma^h(C)^T+ B\sigma^h(D)^T\big)  \\  \gamma^{e-s}\big(C\sigma^h(A)^T+D\sigma^h(B)^T\big) & \gamma^{e-s}\big(I_{\ell_0}+D\sigma^h(D)^T+ \gamma^{e-s}C\sigma^h(C)^T\big)
\end{array}\right].
\vspace{-2mm}\end{align}
 Clearly, the square matrix $\mathtt{G}\diamond\sigma^h(\mathtt{G})^T\in M_{(k_0+\ell_0)\times (k_0+\ell_0)}(\mathtt{R})$ generates a linear code of length $k_0+\ell_0$ over $\mathtt{R}.$ A generator matrix of a linear code $\mathcal{C}$ of length $n$ over $\mathtt{R}$ is defined as a matrix over $\mathtt{R}$ whose rows are independent and generate the code $\mathcal{C}$ over $\mathtt{R}.$ We further see, by Theorem 3.5 of Norton and S{\u{a}}l{\u{a}}gean \cite{Norton}, that a linear code over a chain ring has a unique type. In the following theorem, we derive a necessary and sufficient condition under which an $\mathtt{R}\check{\mathtt{R}}$-linear code of block-length $(\mathpzc{a},\mathpzc{b})$ with a generator matrix of the form \eqref{Equation4.3} is an $h$-Galois $\mathtt{R}\check{\mathtt{R}}$-LCD code, without assuming the constraint  $A\sigma^h(C)^T+\iota\big(B\sigma^h(D)^T\big)\in M_{k_0\times \ell_0}(\gamma \mathtt{R}).$
 \begin{thm}\label{Theorem4.1}
Let $\mathtt{C}(\subseteq\mathbf{M})$ be an $\mathtt{R}\check{\mathtt{R}}$-linear code 
with a generator matrix $\mathtt{G}.$ The code
$\mathtt{C}$ is $h$-Galois $\mathtt{R}\check{\mathtt{R}}$-LCD if and only if the code $\mathtt{C}$ is weakly-free and the matrix $\mathtt{G}\diamond\sigma^h(\mathtt{G})^T$ is a generator matrix of a linear code of the type $\{k_0,0,0,\ldots,0,\underbrace{\ell_0}_{(e-s+1)\text{-th}},0,0,\ldots,0\}$ and length $k_0+\ell_0$ over $\mathtt{R}.$
\end{thm}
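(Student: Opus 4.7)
The plan is to rewrite the LCD condition $\mathtt{C}\cap\mathtt{C}^{\perp_h}=\{0\}$ as a statement about the rows of $\tilde{\mathtt{G}}:=\mathtt{G}\diamond\sigma^h(\mathtt{G})^T$ via the Gram-matrix identity $\tilde{\mathtt{G}}_{ij}=\langle \mathtt{g}_i,\mathtt{g}_j\rangle_h$, and then to read the result off as the claimed Norton--S{\u{a}}l{\u{a}}gean type for the code $\tilde{\mathtt{C}}$ generated by $\tilde{\mathtt{G}}$. For the forward direction, Lemma \ref{Lemma4.4} lets me assume $\mathtt{G}$ in the standard form \eqref{Equation4.3}, in which case $\mathbf{r}\mapsto \mathbf{r}\mathtt{G}$ is a bijection from $\mathtt{R}^{k_0}\oplus\check{\mathtt{R}}^{\ell_0}$ onto $\mathtt{C}$. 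Because $\sigma^h$ is a ring automorphism preserving $\check{\mathtt{R}}$, sesquilinearity gives $\langle \mathbf{r}\mathtt{G},\mathbf{r}'\mathtt{G}\rangle_h=\mathbf{r}\tilde{\mathtt{G}}\sigma^h(\mathbf{r}')^T$, and letting $\sigma^h(\mathbf{r}')$ run through the standard basis of $\mathtt{R}^{k_0}\oplus\check{\mathtt{R}}^{\ell_0}$ extracts each coordinate of $\mathbf{r}\tilde{\mathtt{G}}$ individually; hence $\mathbf{r}\mathtt{G}\in\mathtt{C}^{\perp_h}$ if and only if $\mathbf{r}\tilde{\mathtt{G}}=0$ in $\mathtt{R}^{k_0+\ell_0}$. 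Combined with the bijectivity above, the LCD property is equivalent to the injectivity of the map $\mathbf{r}\mapsto \mathbf{r}\tilde{\mathtt{G}}$ on $\mathtt{R}^{k_0}\oplus\check{\mathtt{R}}^{\ell_0}$.

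From the block form \eqref{Equation4.7}, the last $\ell_0$ rows of $\tilde{\mathtt{G}}$ lie entry-wise in $\gamma^{e-s}\mathtt{R}$ and so have period dividing $\gamma^s$. A short period-hunt (plugging $\gamma^t e_i$ with $t<e$ or $t<s$ into the kernel) shows that under the above injectivity, the first $k_0$ rows have period exactly $\gamma^e$ and the last $\ell_0$ exactly $\gamma^s$; the injectivity itself is precisely Norton--S{\u{a}}l{\u{a}}gean independence of the rows. Uniqueness of type \cite[Th.\,3.5]{Norton} then forces $\tilde{\mathtt{C}}$ to be of type $\{k_0,0,\ldots,0,\underbrace{\ell_0}_{(e-s+1)\text{-th}},0,\ldots,0\}$, settling the forward direction. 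Conversely, if $\mathtt{C}$ is weakly-free and $\tilde{\mathtt{G}}$ generates a code of this type, its rows are independent with the stated periods, so $\mathbf{r}\tilde{\mathtt{G}}=0$ with $\mathbf{r}\in\mathtt{R}^{k_0}\oplus\check{\mathtt{R}}^{\ell_0}$ implies $r_i\tilde{\mathtt{g}}_i=0$, whence $r_i=0$ for $i\le k_0$ and $r_{k_0+j}\in \gamma^s\mathtt{R}$ (i.e., $0$ in $\check{\mathtt{R}}$) for $j\le \ell_0$; the sesquilinear identity then yields $\mathtt{C}\cap\mathtt{C}^{\perp_h}=\{0\}$.

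The main obstacle is the mixed-alphabet bookkeeping: the parameter vectors $\mathbf{r},\mathbf{r}'$ range only over $\mathtt{R}^{k_0}\oplus\check{\mathtt{R}}^{\ell_0}$ whereas $\mathbf{r}\tilde{\mathtt{G}}$ naturally lives in $\mathtt{R}^{k_0+\ell_0}$, and one must verify that probing only with $1\in\mathtt{R}$ and $1\in\check{\mathtt{R}}$ still forces every coordinate of $\mathbf{r}\tilde{\mathtt{G}}$ to vanish in $\mathtt{R}$ rather than merely modulo some power of $\gamma$. The $\gamma^{e-s}$-factorisation of the last $\ell_0$ columns of $\tilde{\mathtt{G}}$ visible in \eqref{Equation4.7} is exactly what makes the restricted range of $\mathbf{r}'$ sufficient.
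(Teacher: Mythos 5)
Your proposal is correct and follows essentially the same route as the paper: both reduce membership in $\mathtt{C}^{\perp_h}$ to the vanishing of $\mathbf{r}\,\mathtt{G}\diamond\sigma^h(\mathtt{G})^T$ via sesquilinearity, and both establish the type of the Gram matrix by the same contradiction arguments (plugging $\gamma^v e_i$ to pin down the periods, and using independence of the rows of $\mathtt{G}$ together with their periods for the independence claim and for the converse). The only difference is cosmetic: you package the conclusion as injectivity of $\mathbf{r}\mapsto\mathbf{r}\tilde{\mathtt{G}}$ on $\mathtt{R}^{k_0}\oplus\check{\mathtt{R}}^{\ell_0}$ and then translate that into the Norton--S{\u{a}}l{\u{a}}gean type, whereas the paper verifies the periods and independence of the rows of $\mathtt{G}\diamond\sigma^h(\mathtt{G})^T$ directly.
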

\begin{proof} 
To prove the result, we first suppose that  the code $\mathtt{C}$ is $h$-Galois $\mathtt{R}\check{\mathtt{R}}$-LCD. Here, we see, by Lemma \ref{Lemma4.4}, that the code $\mathtt{C}$ is weakly-free
%(\textit{i.e.,} $k_1=k_2=\cdots=k_{e-1}=\ell_1=\ell_2=\cdots=\ell_{e-2}=0$)
and its generator matrix  $\mathtt{G}$ is of  the form \eqref{Equation4.3}, whose first $k_0$ rows have period $\gamma^{e}$ and the remaining $\ell_0$ rows have period $\gamma^{s}.$ One can see, by \eqref{Equation4.7}, that 
\vspace{-2mm}\begin{eqnarray*}
\mathtt{G}\diamond\sigma^h(\mathtt{G})^T&=&\left[\begin{array}{cc}
     I_{k_0}+A\sigma^h(A)^T+\gamma^{e-s} B\sigma^h(B)^T & \gamma^{e-s}\big(A\sigma^h(C)^T+ B\sigma^h(D)^T\big)  \\  \gamma^{e-s}\big(C\sigma^h(A)^T+D\sigma^h(B)^T\big) & \gamma^{e-s}\big(I_{\ell_0}+D\sigma^h(D)^T+ \gamma^{e-s}C\sigma^h(C)^T\big)
\end{array}\right].
\vspace{-4mm}\end{eqnarray*}
We next assert that the matrix $\mathtt{G}\diamond\sigma^h(\mathtt{G})^T$  is a generator matrix of a linear code of the type $\{k_0,0,0,\ldots,0,\underbrace{\ell_0}_{(e-s+1)\text{-th}},\\0,0,\ldots,0\}$ and length $k_0+\ell_0$ over $\mathtt{R}.$ To prove this assertion,  it suffices to show  that the rows of the matrix $\mathtt{G}\diamond\sigma^h(\mathtt{G})^T$ are independent over $\mathtt{R}$ and that the first $k_0$ rows have period $\gamma^e$ and the remaining $\ell_0$ rows have period $\gamma^{s}.$ For this, let $R_i$ denote the $i$-th row of the matrix $\mathtt{G}\diamond\sigma^h(\mathtt{G})^T$ for $1\leq i\leq k_0+\ell_0.$ 

Suppose, on the contrary, that  there exists an integer $i$ such that $1 \leq i \leq k_0$ and  $\gamma^vR_i=0$ for some positive integer $v<e.$ 
Thus there exists a $(k_0+\ell_0)$-tuple $w_1=(0,0,\ldots,0,\underbrace{\gamma^v}_{i\text{-th}},0,0,\ldots, 0)$ over $\mathtt{R}$ satisfying $w_1\mathtt{G}\diamond\sigma^h(\mathtt{G})^T=0,$ which implies that $w_1\mathtt{G}\in\mathtt{C}^{\perp_h} \cap \mathtt{C}.$ However, since the first $k_0$ rows of $\mathtt{G}$ have period $\gamma^{e},$ we must have $w_1\mathtt{G}\ne0,$  which is a contradiction.  From this, it follows that the row $R_i$ has period $\gamma^{e}$ for $1\leq i\leq k_0.$ 

Working as above, one can show, for $k_0+1\leq j\leq k_0+\ell_0,$ that the $j$-th row $R_j$ has period $\gamma^{s}.$
Now, it remains to show that the rows of the matrix $\mathtt{G}\diamond\sigma^h(\mathtt{G})^T$ are independent over $\mathtt{R}.$ To prove this,   suppose that there exist $c_1,c_2,\ldots,c_{k_0+\ell_0}\in\mathtt{R}$ satisfying $\sum\limits_{i=1}^{k_0+\ell_0}c_iR_i=0.$ This gives  $(c_1,c_2,\ldots,c_{k_0+\ell_0})\mathtt{G}\diamond\sigma^h(\mathtt{G})^T=0,$ which implies that $(c_1, c_2,\cdots,c_{k_0+\ell_0})\mathtt{G} \in \mathtt{C}\cap\mathtt{C}^{\perp_h}.$  Since $\mathtt{C}$ is an $h$-Galois $\mathtt{R}\check{\mathtt{R}}$-LCD code, we must have $(c_1, c_2,\cdots,c_{k_0+\ell_0})\mathtt{G}=0.$ Since the rows of the matrix $\mathtt{G}$ are independent over $\mathtt{R},$ the first $k_0$ rows of $\mathtt{G}$ have period $\gamma^e$ and the remaining $\ell_0$ rows of $\mathtt{G}$ have period $\gamma^{s},$ we obtain  $c_i\in\gamma^e\mathtt{R}$ for $1\leq i\leq k_0$ and $c_j\in \gamma^{s}\mathtt{R}$ for $k_0+1\leq j\leq k_0+\ell_0.$ From this, it follows that $c_iR_i=0$ for $1\leq i\leq k_0+\ell_0.$ This shows that the rows of the matrix $\mathtt{G}\diamond\sigma^h(\mathtt{G})^{T}$ are independent over $\mathtt{R}.$ This completes the proof of the assertion.

Conversely, suppose that the $\mathtt{R}\check{\mathtt{R}}$-linear code $\mathtt{C}(\subseteq\mathbf{M})$ is weakly-free and the matrix $\mathtt{G}\diamond\sigma^h(\mathtt{G})^T$ is a generator matrix of a linear code of the type $\{k_0,0,0,\ldots,0,\underbrace{\ell_0}_{(e-s+1)\text{-th}},0,0,\ldots,0\}$ and length $k_0+\ell_0$ over $\mathtt{R}.$ Here, we will show that $\mathtt{C}\cap\mathtt{C}^{\perp_h}=\{0\}.$
For this, let $\mathtt{c}\in\mathtt{C}\cap\mathtt{C}^{\perp_h}.$ Thus there exists $y=(y_1,y_2,\ldots,y_{k_0+\ell_0})\in\mathtt{R}^{k_0+\ell_0}$ such that $\mathtt{c}=y\mathtt{G}.$ Since $\mathtt{c}\in\mathtt{C}^{\perp_h},$
we must have $ \mathtt{c}\diamond \sigma^h(\mathtt{G})^T=0,$ which   implies that $y\mathtt{G}\diamond\sigma(\mathtt{G})^T=0.$ As the matrix $\mathtt{G}\diamond\sigma^h(\mathtt{G})^T$ is a generator matrix of a linear code of the type $\{k_0,0,0,\ldots,0,\underbrace{\ell_0}_{(e-s+1)\text{-th}},0,0,\ldots,0\}$ and length $k_0+\ell_0$ over $\mathtt{R},$ we must have $y_i\in \gamma^e\mathtt{R}$ for $1\leq i\leq k_0$ and $y_j\in \gamma^{s}\mathtt{R}$ for $k_0+1\leq j\leq k_0+\ell_0$. This implies that  $\mathtt{c}=y\mathtt{G}=0.$ This shows that the code  $\mathtt{C}$ is $h$-Galois $\mathtt{R}\check{\mathtt{R}}$-LCD. 
\end{proof}
\begin{remark}\label{Rk1} From the above theorem, it follows that  a weakly-free $\mathtt{R}\check{\mathtt{R}}$-linear code  $\mathtt{C}$ of 
block-length $(\mathpzc{a}, \mathpzc{b})$ and of the type $\{k_0,\underbrace{0,0,\ldots, 0}_{e-1 \text{ times}}; \ell_0,\underbrace{0,0, \ldots, 0}_{s-1 \text{ times}}\}$ with a generator matrix $\mathtt{G}$  is $h$-Galois $\mathtt{R}\check{\mathtt{R}}$-LCD if and only if  there exists an invertible $(k_0+\ell_0)\times (k_0+\ell_0)$ matrix $P $ over $\mathtt{R}$ such that $\big(\mathtt{G} \diamond \sigma(\mathtt{G})^T\big) P =\left[\begin{array}{c|c}
    I_{k_0} & 0 \\
    0 & \gamma^{e-s} I_{\ell_0}
\end{array} \right].$ This shows that the statements \textbf{1.} and \textbf{3.} in Theorem 1 of Bajalan \textit{et al.} \cite{Bajalan} are equivalent, even when   $A\sigma^h(C)^T+\iota\big(B\sigma^h(D)^T\big)\in M_{k_0\times \ell_0}(\gamma \mathtt{R})$ does not hold.
\end{remark}
In the following theorem, we derive a necessary and sufficient  condition under which the matrix $\mathtt{G}\diamond\sigma(\mathtt{G})^T$ of the form \eqref{Equation4.7} is a generator matrix of a linear code of the type $\{k_0,0,0,\ldots,0,\underbrace{\ell_0}_{(e-s+1)\text{-th}},0,0,\ldots,0\}$ and length $k_0+\ell_0$ over $\mathtt{R}.$ Here, we first recall that $\overline{\mathtt{R}}$ is  the residue field of both $\mathtt{R}$ and $\check{\mathtt{R}}.$  Let us assume that the residue field $\overline{\mathtt{R}}$ is of order $q,$ where $q$ is a prime power. For any matrix $V$ over $\mathtt{R}$ (\textit{resp.} $\check{\mathtt{R}}$) with the $(i,j)$-th entry as $v_{i,j},$ let $\overline{V}$ denote the matrix over $\overline{\mathtt{R}}$ whose $(i,j)$-th entry is $\overline{v}_{i,j}=v_{i,j} +\gamma \mathtt{R}$ (\textit{resp.} $\overline{v}_{i,j}=v_{i,j} +\gamma \check{\mathtt{R}}$) for each $i$ and $j.$
 
\begin{thm}\label{Lemma4.7}
Let $\mathtt{G}$ be a matrix of the form \eqref{Equation4.3}, (note that the matrix $\mathtt{G}\diamond\sigma^h(\mathtt{G})^T$ is given by \eqref{Equation4.7}). The matrix $\mathtt{G}\diamond\sigma^h(\mathtt{G})^T$ is a generator matrix of a linear code of the type $\{k_0,0,0,\ldots,0,\underbrace{\ell_0}_{(e-s+1)\text{-th}},0,0,\ldots,0\}$ and length $k_0+\ell_0$ over $\mathtt{R}$ if and only if both the matrices $
I_{k_0}+\overline{A}~\overline{\sigma^h(A)}^T
$ and $
I_{\ell_0}+\overline{D}~\overline{\sigma^h(D)}^T
$
are invertible over  $\overline{\mathtt{R}}.$
\end{thm}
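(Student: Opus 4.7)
The plan is to reduce $M := \mathtt{G}\diamond\sigma^h(\mathtt{G})^T$ to block-triangular form by invertible row operations, and then to read off the type. Write
$$M = \begin{bmatrix} U & \gamma^{e-s}Y \\ \gamma^{e-s}X & \gamma^{e-s}V \end{bmatrix},$$
where $U = I_{k_0}+A\sigma^h(A)^T+\gamma^{e-s}B\sigma^h(B)^T$, $V = I_{\ell_0}+D\sigma^h(D)^T+\gamma^{e-s}C\sigma^h(C)^T$, $X = C\sigma^h(A)^T+D\sigma^h(B)^T$, and $Y = A\sigma^h(C)^T+B\sigma^h(D)^T$. Since $e-s \geq 1$, reducing modulo $\gamma$ collapses every $\gamma^{e-s}$ term to zero, leaving $\overline{U} = I_{k_0}+\overline{A}\,\overline{\sigma^h(A)}^T$ and $\overline{V} = I_{\ell_0}+\overline{D}\,\overline{\sigma^h(D)}^T$. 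Because $\mathtt{R}$ is local, a square matrix over $\mathtt{R}$ is invertible if and only if its reduction mod $\gamma$ is invertible over $\overline{\mathtt{R}}$; this Nakayama-style lifting will be used repeatedly.

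For the sufficiency direction, I assume both $\overline{U}$ and $\overline{V}$ are invertible, so $U$ is invertible over $\mathtt{R}$. I would left-multiply the first $k_0$ rows of $M$ by $U^{-1}$ and then subtract $\gamma^{e-s}X$ times the resulting rows from the bottom $\ell_0$ rows to obtain
$$M' = \begin{bmatrix} I_{k_0} & \gamma^{e-s}U^{-1}Y \\ 0 & \gamma^{e-s}V' \end{bmatrix}, \qquad V' = V-\gamma^{e-s}XU^{-1}Y,$$
which is row-equivalent to $M$ by an invertible transformation (the bottom row operation is admissible by Lemma~\ref{Lemma4.5}(b), since its coefficients lie in $\gamma^{e-s}\mathtt{R}$). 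Because $\overline{V'}=\overline{V}$, $V'$ is also invertible over $\mathtt{R}$. The first $k_0$ rows of $M'$ carry $I_{k_0}$ as their leading block, hence are $\mathtt{R}$-independent and have period $\gamma^e$; the last $\ell_0$ rows have the shape $\gamma^{e-s}(0,V'_j)$, and since $V'$ is invertible each $V'_j\notin\gamma\mathtt{R}^{\ell_0}$, forcing period $\gamma^s$ and (by invertibility of $V'$) mutual $\mathtt{R}$-independence. Independence from the top rows is immediate by projecting onto the first $k_0$ coordinates. Hence $M$ is a generator matrix of a code of the required type.

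For necessity, assume the code $\mathcal{C}$ generated by $M$ has the prescribed type. Reducing $M$ modulo $\gamma$ yields $\overline{M} = \left[\begin{smallmatrix}\overline{U}&0\\0&0\end{smallmatrix}\right]$, and the image of $\mathcal{C}$ in $\overline{\mathtt{R}}^{k_0+\ell_0}$ has $\overline{\mathtt{R}}$-dimension equal to the number of period-$\gamma^e$ generators, namely $k_0$; since the last $\ell_0$ rows of $\overline{M}$ vanish, this forces $\overline{U}$ to be invertible. With $U$ invertible, the same row reduction produces $M'$, and the main obstacle is deducing invertibility of $\overline{V}$ from the type. For this I plan to count cardinalities: the submodule generated by the top $k_0$ rows of $M'$ projects isomorphically onto $\mathtt{R}^{k_0}$ in the first $k_0$ coordinates, while the submodule $\mathcal{C}'$ generated by the bottom $\ell_0$ rows lies in $\gamma^{e-s}(\{0\}\times\mathtt{R}^{\ell_0})$, so the two intersect trivially and $|\mathcal{C}|=|\overline{\mathtt{R}}|^{k_0 e}\cdot|\mathcal{C}'|$. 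The assumed type gives $|\mathcal{C}|=|\overline{\mathtt{R}}|^{k_0 e+\ell_0 s}$, hence $|\mathcal{C}'|=|\overline{\mathtt{R}}|^{\ell_0 s}$. Under the $\check{\mathtt{R}}$-module isomorphism $\gamma^{e-s}\mathtt{R}^{\ell_0}\cong\check{\mathtt{R}}^{\ell_0}$ given by $\gamma^{e-s}r\mapsto r+\gamma^s\mathtt{R}^{\ell_0}$, $\mathcal{C}'$ corresponds to the $\check{\mathtt{R}}$-row-span of $V'\bmod\gamma^s$. Its reaching the full cardinality $|\check{\mathtt{R}}|^{\ell_0}$ means this span equals $\check{\mathtt{R}}^{\ell_0}$, equivalently $V'\bmod\gamma^s$ is invertible over $\check{\mathtt{R}}$, equivalently (by Nakayama once more) $\overline{V'}=\overline{V}=I_{\ell_0}+\overline{D}\,\overline{\sigma^h(D)}^T$ is invertible over $\overline{\mathtt{R}}$, completing the proof.
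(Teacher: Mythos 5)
Your proof is correct, and it takes a genuinely different route from the paper's. The paper establishes necessity of the invertibility of $I_{\ell_0}+\overline{D}\,\overline{\sigma^h(D)}^T$ by a contradiction argument carried out directly on the rows of $\mathtt{G}\diamond\sigma^h(\mathtt{G})^T$: a putative dependence among the rows of that matrix over $\overline{\mathtt{R}}$ is lifted to a relation of the form \eqref{EQDD}, the residual first-block entries are cancelled using the fact that the rows of $I_{k_0}+A\sigma^h(A)^T+\gamma^{e-s}B\sigma^h(B)^T$ generate $\mathtt{R}^{k_0}$, and the resulting nontrivial relation contradicts the independence of the rows of $\mathtt{G}\diamond\sigma^h(\mathtt{G})^T$; the invertibility of $I_{k_0}+\overline{A}\,\overline{\sigma^h(A)}^T$ and the entire converse are delegated to ``working as in Theorem 1 of Bajalan \textit{et al.}'' together with Lemma \ref{Lemma4.5}. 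You instead triangularize $M=\mathtt{G}\diamond\sigma^h(\mathtt{G})^T$ by a Schur complement (legitimate because invertibility of $\overline{U}$ over the residue field lifts to invertibility of $U$ over the local ring $\mathtt{R}$), read off sufficiency directly from the triangular form $M'$, and obtain necessity from two counting arguments: a rank count mod $\gamma$ for $\overline{U}$ and the cardinality identity $|\mathcal{C}|=q^{ek_0}\,|\mathcal{C}'|$ together with $|\mathcal{C}|=q^{ek_0+s\ell_0}$ for $\overline{V}$. This is more self-contained than the paper's argument (no appeal to the external reference), and the counting step neatly replaces the paper's delicate manipulation of periods and row relations.

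One point you should make fully explicit in the sufficiency direction. By the paper's definition, a generator matrix must itself have independent rows, and independence is \emph{not} preserved by arbitrary invertible row operations over a chain ring (over $\mathbb{Z}_4$, the rows $1$ and $2$ of a $2\times 1$ matrix are dependent, yet an invertible operation turns them into $1$ and $0$, which are independent). So showing that $M'$ is a generator matrix of the right type does not by itself show that $M$ is. Your parenthetical appeal to Lemma \ref{Lemma4.5}(b) covers the elimination on the bottom rows; you should add that the left-multiplication of the top block by $U^{-1}$ is also harmless because those rows all have maximal period $\gamma^e$, so any unimodular recombination among them preserves independence (Lemma \ref{Lemma4.5} with $i=j=e$). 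Alternatively, the cardinality bound you already use closes this gap in one line: once the code spanned by $M$ is known to have order $q^{ek_0+s\ell_0}$, the inequality $|\mathcal{C}|\le\prod_i|\mathtt{R}R_i|\le q^{ek_0+s\ell_0}$ (the last $\ell_0$ rows of $M$ being multiples of $\gamma^{e-s}$) forces equality throughout, which yields both the independence of the rows of $M$ and their exact periods.
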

\begin{proof} 
To prove the result, we first assume that  the matrix $\mathtt{G}\diamond\sigma^h(\mathtt{G})^T$ is a generator matrix of a linear code of the type $\{k_0,0,0,\ldots,0,\underbrace{\ell_0}_{(e-s+1)\text{-th}},0,0,\ldots,0\}$ and length $k_0+\ell_0$ over $\mathtt{R}.$ Here, working as in Theorem 1 of Bajalan \etal~\cite{Bajalan},  we see that the matrix  $
I_{k_0}+\overline{A}~ \overline{\sigma^h(A)}^T$ is invertible.  
We will next show that the matrix $
I_{\ell_0}+\overline{D}~\overline{\sigma^h(D)}^T
$ is invertible. For this, it suffices to show that the rows of the matrix $
I_{\ell_0}+\overline{D}~\overline{\sigma^h(D)}^T
$ are linearly independent over $\overline{\mathtt{R}}.$ 

Suppose, on the contrary, that the rows of the matrix $
I_{\ell_0}+\overline{D}~\overline{\sigma^h(D)}^T
$ are linearly dependent over $\overline{\mathtt{R}}.$ Thus there exist $y_1,y_2,\ldots,y_{\ell_0}\in\overline{\mathtt{R}},$ such that $y_1\ne0$ and  $y_1W^{(1)}+y_2W^{(2)}+\cdots+y_{\ell_0}W^{(\ell_0)}=0,$ where $W^{(j)}$ denotes the $j$-th row of the matrix $
I_{\ell_0}+\overline{D}~\overline{\sigma^h(D)}^T
$ for $1\leq j\leq \ell_0.$ This implies that $b_1V^{(1)}+b_2V^{(2)}+\cdots+b_{\ell_0}V^{(\ell_0)}\in\gamma\mathtt{R}^{\ell_0},$ where $V^{(j)}$ denotes the $j$-th row of the matrix $
I_{\ell_0}+D\sigma^h(D)^T
$ satisfying $\overline{V^{(j)}}=W^{(j)}$ and $b_j\in\mathtt{R}$ satisfies $\overline{b}_j=y_j$  for $1\leq j\leq \ell_0.$ From this, it follows that  \vspace{-1mm}$$b_1\gamma^{e-s} V'^{(1)}+b_2\gamma^{e-s} V'^{(2)}+\cdots+b_{\ell_0}\gamma^{e-s} V'^{(\ell_0)}\in\gamma^{e-s+1}\mathtt{R}^{\ell_0}, \vspace{-1mm}$$ where $\gamma^{e-s} V'^{(j)}$ denotes the $j$-th row of the matrix $
\gamma^{e-s}\big(I_{\ell_0}+D\sigma^h(D)^T+\gamma^{e-s} C\sigma^h(C)^T\big)
$ for $1\leq j\leq \ell_0.$ This implies that $\gamma^{s-1}b_1\gamma^{e-s} V'^{(1)}+\gamma^{s-1}b_2\gamma^{e-s} V'^{(2)}+\cdots+\gamma^{s-1}b_{\ell_0}\gamma^{e-s} V'^{(\ell_0)}=0.$ We further observe that  \begin{equation}\label{EQDD}\gamma^{s-1}b_1\gamma^{e-s} U^{(1)}+\gamma^{s-1}b_2\gamma^{e-s} U^{(2)}+\cdots+\gamma^{s-1}b_{\ell_0}\gamma^{e-s} U^{(\ell_0)}=(\gamma^{e-1}s_1,\gamma^{e-1}s_2,\ldots,\gamma^{e-1}s_{k_0},0,0,\ldots,0), \end{equation} where $\gamma^{e-s} U^{(j)}$ denotes the $j$-th row of the matrix $[
\gamma^{e-s}\big(C\sigma^h(A)^T+D\sigma^h(B)^T\big) ~~~ \gamma^{e-s}\big(I_{\ell_0}+D\sigma^h(D)^T+\gamma^{e-s} C\sigma^h(C)^T\big)
]$ for $1\leq j\leq \ell_0$ and $s_i\in\mathtt{R}$ for $1\leq i\leq k_0.$ Now since $\overline{b}_1=y_1\ne0,$ we note that $b_1\notin\gamma\mathtt{R}.$ Further, as $\gamma^{e-s} U^{(1)}$ has period $\gamma^{s},$ we must have  $\gamma^{s-1}b_1\gamma^{e-s} U^{(1)}\ne0.$ Since the first $k_0$ rows of $\mathtt{G}\diamond\sigma^h(\mathtt{G})^T$ are independent over $\mathtt{R}$ and have period $\gamma^e,$  we see, by applying Lemma \ref{Lemma4.5}(c),  that the rows $Y^{(1)}, Y^{(2)}, \ldots, Y^{(k_0)}$ of the matrix $
I_{k_0}+A\sigma^h(A)^T+\gamma^{e-s} B\sigma^h(B)^T
$ are  independent over $\mathtt{R}$ and have period $\gamma^e.$ This implies that the rows $Y^{(1)}, Y^{(2)}, \ldots, Y^{(k_0)}$  of the matrix $
I_{k_0}+A\sigma(A)^T+\gamma^{e-s} B\sigma^h(B)^T
$ must generate $\mathtt{R}^{k_0}.$ Thus there exist $g_1,g_2,\ldots,g_{k_0}\in\mathtt{R}$ such that  \begin{equation}\label{EQG}g_1Y^{(1)}+g_2Y^{(2)}+\cdots+g_{k_0}Y^{(k_0)}=(-\gamma^{e-1}s_1,-\gamma^{e-1}s_2,\ldots,-\gamma^{e-1}s_{k_0}). \end{equation}
We next assert that  $g_i\in\gamma^{e-1}\mathtt{R}$ for $1\leq i\leq k_0.$ Suppose, if possible, that $g_i\notin\gamma^{e-1}\mathtt{R}$ for some $i.$ Now equation \eqref{EQG} implies that $\gamma g_1Y^{(1)}+\gamma g_2Y^{(2)}+\cdots+\gamma g_{k_0}Y^{(k_0)}=(0,0,\ldots,0)$ and $\gamma g_iY^{(i)}\neq0,$ which contradicts the fact that the rows $Y^{(1)}, Y^{(2)}, \ldots, Y^{(k_0)}$  of the matrix $
I_{k_0}+A\sigma^h(A)^T+\gamma^{e-s} B\sigma^h(B)^T
$ are independent over $\mathtt{R}.$ This shows that $g_i\in\gamma^{e-1}\mathtt{R}$ for $1\leq i\leq k_0.$ 

Now, if $Q^{(i)}$ denotes the $i$-th row of the matrix  $[
I_{k_0}+A\sigma^h(A)^T+\gamma^{e-s} B\sigma^h(B)^T ~~~ \gamma^{e-s}\big(A\sigma^h(C)^T+B\sigma^h(D)^T\big)
]$ for $1\leq i\leq k_0,$ 
then we have  \vspace{-1mm}$$g_1 Q^{(1)}+g_2 Q^{(2)}+\cdots+g_{k_0}Q^{(k_0)}=(-\gamma^{e-1}s_1,-\gamma^{e-1}s_2,\ldots,-\gamma^{e-1}s_{k_0},0,0\ldots,0). \vspace{-1mm}$$ From this and by \eqref{EQDD}, we obtain  $g_1 Q^{(1)}+g_2 Q^{(2)}+\cdots+g_{k_0} Q^{(k_0)}+\gamma^{s-1}b_1\gamma^{e-s} U^{(1)}+\gamma^{s-1}b_2\gamma^{e-s} U^{(2)}+\cdots+\gamma^{s-1}b_{\ell_0}\gamma^{e-s} U^{(\ell_0)}=0,$ where $\gamma^{s-1}b_1\gamma^{e-s} U^{(1)}\ne0.$ This contradicts the fact that the rows of the matrix $\mathtt{G}\diamond\sigma^h(\mathtt{G})^T$ are independent over $\mathtt{R}.$ Thus the rows of the matrix $
I_{\ell_0}+\overline{D}~\overline{\sigma^h(D)}^T
$ are linearly independent over $\overline{\mathtt{R}},$ which implies that the matrix $
I_{\ell_0}+\overline{D}~\overline{\sigma^h(D)}^T
$ is invertible.

On the other hand, working as in Theorem 1 of Bajalan \etal~\cite{Bajalan} and by applying Lemma \ref{Lemma4.5}, the converse part follows immediately.  
\end{proof}
\begin{remark}\label{Rk2} While proving \textbf{1.} $\implies$ \textbf{2.} in the proof of Theorem 1 of Bajalan \textit{et al.} \cite{Bajalan},   the matrix   
$
I_{\ell_0}+\overline{D}~\overline{\sigma^h(D)}^T
$
is shown to be invertible over  $\overline{\mathtt{R}}$ (or equivalently, the matrix 
$
I_{\ell_0}+D\sigma^h(D)^T
$
is shown to be invertible over  $\check{\mathtt{R}}$) by strongly using the constraint  $A\sigma^h(C)^T+\iota\big(B\sigma^h(D)^T\big)\in M_{k_0\times \ell_0}(\gamma \mathtt{R}).$ That is,  Theorem 1 of Bajalan \textit{et al.} \cite{Bajalan}  proves the equivalence of  the statements \textbf{1.}, \textbf{2.} and \textbf{3.} using the constraint $A\sigma^h(C)^T+\iota\big(B\sigma^h(D)^T\big)\in M_{k_0\times \ell_0}(\gamma \mathtt{R}).$ However, by Theorems \ref{Theorem4.1} and \ref{Lemma4.7} and Remark \ref{Rk1}, one can easily see that the statements \textbf{1.}, \textbf{2.} and \textbf{3.} in  Theorem 1 of Bajalan \textit{et al.} \cite{Bajalan} are equivalent, even when   the condition $A\sigma^h(C)^T+\iota\big(B\sigma^h(D)^T\big)\in M_{k_0\times \ell_0}(\gamma \mathtt{R})$ does not hold. 
\end{remark}

Further, a linear code $C_1$ of length $n$ over $\overline{\mathtt{R}}$ is defined as an $\overline{\mathtt{R}}$-linear subspace of $\overline{\mathtt{R}}^n.$ Here we note, by Theorem XV.2 of \cite{MacDonald}, that the automorphism $\sigma^h$ of $\mathtt{R}$ induces an automorphism $\overline{\sigma}^h$ of $\overline{\mathtt{R}},$ defined as $\overline{\sigma}^h(\overline{b})=\overline{\sigma^h(b)}$ for all $\overline{b}\in\overline{\mathtt{R}}.$ The map $[\cdot,\cdot]_h:\overline{\mathtt{R}}^n\times\overline{\mathtt{R}}^n\rightarrow \overline{\mathtt{R}},$ defined as $$[c,d]_h=c_1\overline{\sigma}^h(d_1)+c_2\overline{\sigma}^h(d_2)+\cdots+c_n\overline{\sigma}^h(d_n)$$ 
for all $c=(c_1,\ldots,c_n),d=(d_1,\ldots,d_n)\in\overline{\mathtt{R}}^n,$ is a $\overline{\sigma}^h$-sesquilinear form and is called the $h$-Galois inner product on $\overline{\mathtt{R}}^n$ \cite{Liu2018}. For a linear code $C_1$ of length $n$ over $\overline{\mathtt{R}},$ the $h$-Galois dual code of $C_1,$ denoted by $C_1^{\perp_h},$ is defined as $C_1^{\perp_h}=\{d\in\overline{\mathtt{R}}^n:[d,c]_h=0\text{ for all }c\in C_1\}.$ Clearly, $C_1^{\perp_h}$ is also a linear code. Further, the linear code $C_1$ is said to be $h$-Galois LCD if it satisfies $C_1\cap C_1^{\perp_h}=\{0\}.$ In particular, $h$-Galois LCD codes over $\overline{\mathtt{R}}$ coincide with Euclidean LCD codes when $h=0,$ while  $h$-Galois LCD codes over $\overline{\mathtt{R}}$ match with Hermitian LCD codes when $w$ is even and $h=\frac{w}{2}.$ 

Throughout this paper, for every weakly-free $\mathtt{R}\check{\mathtt{R}}$-linear code $\mathtt{C}(\subseteq\mathbf{M})$ with a generator matrix $\mathtt{G}$ of the form \eqref{Equation4.3}, let $\mathtt{C}^{(X)}$ denote the linear code of length $\mathpzc{a}$ over $\mathtt{R}$ with a generator matrix $[
I_{k_0} ~~ A
],$ and let $\mathtt{C}^{(Y)}$ denote the  linear code of length $\mathpzc{b}$ over $\check{\mathtt{R}}$ with a generator matrix $[
I_{\ell_0} ~~ D].$ Further, the matrices $A\in M_{k_0\times (\mathpzc{a}-k_0)}(\mathtt{R}),$ $ \gamma^{e-s} C\in M_{\ell_0\times (\mathpzc{a}-k_0)}(\mathtt{R}),$ $B\in M_{k_0\times (\mathpzc{b}-\ell_0)}(\check{\mathtt{R}})$ and $D\in M_{\ell_0\times (\mathpzc{b}-\ell_0)}(\check{\mathtt{R}})$ can be written as  \vspace{-2mm}$$A=\sum\limits_{i=0}^{e-1}\gamma^iA_i,~~B=\sum\limits_{j=0}^{s-1}\gamma^jB_j,~~\gamma^{e-s} C=\sum\limits_{j=0}^{s-1}\gamma^{e-s+j}C_j~\text{ and ~}D=\sum\limits_{j=0}^{s-1}\gamma^jD_j,\vspace{-2mm}$$ where $A_i\in M_{k_0\times(\mathpzc{a}-k_0)}(\mathcal{T}),$ $B_j\in M_{k_0\times(\mathpzc{b}-\ell_0)}(\mathcal{T}),$ $C_j\in M_{\ell_0\times(\mathpzc{a}-k_0)}(\mathcal{T})$ and $D_j\in M_{\ell_0\times(\mathpzc{b}-\ell_0)}(\mathcal{T})$ for $0\leq i\leq e-1$ and $0\leq j\leq s-1,$ (recall that $\mathcal{T}$ is  the Teichm$\ddot{u}$ller set of both $\mathtt{R}$ and $\check{\mathtt{R}}$). One can easily see that $\overline{A}=\overline{A}_0$ and $\overline{D}=\overline{D}_0.$ 
Now, let $\overline{\mathtt{C}^{(X)}}$ denote the linear code of length $\mathpzc{a}$ over $\overline{\mathtt{R}}$ with a generator matrix $[
I_{k_0} ~~ \overline{A}
],$ and let $\overline{\mathtt{C}^{(Y)}}$ denote the  linear code of length $\mathpzc{b}$ over $\overline{\check{\mathtt{R}}}=\overline{\mathtt{R}}$ with a generator matrix $[
I_{\ell_0}~~ \overline{D}
].$  By Theorem 2.4 of Liu~\etal~\cite{Liu2018}, we see that the codes $\overline{\mathtt{C}^{(X)}}$ and $\overline{\mathtt{C}^{(Y)}}$ are $h$-Galois LCD if and only if  the matrices $
I_{k_0}+\overline{A}~\overline{\sigma^h(A)}^T
=[
I_{k_0} ~~ \overline{A}
][
I_{k_0} ~~ \overline{\sigma^h(A)}]^T$ and $
I_{\ell_0}+\overline{D}~\overline{\sigma^h(D)}^T
=[
I_{\ell_0} ~~\overline{D}
][
I_{\ell_0} ~~ \overline{\sigma^h(D)}
]^T$ are invertible over $\overline{\mathtt{R}}$. From this, we deduce the following:

 \begin{thm}\label{Theorem4.2}
Let $\mathtt{C}(\subseteq\mathbf{M})$ be an $\mathtt{R}\check{\mathtt{R}}$-linear code with a generator matrix $\mathtt{G}$ of the form \eqref{Equation4.3}.  The code $\mathtt{C}$ is $h$-Galois $\mathtt{R}\check{\mathtt{R}}$-LCD if and only if the codes $\overline{\mathtt{C}^{(X)}}$ and $\overline{\mathtt{C}^{(Y)}}$ with their respective generator matrices $[
I_{k_0} ~~ \overline{A}
]$ and $[
I_{\ell_0}~~ \overline{D}
]$ are $h$-Galois LCD codes over $\overline{\mathtt{R}}.$
\end{thm}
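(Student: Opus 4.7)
The plan is to chain together the two structural results just established (Theorems \ref{Theorem4.1} and \ref{Lemma4.7}) with the well-known field-level characterization of $h$-Galois LCD codes (Theorem 2.4 of Liu \etal~\cite{Liu2018}), which is already invoked in the paragraph immediately preceding the statement. Since all three ingredients are in place, the proof is essentially a one-line concatenation of equivalences; the main content is simply to make the dictionary between the three characterizations explicit.

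First, I would apply Theorem \ref{Theorem4.1}: because $\mathtt{G}$ is already of the form \eqref{Equation4.3}, the code $\mathtt{C}$ is automatically weakly-free of the type $\{k_0,0,\ldots,0;\ell_0,0,\ldots,0\}$, so Theorem \ref{Theorem4.1} reduces the $h$-Galois LCD property of $\mathtt{C}$ to the single condition that $\mathtt{G}\diamond\sigma^h(\mathtt{G})^T$ is a generator matrix of a linear code over $\mathtt{R}$ of the type $\{k_0,0,\ldots,0,\underbrace{\ell_0}_{(e-s+1)\text{-th}},0,\ldots,0\}$ and length $k_0+\ell_0$.

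Next, I would invoke Theorem \ref{Lemma4.7}, which translates this module-theoretic condition on $\mathtt{G}\diamond\sigma^h(\mathtt{G})^T$ into the purely residue-field condition that both $I_{k_0}+\overline{A}\,\overline{\sigma^h(A)}^T$ and $I_{\ell_0}+\overline{D}\,\overline{\sigma^h(D)}^T$ are invertible over $\overline{\mathtt{R}}$. The short observation needed here is the matrix identity
\[
I_{k_0}+\overline{A}\,\overline{\sigma^h(A)}^T
=\bigl[\,I_{k_0}~~\overline{A}\,\bigr]\bigl[\,I_{k_0}~~\overline{\sigma^h(A)}\,\bigr]^T,
\]
and analogously for the $\ell_0$-block with $\overline{D}$ in place of $\overline{A}$, so these are precisely the Gram matrices of $\overline{\mathtt{C}^{(X)}}$ and $\overline{\mathtt{C}^{(Y)}}$ under the $h$-Galois inner product.

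Finally, by Theorem 2.4 of Liu \etal~\cite{Liu2018}, invertibility of the Gram matrix of a generator matrix over the field $\overline{\mathtt{R}}$ is equivalent to the corresponding linear code being $h$-Galois LCD. Applying this to each of the two blocks yields that $\mathtt{G}\diamond\sigma^h(\mathtt{G})^T$ has the required type if and only if $\overline{\mathtt{C}^{(X)}}$ and $\overline{\mathtt{C}^{(Y)}}$ are both $h$-Galois LCD codes over $\overline{\mathtt{R}}$, completing the chain of equivalences. There is no real obstacle to overcome: the hard work has already been done in Theorem \ref{Lemma4.7}, where the possibility of dropping the constraint $A\sigma^h(C)^T+\iota(B\sigma^h(D)^T)\in M_{k_0\times\ell_0}(\gamma\mathtt{R})$ was handled by the delicate argument using Lemma \ref{Lemma4.5}(c) and periods. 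I would therefore keep the proof of Theorem \ref{Theorem4.2} short, merely citing the two theorems and the result of Liu \etal, and pointing out the Gram-matrix identification.
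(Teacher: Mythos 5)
Your proposal is correct and follows essentially the same route as the paper, whose proof is exactly the concatenation of Theorem \ref{Theorem4.1}, Theorem \ref{Lemma4.7} and Theorem 2.4 of Liu \etal~\cite{Liu2018}, with the Gram-matrix identity $I_{k_0}+\overline{A}\,\overline{\sigma^h(A)}^T=[\,I_{k_0}~\overline{A}\,][\,I_{k_0}~\overline{\sigma^h(A)}\,]^T$ already recorded in the paragraph preceding the statement. Your observation that the hypothesis on $\mathtt{G}$ makes the code automatically weakly-free correctly discharges the weakly-free clause in Theorem \ref{Theorem4.1}, so nothing is missing.
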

\begin{proof}
It follows from Lemma \ref{Lemma4.4}, Theorems \ref{Theorem4.1} and \ref{Lemma4.7} and by applying Theorem 2.4 of Liu \etal~\cite{Liu2018}.
\end{proof}
Now, the following example illustrates the above theorem.
\begin{example}\label{Example4.3}
Let $\mathbf{M}=\mathbb{Z}_{4}^5\oplus \mathbb{Z}_2^3.$ Let  $\mathtt{C}_3(\subseteq \mathbf{M})$ be a $\mathbb{Z}_4\mathbb{Z}_2$-linear code with a generator matrix
\vspace{-1mm}$$\left[\begin{array}{ccccc|ccc}
    1 & 1 & 3 & 0 & 0 & 0 & 0 & 1\\ 
    0 & 0 & 2 & 0 & 2 & 1 & 0 & 0\\
    0 & 2 & 0 & 2 & 0 & 0 & 1 & 0
\end{array}\right].\vspace{-1mm}$$
Here, we see that $\overline{\mathtt{C}_3^{(X)}}=\{(0,0,0,0,0),(1,1,1,0,0)\}$ and $\overline{\mathtt{C}_3^{(Y)}}=\{(0,0,0),(1,0,0),(0,1,0),(1,1,0)\}$ are Euclidean LCD codes over the residue field $\mathbb{Z}_2.$ This, by Theorem \ref{Theorem4.2}, implies that the code $\mathtt{C}_3$ is Euclidean $\mathbb{Z}_4\mathbb{Z}_{2}$-LCD. On the other hand, one can directly verify that $\mathtt{C}_{3}\cap \mathtt{C}_{3}^{\perp_0}=\{0\}.$
\end{example}
\begin{remark}\label{Rk3}By Theorem \ref{Theorem4.2}, we see that the statements \textbf{1.} and \textbf{4.} of Theorem 1 of Bajalan \textit{et al.} \cite{Bajalan} are equivalent, even when the condition    $A\sigma^h(C)^T+\iota\big(B\sigma^h(D)^T\big)\in M_{k_0\times \ell_0}(\gamma \mathtt{R})$ does not hold.
\end{remark}

We next recall, by Remark 1 of Bajalan \etal~\cite{Bajalan2023}, that a weakly-free $\mathtt{R}\check{\mathtt{R}}$-linear code $\mathtt{D}$ of block-length $(\mathpzc{a},\mathpzc{b})$ and of the type $\{k_0,\underbrace{0,0,\ldots, 0}_{e-1 \text{ times}}; \ell_0,\underbrace{0,0, \ldots, 0}_{s-1 \text{ times}}\}$ has a generator matrix $\mathtt{H}$ of the form 
\vspace{-5mm}\begin{equation}\label{Eq03.4}
\mathtt{H}=\left[\begin{array}{c|c}
    \mathcal{A}&\mathcal{B} \\ 
    \gamma^{e-s}\mathcal{E}& \mathcal{F} 
\end{array}\right],
\vspace{-2mm}\end{equation}
where rows of the matrix $\mathcal{A}\in M_{k_0\times \mathpzc{a}}(\mathtt{R})$ are independent over $\mathtt{R}$ and have period $\gamma^e,$ rows of the matrix $\mathcal{F}\in M_{\ell_0\times \mathpzc{b}}(\check{\mathtt{R}})$ are independent over $\check{\mathtt{R}}$ and have period $\gamma^s,$ and the matrices $ \mathcal{B}\in M_{k_0\times \mathpzc{b}}(\check{\mathtt{R}})$ and  $\gamma^{e-s}\mathcal{E}\in M_{\ell_0\times \mathpzc{a}}(\mathtt{R}).$ Next, let $\mathtt{D}^{(X)}$ be a linear code of length $\mathpzc{a}$ over $\mathtt{R}$ with a generator matrix $\mathcal{A},$ and let $\mathtt{D}^{(Y)}$ be a linear code of length $\mathpzc{b}$ over $\check{\mathtt{R}}$ with a generator matrix $\mathcal{F}.$ Further, let $\overline{\mathtt{D}^{(X)}}$ and $\overline{\mathtt{D}^{(Y)}}$ be  linear codes of lengths $\mathpzc{a}$ and $\mathpzc{b}$ over $\overline{\mathtt{R}}$ with generator matrices $\overline{\mathcal{A}}$ and $\overline{\mathcal{F}},$ respectively. By Theorem 2.4 of Liu \etal~\cite{Liu2018}, we see that the codes $\overline{\mathtt{D}^{(X)}}$ and $\overline{\mathtt{D}^{(Y)}}$ are $h$-Galois LCD if and only if  the matrices $
\overline{\mathcal{A}}~\overline{\sigma^h(\mathcal{A})}^T
$ and $
\overline{\mathcal{F}}~\overline{\sigma^h(\mathcal{F})}^T
$ are invertible. We next observe the following:
\begin{thm}\label{Theorem04.2}
Let $\mathtt{D}(\subseteq\mathbf{M})$ be a weakly-free $\mathtt{R}\check{\mathtt{R}}$-linear code of the type $\{k_0,\underbrace{0,0,\ldots, 0}_{e-1 \text{ times}}; \ell_0,\underbrace{0,0, \ldots, 0}_{s-1 \text{ times}}\}$  with a generator matrix $\mathtt{H}$ of the form \eqref{Eq03.4}. The code $\mathtt{D}$ is $h$-Galois $\mathtt{R}\check{\mathtt{R}}$-LCD if and only if the codes $\overline{\mathtt{D}^{(X)}}$ and $\overline{\mathtt{D}^{(Y)}}$ with their respective generator matrices $\overline{\mathcal{A}}$ and $\overline{\mathcal{F}}$ are $h$-Galois LCD codes over $\overline{\mathtt{R}}.$
\end{thm}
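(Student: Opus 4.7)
The plan is to reduce to Theorem \ref{Theorem4.2} via the permutation equivalence of weakly-free $\mathtt{R}\check{\mathtt{R}}$-linear codes to codes in standard form. Specifically, by Proposition 3.2 of Borges et al.\ \cite{Borges}, recalled in Section \ref{Preliminaries}, the code $\mathtt{D}$ is permutation equivalent to an $\mathtt{R}\check{\mathtt{R}}$-linear code $\mathtt{C}(\subseteq\mathbf{M})$ admitting a generator matrix $\mathtt{G}$ in the standard form \eqref{Equation4.3}. The permutations in question act separately within the first $\mathpzc{a}$ coordinates and within the last $\mathpzc{b}$ coordinates; a direct inspection of the definition of $\langle\cdot,\cdot\rangle_h$ shows that such within-block permutations merely reindex each of the two sums term-by-term and hence leave $\langle\cdot,\cdot\rangle_h$ invariant. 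It follows that the permutation sends $\mathtt{D}^{\perp_h}$ to $\mathtt{C}^{\perp_h}$, so $\mathtt{D}$ is $h$-Galois $\mathtt{R}\check{\mathtt{R}}$-LCD if and only if $\mathtt{C}$ is.

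Next, I would apply Theorem \ref{Theorem4.2} to $\mathtt{C}$ to conclude that $\mathtt{C}$ is $h$-Galois LCD if and only if the linear codes $\overline{\mathtt{C}^{(X)}}$ and $\overline{\mathtt{C}^{(Y)}}$ over $\overline{\mathtt{R}}$, generated respectively by $[I_{k_0}\;\overline{A}]$ and $[I_{\ell_0}\;\overline{D}]$, are $h$-Galois LCD over $\overline{\mathtt{R}}$. To identify these with $\overline{\mathtt{D}^{(X)}}$ and $\overline{\mathtt{D}^{(Y)}}$, I would track the operations transforming $\mathtt{H}$ into $\mathtt{G}$: within-block column permutations, left-multiplications of the upper $k_0$ rows by a unit in $\mathrm{GL}_{k_0}(\mathtt{R})$ and of the lower $\ell_0$ rows by a unit in $\mathrm{GL}_{\ell_0}(\check{\mathtt{R}})$ (to produce the identity sub-blocks), together with row combinations subtracting multiples of the upper rows (with coefficients in $\gamma^{e-s}\mathtt{R}$) from the lower rows (to create the zero block below $I_{k_0}$). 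Passing to the residue field $\overline{\mathtt{R}}$, these operations show that $\overline{\mathcal{A}}$ is carried to $[I_{k_0}\;\overline{A}]$ by a permutation of the first $\mathpzc{a}$ coordinates followed by an invertible row reduction, and similarly $\overline{\mathcal{F}}$ is carried to $[I_{\ell_0}\;\overline{D}]$; crucially, the row-subtraction step leaves the lower rows' reductions mod $\gamma$ unchanged since the subtracted coefficients lie in $\gamma^{e-s}\mathtt{R}\subseteq\gamma\mathtt{R}$. Consequently $\overline{\mathtt{D}^{(X)}}$ and $\overline{\mathtt{C}^{(X)}}$ (resp.\ $\overline{\mathtt{D}^{(Y)}}$ and $\overline{\mathtt{C}^{(Y)}}$) are permutation equivalent as linear codes over $\overline{\mathtt{R}}$, and since the $h$-Galois LCD property of linear codes over $\overline{\mathtt{R}}$ is preserved under coordinate permutations, chaining the equivalences yields the theorem.

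The main obstacle will be verifying carefully that the row-subtraction step does not disturb the residue-field codes — a routine but somewhat delicate check exploiting $e-s\geq 1$ and the embedding $\iota:\check{\mathtt{R}}\to\mathtt{R}$ given by $a\mapsto \gamma^{e-s}a$. An alternative, more direct path avoiding the reduction to standard form is to apply Theorem \ref{Theorem4.1} directly to $\mathtt{H}$, compute $\mathtt{H}\diamond\sigma^h(\mathtt{H})^T$ as a $2\times 2$ block matrix, observe that its off-diagonal blocks lie in $\gamma^{e-s}\mathtt{R}$, and argue as in Theorem \ref{Lemma4.7} that the required type condition over $\mathtt{R}$ is equivalent to the invertibility of $\overline{\mathcal{A}}\,\overline{\sigma^h(\mathcal{A})}^T$ and $\overline{\mathcal{F}}\,\overline{\sigma^h(\mathcal{F})}^T$ over $\overline{\mathtt{R}}$, whence Theorem 2.4 of Liu et al.\ \cite{Liu2018} closes the argument. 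This direct route mirrors the structure of the paper's earlier proofs but is considerably more computation-heavy than the permutation-equivalence reduction outlined above.
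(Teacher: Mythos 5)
Your argument is correct, but it takes a genuinely different route from the paper's. The paper disposes of this theorem in one line by rerunning the direct computation of Theorems \ref{Theorem4.1} and \ref{Lemma4.7} on the general matrix $\mathtt{H}$ of the form \eqref{Eq03.4} (computing $\mathtt{H}\diamond\sigma^h(\mathtt{H})^T$, whose off-diagonal blocks automatically lie in $\gamma^{e-s}\mathtt{R}$, and invoking Lemma \ref{Lemma4.5} and Theorem 2.4 of Liu \etal); this is exactly the ``alternative, more direct path'' you sketch in your last sentences. Your primary route instead reduces to the already-proved standard-form case (Theorem \ref{Theorem4.2}) by passing to a permutation-equivalent code with generator matrix of the form \eqref{Equation4.3} and tracking what the within-block column permutations and the row reduction do to the residue-field codes. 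The two nontrivial points in your reduction both check out: the within-block permutations visibly preserve $\langle\cdot,\cdot\rangle_h$ and hence the $h$-Galois LCD property of $\mathtt{R}\check{\mathtt{R}}$-linear codes; and the cross-block row operations perturb the first block of the top rows and the second block of the bottom rows only by elements of $\gamma^{e-s}\mathtt{R}\subseteq\gamma\mathtt{R}$ (respectively $\gamma^{e-s}\check{\mathtt{R}}\subseteq\gamma\check{\mathtt{R}}$), so $\overline{\mathcal{A}}$ and $\overline{\mathcal{F}}$ are carried, up to an invertible row transformation over $\overline{\mathtt{R}}$, to coordinate permutations of $[I_{k_0}\;\overline{A}]$ and $[I_{\ell_0}\;\overline{D}]$; since coordinate permutations preserve the $h$-Galois LCD property over $\overline{\mathtt{R}}$, the equivalences chain together. (One small point worth making explicit if you write this up: full row rank of $\overline{\mathcal{A}}$ over $\overline{\mathtt{R}}$, needed to locate the invertible $k_0\times k_0$ minor, follows from the rows of $\mathcal{A}$ being independent of period $\gamma^e$ by multiplying a putative dependence by $\gamma^{e-1}$.) Your reduction buys a shorter, less computation-heavy proof that reuses Theorem \ref{Theorem4.2} as a black box, at the cost of the bookkeeping about which row operations survive reduction mod $\gamma$; the paper's route avoids that bookkeeping but repeats the block-matrix analysis.
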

\begin{proof} Working as in Theorems \ref{Theorem4.1}-\ref{Theorem4.2} and by applying Theorem 2.4 of Liu \etal~\cite{Liu2018} and  Lemma \ref{Lemma4.5}, we get the desired result.
\end{proof}

As a consequence of Theorem \ref{Theorem04.2}, we deduce the following:

\begin{cor}\label{corH}
When $q=|\overline{\mathtt{R}}|>4,$ every weakly-free $\mathtt{R}\check{\mathtt{R}}$-linear code  of block-length $(\mathpzc{a},\mathpzc{b})$ is monomially equivalent to an $h$-Galois  $\mathtt{R}\check{\mathtt{R}}$-LCD code.
\end{cor}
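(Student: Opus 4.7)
The strategy is a two-step reduction that pushes the monomial-equivalence problem from the ring level down to the residue field $\overline{\mathtt{R}}$. The key engine is Theorem \ref{Theorem04.2}, which characterizes the $h$-Galois LCD property of a weakly-free $\mathtt{R}\check{\mathtt{R}}$-linear code $\mathtt{D}$ purely in terms of whether the two reduced codes $\overline{\mathtt{D}^{(X)}}$ (of length $\mathpzc{a}$) and $\overline{\mathtt{D}^{(Y)}}$ (of length $\mathpzc{b}$) over $\overline{\mathtt{R}}$ are $h$-Galois LCD. Thus it suffices to arrange, via monomial equivalence on $\mathbf{M}$, that both reduced codes become $h$-Galois LCD over $\overline{\mathtt{R}}$.

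First, I would fix a generator matrix $\mathtt{H}$ of $\mathtt{D}$ of the form \eqref{Eq03.4}, extract the reduced generator matrices $\overline{\mathcal{A}}$ and $\overline{\mathcal{F}}$, and hence the codes $\overline{\mathtt{D}^{(X)}}$ and $\overline{\mathtt{D}^{(Y)}}$ over the finite field $\overline{\mathtt{R}}$ of order $q>4$. Next, I would invoke the field-level result of Liu \etal~\cite{Liu2018}, which guarantees that when $q>4$, every linear code over $\overline{\mathtt{R}}$ is monomially equivalent to an $h$-Galois LCD code. Applying this separately to $\overline{\mathtt{D}^{(X)}}$ and to $\overline{\mathtt{D}^{(Y)}}$ yields permutations $\tau_X,\tau_Y$ and diagonal matrices $\Delta_X,\Delta_Y$ with nonzero diagonal entries in $\overline{\mathtt{R}}$ such that the monomially transformed codes are $h$-Galois LCD over $\overline{\mathtt{R}}$.

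The second step is to lift these field-level monomial transformations to $\mathbf{M}$. For the first block, I would use the same permutation $\tau_X$, and replace each scalar $\overline{u}\in\overline{\mathtt{R}}^{*}$ by its Teichm\"uller lift $u\in\mathcal{T}$, which is a unit in $\mathtt{R}$; for the second block, I would do the same, viewing the lifts as units in $\check{\mathtt{R}}$. These operations are of Types A--D in the definition of monomial equivalence, so they produce an $\mathtt{R}\check{\mathtt{R}}$-linear code $\mathtt{D}'$ monomially equivalent to $\mathtt{D}$. Because permutations and multiplication by units commute with reduction modulo $\gamma$, the reduced codes $\overline{(\mathtt{D}')^{(X)}}$ and $\overline{(\mathtt{D}')^{(Y)}}$ coincide with $\tau_X\Delta_X\overline{\mathtt{D}^{(X)}}$ and $\tau_Y\Delta_Y\overline{\mathtt{D}^{(Y)}}$ respectively, and are therefore $h$-Galois LCD over $\overline{\mathtt{R}}$. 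Noting that monomial operations preserve both the independence and the periods of the rows of $\mathtt{H}$, the code $\mathtt{D}'$ remains weakly-free of the same type, so Theorem \ref{Theorem04.2} applies and concludes that $\mathtt{D}'$ is $h$-Galois $\mathtt{R}\check{\mathtt{R}}$-LCD.

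The main (and really the only) delicate point is the compatibility check in the last step: one must verify that the first-block and second-block reduced generator matrices of $\mathtt{D}'$ are exactly the images of $\overline{\mathcal{A}}$ and $\overline{\mathcal{F}}$ under the chosen monomial transformations. This is straightforward once the lifts are taken from $\mathcal{T}$, because reducing modulo $\gamma$ sends a Teichm\"uller unit to the prescribed nonzero field element; the only mild subtlety is that the off-diagonal blocks $\mathcal{B}$ and $\gamma^{e-s}\mathcal{E}$ do not interfere, since $\overline{\mathtt{D}^{(X)}}$ and $\overline{\mathtt{D}^{(Y)}}$ are defined from $\mathcal{A}$ and $\mathcal{F}$ alone. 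Everything else is a direct appeal to Theorem \ref{Theorem04.2} and the Liu \etal~field-level equivalence theorem.
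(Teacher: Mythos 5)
Your proposal is correct and follows essentially the same route as the paper: reduce via Theorem \ref{Theorem04.2} to the residue-field codes $\overline{\mathtt{D}^{(X)}}$ and $\overline{\mathtt{D}^{(Y)}}$, apply the known fact that for $q>4$ every linear code over $\overline{\mathtt{R}}$ is monomially (indeed diagonally) equivalent to an $h$-Galois LCD code, lift the multipliers to units of $\mathtt{R}$ and $\check{\mathtt{R}}$, and conclude again by Theorem \ref{Theorem04.2}. The only discrepancies are cosmetic: the paper sources the field-level equivalence to Theorem 5.1 of Yadav and Sharma rather than to Liu \emph{et al.}, uses arbitrary unit lifts rather than Teichm\"uller lifts, and gets by with diagonal scalings alone (no permutations), none of which affects the validity of your argument.
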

\begin{proof}To prove the result, we assume, 
without any loss of generality,  that the code $\mathtt{C}$ has a generator matrix $\mathtt{G}$ of the form \eqref{Equation4.3}. If the code $\mathtt{C}$ is $h$-Galois $\mathtt{R}\check{\mathtt{R}}$-LCD, then we are through.  

Now, suppose that the code $\mathtt{C}$ is not $h$-Galois $\mathtt{R}\check{\mathtt{R}}$-LCD. Here, by Theorem \ref{Theorem4.2}, we see that at least one of the codes $\overline{\mathtt{C}^{(X)}}$ and $\overline{\mathtt{C}^{(Y)}}$ is not an $h$-Galois LCD code over $\overline{\mathtt{R}}.$ Working as in Theorem 5.1 of Yadav and Sharma \etal~\cite{Yadav1}, we see that there exist an $\mathpzc{a}$-tuple $\mathtt{w}=(w_1,w_2,\ldots,w_{\mathpzc{a}})\in (\overline{\mathtt{R}}\setminus\{0\})^{\mathpzc{a}}$ and a $\mathpzc{b}$-tuple $\mathtt{v}=(v_1,v_2,\ldots,v_\mathpzc{b})\in (\overline{\mathtt{R}}\setminus\{0\})^{\mathpzc{b}}$ such that the codes  $\overline{\mathtt{C}_\mathtt{w}^{(X)}}=\{(w_1c_1,w_2c_2,\ldots,w_\mathpzc{a}c_\mathpzc{a})\in\overline{\mathtt{R}}^\mathpzc{a}:(c_1,c_2,\ldots,c_\mathpzc{a})\in\overline{\mathtt{C}^{(X)}}\}$ and $\overline{\mathtt{C}_\mathtt{v}^{(Y)}}=\{(v_1d_1,v_2d_2,\ldots,v_\mathpzc{b}d_\mathpzc{b})\in\overline{\mathtt{R}}^\mathpzc{b}:(d_1,d_2,\ldots,d_\mathpzc{b})\in\overline{\mathtt{C}^{(Y)}}\}$ are $h$-Galois LCD codes over $\overline{\mathtt{R}}.$ Thus  there exist an $\mathpzc{a}$-tuple $\mathtt{r}=(r_1,r_2,\ldots,r_{\mathpzc{a}}) \in (\mathtt{R}\setminus\gamma\mathtt{R})^{\mathpzc{a}}$ and a $\mathpzc{b}$-tuple $\mathtt{s}=(s_1,s_2,\ldots,s_\mathpzc{b})\in (\check{\mathtt{R}}\setminus\gamma\check{\mathtt{R}})^{\mathpzc{b}}$ satisfying $\overline{r}_i=w_i$ and $\overline{s}_j=v_j$ for $1\leq i\leq \mathpzc{a}$ and $1\leq j\leq \mathpzc{b}.$  By Theorem \ref{Theorem04.2}, we see that the code $\mathtt{D}=\{(r_1\alpha_1,r_2\alpha_2,\ldots,r_\mathpzc{a}\alpha_\mathpzc{a}|s_1\beta_1,s_2\beta_2,\ldots,s_\mathpzc{b}\beta_{\mathpzc{b}})\in\mathbf{M}:(\alpha_1,\alpha_2,\ldots,\alpha_\mathpzc{a}|\beta_1,\beta_2,\ldots,\beta_{\mathpzc{b}})\in\mathtt{C}\}$ is an $h$-Galois $\mathtt{R}\check{\mathtt{R}}$-LCD code. Note that the codes $\mathtt{C}$ and $\mathtt{D}$ are monomially equivalent.
% By working as in Theorem 16 and Corollary 18 of Carlet \etal~\cite{Carlet1}, applying Theorem 2.4 of Liu~\etal~\cite{Liu2018}, and employing the techniques from Corollary \ref{Eucequivalence}, the desired result follows.
\end{proof}

\begin{cor}\label{Eucequivalence}
When $q=|\overline{\mathtt{R}}|>3,$ every weakly-free $\mathtt{R}\check{\mathtt{R}}$-linear code  of block-length $(\mathpzc{a},\mathpzc{b})$ is monomially equivalent to a Euclidean $\mathtt{R}\check{\mathtt{R}}$-LCD code.
\end{cor}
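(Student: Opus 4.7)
The plan is to mirror the strategy used in the proof of Corollary \ref{corH}, replacing the $h$-Galois monomial-equivalence result for linear codes over the residue field $\overline{\mathtt{R}}$ by its Euclidean counterpart, where the threshold on $q = |\overline{\mathtt{R}}|$ improves from $q > 4$ to $q > 3$. So I would start with a weakly-free $\mathtt{R}\check{\mathtt{R}}$-linear code $\mathtt{C}$ of block-length $(\mathpzc{a},\mathpzc{b})$ and, without loss of generality, assume it has a generator matrix $\mathtt{G}$ in the standard form \eqref{Equation4.3}. If $\mathtt{C}$ is already Euclidean $\mathtt{R}\check{\mathtt{R}}$-LCD, there is nothing to prove.

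Otherwise, applying Theorem \ref{Theorem4.2} with $h = 0$, at least one of the two residue-field codes $\overline{\mathtt{C}^{(X)}}$ and $\overline{\mathtt{C}^{(Y)}}$ over $\overline{\mathtt{R}}$ fails to be Euclidean LCD. Here I would invoke the Euclidean analogue of the diagonal-scaling result used in Corollary \ref{corH} (this is the $q > 3$ version proved in the Yadav--Sharma paper cited in the excerpt, and originally due to Carlet \textit{et al.} for the Euclidean case): when $q > 3$, for any linear code $C$ of length $n$ over $\overline{\mathtt{R}}$ there exists a vector $(w_1, \ldots, w_n) \in (\overline{\mathtt{R}} \setminus \{0\})^n$ such that the diagonally scaled code $\{(w_1 c_1, \ldots, w_n c_n) : (c_1, \ldots, c_n) \in C\}$ is Euclidean LCD. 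Applying this separately to $\overline{\mathtt{C}^{(X)}}$ (with length $\mathpzc{a}$) and to $\overline{\mathtt{C}^{(Y)}}$ (with length $\mathpzc{b}$), I obtain tuples $\mathtt{w} \in (\overline{\mathtt{R}} \setminus \{0\})^\mathpzc{a}$ and $\mathtt{v} \in (\overline{\mathtt{R}} \setminus \{0\})^\mathpzc{b}$ whose coordinatewise scaling yields Euclidean LCD codes $\overline{\mathtt{C}_\mathtt{w}^{(X)}}$ and $\overline{\mathtt{C}_\mathtt{v}^{(Y)}}$.

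Next, lifting via the Teichm\"uller set $\mathcal{T}$, choose $\mathtt{r} = (r_1, \ldots, r_\mathpzc{a}) \in (\mathtt{R} \setminus \gamma \mathtt{R})^\mathpzc{a}$ and $\mathtt{s} = (s_1, \ldots, s_\mathpzc{b}) \in (\check{\mathtt{R}} \setminus \gamma \check{\mathtt{R}})^\mathpzc{b}$ satisfying $\overline{r}_i = w_i$ and $\overline{s}_j = v_j$ for all $i, j$. Form the coordinatewise scaled code
\[
\mathtt{D} = \{(r_1 \alpha_1, \ldots, r_\mathpzc{a} \alpha_\mathpzc{a} \mid s_1 \beta_1, \ldots, s_\mathpzc{b} \beta_\mathpzc{b}) : (\alpha_1, \ldots, \alpha_\mathpzc{a} \mid \beta_1, \ldots, \beta_\mathpzc{b}) \in \mathtt{C}\}.
\]
Since each $r_i$ is a unit of $\mathtt{R}$ and each $s_j$ is a unit of $\check{\mathtt{R}}$, $\mathtt{D}$ is obtained from $\mathtt{C}$ by a Type~C and Type~D monomial transformation, so $\mathtt{C}$ and $\mathtt{D}$ are monomially equivalent. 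Moreover, $\mathtt{D}$ is still weakly-free of the same type, and its associated residue-field projections are precisely $\overline{\mathtt{C}_\mathtt{w}^{(X)}}$ and $\overline{\mathtt{C}_\mathtt{v}^{(Y)}}$; applying Theorem \ref{Theorem04.2} with $h = 0$ shows that $\mathtt{D}$ is Euclidean $\mathtt{R}\check{\mathtt{R}}$-LCD.

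The main obstacle is essentially a bookkeeping one rather than a conceptual one: I need to ensure the Euclidean analogue of the lemma from Yadav--Sharma is applicable under the threshold $q > 3$, and to verify that the generator matrix of $\mathtt{D}$ obtained after lifting and scaling still has the standard form \eqref{Equation4.3} (up to row-operations) so that Theorem \ref{Theorem04.2} can be invoked cleanly. Since multiplying the columns of $\mathtt{G}$ by units preserves the weakly-free property and the types of the rows, this verification is routine, and the argument then follows the template of Corollary \ref{corH} verbatim.
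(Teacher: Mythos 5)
Your proposal is correct and follows essentially the same route as the paper: the paper's proof simply says to argue as in Corollary \ref{corH} but to replace the $q>4$ equivalence result by the Euclidean $q>3$ result of Carlet \emph{et al.} (Theorem 11 and Corollary 13 of \cite{Carlet1}), together with Theorem 2.4 of Liu \emph{et al.} and Theorem \ref{Theorem04.2}, which is exactly the diagonal-scaling-and-Teichm\"uller-lifting argument you spell out.
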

\begin{proof}Working as in Corollary \ref{corH} and by applying Theorem 11 and Corollary 13 of Carlet \etal~\cite{Carlet1},  Theorem 2.4 of Liu~\etal~\cite{Liu2018} and Theorem \ref{Theorem04.2}, the desired result follows.
\end{proof}

Now the following  corollary is an immediate consequence of Theorem \ref{Theorem4.2} and provide a method to construct $h$-Galois $\mathtt{R}\check{\mathtt{R}}$-LCD codes of block-length $(\mathpzc{a}, \mathpzc{b}).$ 
\begin{cor}\label{Cor4.3}
An $\mathtt{R}\check{\mathtt{R}}$-linear code  $\mathtt{C}(\subseteq\mathbf{M})$    with a generator matrix
\vspace{-2mm}\begin{align*}
\left[\begin{array}{cc|cc}
    I_{k_0} & \gamma A & 0 & B \\ 
    0 & \gamma^{e-s} C & I_{\ell_0} & \gamma D 
\end{array}\right]
\vspace{-2mm}\end{align*}
is $h$-Galois $\mathtt{R}\check{\mathtt{R}}$-LCD, where
$k_0$ and $\ell_0$ are non-negative integers. 
\end{cor}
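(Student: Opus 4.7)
The plan is to apply Theorem \ref{Theorem4.2} directly to the given generator matrix. In the notation of that theorem, the code $\mathtt{C}$ here has its block \eqref{Equation4.3} with the matrix $A$ replaced by $\gamma A$ and the matrix $D$ replaced by $\gamma D.$ Hence I first observe that upon reducing modulo $\gamma,$ the matrices $\overline{\gamma A}$ and $\overline{\gamma D}$ are both the zero matrix over $\overline{\mathtt{R}},$ so the associated reduced codes $\overline{\mathtt{C}^{(X)}}$ and $\overline{\mathtt{C}^{(Y)}}$ admit the respective generator matrices $[\,I_{k_0}~~0\,]$ and $[\,I_{\ell_0}~~0\,].$

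Next, I would verify that these two reduced codes are $h$-Galois LCD codes over $\overline{\mathtt{R}}.$ By the characterization recalled just before Theorem \ref{Theorem4.2} (namely Theorem 2.4 of Liu \etal~\cite{Liu2018}), it suffices to show that the matrices
\[
I_{k_0}+\overline{\gamma A}\,\overline{\sigma^h(\gamma A)}^{T}\quad\text{and}\quad I_{\ell_0}+\overline{\gamma D}\,\overline{\sigma^h(\gamma D)}^{T}
\]
are invertible over $\overline{\mathtt{R}}.$ Since $\overline{\gamma A}=0$ and $\overline{\gamma D}=0,$ both matrices collapse to $I_{k_0}$ and $I_{\ell_0}$ respectively, which are trivially invertible.

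Finally, applying Theorem \ref{Theorem4.2} yields that $\mathtt{C}$ is $h$-Galois $\mathtt{R}\check{\mathtt{R}}$-LCD. There is no real obstacle in this argument: the multiplication by $\gamma$ built into the blocks $\gamma A$ and $\gamma D$ is precisely engineered so that the residue-level Gram matrices degenerate to the identity, which is exactly the invertibility condition required by Theorem \ref{Theorem4.2}. The only point worth stating explicitly is that the matrix given in the statement is already in the standard form \eqref{Equation4.3}, so that Theorem \ref{Theorem4.2} is immediately applicable without any preliminary row reduction.
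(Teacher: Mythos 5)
Your proposal is correct and is exactly the argument the paper intends: the corollary is stated as an immediate consequence of Theorem \ref{Theorem4.2}, and your observation that $\overline{\gamma A}=0$ and $\overline{\gamma D}=0$ force the residue-level Gram matrices to be $I_{k_0}$ and $I_{\ell_0}$ is precisely why it is immediate. No gaps.
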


The following corollary is an extension of Proposition 3.6 of Benbelkacem \etal~\cite{Benbel}. 
\begin{cor}\label{Cor4.4}
Let $\mathtt{D}(\subseteq\mathbf{M})$ be a weakly-free $\mathtt{R}\check{\mathtt{R}}$-linear code of the type $\{k_0,\underbrace{0,0,\ldots, 0}_{e-1 \text{ times}}; \ell_0,\underbrace{0,0, \ldots, 0}_{s-1 \text{ times}}\}$  with a generator matrix $\mathtt{H}$ of the form \eqref{Eq03.4}.  Let $R^{(i)}$  denote the $i$-th row of the matrix $\mathtt{H}$  for $1\leq i\leq k_0+\ell_0.$ Suppose that $\langle R^{(i)},R^{(i)}\rangle_h\notin\gamma\mathtt{R}$ and $\langle R^{(i)},R^{(j)}\rangle_h\in\gamma\mathtt{R}$ for all $i\ne j$ and $1\leq i,j \leq k_0+\ell_0.$ Then we have $\ell_0=0,$ and the code $\mathtt{D}$ is $h$-Galois $\mathtt{R}\check{\mathtt{R}}$-LCD.
\end{cor}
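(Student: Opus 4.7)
The plan is to first force $\ell_0=0$ by exploiting the mandatory $\gamma^{e-s}$ factor on the bottom block of $\mathtt{H}$, and then to invoke Theorem \ref{Theorem04.2} to conclude that $\mathtt{D}$ is $h$-Galois $\mathtt{R}\check{\mathtt{R}}$-LCD. The unifying observation is that the hypothesis concerns the reductions of $\langle R^{(i)},R^{(j)}\rangle_h$ modulo $\gamma$, so the problem naturally passes to the residue field $\overline{\mathtt{R}}$.

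For the first step, suppose towards a contradiction that $\ell_0\geq 1$ and pick any $i$ with $k_0 < i \leq k_0+\ell_0$. By the shape of $\mathtt{H}$ in \eqref{Eq03.4}, every entry of the first block of $R^{(i)}$ lies in $\gamma^{e-s}\mathtt{R}$, so the first-block contribution to $\langle R^{(i)},R^{(i)}\rangle_h$ lies in $\gamma^{2(e-s)}\mathtt{R}$, while the second-block contribution automatically carries a factor of $\gamma^{e-s}$ by the very definition of the $h$-Galois inner product on $\mathbf{M}$. Because $s<e$ gives $e-s\geq 1$, this forces $\langle R^{(i)},R^{(i)}\rangle_h \in \gamma\mathtt{R}$, contradicting the hypothesis. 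Hence $\ell_0=0$.

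With $\ell_0=0$, the matrix $\mathtt{H}$ reduces to $[\mathcal{A}\mid\mathcal{B}]$ and the auxiliary code $\overline{\mathtt{D}^{(Y)}}$ is trivially $h$-Galois LCD, so by Theorem \ref{Theorem04.2} it suffices to show that $\overline{\mathcal{A}}\,\overline{\sigma^h(\mathcal{A})}^T$ is invertible over $\overline{\mathtt{R}}$. Its $(i,j)$-entry equals the residue of $\langle R^{(i)},R^{(j)}\rangle_h$ modulo $\gamma$, since the second-block contribution $\gamma^{e-s}\sum_{\ell}\mathcal{B}^{(i)}_\ell\sigma^h(\mathcal{B}^{(j)}_\ell)$ vanishes modulo $\gamma$ (again because $e-s\geq 1$). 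By hypothesis the reduced matrix is diagonal with unit entries on the diagonal, hence invertible, completing the proof. There is no serious obstacle here: the whole argument hinges on the single observation that the $\gamma^{e-s}$ factor baked into the weakly-free form is incompatible, modulo $\gamma$, with the diagonal of the Gram matrix of the rows $R^{(i)}$ being composed of units.
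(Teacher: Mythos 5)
Your proof is correct and follows essentially the same route as the paper's: deduce $\ell_0=0$ from the unit diagonal of the Gram matrix, reduce modulo $\gamma$ to see that $\overline{\mathcal{A}}\,\overline{\sigma^h(\mathcal{A})}^T$ is an invertible (diagonal, unit-entry) matrix over $\overline{\mathtt{R}}$, and conclude via Theorem \ref{Theorem04.2}. The only difference is that you spell out why the $\gamma^{e-s}$ factors force $\langle R^{(i)},R^{(i)}\rangle_h\in\gamma\mathtt{R}$ for the bottom $\ell_0$ rows, a step the paper leaves implicit.
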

\begin{proof}As $\langle R^{(i)},R^{(i)}\rangle_h\notin\gamma\mathtt{R}$ for $1\leq i\leq k_0+\ell_0,$ we must have $\ell_0=0.$ This gives $\mathtt{H}=[\mathcal{A}~|~\mathcal{B}].$ Now, to prove this result, we see, by Theorem \ref{Theorem04.2}, that it suffices to show that the matrix $\overline{\mathcal{A}}$ generates an $h$-Galois LCD code of length $\mathpzc{a}$ over $\overline{\mathtt{R}}.$
For this, let $R^{(i)}_X$ denote the $i$-th row of the matrix $\mathcal{A}$ for $1\leq i\leq k_0.$ Here we observe that $\langle R^{(i)}_X,R^{(i)}_X\rangle_h\notin\gamma \mathtt{R},$  which implies that $[\overline{R^{(i)}}_X,\overline{R^{(i)}}_X]_h\ne0$ for $1\leq i\leq k_0.$ Further, one can see that $[\overline{R^{(i)}}_X,\overline{R^{(j)}}_X]_h=0$ for all $i\ne j$ and $1\leq i,j\leq k_0.$ This implies that the matrix $\overline{\mathcal{A}}~\overline{\sigma^h(\mathcal{A})}^T$ is invertible. This, by Theorem 2.4 of Liu~\etal~\cite{Liu2018}, further implies that the matrix $\overline{\mathcal{A}}$ generates an $h$-Galois LCD code of length $\mathpzc{a}$ over $\overline{\mathtt{R}},$ which completes the proof. 
\end{proof}
Note that $h$-Galois  $\mathtt{R}\check{\mathtt{R}}$-LCD codes coincide with Euclidean $\mathtt{R}\check{\mathtt{R}}$-LCD codes when $h=0,$ while $h$-Galois  $\mathtt{R}\check{\mathtt{R}}$-LCD codes coincide with Hermitian $\mathtt{R}\check{\mathtt{R}}$-LCD codes when $w$ is even and $h=\frac{w}{2}.$ We will next count all Euclidean and Hermitian $\mathtt{R}\check{\mathtt{R}}$-LCD codes by applying the results derived in this section. These enumeration formula are useful in the classification of these codes up to monomial equivalence, which we will illustrate in certain specific cases (see Tables \ref{table1}-\ref{table3} and Appendices A-C). 
\section{Enumeration of Euclidean $\mathtt{R}\check{\mathtt{R}}$-LCD codes of block-length $(\mathpzc{a}, \mathpzc{b})$}\label{SectionE}

In this section, we enumerate all  Euclidean $\mathtt{R}\check{\mathtt{R}}$-LCD codes of block-length $(\mathpzc{a}, \mathpzc{b}).$ To do this, we first prove the following lemma.

\begin{lemma}\label{Lemma4.8}
Let $E_1$ be a Euclidean LCD code  of length $\mathpzc{a}$ and  dimension $k_0 $ over $\overline{\mathtt{R}},$ and let $E_2$ be a Euclidean LCD code  of length $\mathpzc{b}$ and  dimension $\ell_0 $ over $\overline{\mathtt{R}}.$ There are precisely  $q^{(\mathpzc{a}-k_0)(s\ell_0+(e-1)k_0)+(\mathpzc{b}-\ell_0)(sk_0+(s-1)\ell_0)}$ distinct Euclidean $\mathtt{R}\check{\mathtt{R}}$-LCD codes $\mathtt{C}$ of block-length $(\mathpzc{a}, \mathpzc{b})$ satisfying $\overline{\mathtt{C}^{(X)}}=E_1$ and $\overline{\mathtt{C}^{(Y)}}=E_2.$
\end{lemma}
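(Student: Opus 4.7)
The plan is to use Theorem \ref{Theorem4.2} (with $h=0$) to characterize Euclidean $\mathtt{R}\check{\mathtt{R}}$-LCD codes of the required type by their standard-form generator matrices \eqref{Equation4.3}, and then to count the free parameters in the quadruple $(A,B,\gamma^{e-s}C,D)$ subject to the prescribed reductions.

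First I would invoke Lemma \ref{Lemma4.4} to say that every Euclidean $\mathtt{R}\check{\mathtt{R}}$-LCD code in question has a generator matrix of the form \eqref{Equation4.3}, and then appeal to Theorem \ref{Theorem4.2} (with $h=0$): such a code is Euclidean $\mathtt{R}\check{\mathtt{R}}$-LCD iff the reductions $\overline{\mathtt{C}^{(X)}}$ and $\overline{\mathtt{C}^{(Y)}}$, which are generated by $[I_{k_0}~\overline{A}]$ and $[I_{\ell_0}~\overline{D}]$, are Euclidean LCD over $\overline{\mathtt{R}}$. Writing $E_1$ and $E_2$ in systematic form $[I_{k_0}~\overline{A}]$ and $[I_{\ell_0}~\overline{D}]$ pins down $\overline{A}$ and $\overline{D}$ uniquely, so the counting problem becomes: how many distinct codes $\mathtt{C}$ arise from all quadruples $(A,B,\gamma^{e-s}C,D)$ satisfying these two reduction constraints?

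Next I would verify that distinct admissible quadruples generate distinct codes. Suppose two standard-form generator matrices with data $(A,B,\gamma^{e-s}C,D)$ and $(A',B',\gamma^{e-s}C',D')$ span the same code. Expressing the first row of one matrix as an $\mathtt{R}$-linear combination of the rows of the other, the pivot columns in the first block force the coefficients of the first $k_0$ rows to be $(1,0,\ldots,0)$, while the pivot columns of the second block force the coefficients of the last $\ell_0$ rows into $\gamma^s\mathtt{R}$. Multiplying such coefficients by the entries of $\gamma^{e-s}C'$ yields contributions in $\gamma^e\mathtt{R}=\{0\}$, forcing $A=A'$; looking at the non-pivot columns of the second block (where those same coefficients vanish in $\check{\mathtt{R}}$) forces $B=B'$. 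A symmetric argument applied to the last $\ell_0$ rows yields $\gamma^{e-s}C=\gamma^{e-s}C'$ (equivalently $C\equiv C' \pmod{\gamma^s}$) and $D=D'$.

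Finally I would count via the Teichm\"uller decompositions $A=\sum_{i=0}^{e-1}\gamma^iA_i$, $B=\sum_{j=0}^{s-1}\gamma^jB_j$, $\gamma^{e-s}C=\sum_{j=0}^{s-1}\gamma^{e-s+j}C_j$ and $D=\sum_{j=0}^{s-1}\gamma^jD_j$ with layers over $\mathcal{T}$. The constraints $\overline{A}=\overline{A}_0$ and $\overline{D}=\overline{D}_0$ fix only the bottom layers $A_0$ and $D_0$, while all other layers are free; since $|\mathcal{T}|=q$, this produces $q^{(e-1)k_0(\mathpzc{a}-k_0)}$ choices for $A$, $q^{sk_0(\mathpzc{b}-\ell_0)}$ for $B$, $q^{s\ell_0(\mathpzc{a}-k_0)}$ for $\gamma^{e-s}C$, and $q^{(s-1)\ell_0(\mathpzc{b}-\ell_0)}$ for $D$. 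Multiplying and regrouping yields the claimed exponent $(\mathpzc{a}-k_0)(s\ell_0+(e-1)k_0)+(\mathpzc{b}-\ell_0)(sk_0+(s-1)\ell_0)$. I expect the main obstacle to be the uniqueness step, which is what distinguishes a count of codes from a count of generator matrices; but this reduces to a careful tracking of the chain-ring filtration and the periods of the rows in \eqref{Equation4.3}, with Lemma \ref{Lemma4.5} available as a tool.
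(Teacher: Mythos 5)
Your proposal is correct and takes essentially the same route as the paper: both reduce to standard-form generator matrices via Lemma \ref{Lemma4.4} and Theorem \ref{Theorem4.2}, fix $A_0$ and $D_0$ through the Teichm\"{u}ller lift of the systematic generator matrices of $E_1$ and $E_2$, and count the remaining free layers $A_1,\ldots,A_{e-1},B_0,\ldots,B_{s-1},C_0,\ldots,C_{s-1},D_1,\ldots,D_{s-1}$ to obtain the stated exponent. The only difference is that you spell out the injectivity step (distinct quadruples give distinct codes) via the pivot-column argument, whereas the paper simply asserts it; your argument for that step is sound.
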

\begin{proof}
To prove the result, we  assume, without any loss of generality, that the codes $E_1$ and $E_2$ have generator matrices in the standard form, say  $[
    I_{k_0} ~~ F_1
]$ is a generator matrix of $E_1$ and $[
    I_{\ell_0} ~~ F_2
]$ is a generator matrix of $E_2.$ One can easily see that there exists a bijection between the Teichm$\ddot{u}$ller set $\mathcal{T}$  and the residue field $\overline{\mathtt{R}}$ of $\mathtt{R}.$ Thus there exist unique matrices  $A_0$ and $D_0$ over $\mathcal{T}$ satisfying  $\overline{A_0}=F_1$ and $\overline{D_0}=F_2.$ Now, let us define  \vspace{-2mm}$$A=\sum\limits_{i=0}^{e-1}\gamma^iA_i,~~B=\sum\limits_{j=0}^{s-1}\gamma^jB_j,~~\gamma^{e-s} C=\sum\limits_{j=0}^{s-1}\gamma^{e-s+j}C_j\text{~ and ~}D=\sum\limits_{j=0}^{s-1}\gamma^jD_j,\vspace{-2mm}$$ where   $A_1,A_2,\ldots,A_{e-1}\in M_{k_0\times(\mathpzc{a}-k_0)}(\mathcal{T}),$ $B_0,B_1,\ldots,B_{s-1}\in M_{k_0\times(\mathpzc{b}-\ell_0)}(\mathcal{T}),$ $C_0,C_1,\ldots,C_{s-1}\in M_{\ell_0\times(\mathpzc{a}-k_0)}(\mathcal{T})$ and $D_1,D_2,\ldots,D_{s-1}\in M_{\ell_0\times(\mathpzc{b}-\ell_0)}(\mathcal{T}).$ By Theorem \ref{Theorem4.2}, we see that the $\mathtt{R}\check{\mathtt{R}}$-linear code $\mathtt{C}$ of block-length $(\mathpzc{a}, \mathpzc{b})$ with a generator matrix $\left[\begin{array}{cc|cc}
    I_{k_0} & A & 0 & B \\ 
    0 & \gamma^{e-s} C & I_{\ell_0} & D 
\end{array}\right]$ is a Euclidean $\mathtt{R}\check{\mathtt{R}}$-LCD code  satisfying $\overline{\mathtt{C}^{(X)}}=E_1$ and $\overline{\mathtt{C}^{(Y)}}=E_2.$ We further  observe that each of the distinct choices of the matrices $A_1,A_2,\ldots,A_{e-1},$ $ B_0,B_1,\ldots,B_{s-1}, $
$ C_0,C_1,\ldots,C_{s-1},$ $ D_1,D_2,\ldots, D_{s-1}$  gives rise to a distinct Euclidean $\mathtt{R}\check{\mathtt{R}}$-LCD code $\mathtt{C}$ of block-length $(\mathpzc{a}, \mathpzc{b})$ with a generator matrix $\left[\begin{array}{cc|cc}
    I_{k_0} & A & 0 & B \\ 
    0 & \gamma^{e-s} C & I_{\ell_0} & D 
\end{array}\right]$ satisfying $\overline{\mathtt{C}^{(X)}}=E_1$ and $\overline{\mathtt{C}^{(Y)}}=E_2.$ From this, the desired result follows.
\end{proof}

Now, to count all Euclidean $\mathtt{R}\check{\mathtt{R}}$-LCD codes of block-length $(\mathpzc{a},\mathpzc{b}),$ we need  to count all Euclidean LCD codes of an arbitrary length over a finite field. Yadav and Sharma \cite[Th. 3.2 and 3.3]{Yadav1} obtained an explicit enumeration formula for all Euclidean LCD codes of an arbitrary length over the finite field $\mathbb{F}_{q}$ of order $q,$ where $q$ is any prime power.  To recall this enumeration formula, let $L_q(n,r)$ denote the number of distinct Euclidean LCD codes of length $n$ and dimension $r$ over  $\mathbb{F}_q,$  where $r,n$ are integers satisfying $n \geq 1$ and $0\leq r \leq n.$ It is easy to see that $L_q(n,0)=L_q(n,n)=1.$ Throughout the paper, let $\qbin{n}{r}{q}$ denote the Gaussian binomial coefficient (or the $q$-binomial coefficient), which is given by 
 $\qbin{n}{0}{q}=1$ and $\qbin{n}{r}{q}=\frac{(q^n-1)(q^n-q)\ldots (q^n-q^{r-1})}{(q^r-1)(q^r-q)\cdots (q^r-q^{r-1})}$ for $1 \leq r \leq n.$  By Theorems 3.2 and 3.3 of Yadav and Sharma \cite{Yadav1}, we see that the number $L_q(n,r)$ is given by the following for $1 \leq r \leq n-1$:
\begin{itemize}
    \item[(a)] When $q$ is an even prime power, we have
\vspace{-2mm}\begin{equation}\label{L1}
L_q(n,r)=\left\{\begin{array}{llll}  q^{\frac{(n-r)(r+1)}{2}}\left[{\begin{array}{cc}
        (n-1)/2  \\
         (r-1)/2
    \end{array}}\right]_{q^{2}}  & \text{if both } n \text{ and } r \text{ are odd}; \\

 q^{\frac{nr-r^2+n-1}{2}}\left[{\begin{array}{cc}
        (n-2)/2  \\
         (r-1)/2
    \end{array}}\right]_{q^{2}} & \text{if } n \text{ is even and } r \text{ is odd;}\\

 q^{\frac{r(n-r+1)}{2}}\;\left[{\begin{array}{cc}
        (n-1)/2  \\
         r/2
    \end{array}}\right]_{q^{2}}   & \text{if } n \text{ is odd and } r \text{ is even};\\

   q^{\frac{nr-r^2-2}{2}}\bigl( ( q^{r}+ q-1)\left[{\begin{array}{cc}
        (n-2)/2  \\
         r/2
    \end{array}}\right]_{q^{2}}\\+( q^{n-r+1}- q^{n-r}+1)\left[{\begin{array}{cc}
        (n-2)/2  \\
         (r-2)/2
    \end{array}}\right]_{q^{2}}\bigl) & \text{if both } n \text{ and } r \text{ are even}.
  \end{array} \right.
\vspace{-2mm}\end{equation}
 \item[(b)] When $q$ is an odd prime power, we have
\vspace{-2mm}\begin{equation}\label{L2} L_q(n,r)=\left\{\begin{array}{llll}   q^{\frac{(n-r)(r+1)}{2}}\left[{\begin{array}{cc}
        (n-1)/2  \\
         (r-1)/2
\end{array}}\right]_{q^{2}}   &~\text{if both } n \text{ and } r \text{ are odd}; \\
 q^{\frac{nr-r^2-1}{2}} ( q^{\frac{n}{2}}-1)\left[{\begin{array}{cc}
        (n-2)/2  \\
         (r-1)/2
    \end{array}}\right]_{q^{2}} & \begin{array}{l}\text{if }n \text{ is even and }  r \text{ is odd } \text{with either }\\q \equiv 1 \Mod{4} \text{ or } n \equiv 0 \Mod{4} \text{ and } \\q \equiv 3 \Mod{4};\end{array}\\
 q^{\frac{nr-r^2-1}{2}}( q^{\frac{n}{2}}+1)\left[{\begin{array}{cc}
        (n-2)/2  \\
         (r-1)/2
    \end{array}}\right]_{q^{2}}  & \begin{array}{l} \text{if } n \equiv 2 \Mod{4}, r \text{ is odd and}\\
q \equiv 3 \Mod{4} ;\end{array}
\\
 q^{\frac{r(n-r+1)}{2}} \left[{\begin{array}{cc}
        (n-1)/2  \\
         r/2
    \end{array}}\right]_{q^{2}}&~\text{if }  n \text{ is odd and }r \text{ is even};
\\
   q^{\frac{r(n-r)}{2}}\left[{\begin{array}{cc}
        n/2  \\
         r/2
    \end{array}}\right]_{q^{2}} &~\text{if both }n \text{ and } r \text{ are even}.
  
\end{array} \right.
\vspace{-2mm}\end{equation}
\end{itemize}

In the following theorem, we enumerate all  Euclidean $\mathtt{R}\check{\mathtt{R}}$-LCD codes of block-length $(\mathpzc{a},\mathpzc{b})$ and of the type $\{k_0,\underbrace{0,0,\ldots, 0}_{e-1 \text{ times}}; \ell_0,\underbrace{0,0, \ldots, 0}_{s-1 \text{ times}}\}.$
\begin{thm}\label{Theorem4.3}
For non-negative integers $k_0$ and $\ell_0,$ the number $\mathscr{L}(k_0,\ell_0)$ of distinct Euclidean $\mathtt{R}\check{\mathtt{R}}$-LCD codes  of block-length $(\mathpzc{a},\mathpzc{b})$ and of the type $\{k_0,\underbrace{0,0,\ldots, 0}_{e-1 \text{ times}}; \ell_0,\underbrace{0,0, \ldots, 0}_{s-1 \text{ times}}\}$ is given by
\vspace{-2mm}\begin{align*}
\mathscr{L}(k_0,\ell_0)=L_q(\mathpzc{a},k_0)L_q(\mathpzc{b},\ell_0)q^{(\mathpzc{a}-k_0)(s\ell_0+(e-1)k_0)+(\mathpzc{b}-\ell_0)(sk_0+(s-1)\ell_0)},
\vspace{-2mm}\end{align*}
where the numbers $L_q(\mathpzc{a},k_0)$ and $L_q(\mathpzc{b},\ell_0)$ are given by \eqref{L1} and \eqref{L2}.
\end{thm}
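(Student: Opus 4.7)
The plan is to count Euclidean $\mathtt{R}\check{\mathtt{R}}$-LCD codes of the prescribed type by fibering the set of such codes over the pair of residue codes $(\overline{\mathtt{C}^{(X)}},\overline{\mathtt{C}^{(Y)}})$. Taking $h=0$ in Theorem \ref{Theorem4.2}, a weakly-free code $\mathtt{C}$ of block-length $(\mathpzc{a},\mathpzc{b})$ and of the type $\{k_0,\underbrace{0,\ldots,0}_{e-1};\ell_0,\underbrace{0,\ldots,0}_{s-1}\}$ is Euclidean $\mathtt{R}\check{\mathtt{R}}$-LCD if and only if $\overline{\mathtt{C}^{(X)}}$ and $\overline{\mathtt{C}^{(Y)}}$ are Euclidean LCD codes over the residue field $\overline{\mathtt{R}}\cong\mathbb{F}_q$, of lengths $\mathpzc{a},\mathpzc{b}$ and dimensions $k_0,\ell_0$ respectively. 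Hence the codes to be counted decompose as a disjoint union, indexed by admissible pairs $(E_1,E_2)$, of the fibers $\{\mathtt{C} : \overline{\mathtt{C}^{(X)}}=E_1,\ \overline{\mathtt{C}^{(Y)}}=E_2\}$.

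First I would count the admissible base pairs. The number of Euclidean LCD codes of length $\mathpzc{a}$ and dimension $k_0$ over $\mathbb{F}_q$ is $L_q(\mathpzc{a},k_0)$, and similarly the number with parameters $(\mathpzc{b},\ell_0)$ is $L_q(\mathpzc{b},\ell_0)$, both given by the Yadav--Sharma formulas \eqref{L1} and \eqref{L2}. Thus there are exactly $L_q(\mathpzc{a},k_0)\, L_q(\mathpzc{b},\ell_0)$ admissible pairs $(E_1,E_2)$.

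Second, for each fixed admissible pair $(E_1,E_2)$, Lemma \ref{Lemma4.8} asserts that the corresponding fiber has cardinality exactly $q^{(\mathpzc{a}-k_0)(s\ell_0+(e-1)k_0)+(\mathpzc{b}-\ell_0)(sk_0+(s-1)\ell_0)}$. Multiplying the number of admissible pairs by this constant fiber size yields the formula for $\mathscr{L}(k_0,\ell_0)$ claimed in the theorem. Since both the LCD characterization (Theorem \ref{Theorem4.2}) and the fiber count (Lemma \ref{Lemma4.8}) are already in place, the proof of Theorem \ref{Theorem4.3} collapses to this multiplicative bookkeeping, so no serious technical obstacle arises. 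The only point I would flag explicitly is that the assignment $\mathtt{C}\mapsto(\overline{\mathtt{C}^{(X)}},\overline{\mathtt{C}^{(Y)}})$ is intrinsically defined on the code (independent of the particular standard-form generator matrix \eqref{Equation4.3} chosen), which follows from the uniqueness of type for linear codes over chain rings (Theorem 3.5 of Norton and S\u{a}l\u{a}gean) applied separately to the linear codes over $\mathtt{R}$ and $\check{\mathtt{R}}$ encoding the two blocks.
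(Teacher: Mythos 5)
Your proposal is correct and follows the same route as the paper: the paper's proof of Theorem \ref{Theorem4.3} is exactly the combination of Theorem \ref{Theorem4.2} (to reduce the LCD condition to the residue codes $\overline{\mathtt{C}^{(X)}}$ and $\overline{\mathtt{C}^{(Y)}}$ being Euclidean LCD of dimensions $k_0$ and $\ell_0$) with Lemma \ref{Lemma4.8} (to count the fiber over each admissible pair), multiplied against the Yadav--Sharma counts $L_q(\mathpzc{a},k_0)L_q(\mathpzc{b},\ell_0)$. Your additional remark that the assignment $\mathtt{C}\mapsto(\overline{\mathtt{C}^{(X)}},\overline{\mathtt{C}^{(Y)}})$ is independent of the chosen standard-form generator matrix is a worthwhile point of care that the paper leaves implicit.
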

\begin{proof}
It follows from Theorem \ref{Theorem4.2} and  Lemma \ref{Lemma4.8}.
\end{proof}
In the following theorem, we enumerate all Euclidean $\mathtt{R}\check{\mathtt{R}}$-LCD codes  of block-length $(\mathpzc{a},\mathpzc{b}).$
\begin{thm}\label{Theorem4.4}
The number $\mathscr{L}$ of distinct Euclidean $\mathtt{R}\check{\mathtt{R}}$-LCD codes of block-length $(\mathpzc{a},\mathpzc{b})$ is given by
\vspace{-2mm}\begin{align*}
\mathscr{L}=\sum\limits_{k_0=0}^{\mathpzc{a}}\sum\limits_{\ell_0=0}^{\mathpzc{b}}L_q(\mathpzc{a},k_0)L_q(\mathpzc{b},\ell_0)q^{(\mathpzc{a}-k_0)(s\ell_0+(e-1)k_0)+(\mathpzc{b}-\ell_0)(sk_0+(s-1)\ell_0)},
\vspace{-2mm}\end{align*}
where the numbers $L_q(\mathpzc{a},k_0)$ and $L_q(\mathpzc{b},\ell_0)$ are given by \eqref{L1} and \eqref{L2}.
\end{thm}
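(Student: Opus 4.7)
The plan is to partition the family of Euclidean $\mathtt{R}\check{\mathtt{R}}$-LCD codes of block-length $(\mathpzc{a},\mathpzc{b})$ according to their weakly-free type and then apply Theorem \ref{Theorem4.3} on each class. First, I would invoke Lemma \ref{Lemma4.4} (specialized to $h=0$) to conclude that every Euclidean $\mathtt{R}\check{\mathtt{R}}$-LCD code $\mathtt{C}$ is weakly-free, hence has a type of the form $\{k_0,\underbrace{0,\ldots,0}_{e-1};\ell_0,\underbrace{0,\ldots,0}_{s-1}\}$ for some non-negative integers $k_0 \leq \mathpzc{a}$ and $\ell_0 \leq \mathpzc{b}$, determined respectively by the ranks of its first-block and second-block projections modulo $\gamma$.

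Second, I would recall Theorem 3.5 of Norton and S\u{a}l\u{a}gean cited earlier, which guarantees that the type of a linear code over a chain ring is unique. Since $\mathtt{C}$ is an $\mathtt{R}$-submodule of $\mathbf{M}$ and the type entries arising in the weakly-free form are forced by the filtration by powers of $\gamma$, the pair $(k_0,\ell_0)$ associated to each Euclidean $\mathtt{R}\check{\mathtt{R}}$-LCD code is an invariant of the code. Consequently, the sets
\[
\mathscr{F}(k_0,\ell_0) = \{\mathtt{C}\subseteq \mathbf{M} : \mathtt{C} \text{ is Euclidean } \mathtt{R}\check{\mathtt{R}}\text{-LCD of type } (k_0,\ell_0)\}
\]
are pairwise disjoint as $(k_0,\ell_0)$ ranges over $\{0,1,\ldots,\mathpzc{a}\}\times\{0,1,\ldots,\mathpzc{b}\}$, and by Lemma \ref{Lemma4.4} their union is precisely the family whose cardinality is $\mathscr{L}$.

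Third, I would apply Theorem \ref{Theorem4.3} to obtain $|\mathscr{F}(k_0,\ell_0)|=\mathscr{L}(k_0,\ell_0)$ with
\[
\mathscr{L}(k_0,\ell_0)=L_q(\mathpzc{a},k_0)\,L_q(\mathpzc{b},\ell_0)\,q^{(\mathpzc{a}-k_0)(s\ell_0+(e-1)k_0)+(\mathpzc{b}-\ell_0)(sk_0+(s-1)\ell_0)}.
\]
Summing over all admissible pairs $(k_0,\ell_0)$ then yields the claimed closed form for $\mathscr{L}$.

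There is essentially no hard step once Theorem \ref{Theorem4.3} is in hand; the only subtlety to be careful about is confirming that two distinct pairs $(k_0,\ell_0)\ne(k_0',\ell_0')$ do indeed yield disjoint code families, so that no Euclidean $\mathtt{R}\check{\mathtt{R}}$-LCD code is counted more than once. This is settled by the uniqueness of type invocation together with the fact that in the weakly-free standard form of \eqref{Equation4.3}, the integer $k_0$ is the $\overline{\mathtt{R}}$-dimension of the image of $\mathtt{C}$ under the projection onto the first block followed by reduction modulo $\gamma$, while $\ell_0$ is the $\overline{\mathtt{R}}$-dimension of the image of $\mathtt{C}\cap(\gamma^{e-s}\mathtt{R}^{\mathpzc{a}}\oplus\check{\mathtt{R}}^{\mathpzc{b}})$ under the projection onto the second block followed by reduction modulo $\gamma$; both are intrinsic invariants of $\mathtt{C}$, so disjointness follows and the asserted summation formula is justified.
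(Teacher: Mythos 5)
Your proposal is correct and follows exactly the paper's route: invoke Lemma \ref{Lemma4.4} to see that every Euclidean $\mathtt{R}\check{\mathtt{R}}$-LCD code is weakly-free of some type $\{k_0,0,\ldots,0;\ell_0,0,\ldots,0\}$, partition the family by the (unique) type, apply Theorem \ref{Theorem4.3} to each class, and sum. The only difference is that you spell out the disjointness/well-definedness of the type partition, which the paper leaves implicit.
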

\begin{proof} To prove the result, we see, by Lemma \ref{Lemma4.4},   that $\mathscr{L}=\sum\limits_{k_0=0}^{\mathpzc{a}}\sum\limits_{\ell_0=0}^{\mathpzc{b}}\mathscr{L}(k_0,\ell_0),$ where $\mathscr{L}(k_0,\ell_0)$ equals the number  of distinct Euclidean $\mathtt{R}\check{\mathtt{R}}$-LCD codes  of block-length $(\mathpzc{a},\mathpzc{b})$ and of the type $\{k_0,\underbrace{0,0,\ldots, 0}_{e-1 \text{ times}}; \ell_0,\underbrace{0,0, \ldots, 0}_{s-1 \text{ times}}\}$ for non-negative integers $k_0\leq \mathpzc{a}$ and $\ell_0 \leq \mathpzc{b}.$
Now the desired result follows immediately by applying Theorem \ref{Theorem4.3}. 
\end{proof}
The enumeration formula obtained in Theorem \ref{Theorem4.4} plays a significant role in classifying Euclidean $\mathtt{R}\check{\mathtt{R}}$-LCD codes  of block-lengths $(\mathpzc{a},\mathpzc{b})$  up to monomial equivalence. We illustrate the same in Tables \ref{table1} and \ref{table2},  where we classify all Euclidean $\mathbb{Z}_4 \mathbb{Z}_{2}$-LCD   and $\mathbb{Z}_9 \mathbb{Z}_{3}$-LCD codes of block-lengths $(1,1),$ $(1,2),$ $(2,1),$ $(2,2),$ $ (3,1)$ and $(3,2),$ by carrying out computations in Magma.  Generator matrices and     Lee distances  of all monomially inequivalent  Euclidean $\mathbb{Z}_4 \mathbb{Z}_{2}$-LCD   and $\mathbb{Z}_9 \mathbb{Z}_{3}$-LCD codes of block-lengths $(1,1),$ $(1,2),$ $(2,1),$ $(2,2),$ $ (3,1)$ and $(3,2)$ are provided in Appendices A and B, respectively.  

\begin{table}[h]
   \centering
\small\begin{tabular}{ |c|c|c|c|c|}
 \hline
   Block-length & Number of non-zero Euclidean $\mathbb{Z}_4 \mathbb{Z}_{2}$-LCD  & Number of  monomially inequivalent non-zero \\
      $(\mathpzc{a},\mathpzc{b})$ &  codes of block-length  $(\mathpzc{a},\mathpzc{b})$ &  Euclidean $\mathbb{Z}_4 \mathbb{Z}_{2}$-LCD codes of block-length  $(\mathpzc{a},\mathpzc{b})$ \\  \hline
      $(1,1)$ & $5$ & $5$ \\ \hline
     $(1,2)$ & $17$ & $11$ \\ \hline
     $(2,1)$ & $25$ & $15$ \\ \hline
     $(2,2)$ & $113$ & $41$ \\ \hline
     $(3,1)$ & $209$ & $49$ \\ \hline
     $(3,2)$ & $1301$ & $163$ \\ \hline
 \end{tabular}
     \caption{Number of non-zero monomially inequivalent  Euclidean $\mathbb{Z}_4 \mathbb{Z}_{2}$-LCD codes of block-length $(\mathpzc{a},\mathpzc{b}),$ \\where $\mathpzc{a}\in \{1,2,3\}$ and $\mathpzc{b}\in \{1,2\}$}
 \label{table1}\end{table}

    \begin{table}[h] \centering
\small\begin{tabular}{ |c|c|c|c|c|}
 \hline
   Block-length & Number of non-zero Euclidean & Number of monomially inequivalent non-zero  \\
      $(\mathpzc{a},\mathpzc{b})$ &  $\mathbb{Z}_9 \mathbb{Z}_{3}$-LCD codes of block-length  $(\mathpzc{a},\mathpzc{b})$ &  Euclidean $\mathbb{Z}_9 \mathbb{Z}_{3}$-LCD codes of block-length  $(\mathpzc{a},\mathpzc{b})$ \\ \hline
      $(1,1)$ & $7$ & $5$ \\ \hline
     $(1,2)$ & $43$ & $15$ \\ \hline
     $(2,1)$ & $91$ & $19$ \\ \hline
     $(2,2)$ & $883$ & $71$ \\ \hline
     $(3,1)$ & $1351$ & $53$ \\ \hline
     $(3,2)$ & $33751$ & $336$ \\ \hline
  \end{tabular}
\caption{Number of non-zero monomially inequivalent  Euclidean $\mathbb{Z}_9 \mathbb{Z}_{3}$-LCD codes of block-length $(\mathpzc{a},\mathpzc{b}),$ \\where $\mathpzc{a}\in \{1,2,3\}$ and $\mathpzc{b}\in \{1,2\}$}
    \label{table2}\end{table}
%By the above observations, we can obtain the number $\mathscr{I}_{t_1}$ for all $t_1\in T_1,$ given in Proposition \ref{Proposition3.1}.

%\begin{prop}
%For $t_1\in T_1,$ we have
%\begin{align*}
%\mathscr{I}_{t_1}=\sum\limits_{k_0=0}^t\sum\limits_{\ell_0=0}^{k-t}L_p(t,k_0)L_p(k-t,\ell_0)p^{(k_0+(e-2)(k_0+\ell_0))(k-\ell_0-k_0)+\ell_0(t-k_0)},
%\end{align*}
%where the number $L_p(t,k_0)$ and $L_p(k-t,\ell_0)$ are given by \eqref{L1} and \eqref{L2}.
%\end{prop}
%\begin{proof}
%It follows by applying Theorems \ref{Theorem4.3} and \ref{Theorem4.4}.
%\end{proof}

\section{Enumeration of Hermitian $\mathtt{R}\check{\mathtt{R}}$-LCD codes of block-length $(\mathpzc{a}, \mathpzc{b})$}\label{SectionH}
Throughout this section, we assume that $w$ is an even integer and $h=\frac{w}{2}.$ In this section, we enumerate all  Hermitian $\mathtt{R}\check{\mathtt{R}}$-LCD codes of block-length $(\mathpzc{a}, \mathpzc{b}).$ To do this, we first prove the following lemma.

\begin{lemma}\label{Lemma4.11}
Let $F_1$ be a Hermitian LCD code  of length $\mathpzc{a}$ and  dimension $k_0 $ over $\overline{\mathtt{R}},$ and let $F_2$ be a Hermitian LCD code  of length $\mathpzc{b}$ and  dimension $\ell_0 $ over $\overline{\mathtt{R}}.$ There are precisely  $q^{(\mathpzc{a}-k_0)(s\ell_0+(e-1)k_0)+(\mathpzc{b}-\ell_0)(sk_0+(s-1)\ell_0)}$ distinct Hermitian $\mathtt{R}\check{\mathtt{R}}$-LCD codes $\mathtt{C}(\subseteq\mathbf{M})$ satisfying $\overline{\mathtt{C}^{(X)}}=F_1$ and $\overline{\mathtt{C}^{(Y)}}=F_2.$
\end{lemma}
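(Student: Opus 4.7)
The plan is to mirror the proof of Lemma \ref{Lemma4.8} almost verbatim, since Theorem \ref{Theorem4.2} was stated uniformly for every $h$-Galois inner product (it only requires $\overline{\mathtt{C}^{(X)}}$ and $\overline{\mathtt{C}^{(Y)}}$ to be $h$-Galois LCD over $\overline{\mathtt{R}}$). Specializing to $h=w/2$ gives the Hermitian characterization for free, and the remainder is simply a parameter count.

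First, I would put generator matrices of $F_1$ and $F_2$ in standard form, writing them as $[\,I_{k_0}\;\;F_1'\,]$ and $[\,I_{\ell_0}\;\;F_2'\,]$ respectively. Using the canonical bijection between the Teichm\"uller set $\mathcal{T}$ and the residue field $\overline{\mathtt{R}}$, I would choose the unique lifts $A_0\in M_{k_0\times(\mathpzc{a}-k_0)}(\mathcal{T})$ and $D_0\in M_{\ell_0\times(\mathpzc{b}-\ell_0)}(\mathcal{T})$ with $\overline{A_0}=F_1'$ and $\overline{D_0}=F_2'$. Next, I would parametrize every candidate $\mathtt{R}\check{\mathtt{R}}$-linear code with reductions $F_1$ and $F_2$ by writing
\[
A=A_0+\sum_{i=1}^{e-1}\gamma^iA_i,\quad B=\sum_{j=0}^{s-1}\gamma^jB_j,\quad \gamma^{e-s}C=\sum_{j=0}^{s-1}\gamma^{e-s+j}C_j,\quad D=D_0+\sum_{j=1}^{s-1}\gamma^jD_j,
\]
where $A_i,B_j,C_j,D_j$ range independently over matrices with entries in $\mathcal{T}$ of the appropriate sizes, and forming the standard-form matrix
\[
\mathtt{G}=\left[\begin{array}{cc|cc} I_{k_0}&A&0&B\\ 0&\gamma^{e-s}C&I_{\ell_0}&D\end{array}\right].
\]

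Second, I would invoke Theorem \ref{Theorem4.2} with $h=w/2$: because $F_1=\overline{\mathtt{C}^{(X)}}$ and $F_2=\overline{\mathtt{C}^{(Y)}}$ are Hermitian LCD over $\overline{\mathtt{R}}$ by hypothesis, every choice of the free parameters produces a Hermitian $\mathtt{R}\check{\mathtt{R}}$-LCD code, and distinct parameter choices yield distinct generator matrices in standard form and hence distinct codes. Conversely, by Lemma \ref{Lemma4.4} and the uniqueness of the standard form (up to the permutations absorbed in our assumed arrangement), every Hermitian $\mathtt{R}\check{\mathtt{R}}$-LCD code with the prescribed reductions arises uniquely from such a parameter tuple.

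Third, I would count. The free matrices are $A_1,\dots,A_{e-1}$ contributing $q^{(e-1)k_0(\mathpzc{a}-k_0)}$ choices, $C_0,\dots,C_{s-1}$ contributing $q^{s\ell_0(\mathpzc{a}-k_0)}$, $B_0,\dots,B_{s-1}$ contributing $q^{sk_0(\mathpzc{b}-\ell_0)}$, and $D_1,\dots,D_{s-1}$ contributing $q^{(s-1)\ell_0(\mathpzc{b}-\ell_0)}$; combining exponents gives exactly $q^{(\mathpzc{a}-k_0)(s\ell_0+(e-1)k_0)+(\mathpzc{b}-\ell_0)(sk_0+(s-1)\ell_0)}$, as required. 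There is no real obstacle here: the only subtle point is making sure that fixing $A_0,D_0$ (and letting all other components vary freely) is equivalent to prescribing $\overline{\mathtt{C}^{(X)}}=F_1$ and $\overline{\mathtt{C}^{(Y)}}=F_2$ without overcounting. This is immediate from the uniqueness of the Teichm\"uller lift and the fact that the residue-field reductions of $A$ and $D$ depend only on $A_0$ and $D_0$, so once those are fixed, the remaining $\gamma$-adic coefficients are genuinely free parameters in bijection with the set of Hermitian LCD codes having the prescribed reductions.
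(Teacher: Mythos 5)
Your proof is correct and follows essentially the same route as the paper: the paper's own proof of this lemma simply says ``working as in Lemma \ref{Lemma4.8},'' and your argument is precisely that adaptation --- standard-form generator matrices, Teichm\"uller lifts of $F_1$ and $F_2$, free choice of the higher $\gamma$-adic coefficient matrices, and an appeal to Theorem \ref{Theorem4.2} with $h=w/2$, with the exponent count matching exactly.
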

\begin{proof}
Working as in Lemma \ref{Lemma4.8}, the desired result follows.
\end{proof}
Now, to count all Hermitian $\mathtt{R}\check{\mathtt{R}}$-LCD codes of block-length $(\mathpzc{a},\mathpzc{b}),$ we need to count all Hermitian LCD codes of an arbitrary length over a finite field. Yadav and Sharma \cite{Yadav1} obtained an explicit enumeration formula for all Hermitian LCD codes of an arbitrary length over the finite field $\mathbb{F}_{q_1^2}$ of order $q_1^2,$ where $q_1$ is any prime power. To recall this enumeration formula, let  $M_{q_1^2}(n,r)$ denote the number of distinct Hermitian LCD codes of length $n$ and dimension $r$ over $\mathbb{F}_{q_1^2},$ where $r,n$ are integers satisfying $n\geq 1$ and $0\leq r\leq n.$ It is easy to see that $M_{q_1^2}(n,0)=M_{q_1^2}(n,n)=1.$ Now by Theorem 3.6 of Yadav and Sharma \cite{Yadav1}, we see, for $1\leq r\leq n-1,$ that 
\vspace{-2mm}\begin{align}\label{M1}
   M_{q_1^2}(n,r)=q_1^{r(n-r)}\prod_{i=0}^{r-1}\Big(\frac{q_1^{n-i}-(-1)^{n-i}}{q_1^{r-i}-(-1)^{r-i}}\Big). 
\vspace{-2mm}\end{align}

 In the following theorem, we enumerate all Hermitian $\mathtt{R}\check{\mathtt{R}}$-LCD codes  of block-length $(\mathpzc{a},\mathpzc{b})$ and of the type $\{k_0,\underbrace{0,0,\ldots, 0}_{e-1 \text{ times}}; \ell_0,\underbrace{0,0, \ldots, 0}_{s-1 \text{ times}}\}.$

\begin{thm}\label{Theorem4.7}
For non-negative integers $k_0$ and $\ell_0,$ the number $\mathscr{I}(k_0,\ell_0)$ of distinct Hermitian $\mathtt{R}\check{\mathtt{R}}$-LCD codes  of block-length $(\mathpzc{a},\mathpzc{b})$ and of the type $\{k_0,\underbrace{0,0,\ldots, 0}_{e-1 \text{ times}}; \ell_0,\underbrace{0,0, \ldots, 0}_{s-1 \text{ times}}\}$ is given by
\vspace{-2mm}\begin{align*}
\mathscr{I}(k_0,\ell_0)=M_{q_1^2}(\mathpzc{a},k_0)M_{q_1^2}(\mathpzc{b},\ell_0)(q_1^2)^{(\mathpzc{a}-k_0)(s\ell_0+(e-1)k_0)+(\mathpzc{b}-\ell_0)(sk_0+(s-1)\ell_0)},
\vspace{-2mm}\end{align*}
where the residue field $\overline{\mathtt{R}}$ of $\mathtt{R}$ is of order $q=q_1^2$ for some prime power $q_1$ and the numbers $M_{q_1^2}(\mathpzc{a},k_0)$ and $M_{q_1^2}(\mathpzc{b},\ell_0)$ are given by \eqref{M1}.
\end{thm}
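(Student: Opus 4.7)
The plan is to prove this enumeration formula by decomposing the counting into two independent steps: first counting the admissible pairs of residue-field codes, then counting how many weakly-free lifts of each such pair yield a Hermitian $\mathtt{R}\check{\mathtt{R}}$-LCD code. This is the direct Hermitian analogue of Theorem \ref{Theorem4.3}.

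First, I would invoke Theorem \ref{Theorem4.2} with $h=\frac{w}{2}$: a weakly-free $\mathtt{R}\check{\mathtt{R}}$-linear code $\mathtt{C}$ of block-length $(\mathpzc{a},\mathpzc{b})$ and of the stated type with generator matrix of the form \eqref{Equation4.3} is Hermitian $\mathtt{R}\check{\mathtt{R}}$-LCD if and only if the associated residue-field codes $\overline{\mathtt{C}^{(X)}}$ and $\overline{\mathtt{C}^{(Y)}}$, with generator matrices $[I_{k_0}\ \overline{A}]$ and $[I_{\ell_0}\ \overline{D}]$ respectively, are Hermitian LCD codes over $\overline{\mathtt{R}}=\mathbb{F}_{q_1^2}$. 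Consequently, the set of Hermitian $\mathtt{R}\check{\mathtt{R}}$-LCD codes of the prescribed type partitions according to the pair $(\overline{\mathtt{C}^{(X)}}, \overline{\mathtt{C}^{(Y)}})$, where $\overline{\mathtt{C}^{(X)}}$ is any Hermitian LCD code of length $\mathpzc{a}$ and dimension $k_0$ over $\mathbb{F}_{q_1^2}$, and $\overline{\mathtt{C}^{(Y)}}$ is any Hermitian LCD code of length $\mathpzc{b}$ and dimension $\ell_0$ over $\mathbb{F}_{q_1^2}$.

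Next, the number of admissible choices for the first coordinate of the pair is $M_{q_1^2}(\mathpzc{a},k_0)$ and for the second coordinate is $M_{q_1^2}(\mathpzc{b},\ell_0)$, by formula \eqref{M1} taken from Yadav and Sharma \cite{Yadav1}. For each such pair $(F_1,F_2)$, Lemma \ref{Lemma4.11} asserts that there are exactly $q^{(\mathpzc{a}-k_0)(s\ell_0+(e-1)k_0)+(\mathpzc{b}-\ell_0)(sk_0+(s-1)\ell_0)}$ distinct Hermitian $\mathtt{R}\check{\mathtt{R}}$-LCD codes $\mathtt{C}$ satisfying $\overline{\mathtt{C}^{(X)}}=F_1$ and $\overline{\mathtt{C}^{(Y)}}=F_2$, where $q=q_1^2$.

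Finally, multiplying these three counts and using that the families indexed by $(F_1,F_2)$ are pairwise disjoint gives
\[
\mathscr{I}(k_0,\ell_0)=M_{q_1^2}(\mathpzc{a},k_0)\,M_{q_1^2}(\mathpzc{b},\ell_0)\,(q_1^2)^{(\mathpzc{a}-k_0)(s\ell_0+(e-1)k_0)+(\mathpzc{b}-\ell_0)(sk_0+(s-1)\ell_0)},
\]
which is the claimed formula. Since both the characterization via residue codes (Theorem \ref{Theorem4.2}) and the lift count (Lemma \ref{Lemma4.11}) are already established, no new obstacle arises here; the only care needed is to confirm that the standard-form representation in \eqref{Equation4.3} yields a bijection between Hermitian $\mathtt{R}\check{\mathtt{R}}$-LCD codes of the stated type and the data $(\overline{\mathtt{C}^{(X)}}, \overline{\mathtt{C}^{(Y)}}, \text{lift data})$, which is guaranteed by Lemma \ref{Lemma4.4} and the uniqueness of the type by Theorem 3.5 of Norton and S\u{a}l\u{a}gean.
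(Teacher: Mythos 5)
Your proposal is correct and follows exactly the paper's own argument: the paper proves Theorem \ref{Theorem4.7} by combining Theorem \ref{Theorem4.2} (the characterization via the residue-field codes $\overline{\mathtt{C}^{(X)}}$ and $\overline{\mathtt{C}^{(Y)}}$ being Hermitian LCD) with Lemma \ref{Lemma4.11} (the lift count), just as you do. You have merely written out the multiplication and the disjointness of the fibres in more detail than the paper does.
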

\begin{proof}
It follows from Theorem \ref{Theorem4.2} and Lemma \ref{Lemma4.11}.
\end{proof}
In the following theorem, we enumerate all  Hermitian $\mathtt{R}\check{\mathtt{R}}$-LCD codes  of block-length $(\mathpzc{a},\mathpzc{b}).$

\begin{thm}\label{Theorem4.8}
The number $\mathscr{I}$ of distinct Hermitian $\mathtt{R}\check{\mathtt{R}}$-LCD codes of block-length $(\mathpzc{a},\mathpzc{b})$ is given by
\vspace{-2mm}\begin{align*}
\mathscr{I}=\sum\limits_{k_0=0}^{\mathpzc{a}}\sum\limits_{\ell_0=0}^{\mathpzc{b}}M_{q_1^2}(\mathpzc{a},k_0)M_{q_1^2}(\mathpzc{b},\ell_0)q^{(\mathpzc{a}-k_0)(s\ell_0+(e-1)k_0)+(\mathpzc{b}-\ell_0)(sk_0+(s-1)\ell_0)},
\vspace{-2mm}\end{align*}
where the residue field $~\overline{\mathtt{R}}$ of $\mathtt{R}$ is of order $q=q_1^2$ for some prime power $q_1$ and the numbers $M_{q_1^2}(\mathpzc{a},k_0)$ and $M_{q_1^2}(\mathpzc{b},\ell_0)$ are given by \eqref{M1}.
\end{thm}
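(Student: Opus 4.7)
The plan is to mirror the proof of Theorem \ref{Theorem4.4} almost verbatim, since the Hermitian case is exactly the $h=\frac{w}{2}$ instance of the general $h$-Galois setup and all the structural results in Section \ref{Section3} were already established for arbitrary $h$.

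First, I would invoke Lemma \ref{Lemma4.4} to conclude that every Hermitian $\mathtt{R}\check{\mathtt{R}}$-LCD code $\mathtt{C}$ of block-length $(\mathpzc{a},\mathpzc{b})$ is weakly-free, and therefore has a well-defined type $\{k_0,\underbrace{0,0,\ldots, 0}_{e-1 \text{ times}}; \ell_0,\underbrace{0,0, \ldots, 0}_{s-1 \text{ times}}\}$ for some unique pair $(k_0,\ell_0)$ with $0\leq k_0\leq \mathpzc{a}$ and $0\leq \ell_0\leq \mathpzc{b}$. Uniqueness of the type here is guaranteed by Theorem 3.5 of Norton and S{\u a}l{\u a}gean as cited earlier in the paper. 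This yields the disjoint partition
\begin{equation*}
\mathscr{I} \;=\; \sum_{k_0=0}^{\mathpzc{a}}\sum_{\ell_0=0}^{\mathpzc{b}} \mathscr{I}(k_0,\ell_0),
\end{equation*}
where $\mathscr{I}(k_0,\ell_0)$ counts the Hermitian $\mathtt{R}\check{\mathtt{R}}$-LCD codes of block-length $(\mathpzc{a},\mathpzc{b})$ and of the above type.

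Second, I would plug in the formula from Theorem \ref{Theorem4.7}, which evaluates each term $\mathscr{I}(k_0,\ell_0)$ as
\begin{equation*}
M_{q_1^2}(\mathpzc{a},k_0)\,M_{q_1^2}(\mathpzc{b},\ell_0)\,q^{(\mathpzc{a}-k_0)(s\ell_0+(e-1)k_0)+(\mathpzc{b}-\ell_0)(sk_0+(s-1)\ell_0)},
\end{equation*}
with $q=q_1^2$, and where $M_{q_1^2}(n,r)$ is the Yadav--Sharma count of Hermitian LCD codes over $\mathbb{F}_{q_1^2}$ recalled in \eqref{M1}. Substituting this into the double sum gives precisely the claimed formula.

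There is essentially no obstacle: the genuine content was packed into Theorem \ref{Theorem4.2} (reducing Hermitian LCD-ness of $\mathtt{C}$ to Hermitian LCD-ness of the residue codes $\overline{\mathtt{C}^{(X)}}$ and $\overline{\mathtt{C}^{(Y)}}$), into Lemma \ref{Lemma4.11} (which lifts a fixed pair $(F_1,F_2)$ to exactly $q^{(\mathpzc{a}-k_0)(s\ell_0+(e-1)k_0)+(\mathpzc{b}-\ell_0)(sk_0+(s-1)\ell_0)}$ Hermitian LCD codes via the free choices of the $\mathcal{T}$-valued blocks $A_1,\ldots,A_{e-1}$, $B_0,\ldots,B_{s-1}$, $C_0,\ldots,C_{s-1}$, $D_1,\ldots,D_{s-1}$), and into the formula \eqref{M1}. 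Given these, Theorem \ref{Theorem4.8} is a one-line consequence: partition by type, apply Theorem \ref{Theorem4.7}, and sum. The only minor point worth double-checking is that the type parameters $(k_0,\ell_0)$ truly run over all of $\{0,\ldots,\mathpzc{a}\}\times\{0,\ldots,\mathpzc{b}\}$, including the degenerate values $k_0=0$, $\ell_0=0$, $k_0=\mathpzc{a}$, $\ell_0=\mathpzc{b}$; in these boundary cases $M_{q_1^2}(n,0)=M_{q_1^2}(n,n)=1$ and the exponent of $q$ collapses appropriately, so the formula remains valid term-by-term.
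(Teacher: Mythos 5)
Your proposal is correct and follows essentially the same route as the paper: partition the Hermitian $\mathtt{R}\check{\mathtt{R}}$-LCD codes by type using Lemma \ref{Lemma4.4}, then substitute the count from Theorem \ref{Theorem4.7} for each term $\mathscr{I}(k_0,\ell_0)$. The extra remarks on uniqueness of type and on the boundary cases $k_0\in\{0,\mathpzc{a}\}$, $\ell_0\in\{0,\mathpzc{b}\}$ are sound and simply make explicit what the paper leaves implicit.
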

\begin{proof}
To prove the result, we see, by Lemma \ref{Lemma4.4}, that $\mathscr{I}=\sum\limits_{k_0=0}^{\mathpzc{a}}\sum\limits_{\ell_0=0}^{\mathpzc{b}}\mathscr{I}(k_0,\ell_0),$ where $\mathscr{I}(k_0,\ell_0)$ equals the number  of distinct Hermitian $\mathtt{R}\check{\mathtt{R}}$-LCD codes  of block-length $(\mathpzc{a},\mathpzc{b})$ and of the type $\{k_0,\underbrace{0,0,\ldots, 0}_{e-1 \text{ times}}; \ell_0,\underbrace{0,0, \ldots, 0}_{s-1 \text{ times}}\}$  for non-negative integers $k_0\leq \mathpzc{a}$ and $\ell_0 \leq \mathpzc{b}.$ Now by Theorem \ref{Theorem4.7}, we get the desired result. 
\end{proof}
The enumeration formula obtained in Theorem \ref{Theorem4.8} is useful in classifying Hermitian $\mathtt{R}\check{\mathtt{R}}$-LCD codes  of block-length $(\mathpzc{a},\mathpzc{b})$  up to monomial equivalence. Using Magma, we illustrate the same in Table \ref{table3},  where we classify all Hermitian $\frac{\mathbb{F}_{4}[u]}{\langle u^2\rangle} \;\mathbb{F}_{4}$-LCD    codes   of block-length $(\mathpzc{a},\mathpzc{b})$ with $\mathpzc{a},\mathpzc{b} \in \{1,2\}.$   Generator matrices and Lee distances of all monomially inequivalent Hermitian $\frac{\mathbb{F}_{4}[u]}{\langle u^2\rangle} \;\mathbb{F}_{4}$-LCD   codes   of block-lengths $(1,1),$ $(1,2),$ $(2,1)$ and $(2,2)$ are provided in Appendix C. 
 \begin{table}[h]\begin{center}\small\begin{tabular}{ |c|c|c|c|c|}
 \hline
   Block-length & Number of  non-zero Hermitian  & Number of monomially inequivalent non-zero \\
      $(\mathpzc{a},\mathpzc{b})$ &  $\frac{\mathbb{F}_{4}[u]}{\langle u^2\rangle} \;\mathbb{F}_{4}$-LCD codes of block-length $(\mathpzc{a},\mathpzc{b})$ &  Hermitian $\frac{\mathbb{F}_{4}[u]}{\langle u^2\rangle} \;\mathbb{F}_{4}$-LCD codes of block-length $(\mathpzc{a},\mathpzc{b})$ \\ \hline
      $(1,1)$ & $9$ & $5$ \\ \hline
     $(1,2)$ & $65$ & $11$ \\ \hline
     $(2,1)$ & $225$ & $15$ \\ \hline
     $(2,2)$ & $3777$ & $43$ \\ \hline
  \end{tabular}\end{center}
\vspace{-2mm}\caption{Number of non-zero monomially inequivalent  Hermitian $\frac{\mathbb{F}_{4}[u]}{\langle u^2\rangle} \;\mathbb{F}_{4}$-LCD codes of block-length $(\mathpzc{a},\mathpzc{b}),$\\ where $\mathpzc{a}, \mathpzc{b} \in \{1,2\}$ }
    \label{table3}
\vspace{-2mm}\end{table}\normalsize
We next recall that if an $\mathtt{R}\check{\mathtt{R}}$-linear code $\mathtt{C}$ is  $h$-Galois $\mathtt{R}\check{\mathtt{R}}$-LCD, then the code $\mathtt{C}$ and its $h$-Galois dual code $\mathtt{C}^{\perp_h}$ form an LCP of codes. In the next section, we will study and characterize LCPs of $\mathtt{R}\check{\mathtt{R}}$-linear codes of an arbitrary block-length. 
We will also study a direct sum masking scheme constructed using an LCP of $\mathtt{R}\check{\mathtt{R}}$-linear codes and obtain its security threshold against fault injection and side-channel attacks.  We will also discuss another application of LCPs of  $\mathtt{R}\check{\mathtt{R}}$-linear codes to the noiseless two-user adder channel.
\section{LCPs of $\mathtt{R}\check{\mathtt{R}}$-linear codes}\label{LCP}
  In this section, we first recall, from Section \ref{Preliminaries}, that two $\mathtt{R}\check{\mathtt{R}}$-linear codes $\mathtt{C}$ and $\mathtt{D}$ of block-length $(\mathpzc{a},\mathpzc{b})$ form a linear complementary pair of codes (or an LCP of codes) if their direct sum $\mathtt{C}\oplus \mathtt{D}=\mathbf{M}.$ By Lemma 2,  Theorem 1 and Remark 1 of Bajalan~\etal~\cite{Bajalan2023}, we observe the following:

% We further deduce the following: Further, if two $\mathtt{R}\check{\mathtt{R}}$-linear codes $\mathtt{C}$ and $\mathtt{D}$ of block-length $(\mathpzc{a},\mathpzc{b})$ form an LCP of $\mathtt{R}\check{\mathtt{R}}$-linear codes, then the codes $\mathtt{C}$ and $\mathtt{D}$ are weakly-free. Furthermore, Bajalan~\etal~\cite{Bajalan2023} observed that if t The following lemma states this result.
\begin{lemma}\cite{Bajalan2023}\label{Lem6.1}
Let $(\mathtt{C},\mathtt{D})$ be an LCP of $\mathtt{R}\check{\mathtt{R}}$-linear codes of block-length $(\mathpzc{a},\mathpzc{b}).$ The following hold.
\begin{enumerate}\item[(a)] Both the codes $\mathtt{C}$ and $\mathtt{D}$ are weakly-free. 
\item[(b)] If the code $\mathtt{C}$ is of the type $\{k_0,\underbrace{0,0,\ldots, 0}_{e-1 \text{ times}}; \ell_0,\underbrace{0,0, \ldots, 0}_{s-1 \text{ times}}\}$ for some non-negative integers $k_0$ and $\ell_0,$ then the code $\mathtt{D}$ is of the type $\{\mathpzc{a}-k_0,\underbrace{0,0,\ldots, 0}_{e-1 \text{ times}}; \mathpzc{b}-\ell_0,\underbrace{0,0, \ldots, 0}_{s-1 \text{ times}}\}.$
\item[(c)] Further, the codes $\mathtt{C}$ and $\mathtt{D}$ have generator matrices $\mathcal{G}$ and $\mathcal{H}$  of the forms
\vspace{-2mm}\begin{align}\label{Eq6.1}
\mathcal{G}=\left[\begin{array}{c|c}
    \mathscr{A}&\mathscr{B} \\ 
    \gamma^{e-s}\mathscr{C}& \mathscr{D} 
\end{array}\right]\text{ ~ and ~ }\mathcal{H}=\left[\begin{array}{c|c}
    \mathscr{E}&\mathscr{F} \\ 
    \gamma^{e-s}\mathscr{G}& \mathscr{H} 
\end{array}\right],
\vspace{-2mm}\end{align}
respectively, where the matrices $\mathscr{A}\in M_{k_0\times \mathpzc{a}}(\mathtt{R})$ and $\mathscr{E}\in M_{(\mathpzc{a}-k_0)\times \mathpzc{a}}(\mathtt{R}) $ are such that their rows are independent over $\mathtt{R}$ and have period $\gamma^e,$ the matrices $\mathscr{D}\in M_{\ell_0\times \mathpzc{b}}(\check{\mathtt{R}})$ and  $\mathscr{H}\in M_{(\mathpzc{b}-\ell_0)\times \mathpzc{b}}(\check{\mathtt{R}})$ are such that their rows are independent over $\check{\mathtt{R}}$ and have period $\gamma^s,$ and the matrices $ \mathscr{B}\in M_{k_0\times \mathpzc{b}}(\check{\mathtt{R}}),$ $\gamma^{e-s}\mathscr{C}\in M_{\ell_0\times \mathpzc{a}}(\mathtt{R}),$  $\mathscr{F}\in M_{(\mathpzc{a}-k_0)\times \mathpzc{b}}(\check{\mathtt{R}})$ and $ \gamma^{e-s}\mathscr{G}\in M_{(\mathpzc{b}-\ell_0)\times \mathpzc{a}}(\mathtt{R}).$  \end{enumerate}
\end{lemma}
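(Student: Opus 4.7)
\textbf{Proof plan for Lemma \ref{Lem6.1}.}

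The plan is to reduce all three assertions to a single structural statement about finitely generated modules over the chain ring $\mathtt{R}$, applied to $\mathbf{M}=\mathtt{R}^{\mathpzc{a}}\oplus\check{\mathtt{R}}^{\mathpzc{b}}$ viewed as an $\mathtt{R}$-module. The key preliminary remark I would use is that $\mathbf{M}$ decomposes, as an $\mathtt{R}$-module, into $\mathpzc{a}$ cyclic summands of period $\gamma^e$ (the coordinates of the first block) and $\mathpzc{b}$ cyclic summands of period $\gamma^s$ (the coordinates of the second block, since $\gamma^s$ annihilates $\check{\mathtt{R}}$). By Theorem 3.5 of Norton and S{\u{a}}l{\u{a}}gean, every $\mathtt{R}$-submodule of $\mathbf{M}$ has a unique invariant-factor decomposition into cyclic modules $\mathtt{R}/\gamma^{t}\mathtt{R}$ with $1\le t\le e$, and these invariant factors are additive over direct sums.

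For parts (a) and (b), I would apply this uniqueness simultaneously to $\mathtt{C}$ and $\mathtt{D}$. Since $\mathtt{C}\oplus\mathtt{D}=\mathbf{M}$, the multiset of invariant factors of $\mathtt{C}$ union that of $\mathtt{D}$ must equal the multiset of invariant factors of $\mathbf{M}$, namely $\mathpzc{a}$ copies of $\mathtt{R}$ and $\mathpzc{b}$ copies of $\check{\mathtt{R}}=\mathtt{R}/\gamma^{s}\mathtt{R}$. In particular, neither $\mathtt{C}$ nor $\mathtt{D}$ can possess an invariant factor $\mathtt{R}/\gamma^{t}\mathtt{R}$ with $t\notin\{s,e\}$; hence both codes are weakly-free, proving (a). Writing the number of period-$\gamma^{e}$ summands of $\mathtt{C}$ as $k_0$ and the number of period-$\gamma^{s}$ summands of $\mathtt{C}$ as $\ell_0$, additivity of invariant factors forces $\mathtt{D}$ to have exactly $\mathpzc{a}-k_0$ summands of period $\gamma^{e}$ and $\mathpzc{b}-\ell_0$ summands of period $\gamma^{s}$, which is precisely (b).

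For part (c), once (a) and (b) are established, the existence of generator matrices of the forms $\mathcal{G}$ and $\mathcal{H}$ in \eqref{Eq6.1} is essentially the definition of a weakly-free $\mathtt{R}\check{\mathtt{R}}$-linear code of the indicated type, recalled from \eqref{weaklyfree} in Section~\ref{Preliminaries}. I would unpack this definition, showing that a basis of the period-$\gamma^{e}$ summands of $\mathtt{C}$ yields the rows $[\mathscr{A}\mid\mathscr{B}]$ and a basis of the period-$\gamma^{s}$ summands of $\mathtt{C}$ yields the rows $[\gamma^{e-s}\mathscr{C}\mid\mathscr{D}]$, with the period conditions forcing the shape of the first column-block of the second block-row of $\mathcal{G}$; the same construction applied to $\mathtt{D}$ produces $\mathcal{H}$.

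The main obstacle I anticipate is the first step, namely the correct invocation of the invariant-factor uniqueness for $\mathtt{R}$-submodules of the mixed ambient $\mathbf{M}$. One must be careful that the $\check{\mathtt{R}}$-part is indeed treated as an $\mathtt{R}$-module of period $\gamma^{s}$, so that the only cyclic $\mathtt{R}$-modules appearing as invariant factors of $\mathbf{M}$ are $\mathtt{R}$ and $\mathtt{R}/\gamma^{s}\mathtt{R}$; once this is in place the rest of the argument is bookkeeping with Lemma \ref{Lemma4.5}.
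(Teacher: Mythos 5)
Your reduction of parts (b) and (c) to part (a), and your use of the uniqueness and additivity of the cyclic decomposition over the chain ring $\mathtt{R}$, correctly show that every invariant factor of $\mathtt{C}$ and of $\mathtt{D}$ is isomorphic to $\mathtt{R}$ or to $\check{\mathtt{R}}$. The gap is the inference ``hence both codes are weakly-free.'' Weakly-freeness, as defined via \eqref{weaklyfree}, is not a property of the abstract $\mathtt{R}$-module structure of the code: it additionally requires that the first-block projections of the period-$\gamma^{e}$ generators be independent of period $\gamma^{e}$ in $\mathtt{R}^{\mathpzc{a}}$, and that the second-block projections of the period-$\gamma^{s}$ generators be independent of period $\gamma^{s}$ in $\check{\mathtt{R}}^{\mathpzc{b}}$. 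A submodule can have the right invariant factors and still fail this: for $\mathpzc{a}=\mathpzc{b}=1$ the cyclic code $\langle(\gamma^{e-s}\,|\,0)\rangle\subseteq\mathtt{R}\oplus\check{\mathtt{R}}$ is abstractly isomorphic to $\check{\mathtt{R}}$ (its generator has period $\gamma^{s}$), yet every one of its elements has zero second block, so no generator matrix of the form $[\gamma^{e-s}\mathcal{E}\,|\,\mathcal{F}]$ with $\mathcal{F}$ of period $\gamma^{s}$ exists. This code is of course not a direct summand of $\mathbf{M}$ --- but your argument uses the hypothesis $\mathtt{C}\oplus\mathtt{D}=\mathbf{M}$ only through the multiset identity on invariant factors, so it cannot distinguish this code from a genuinely weakly-free one.

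To close the gap you must exploit the complementation at the level of generators, not just of isomorphism types. The half concerning the period-$\gamma^{e}$ generators is in fact automatic for any submodule: if $v_1,\dots,v_{k_0}\in\mathbf{M}$ are independent of period $\gamma^{e}$ and $\sum_i r_i a_i=0$ for their first blocks $a_i$, then $\sum_i r_i v_i=(0\,|\,c)$ with $c\in\gamma^{e-j}\check{\mathtt{R}}^{\mathpzc{b}}$ whenever $c$ has period $\gamma^{j}$, which forces $j\le s-(e-j)$, i.e.\ $s\ge e$; hence $c=0$ and independence of the $v_i$ gives $r_ia_i=0$. The half concerning the period-$\gamma^{s}$ generators is exactly where the counterexample above lives; it is the content of Lemma 2 of Bajalan \etal~\cite{Bajalan2023} (a direct summand of a weakly-free code is weakly-free), which the paper's own proof simply cites together with the weak-freeness of $\mathbf{M}$. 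You need either to cite that lemma as well or to prove it, e.g.\ by showing that a period-$\gamma^{s}$ generator of $\mathtt{C}$ whose second block has period strictly less than $\gamma^{s}$ already forces $\mathtt{C}\cap\mathtt{D}\ne\{0\}$ or $\mathtt{C}+\mathtt{D}\ne\mathbf{M}$. Once part (a) is secured, your derivations of (b) and (c) go through essentially as the paper's do.
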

\begin{proof}Part (a) follows from Lemma 2 of Bajalan~\etal~\cite{Bajalan2023} and using the fact that  $\mathbf{M}$ is weakly-free. To prove (b), let us suppose that the codes $\mathtt{C}$ and $\mathtt{D}$ are of the types $\{k_0,\underbrace{0,0,\ldots, 0}_{e-1 \text{ times}}; \ell_0,\underbrace{0,0, \ldots, 0}_{s-1 \text{ times}}\}$ and $\{k'_0,\underbrace{0,0,\ldots, 0}_{e-1 \text{ times}}; \ell'_0,\underbrace{0,0, \ldots, 0}_{s-1 \text{ times}}\},$ respectively, where $k_0,k'_0, \ell_0,\ell'_0$ are non-negative integers. Since $\mathbf{M}$ is of the type $\{\mathpzc{a},\underbrace{0,0,\ldots, 0}_{e-1 \text{ times}}; \mathpzc{b},\underbrace{0,0, \ldots, 0}_{s-1 \text{ times}}\}$ and $\mathtt{C}\oplus \mathtt{D}=\mathbf{M},$ we see, by Theorem 1 of Bajalan~\etal~\cite{Bajalan2023}, that $k_0+k'_0=\mathpzc{a}$ and $\ell_0+\ell'_0=\mathpzc{b},$ from which part (b) follows immediately. Part (c) follows immediately by part (b) and Remark 1 of Bajalan~\etal~\cite{Bajalan2023}.  \end{proof}
% \begin{remark}\label{Rk6.1}
% In the above lemma, the rows of the matrix $\mathscr{A}$ are independent over $\mathtt{R}$ and have period $\gamma^e,$ and the rows of the matrix $\mathscr{D}$ are independent over $\check{\mathtt{R}}$ and have period $\gamma^s.$ Additionally, the rows of the matrix $\mathscr{E}$ are independent over $\mathtt{R}$ and have period $\gamma^e,$ and the the rows of the matrix $\mathscr{H}$ are independent over $\check{\mathtt{R}}$ and have period $\gamma^s.$
% \end{remark}

From now on, we assume, throughout this section, that $\mathtt{C}$ and $\mathtt{D}$ are weakly-free $\mathtt{R}\check{\mathtt{R}}$-linear codes of block-length $(\mathpzc{a},\mathpzc{b})$ with generator matrices $\mathcal{G}$ and $\mathcal{H}$ (as defined by \eqref{Eq6.1}), respectively. Bajalan \etal~\cite[Th. 1]{Bajalan2023}  recently proved that the codes $\mathtt{C}$ and $\mathtt{D}$ form an LCP if and only if $|\mathtt{C}||\mathtt{D}|=|\mathbf{M}|$ and $\iota(\mathcal{W})$ is a non-singular matrix over $\mathtt{R},$ which holds if and only if  the matrix $\mathcal{W}$ generates $\mathbf{M}$  and $\iota(\mathcal{W})$ is a non-singular matrix over $\mathtt{R},$ where $\mathcal{W}=\left[\begin{array}{c}
    \mathcal{G} \\ 
    \mathcal{H} 
\end{array}\right]$ and the map $\iota$ is an embedding  of $\check{\mathtt{R}}$ into $\mathtt{R},$ defined as  $a\mapsto\gamma^{e-s}\iota(a)$ for all $a\in\check{\mathtt{R}}.$  In another recent work, Liu and Hu \cite[Th. 3.19]{Liu2024} derived  a similar characterization for LCPs of $\mathbb{Z}_4\mathbb{Z}_2$-linear codes.  

In this section, we  will also derive necessary and sufficient conditions under which the  codes $\mathtt{C}$ and $\mathtt{D}$ form an LCP. Towards this, we see, by Proposition 2 of Bajalan~\etal~\cite{Bajalan2023}, that  the $0$-Galois (Euclidean) dual codes $\mathtt{C}^{\perp_0}$ and $\mathtt{D}^{\perp_0}$ are of the types $\{\mathpzc{a}-k_0,\underbrace{0,0,\ldots, 0}_{e-1 \text{ times}}; \mathpzc{b}-\ell_0,\underbrace{0,0, \ldots, 0}_{s-1 \text{ times}}\}$ and $\{k_0,\underbrace{0,0,\ldots, 0}_{e-1 \text{ times}}; \ell_0,\underbrace{0,0, \ldots, 0}_{s-1 \text{ times}}\}$  and have generator matrices $\hat{\mathcal{G}}$ and $\hat{\mathcal{H}}$  of the forms
\vspace{-2mm}\begin{align}\label{Eq6.2}
\hat{\mathcal{G}}=\left[\begin{array}{c|c}
    \mathscr{N}&\mathscr{K} \\ 
    \gamma^{e-s}\mathscr{P}& \mathscr{Q} 
\end{array}\right]\text{ ~ and ~}\hat{\mathcal{H}}=\left[\begin{array}{c|c}
    \mathscr{R}&\mathscr{S} \\ 
    \gamma^{e-s}\mathscr{T}& \mathscr{U} 
\end{array}\right],
\vspace{-2mm}\end{align}
respectively, where the matrices $\mathscr{N}\in M_{(\mathpzc{a}-k_0)\times \mathpzc{a}}(\mathtt{R})$ and $\mathscr{R}\in M_{k_0\times \mathpzc{a}}(\mathtt{R})$ are such that their rows are independent over $\mathtt{R}$ and have period $\gamma^e,$ the matrices $\mathscr{Q}\in M_{(\mathpzc{b}-\ell_0)\times \mathpzc{b}}(\check{\mathtt{R}})$ and $\mathscr{U}\in M_{\ell_0\times \mathpzc{b}}(\check{\mathtt{R}})$ are such that their rows are independent over $\check{\mathtt{R}}$ and have period $\gamma^s,$ and the matrices $\mathscr{K}\in M_{(\mathpzc{a}-k_0)\times \mathpzc{b}}(\check{\mathtt{R}}),$ $\gamma^{e-s}\mathscr{P}\in M_{(\mathpzc{b}-\ell_0)\times \mathpzc{a}}(\mathtt{R}),$ $\mathscr{S}\in M_{k_0\times \mathpzc{b}}(\check{\mathtt{R}})$ and $\gamma^{e-s}\mathscr{T}\in M_{\ell_0\times \mathpzc{a}}(\mathtt{R}).$ The matrices $\hat{\mathcal{G}}$ and $\hat{\mathcal{H}}$ are called parity-check matrices of the codes $\mathtt{C}$ and $\mathtt{D},$ respectively. It is easy to see that $m\diamond \hat{\mathcal{G}}^T=0$ if and only if $m\in\mathtt{C}.$  Similarly, one can see that $m\diamond\hat{\mathcal{H}}^T=0$ if and only if $m\in\mathtt{D}.$   From this, it follows that  \vspace{-2mm}\begin{equation}\label{PCM}\mathcal{G}\diamond\hat{\mathcal{G}}^T=0\text{~ and ~}\mathcal{H}\diamond\hat{\mathcal{H}}^T=0.\vspace{-1mm}\end{equation}
We next observe that the matrices $\mathcal{G}\diamond \hat{\mathcal{H}}^T \in M_{(k_0+\ell_0)\times(k_0+\ell_0)}(\mathtt{R})$ and $\mathcal{H}\diamond\hat{\mathcal{G}}^T \in M_{(\mathpzc{a}-k_0+\mathpzc{b}-\ell_0)\times(\mathpzc{a}-k_0+\mathpzc{b}-\ell_0)}(\mathtt{R})$ are given by
\vspace{-2mm}\begin{align}\label{Eq6.3}
&\mathcal{G}\diamond\hat{\mathcal{H}}^T =\left[\begin{array}{cc}
     \mathscr{A}\mathscr{R}^T+\gamma^{e-s}\mathscr{B}\mathscr{S}^T & \gamma^{e-s}(\mathscr{A}\mathscr{T}^T+\mathscr{B}\mathscr{U}^T)  \\  \gamma^{e-s}(\mathscr{C}\mathscr{R}^T+\mathscr{D}\mathscr{S}^T) & \gamma^{e-s}(\mathscr{D}\mathscr{U}^T+\gamma^{e-s}\mathscr{C}\mathscr{T}^T)
\end{array}\right] \text{~~and}\\
&\mathcal{H}\diamond\hat{\mathcal{G}}^T =\left[\begin{array}{cc}
     \mathscr{E}\mathscr{N}^T+\gamma^{e-s}\mathscr{F}\mathscr{K}^T & \gamma^{e-s}(\mathscr{E}\mathscr{P}^T+\mathscr{F}\mathscr{Q}^T)  \\  \gamma^{e-s}(\mathscr{G}\mathscr{N}^T+\mathscr{H}\mathscr{K}^T) & \gamma^{e-s}(\mathscr{H}\mathscr{Q}^T+\gamma^{e-s}\mathscr{G}\mathscr{P}^T)\label{Eq6.4}
\end{array}\right].
\vspace{-2mm}\end{align}
Clearly, the square matrices $\mathcal{G}\diamond\hat{\mathcal{H}}^T $ and $\mathcal{H}\diamond\hat{\mathcal{G}}^T$ generate linear codes of lengths $k_0+\ell_0$ and $\mathpzc{a}-k_0+\mathpzc{b}-\ell_0$ over $\mathtt{R},$  respectively. In the following proposition, we derive two equivalent necessary and sufficient conditions under which the   codes $\mathtt{C}$ and $\mathtt{D}$  form an LCP by investigating the linear codes generated by the matrices $\mathcal{G}\diamond\hat{\mathcal{H}}^T$ and $\mathcal{H}\diamond\hat{\mathcal{G}}^T$ over $\mathtt{R}.$
\begin{prop}\label{Thm6.1}
Let $\mathtt{C}$ and $\mathtt{D}$ be $\mathtt{R}\check{\mathtt{R}}$-linear codes of block-length $(\mathpzc{a},\mathpzc{b})$ with generator matrices $\mathcal{G}$ and $\mathcal{H}$ (as defined by \eqref{Eq6.1}) and parity-check matrices $\hat{\mathcal{G}}$ and $\hat{\mathcal{H}}$  (as defined by \eqref{Eq6.2}), respectively. The following three statements are equivalent:
\begin{itemize}
\item[(a)] The codes $\mathtt{C}$ and $\mathtt{D}$ form an LCP.
\item[(b)] The matrix $\mathcal{G}\diamond\hat{\mathcal{H}}^T$ is a generator matrix of a linear code of the type $\{k_0,0,0,\ldots,0,\underbrace{\ell_0}_{(e-s+1)\text{-th}},0,0,\ldots,0\}$ and length $k_0+\ell_0$ over $\mathtt{R}.$
\item[(c)] The matrix $\mathcal{H}\diamond\hat{\mathcal{G}}^T$ is a generator matrix of a linear code of the type $\{\mathpzc{a}-k_0,0,0,\ldots,0,\underbrace{\mathpzc{b}-\ell_0}_{(e-s+1)\text{-th}},0,0,\ldots,0\}$ and length $\mathpzc{a}-k_0+\mathpzc{b}-\ell_0$ over $\mathtt{R}.$
\end{itemize}
\end{prop}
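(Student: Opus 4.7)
The plan is to mirror the structure of the proof of Theorem \ref{Theorem4.1}, replacing the Galois dual $\mathtt{C}^{\perp_h}$ by the companion code $\mathtt{D}$ throughout and relying on the defining property of the parity-check matrix: for every $m\in\mathbf{M}$, we have $m\diamond \hat{\mathcal{H}}^T=0$ if and only if $m\in\mathtt{D}$, and similarly $m\diamond\hat{\mathcal{G}}^T=0$ if and only if $m\in\mathtt{C}$ (this is the content of \eqref{PCM} extended to all of $\mathbf{M}$). I will establish the cycle (a) $\Rightarrow$ (b) $\Rightarrow$ (a) in detail, and then obtain (a) $\Leftrightarrow$ (c) by interchanging the roles of the two codes.

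For (a) $\Rightarrow$ (b), assume $\mathtt{C}\oplus\mathtt{D}=\mathbf{M}$, so that $\mathtt{C}\cap\mathtt{D}=\{0\}$. Writing $R_1,\dots,R_{k_0+\ell_0}$ for the rows of $\mathcal{G}\diamond\hat{\mathcal{H}}^T$, I will first show that the first $k_0$ rows have period $\gamma^e$ and the remaining $\ell_0$ rows have period $\gamma^s$. If some relation $\gamma^v R_i=0$ with $v<e$ held for $1\le i\le k_0$, then taking $w=(0,\dots,0,\gamma^v,0,\dots,0)$ with $\gamma^v$ in the $i$-th position yields $w\mathcal{G}\diamond\hat{\mathcal{H}}^T=0$, so $w\mathcal{G}\in\mathtt{D}$. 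Since $w\mathcal{G}\in\mathtt{C}$ as well, $w\mathcal{G}=0$, contradicting the fact that the $i$-th row of $\mathcal{G}$ is independent and of period $\gamma^e$. The same kind of argument gives period $\gamma^s$ for the last $\ell_0$ rows. For independence over $\mathtt{R}$, if $\sum c_iR_i=0$, then $(c_1,\dots,c_{k_0+\ell_0})\mathcal{G}\in\mathtt{C}\cap\mathtt{D}=\{0\}$, and the independence and known periods of the rows of $\mathcal{G}$ force $c_i\in\gamma^e\mathtt{R}$ for $i\le k_0$ and $c_j\in\gamma^s\mathtt{R}$ for $j>k_0$, hence $c_iR_i=0$ for each $i$, which is exactly the condition characterizing the claimed type.

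For (b) $\Rightarrow$ (a), I will first show $\mathtt{C}\cap\mathtt{D}=\{0\}$: any $\mathtt{c}\in\mathtt{C}\cap\mathtt{D}$ is of the form $y\mathcal{G}$ with $y\in\mathtt{R}^{k_0+\ell_0}$, and the condition $\mathtt{c}\in\mathtt{D}$ gives $y\mathcal{G}\diamond\hat{\mathcal{H}}^T=0$; the type assumption on $\mathcal{G}\diamond\hat{\mathcal{H}}^T$ forces the first $k_0$ entries of $y$ into $\gamma^e\mathtt{R}$ and the last $\ell_0$ into $\gamma^s\mathtt{R}$, whence $\mathtt{c}=y\mathcal{G}=0$ by the row periods of $\mathcal{G}$. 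A cardinality count then finishes the argument: by the type of $\mathtt{C}$ and $\mathtt{D}$, one has $|\mathtt{C}|\,|\mathtt{D}|=q^{ek_0+s\ell_0}\cdot q^{e(\mathpzc{a}-k_0)+s(\mathpzc{b}-\ell_0)}=|\mathbf{M}|$, so from $|\mathtt{C}+\mathtt{D}|\,|\mathtt{C}\cap\mathtt{D}|=|\mathtt{C}|\,|\mathtt{D}|$ I conclude $\mathtt{C}+\mathtt{D}=\mathbf{M}$ and thus the direct-sum decomposition $\mathtt{C}\oplus\mathtt{D}=\mathbf{M}$. Finally, (a) $\Leftrightarrow$ (c) follows by the same two-step argument with $(\mathcal{G},\hat{\mathcal{H}})$ replaced by $(\mathcal{H},\hat{\mathcal{G}})$, using that the type of $\mathtt{D}$ is $\{\mathpzc{a}-k_0,0,\dots,0;\mathpzc{b}-\ell_0,0,\dots,0\}$.

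The main obstacle I anticipate is the bookkeeping for the period claim in (a) $\Rightarrow$ (b): one must separately treat the two row-blocks of $\mathcal{G}\diamond\hat{\mathcal{H}}^T$ and carefully transport the period information from $\mathcal{G}$ through the $\diamond$ operation, since $\diamond$ mixes the $\mathtt{R}$-part with a $\gamma^{e-s}$-scaled $\check{\mathtt{R}}$-part (as is visible from \eqref{Eq6.3}). Apart from this, the argument is essentially a direct transcription of the LCD-case proof of Theorem \ref{Theorem4.1}.
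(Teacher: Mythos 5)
Your proposal is correct and follows essentially the same route as the paper, whose own proof simply says ``working as in Theorem \ref{Theorem4.1}'': you transcribe the period/independence argument for the rows of $\mathcal{G}\diamond\hat{\mathcal{H}}^T$ and the converse via the type condition, exactly as intended. The one genuine addition you make --- the cardinality count $|\mathtt{C}|\,|\mathtt{D}|=q^{ek_0+s\ell_0}\cdot q^{e(\mathpzc{a}-k_0)+s(\mathpzc{b}-\ell_0)}=|\mathbf{M}|$ to upgrade $\mathtt{C}\cap\mathtt{D}=\{0\}$ to $\mathtt{C}\oplus\mathtt{D}=\mathbf{M}$ --- is indeed necessary here (unlike in the LCD setting, where trivial intersection is the whole definition) and is correctly carried out.
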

\begin{proof} Working as in Theorem \ref{Theorem4.1}, one can show that the statements (a) and (b) are equivalent and that the statements (a) and (c) are equivalent. From this, the desired result follows. 
\vspace{-3mm}\end{proof}

\begin{remark}\label{Rk6.2} Note that the matrix $\mathcal{G}\diamond\hat{\mathcal{H}}^T$ (given by \eqref{Eq6.3}) is a generator matrix of a linear code of the type $\{k_0,0,0,\ldots,0,\underbrace{\ell_0}_{(e-s+1)\text{-th}},0,0,\ldots,0\}$ and length $k_0+\ell_0$ over $\mathtt{R}$ if and only if there exists an invertible $(k_0+\ell_0)\times(k_0+\ell_0)$ matrix $P_1$ over $\mathtt{R}$ such that $(\mathcal{G}\diamond\hat{\mathcal{H}}^T) P_1 =\left[\begin{array}{c|c}
    I_{k_0} & 0 \\
    0 & \gamma^{e-s} I_{\ell_0}
\end{array} \right].$ Similarly, the matrix $\mathcal{H}\diamond\hat{\mathcal{G}}^T$ (given by \eqref{Eq6.4}) is a generator matrix of a linear code of the type $\{\mathpzc{a}-k_0,0,0,\ldots,0,\underbrace{\mathpzc{b}-\ell_0}_{(e-s+1)\text{-th}},0,0,\ldots,0\}$ and length $\mathpzc{a}-k_0+\mathpzc{b}-\ell_0$ over $\mathtt{R}$ if and only if there exists an invertible $(\mathpzc{a}-k_0+\mathpzc{b}-\ell_0)\times(\mathpzc{a}-k_0+\mathpzc{b}-\ell_0)$ matrix $P_2$ over $\mathtt{R}$ such that $(\mathcal{H}\diamond\hat{\mathcal{G}}^T) P_2 =\left[\begin{array}{c|c}
    I_{\mathpzc{a}-k_0} & 0 \\
    0 & \gamma^{e-s} I_{\mathpzc{b}-\ell_0}
\end{array} \right].$   
\end{remark}

In the following proposition, we derive a necessary and sufficient  condition under which the matrix $\mathcal{G}\diamond \hat{\mathcal{H}}^T$ (given by \eqref{Eq6.3}) is a generator matrix of a linear code of the type $\{k_0,0,0,\ldots,0,\underbrace{\ell_0}_{(e-s+1)\text{-th}},0,0,\ldots,0\}$ and length $k_0+\ell_0$ over $\mathtt{R}.$ In an analogous way,  we derive a necessary and sufficient  condition under which the matrix $\mathcal{H}\diamond \hat{\mathcal{G}}^T$  (given by \eqref{Eq6.4}) is a generator matrix of a linear code of the type $\{\mathpzc{a}-k_0,0,0,\ldots,0,\underbrace{\mathpzc{b}-\ell_0}_{(e-s+1)\text{-th}},0,0,\ldots,0\}$ and length $\mathpzc{a}-k_0+\mathpzc{b}-\ell_0$ over $\mathtt{R}.$
\begin{prop}\label{Thm6.2}
Let $\mathcal{G}$ and $\mathcal{H}$ be the matrices as defined by \eqref{Eq6.1}, and let $\hat{\mathcal{G}}$ and $\hat{\mathcal{H}}$ be the matrices as defined by \eqref{Eq6.2}. The following hold.
\begin{itemize}
\item[(a)] The matrix $\mathcal{G}\diamond\hat{\mathcal{H}}^T$ is a generator matrix of a linear code of the type $\{k_0,0,0,\ldots,0,\underbrace{\ell_0}_{(e-s+1)\text{-th}},0,0,\ldots,0\}$ and length $k_0+\ell_0$ over $\mathtt{R}$ if and only if the matrices $\overline{\mathscr{A}}~\overline{\mathscr{R}}^T$ and $\overline{\mathscr{D}}~\overline{\mathscr{U}}^T$
are invertible over  $\overline{\mathtt{R}}.$
\item[(b)] The matrix $\mathcal{H}\diamond\hat{\mathcal{G}}^T$ is a generator matrix of a linear code of the type $\{\mathpzc{a}-k_0,0,0,\ldots,0,\underbrace{\mathpzc{b}-\ell_0}_{(e-s+1)\text{-th}},0,0,\ldots,0\}$ and length $\mathpzc{a}-k_0+\mathpzc{b}-\ell_0$ over $\mathtt{R}$ if and only if the matrices $\overline{\mathscr{E}}~\overline{\mathscr{N}}^T$ and $\overline{\mathscr{H}}~\overline{\mathscr{Q}}^T$
are invertible over  $\overline{\mathtt{R}}.$
\end{itemize}
\end{prop}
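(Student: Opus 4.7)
The plan is to prove part (a) by adapting the proof of Theorem \ref{Lemma4.7} almost verbatim, with the square matrix $\mathtt{G}\diamond\sigma^h(\mathtt{G})^T$ replaced by $\mathcal{G}\diamond\hat{\mathcal{H}}^T.$ Part (b) will follow by swapping the roles of $(\mathcal{G},\hat{\mathcal{H}})$ with $(\mathcal{H},\hat{\mathcal{G}})$ and of $(k_0,\ell_0)$ with $(\mathpzc{a}-k_0,\mathpzc{b}-\ell_0)$; the algebra is identical, so only (a) needs a detailed argument.

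For the forward direction of (a), assume $\mathcal{G}\diamond\hat{\mathcal{H}}^T$ generates a code of the stated type and length over $\mathtt{R}.$ Then the first $k_0$ rows have period $\gamma^{e},$ the remaining $\ell_0$ rows have period $\gamma^{s},$ and all $k_0+\ell_0$ rows are independent over $\mathtt{R}.$ First I would show that $\overline{\mathscr{A}}~\overline{\mathscr{R}}^T$ is invertible: from \eqref{Eq6.3}, reducing the upper-left block modulo $\gamma$ yields $\overline{\mathscr{A}}~\overline{\mathscr{R}}^T$ (since $e-s\geq 1$), while the upper-right block vanishes modulo $\gamma.$ If $\overline{\mathscr{A}}~\overline{\mathscr{R}}^T$ had linearly dependent rows, lifting the dependence to $\mathtt{R}$ and multiplying by $\gamma^{e-1}$ would produce a nontrivial $\mathtt{R}$-linear combination of the first $k_0$ rows equal to zero, contradicting independence together with the period-$\gamma^{e}$ property (exactly as in Step~1 of Theorem~\ref{Lemma4.7}).

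Next, to show $\overline{\mathscr{D}}~\overline{\mathscr{U}}^T$ invertible, suppose for contradiction that its rows are linearly dependent over $\overline{\mathtt{R}}.$ I would lift such a dependence to coefficients $b_1,\ldots,b_{\ell_0}\in\mathtt{R}$ with $b_1\notin\gamma\mathtt{R},$ multiply the bottom $\ell_0$ rows of $\mathcal{G}\diamond\hat{\mathcal{H}}^T$ by $\gamma^{s-1}b_j$ and sum. Because $\gamma^{2e-s-1}=0,$ the right half collapses to $\gamma^{e-1}\sum b_j(\mathscr{D}\mathscr{U}^T)^{(j)}\in\gamma^{e}\mathtt{R}^{\ell_0}=0,$ while the left half lies in $\gamma^{e-1}\mathtt{R}^{k_0}.$ Then, using the invertibility of $\overline{\mathscr{A}}~\overline{\mathscr{R}}^T$ established above, the rows $Y^{(1)},\ldots,Y^{(k_0)}$ of the upper-left block $\mathscr{A}\mathscr{R}^T+\gamma^{e-s}\mathscr{B}\mathscr{S}^T$ generate $\mathtt{R}^{k_0}$ over $\mathtt{R};$ hence I can find coefficients $g_i\in\gamma^{e-1}\mathtt{R}$ so that the $g_i$-combination of the top $k_0$ full rows cancels the left half without producing a right-half contribution (the upper-right block carries a $\gamma^{e-s}$ factor, killing it when multiplied by $\gamma^{e-1}$). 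Adding everything produces a nontrivial zero combination in which the term $\gamma^{s-1}b_1\cdot(\text{bottom row})$ is nonzero (since this row has period $\gamma^{s}$ and $b_1$ is a unit), contradicting independence of the rows of $\mathcal{G}\diamond\hat{\mathcal{H}}^T.$

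For the converse, assuming $\overline{\mathscr{A}}~\overline{\mathscr{R}}^T$ and $\overline{\mathscr{D}}~\overline{\mathscr{U}}^T$ are invertible, I would apply Lemma~\ref{Lemma4.5}(c) exactly as in the converse half of Theorem~\ref{Lemma4.7}: the invertibility of the two reduced matrices guarantees that the top $k_0$ rows of $\mathcal{G}\diamond\hat{\mathcal{H}}^T$ form an independent family of period $\gamma^{e}$ over $\mathtt{R},$ while division of the bottom $\ell_0$ rows by $\gamma^{e-s}$ (together with Lemma~\ref{Lemma4.5}(c)) gives an independent family of period $\gamma^{s},$ and mutual independence of the two blocks is read off from the block-triangular shape modulo $\gamma.$ Part (b) is proved in the same way, applied to $\mathcal{H}\diamond\hat{\mathcal{G}}^T$ and to the blocks $\overline{\mathscr{E}}~\overline{\mathscr{N}}^T,~\overline{\mathscr{H}}~\overline{\mathscr{Q}}^T.$ The main obstacle is the bookkeeping in the second half of the forward direction — tracking exactly which powers of $\gamma$ annihilate which off-diagonal terms so that only the desired nontrivial term survives — but once this is modeled on Theorem~\ref{Lemma4.7}, it is mechanical.
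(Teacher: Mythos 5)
Your proposal is correct and matches the paper exactly: the paper's entire proof of this proposition is the single sentence ``Working as in Theorem \ref{Lemma4.7}, we get the desired result,'' and your detailed adaptation (replacing $\mathtt{G}\diamond\sigma^h(\mathtt{G})^T$ by $\mathcal{G}\diamond\hat{\mathcal{H}}^T$, tracking the powers of $\gamma$ in the off-diagonal blocks, and obtaining (b) by symmetry) is precisely the argument the authors intend. The bookkeeping you flag as the main obstacle is handled exactly as you describe, mirroring the proof of Theorem \ref{Lemma4.7}.
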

\begin{proof}
Working as in Theorem \ref{Lemma4.7}, we get the desired result.
\end{proof}
Two linear codes $U_1$ and $U_2$ of length $n$ over $\overline{\mathtt{R}}$ form an LCP if their direct sum $U_1\oplus U_2=\overline{\mathtt{R}}^n.$ A parity-check matrix of a linear code $U$ of length $n$ over $\overline{\mathtt{R}}$ is defined as a generator matrix of its $0$-Galois (or Euclidean) dual code $U^{\perp_0}.$  Now, the following lemma states two equivalent necessary and sufficient conditions under which two linear codes over $\overline{\mathtt{R}}$ form an LCP. 
\begin{lemma}\cite{Hu2021}\label{Lem6.2}
Let $U_1$ and $U_2$ be two linear codes of length $n$ over $\overline{\mathtt{R}}$  with generator matrices $G_1$ and $G_2$ and parity-check matrices $H_1$ and $H_2$, respectively.  The following three statements are equivalent:
\begin{itemize}
\item[(a)] The codes $U_1$ and $U_2$ form an LCP.
\item[(b)] The matrix $G_1H_2^T$ is invertible.
\item[(c)] The matrix $H_1G_2^T$ is invertible.
\end{itemize}
\end{lemma}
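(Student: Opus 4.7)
The plan is to prove the three equivalences by the classical duality argument, using the defining property of a parity-check matrix, namely that for a codeword $c$ of length $n$ over $\overline{\mathtt{R}}$, one has $c H_i^T = 0$ if and only if $c \in U_i$ (for $i=1,2$). I will establish (a) $\Leftrightarrow$ (b) in detail, and then note that (a) $\Leftrightarrow$ (c) follows by the symmetric argument swapping the roles of $U_1$ and $U_2$.

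First, I will deal with the shape of the matrices. Let $k_i = \dim U_i$, so that $G_i$ is $k_i \times n$ and $H_i$ is $(n-k_i) \times n$. Hence $G_1 H_2^T$ has size $k_1 \times (n-k_2)$, and squareness (necessary for invertibility) forces $k_1 + k_2 = n$. Conversely, if (a) holds then a direct sum decomposition of $\overline{\mathtt{R}}^n$ immediately yields $k_1 + k_2 = n$. So throughout, the dimension condition $k_1 + k_2 = n$ is available.

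For (a) $\Rightarrow$ (b), assume $U_1 \oplus U_2 = \overline{\mathtt{R}}^n$. To show the square matrix $G_1 H_2^T$ is invertible, it suffices to show its rows are linearly independent. Suppose $y \in \overline{\mathtt{R}}^{k_1}$ satisfies $y G_1 H_2^T = 0$. Setting $c = y G_1$, we get $c \in U_1$ and $c H_2^T = 0$, so $c \in U_2$ as well. Hence $c \in U_1 \cap U_2 = \{0\}$. Since the rows of $G_1$ are linearly independent, this forces $y = 0$. For (b) $\Rightarrow$ (a), assume $G_1 H_2^T$ is invertible, so in particular $k_1 + k_2 = n$. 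If $c \in U_1 \cap U_2$, write $c = y G_1$ for some $y$; then $c \in U_2$ gives $y G_1 H_2^T = c H_2^T = 0$, and invertibility forces $y = 0$, hence $c = 0$. Combining $U_1 \cap U_2 = \{0\}$ with $\dim U_1 + \dim U_2 = n$ yields $U_1 \oplus U_2 = \overline{\mathtt{R}}^n$.

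Finally, (a) $\Leftrightarrow$ (c) is obtained by interchanging the labels $1$ and $2$ in the argument above, observing that (a) is symmetric in $U_1$ and $U_2$. The main step is really just the one-line identity $c H_2^T = 0 \Leftrightarrow c \in U_2$; once that is in hand the argument is short linear algebra. I do not anticipate any serious obstacle, since $\overline{\mathtt{R}}$ is a field and all matrices involved are over this field; the delicate module-theoretic issues that arise in the $\mathtt{R}\check{\mathtt{R}}$-setting (and which motivated Propositions \ref{Thm6.1} and \ref{Thm6.2}) do not appear here.
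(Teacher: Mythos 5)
Your proof is correct. The paper itself gives no argument here: its entire ``proof'' is a citation of Theorem 2.10 of Hu and Liu \cite{Hu2021}, which establishes the statement over finite commutative rings (where one has to be more careful about free submodules and the meaning of ``invertible''); since the present lemma is only invoked over the residue field $\overline{\mathtt{R}}$, your self-contained linear-algebra argument --- squareness forces $k_1+k_2=n$, the kernel computation via $cH_2^T=0\Leftrightarrow c\in U_2$ gives $U_1\cap U_2=\{0\}$, and the dimension count upgrades this to a direct sum --- is exactly the right level of generality and is what the cited result reduces to in this setting. One cosmetic point: interchanging the labels $1$ and $2$ in your argument for (a) $\Leftrightarrow$ (b) yields the equivalence of (a) with invertibility of $G_2H_1^T$, not of $H_1G_2^T$ as written in (c); you should add the one-line observation that $H_1G_2^T=\bigl(G_2H_1^T\bigr)^T$ and that a square matrix over a field is invertible if and only if its transpose is. With that remark included, the proof is complete.
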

\begin{proof}
It follows from Theorem 2.10 of Hu and Liu \cite{Hu2021}.
% Working as in Proposition 1 of Ngo \etal~\cite{Ngo}, we obtain the result.
\end{proof}

For $\mathtt{R}\check{\mathtt{R}}$-linear codes $\mathtt{C}$ and $\mathtt{D}$ with generator matrices $\mathcal{G}$ and $\mathcal{H}$ (as defined by \eqref{Eq6.1}) respectively, let $\overline{\mathtt{C}^{(X)}}$ and $\overline{\mathtt{D}^{(X)}}$ be linear codes of length $\mathpzc{a}$ over $\overline{\mathtt{R}}$ with generator matrices  $\overline{\mathscr{A}}$ and $\overline{\mathscr{E}},$ respectively, and let   $\overline{\mathtt{C}^{(Y)}}$  and $\overline{\mathtt{D}^{(Y)}}$ be  linear codes of length $\mathpzc{b}$ over $\overline{\check{\mathtt{R}}}=\overline{\mathtt{R}}$ with generator matrices $\overline{\mathscr{D}}$ and $\overline{\mathscr{H}},$ respectively.  We next make the following observation.
\begin{lemma}\label{Lem6.3}
Let $\mathtt{C}$ and $\mathtt{D}$ be $\mathtt{R}\check{\mathtt{R}}$-linear codes of block-length $(\mathpzc{a},\mathpzc{b})$ with generator matrices $\mathcal{G}$ and $\mathcal{H}$ (as defined by \eqref{Eq6.1}) and parity-check matrices $\hat{\mathcal{G}}$ and $\hat{\mathcal{H}}$ (as defined by \eqref{Eq6.2}), respectively. Then the linear codes $\overline{\mathtt{C}^{(X)}},$  $\overline{\mathtt{C}^{(Y)}},$ $\overline{\mathtt{D}^{(X)}}$ and $\overline{\mathtt{D}^{(Y)}}$ have parity-check matrices 
$\overline{\mathscr{N}},$ $\overline{\mathscr{Q}},$ $\overline{\mathscr{R}}$ and $\overline{\mathscr{U}},$  respectively.  
\end{lemma}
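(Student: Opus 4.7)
The plan is to extract the required orthogonality from the parity-check relations recorded in \eqref{PCM}, namely $\mathcal{G}\diamond\hat{\mathcal{G}}^T=0$ and $\mathcal{H}\diamond\hat{\mathcal{H}}^T=0$, and then verify that the bar matrices have the correct ranks to serve as parity-check matrices. First I would write out the block decomposition of $\mathcal{G}\diamond\hat{\mathcal{G}}^T$ using the forms of $\mathcal{G}$ in \eqref{Eq6.1} and $\hat{\mathcal{G}}$ in \eqref{Eq6.2}; the entries of the resulting $(k_0+\ell_0)\times(\mathpzc{a}-k_0+\mathpzc{b}-\ell_0)$ matrix are obtained exactly as in \eqref{Eq6.3}. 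The $(1,1)$ block yields $\mathscr{A}\mathscr{N}^T+\gamma^{e-s}\mathscr{B}\mathscr{K}^T=0,$ which reduces modulo $\gamma$ to $\overline{\mathscr{A}}\,\overline{\mathscr{N}}^T=0$ since $e-s\geq 1.$ The $(2,2)$ block reads $\gamma^{e-s}\bigl(\mathscr{D}\mathscr{Q}^T+\gamma^{e-s}\mathscr{C}\mathscr{P}^T\bigr)=0$ in $\mathtt{R},$ which forces $\mathscr{D}\mathscr{Q}^T+\gamma^{e-s}\mathscr{C}\mathscr{P}^T\in\gamma^{s}\mathtt{R},$ and hence $\overline{\mathscr{D}}\,\overline{\mathscr{Q}}^T=0$ after reducing modulo $\gamma.$ Repeating the same computation for $\mathcal{H}\diamond\hat{\mathcal{H}}^T=0$ will give $\overline{\mathscr{E}}\,\overline{\mathscr{R}}^T=0$ and $\overline{\mathscr{H}}\,\overline{\mathscr{U}}^T=0.$

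Next I would control the ranks of the reduced matrices. Since the rows of $\mathscr{N}$ and of $\mathscr{R}$ are independent over $\mathtt{R}$ with period $\gamma^{e},$ a standard lifting argument along the lines of Lemma \ref{Lemma4.5} shows that any $\overline{\mathtt{R}}$-linear dependence amongst the rows of $\overline{\mathscr{N}}$ or $\overline{\mathscr{R}}$ would lift to an $\mathtt{R}$-linear combination of the pre-bar rows lying in $\gamma\mathtt{R}^{\mathpzc{a}}$ with at least one coefficient outside $\gamma\mathtt{R},$ contradicting the full period assumption. Thus $\overline{\mathscr{N}}$ and $\overline{\mathscr{R}}$ have exactly $\mathpzc{a}-k_0$ and $k_0$ linearly independent rows, respectively. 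Applying the analogous reasoning to $\mathscr{Q}$ and $\mathscr{U},$ whose rows are independent over $\check{\mathtt{R}}$ of period $\gamma^{s},$ yields $\mathpzc{b}-\ell_0$ and $\ell_0$ linearly independent rows for $\overline{\mathscr{Q}}$ and $\overline{\mathscr{U}},$ respectively.

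The final step is a dimension count. The code $\overline{\mathtt{C}^{(X)}}$ has dimension $k_0$ over $\overline{\mathtt{R}},$ so its Euclidean dual has dimension $\mathpzc{a}-k_0,$ which matches the number of independent rows of $\overline{\mathscr{N}};$ combined with $\overline{\mathscr{A}}\,\overline{\mathscr{N}}^T=0,$ this identifies $\overline{\mathscr{N}}$ as a parity-check matrix of $\overline{\mathtt{C}^{(X)}}.$ An identical dimension count handles $\overline{\mathscr{Q}}, \overline{\mathscr{R}}, \overline{\mathscr{U}}$ for the codes $\overline{\mathtt{C}^{(Y)}}, \overline{\mathtt{D}^{(X)}}, \overline{\mathtt{D}^{(Y)}},$ respectively. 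The one delicate point, which I expect will require the most care, is the treatment of the $(2,2)$ block: because $\mathscr{D}$ and $\mathscr{Q}$ are matrices over $\check{\mathtt{R}}$ while the product $\mathcal{G}\diamond\hat{\mathcal{G}}^T$ lives in $\mathtt{R},$ one must carefully track how the embedding of $\check{\mathtt{R}}$ into $\mathtt{R}$ via Teichm\"uller representatives interacts with the annihilation by $\gamma^{e-s}$ before deducing $\overline{\mathscr{D}}\,\overline{\mathscr{Q}}^T=0.$ Once this subtlety is handled, the rest of the argument is routine.
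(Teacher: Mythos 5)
Your proposal is correct and follows essentially the same route as the paper's proof: extract the orthogonality relations $\overline{\mathscr{A}}\,\overline{\mathscr{N}}^T=0,$ $\overline{\mathscr{D}}\,\overline{\mathscr{Q}}^T=0$ (and their analogues for $\mathtt{D}$) from \eqref{PCM}, use the period-$\gamma^e$ (resp. $\gamma^s$) and independence hypotheses on the rows of $\mathscr{N},\mathscr{Q},\mathscr{R},\mathscr{U}$ to see that the reduced matrices have full row rank, and conclude by comparing dimensions with the Euclidean duals. The "delicate point" you flag about the $(2,2)$ block resolves exactly as you anticipate: $\gamma^{e-s}x=0$ in $\mathtt{R}$ forces $x\in\gamma^{s}\mathtt{R}\subseteq\gamma\mathtt{R},$ so the reduction modulo $\gamma$ vanishes.
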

\begin{proof}
 Since $\hat{\mathcal{G}}$ is a parity-check matrix of the code $\mathtt{C},$ we have $\mathcal{G}\diamond\hat{\mathcal{G}}^T=0,$ which implies that $\mathscr{A}\mathscr{N}^T+\gamma^{e-s}\mathscr{B}\mathscr{K}^T=0$ and $\gamma^{e-s}(\mathscr{D}\mathscr{Q}^T+\gamma^{e-s}\mathscr{C}\mathscr{P}^T)=0$ in $\mathtt{R}.$ From this, we get $\overline{\mathscr{A}}~\overline{\mathscr{N}}^T=0$ and $\overline{\mathscr{D}}~\overline{\mathscr{Q}}^T=0$ in $\overline{\mathtt{R}}.$ Now, let $E_1$ and $E_2$ be linear codes of lengths $\mathpzc{a}$ and $\mathpzc{b}$ over $\overline{\mathtt{R}}$ generated by the matrices $\overline{\mathscr{N}}$ and $\overline{\mathscr{Q}},$ respectively. It is easy to see that $E_1\subseteq\overline{\mathtt{C}^{(X)}}^{\perp_0}$ and $E_2\subseteq\overline{\mathtt{C}^{(Y)}}^{\perp_0}.$ Further, since the rows of the matrix $\mathscr{N}$ are independent over $\mathtt{R}$ and have period $\gamma^e,$ we see that the rows of the matrix $\overline{\mathscr{N}}$ are linearly independent over $\overline{\mathtt{R}}$, and hence the linear code $E_1$ has dimension $\mathpzc{a}-k_0.$ Similarly, as the rows of the matrix $\mathscr{Q}$ are independent over $\check{\mathtt{R}}$ and have period $\gamma^s,$ we see that the rows of the matrix $\overline{\mathscr{Q}}$ are linearly independent over $\overline{\mathtt{R}}$, and hence the linear code $E_2$ has dimension $\mathpzc{b}-\ell_0.$ Now, since the dual codes $\overline{\mathtt{C}^{(X)}}^{\perp_0}$ and $\overline{\mathtt{C}^{(Y)}}^{\perp_0}$ have dimensions $\mathpzc{a}-k_0$ and $\mathpzc{b}-\ell_0$ respectively, we get $\overline{\mathtt{C}^{(X)}}^{\perp_0}=E_1$ and $\overline{\mathtt{C}^{(Y)}}^{\perp_0}=E_2.$ From this, it follows that the codes $\overline{\mathtt{C}^{(X)}}$ and $\overline{\mathtt{C}^{(Y)}}$ have parity-check matrices $\overline{\mathscr{N}}$ and $\overline{\mathscr{Q}},$  respectively. 
 
 Working as above, one can show that $\overline{\mathscr{R}}$ and $\overline{\mathscr{U}}$ are parity-check matrices of the linear codes $\overline{\mathtt{D}^{(X)}}$ and $\overline{\mathtt{D}^{(Y)}},$ respectively. 
\end{proof}
In the following theorem, we  provide another characterization of LCPs of $\mathtt{R}\check{\mathtt{R}}$-linear codes of block-length $(\mathpzc{a},\mathpzc{b}).$
\begin{thm}\label{Thm6.3}
Let $\mathtt{C}$ and $\mathtt{D}$ be $\mathtt{R}\check{\mathtt{R}}$-linear codes of block-length $(\mathpzc{a},\mathpzc{b})$ with generator matrices $\mathcal{G}$ and $\mathcal{H}$ (as defined by \eqref{Eq6.1}) and parity-check matrices $\hat{\mathcal{G}}$ and $\hat{\mathcal{H}}$ (as defined by \eqref{Eq6.2}), respectively. The following  two statements are equivalent:
\begin{itemize}
\item[(a)] The codes $\mathtt{C}$ and $\mathtt{D}$ form an LCP.
\item[(b)] The codes $\overline{\mathtt{C}^{(X)}}$ and $\overline{\mathtt{D}^{(X)}}$ form an LCP and the codes $\overline{\mathtt{C}^{(Y)}}$ and $\overline{\mathtt{D}^{(Y)}}$ form an LCP.
\end{itemize}
\end{thm}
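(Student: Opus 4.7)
The plan is to prove the equivalence by chaining together four results already established in the excerpt. First, I would apply Proposition \ref{Thm6.1} to translate the assertion that $(\mathtt{C},\mathtt{D})$ forms an LCP into the structural statement that $\mathcal{G}\diamond\hat{\mathcal{H}}^T$ (computed in \eqref{Eq6.3}) generates a linear code over $\mathtt{R}$ of length $k_0+\ell_0$ and type $\{k_0,0,0,\ldots,0,\underbrace{\ell_0}_{(e-s+1)\text{-th}},0,0,\ldots,0\}$. Then, invoking Proposition \ref{Thm6.2}(a), I would reduce this condition in turn to the pair of invertibility requirements that $\overline{\mathscr{A}}~\overline{\mathscr{R}}^T$ and $\overline{\mathscr{D}}~\overline{\mathscr{U}}^T$ are invertible over the residue field $\overline{\mathtt{R}}$. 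At this point the theorem has become a statement purely about linear codes over $\overline{\mathtt{R}}$.

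Next I would use Lemma \ref{Lem6.3} to identify the parity-check matrices of $\overline{\mathtt{C}^{(X)}}$, $\overline{\mathtt{C}^{(Y)}}$, $\overline{\mathtt{D}^{(X)}}$, $\overline{\mathtt{D}^{(Y)}}$ as $\overline{\mathscr{N}}$, $\overline{\mathscr{Q}}$, $\overline{\mathscr{R}}$, $\overline{\mathscr{U}}$, respectively, recalling that by construction their generator matrices are $\overline{\mathscr{A}}$, $\overline{\mathscr{D}}$, $\overline{\mathscr{E}}$, $\overline{\mathscr{H}}$. Applying Lemma \ref{Lem6.2}(b) to the $X$-pair then yields that $(\overline{\mathtt{C}^{(X)}},\overline{\mathtt{D}^{(X)}})$ is an LCP over $\overline{\mathtt{R}}$ if and only if $G_1H_2^T=\overline{\mathscr{A}}~\overline{\mathscr{R}}^T$ is invertible; the analogous application to the $Y$-pair yields that $(\overline{\mathtt{C}^{(Y)}},\overline{\mathtt{D}^{(Y)}})$ is an LCP over $\overline{\mathtt{R}}$ if and only if $\overline{\mathscr{D}}~\overline{\mathscr{U}}^T$ is invertible. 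Combining these two equivalences with the reductions obtained in the previous paragraph produces exactly the equivalence of statements (a) and (b) in the theorem.

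Because every step of the chain is itself an \emph{if and only if}, the argument is essentially bookkeeping, and I do not expect a genuine obstacle. The only place that warrants care is the alignment of matrices: one must verify that the products $\overline{\mathscr{A}}~\overline{\mathscr{R}}^T$ and $\overline{\mathscr{D}}~\overline{\mathscr{U}}^T$ arising from Proposition \ref{Thm6.2}(a) match precisely the $G_1H_2^T$ pattern dictated by Lemma \ref{Lem6.2}(b) under the identifications furnished by Lemma \ref{Lem6.3}, so that the reduction to the residue field cleanly decouples the $X$-block from the $Y$-block of the mixed-alphabet ambient space $\mathbf{M}$. Once this alignment is checked, the desired equivalence follows by transitivity.
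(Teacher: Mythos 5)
Your proposal is correct and follows exactly the route the paper takes: its proof of this theorem is precisely the chain Proposition \ref{Thm6.1} $\rightarrow$ Proposition \ref{Thm6.2} $\rightarrow$ Lemmas \ref{Lem6.2} and \ref{Lem6.3}, with the matrices $\overline{\mathscr{A}}\,\overline{\mathscr{R}}^T$ and $\overline{\mathscr{D}}\,\overline{\mathscr{U}}^T$ serving as the pivot between the two statements. You have merely written out the bookkeeping that the paper leaves implicit.
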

\begin{proof}
It follows from  Propositions \ref{Thm6.1} and \ref{Thm6.2} and Lemmas \ref{Lem6.2} and \ref{Lem6.3}.
\end{proof}
Note that the above theorem provides a characterization of LCPs of $\mathtt{R}\check{\mathtt{R}}$-linear codes, which is different from the one provided in Theorem 1 of Bajalan \etal~\cite{Bajalan2023} and Theorem 3.19 of Liu and Hu \cite{Liu2024}. 

Now, an $\mathtt{R}\check{\mathtt{R}}$-linear code $\mathtt{A}$ of block-length $(\mathpzc{a},\mathpzc{b})$ is said to be separable if it has a generator matrix of the form $\mathcal{V}=\left[\begin{array}{c|c}
    \mathscr{W}_1&0 \\ 
    0& \mathscr{W}_2 
\end{array}\right],$ where the matrices $\mathscr{W}_1\in M_{n_1\times\mathpzc{a}}(\mathtt{R})$ and $\mathscr{W}_2\in M_{n_2\times\mathpzc{b}}(\check{\mathtt{R}})$ for some non-negative integers $n_1$ and $n_2.$  Recently, Bajalan \etal~\cite[Lem. 6]{Bajalan2023}  derived a necessary and sufficient condition under which two separable $\mathtt{R}\check{\mathtt{R}}$-linear codes  form an LCP. In another recent work, Liu and Hu \cite[Th. 4.13]{Liu2024} also provided a similar characterization of LCPs of separable $\mathbb{Z}_4\mathbb{Z}_2$-linear codes.  In the following corollary, we deduce Lemma 6 of Bajalan \etal~\cite{Bajalan2023} from Theorem \ref{Thm6.3} as a special case.
\begin{cor}\label{cor6.1}
Let $\mathtt{A}_1$ and $\mathtt{A}_2$ be linear codes of length $\mathpzc{a}$ over $\overline{\mathtt{R}}$  with generator matrices $\mathcal{V}_1$ and $\mathcal{V}_2,$  respectively. Also, let $\mathtt{B}_1$ and $\mathtt{B}_2$ be linear codes of length $\mathpzc{b}$ over $\overline{\mathtt{R}}$  with generator matrices $\mathcal{W}_1$ and $\mathcal{W}_2,$  respectively. Next, let $\mathtt{A}$ and $\mathtt{B}$ be $\mathtt{R}\check{\mathtt{R}}$-linear codes  of block-length $(\mathpzc{a}, \mathpzc{b})$  with generator matrices $\mathcal{Y}_1=\left[\begin{array}{c|c}
    \mathscr{Y}_1&0 \\ 
    0& \mathscr{Z}_1 
\end{array}\right]$ and $\mathcal{Y}_2=\left[\begin{array}{c|c}
    \mathscr{Y}_2&0 \\ 
    0& \mathscr{Z}_2 
\end{array}\right]$  respectively, where $\overline{\mathscr{Y}}_1=\mathcal{V}_1,$ $\overline{\mathscr{Y}}_2=\mathcal{V}_2,$ $\overline{\mathscr{Z}}_1=\mathcal{W}_1$ and $\overline{\mathscr{Z}}_2=\mathcal{W}_2.$ Then the codes $\mathtt{A}$ and $\mathtt{B}$  form an LCP if and only if the codes $\mathtt{A}_1$ and $\mathtt{A}_2$ form an LCP over $\overline{\mathtt{R}}$ and the codes $\mathtt{B}_1$ and $\mathtt{B}_2$ form an LCP over $\overline{\mathtt{R}}.$
\end{cor}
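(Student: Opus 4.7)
The plan is to obtain this corollary as an immediate specialization of Theorem \ref{Thm6.3}. The separable generator matrices $\mathcal{Y}_1$ and $\mathcal{Y}_2$ of $\mathtt{A}$ and $\mathtt{B}$ already have the weakly-free shape required in \eqref{Eq6.1}: matching blocks, one has $\mathscr{A} = \mathscr{Y}_1$, $\mathscr{D} = \mathscr{Z}_1$ with $\mathscr{B} = 0$ and $\gamma^{e-s}\mathscr{C} = 0$ for $\mathtt{A}$, and analogously $\mathscr{E} = \mathscr{Y}_2$, $\mathscr{H} = \mathscr{Z}_2$ with $\mathscr{F} = 0$ and $\gamma^{e-s}\mathscr{G} = 0$ for $\mathtt{B}$. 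Hence Theorem \ref{Thm6.3} applies directly to the pair $(\mathtt{A}, \mathtt{B})$; the only background condition is that the diagonal blocks $\mathscr{Y}_i, \mathscr{Z}_i$ actually satisfy the row-independence and period requirements built into \eqref{Eq6.1}, which in the forward direction is automatic from Lemma \ref{Lem6.1}(a).

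The main bookkeeping step will be to unpack the residue codes attached to $\mathtt{A}$ and $\mathtt{B}$. Using the definitions of $\mathtt{C}^{(X)}, \mathtt{C}^{(Y)}$ introduced just before Theorem \ref{Theorem04.2} and their mod-$\gamma$ reductions invoked in Lemma \ref{Lem6.3}, I would identify $\mathtt{A}^{(X)}$ with the linear code of length $\mathpzc{a}$ over $\mathtt{R}$ generated by $\mathscr{Y}_1$ and $\mathtt{A}^{(Y)}$ with the linear code of length $\mathpzc{b}$ over $\check{\mathtt{R}}$ generated by $\mathscr{Z}_1$. Reducing modulo $\gamma\mathtt{R}$ and invoking the hypotheses $\overline{\mathscr{Y}}_1 = \mathcal{V}_1$ and $\overline{\mathscr{Z}}_1 = \mathcal{W}_1$ then gives $\overline{\mathtt{A}^{(X)}} = \mathtt{A}_1$ and $\overline{\mathtt{A}^{(Y)}} = \mathtt{B}_1$; the identical analysis for $\mathcal{Y}_2$ yields $\overline{\mathtt{B}^{(X)}} = \mathtt{A}_2$ and $\overline{\mathtt{B}^{(Y)}} = \mathtt{B}_2$.

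With these identifications in hand, Theorem \ref{Thm6.3} reads that $(\mathtt{A}, \mathtt{B})$ is an LCP if and only if both pairs $(\overline{\mathtt{A}^{(X)}}, \overline{\mathtt{B}^{(X)}}) = (\mathtt{A}_1, \mathtt{A}_2)$ and $(\overline{\mathtt{A}^{(Y)}}, \overline{\mathtt{B}^{(Y)}}) = (\mathtt{B}_1, \mathtt{B}_2)$ form LCPs over $\overline{\mathtt{R}}$, which is precisely the stated equivalence. I do not anticipate a genuine obstacle beyond this translation of notation; once the dictionary between the diagonal blocks and their residue codes is set up, the corollary is a one-line consequence of Theorem \ref{Thm6.3}.
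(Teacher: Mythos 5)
Your proposal is correct and follows exactly the route the paper takes: the paper's proof is the one-line observation that the corollary follows immediately from Theorem \ref{Thm6.3}, and your unpacking of the block structure and the identifications $\overline{\mathtt{A}^{(X)}}=\mathtt{A}_1,$ $\overline{\mathtt{A}^{(Y)}}=\mathtt{B}_1,$ $\overline{\mathtt{B}^{(X)}}=\mathtt{A}_2,$ $\overline{\mathtt{B}^{(Y)}}=\mathtt{B}_2$ is precisely the translation that proof leaves implicit.
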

\begin{proof}
It follows immediately from Theorem \ref{Thm6.3}.
\end{proof}
% \begin{remark}Note that Corollary \ref{cor6.1} is equivalent to the characterization of LCPs of separable $\mathtt{R}\check{\mathtt{R}}$-linear codes provided by Bajalan \etal~\cite[Lem. 6]{Bajalan2023}. Liu and Hu \cite[Th. 4.13]{Liu2024} also provided a similar characterization for LCPs of separable $\mathbb{Z}_4\mathbb{Z}_2$-linear codes. \end{remark}

Liu and Hu \cite[Th. 3.20]{Liu2024} derived a sufficient condition under which two $\mathbb{Z}_4\mathbb{Z}_2$-linear codes form an LCP. We will now apply Theorem \ref{Thm6.3} to extend this  result to $\mathtt{R}\check{\mathtt{R}}$-linear codes in the following corollary, which  also provides a method to construct LCPs of $\mathtt{R}\check{\mathtt{R}}$-linear codes.

\begin{cor}
Let $\mathtt{C}$ and $\mathtt{D}$ be $\mathtt{R}\check{\mathtt{R}}$-linear codes such that the code $\mathtt{C}$ has a generator matrix $\mathcal{G}$ (as defined  by \eqref{Eq6.1}) and the code $\mathtt{D}$ has a parity-check matrix $\hat{\mathcal{H}}$ (as defined by \eqref{Eq6.2}). 
For $1\leq i\leq k_0+\ell_0,$ let $R^{(i)}$ and $S^{(i)}$   denote the $i$-th rows of the matrices $\mathcal{G}$ and $\hat{\mathcal{H}},$ respectively. Suppose that $\langle R^{(i)},S^{(i)}\rangle_0\notin\gamma\mathtt{R}$ and $\langle R^{(i)},S^{(j)}\rangle_0\in\gamma\mathtt{R}$ for all $i\ne j$ and $1\leq i,j \leq k_0+\ell_0.$ Then we have $\ell_0=0,$ and the codes $\mathtt{C}$ and $\mathtt{D}$ form an LCP of $\mathtt{R}\check{\mathtt{R}}$-linear codes.
\end{cor}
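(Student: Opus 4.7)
The plan is to proceed in two stages: first establish that $\ell_0=0$ via a direct period-based calculation, then upgrade the diagonal-dominance-style hypothesis into the matrix-invertibility condition needed to invoke Proposition \ref{Thm6.1}.

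For the first stage, suppose for contradiction that $\ell_0 \geq 1$ and take any $i\in\{k_0+1,\ldots,k_0+\ell_0\}.$ Then $R^{(i)}$ is a row of the block $\bigl[\gamma^{e-s}\mathscr{C}\ \big|\ \mathscr{D}\bigr]$ and $S^{(i)}$ is a row of the block $\bigl[\gamma^{e-s}\mathscr{T}\ \big|\ \mathscr{U}\bigr].$ Writing these out, the Euclidean pairing
$$\langle R^{(i)},S^{(i)}\rangle_{0}=\gamma^{2(e-s)}\,\mathscr{C}_{i-k_0}\mathscr{T}_{i-k_0}^{T}+\gamma^{e-s}\,\mathscr{D}_{i-k_0}\mathscr{U}_{i-k_0}^{T}$$
lies in $\gamma^{e-s}\mathtt{R}\subseteq\gamma\mathtt{R}$ because $e-s\geq 1.$ This contradicts the hypothesis that $\langle R^{(i)},S^{(i)}\rangle_{0}\notin\gamma\mathtt{R},$ forcing $\ell_0=0.$

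For the second stage, now that $\ell_0=0,$ the matrix $\mathcal{G}$ collapses to $[\mathscr{A}\mid\mathscr{B}]$ and $\hat{\mathcal{H}}$ to $[\mathscr{R}\mid\mathscr{S}],$ each with $k_0$ rows all of period $\gamma^{e}.$ A direct expansion gives
$$\bigl(\mathcal{G}\diamond\hat{\mathcal{H}}^{T}\bigr)_{i,j}=\mathscr{A}_{i}\mathscr{R}_{j}^{T}+\gamma^{e-s}\mathscr{B}_{i}\mathscr{S}_{j}^{T}=\langle R^{(i)},S^{(j)}\rangle_{0},$$
so by hypothesis the $(i,i)$-entries are units and the off-diagonal entries lie in $\gamma\mathtt{R}.$ Reducing modulo $\gamma\mathtt{R},$ the matrix $\overline{\mathcal{G}\diamond\hat{\mathcal{H}}^{T}}$ is diagonal with nonzero diagonal entries over the field $\overline{\mathtt{R}},$ hence invertible; since $\mathtt{R}$ is a local ring this lifts to invertibility of $\mathcal{G}\diamond\hat{\mathcal{H}}^{T}$ over $\mathtt{R}.$ Equivalently, $\mathcal{G}\diamond\hat{\mathcal{H}}^{T}$ generates a free linear code of type $\{k_{0},0,\ldots,0\}$ of length $k_{0}$ over $\mathtt{R}$ (which is exactly the type $\{k_0,0,\ldots,0,\underbrace{\ell_0}_{(e-s+1)\text{-th}},0,\ldots,0\}$ since $\ell_0=0$). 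Proposition \ref{Thm6.1} then yields that $(\mathtt{C},\mathtt{D})$ is an LCP.

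There is no real obstacle here beyond careful bookkeeping of the block structure; the subtle point is simply recognising that the hypothesis $\langle R^{(i)},S^{(i)}\rangle_{0}\notin\gamma\mathtt{R}$ is incompatible with any row of $\mathcal{G}$ having period smaller than $\gamma^{e},$ which is what rules out $\ell_0\geq 1$ and reduces the situation to the familiar ``free-code'' case already handled by the Propositions of this section.
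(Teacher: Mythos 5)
Your proof is correct, and both stages match the paper's logic in substance: the observation that any row of period less than $\gamma^{e}$ forces its pairing into $\gamma^{e-s}\mathtt{R}\subseteq\gamma\mathtt{R}$ (hence $\ell_0=0$), followed by the ``units on the diagonal, maximal ideal off the diagonal'' argument. Where you diverge is in the final step: the paper reduces everything to the residue field at the outset, shows only that $\overline{\mathscr{A}}\,\overline{\mathscr{R}}^{T}$ is invertible over $\overline{\mathtt{R}},$ and then concludes via Lemmas \ref{Lem6.2} and \ref{Lem6.3} together with Theorem \ref{Thm6.3} (i.e., the LCP of $\overline{\mathtt{C}^{(X)}}$ and $\overline{\mathtt{D}^{(X)}}$ lifts to an LCP of $\mathtt{C}$ and $\mathtt{D}$). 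You instead stay over $\mathtt{R},$ identify the $(i,j)$-entry of $\mathcal{G}\diamond\hat{\mathcal{H}}^{T}$ with $\langle R^{(i)},S^{(j)}\rangle_{0},$ deduce invertibility of the whole $\diamond$-product matrix from locality of $\mathtt{R},$ and invoke Proposition \ref{Thm6.1}(b) directly. Your route is slightly more self-contained (it needs only Proposition \ref{Thm6.1} rather than the chain Lemma \ref{Lem6.2}--Lemma \ref{Lem6.3}--Theorem \ref{Thm6.3}), at the cost of having to observe explicitly that an invertible $k_0\times k_0$ matrix over $\mathtt{R}$ generates a code of type $\{k_0,0,\ldots,0\}$; the paper's route reuses its residue-field characterization machinery and so parallels the proof of Corollary \ref{Cor4.4} more closely. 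Both are valid.
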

\begin{proof}
As $\langle R^{(i)},S^{(i)}\rangle_h\notin\gamma\mathtt{R}$ for $1\leq i\leq k_0+\ell_0,$ we must have $\ell_0=0.$ This gives $\mathcal{G}=[\mathscr{A}|~\mathscr{B}]$ and $\hat{\mathcal{H}}=[\mathscr{R}|~\mathscr{S}].$ Now, to prove that the codes $\mathtt{C}$ and $\mathtt{D}$ form an LCP, we see, by Lemmas \ref{Lem6.2} and \ref{Lem6.3} and Theorem \ref{Thm6.3}, that it is enough to show that the matrix $\overline{\mathscr{A}}~\overline{\mathscr{R}}^T$ is invertible.
Towards this, let $R^{(i)}_X$ and $S^{(i)}_X$  denote the $i$-th rows of the matrices $\mathscr{A}$ and $\mathscr{R}$ respectively, where $1\leq i\leq k_0.$ We  observe that $\langle R^{(i)}_X,S^{(i)}_X\rangle_0\notin\gamma \mathtt{R},$  which implies that $[\overline{R^{(i)}}_X,\overline{S^{(i)}}_X]_0\ne0$ for $1\leq i\leq k_0.$ Further, one can easily see that $[\overline{R^{(i)}}_X,\overline{S^{(j)}}_X]_0=0$ for all $i\ne j$ and $1\leq i,j\leq k_0.$ This implies that the matrix $\overline{\mathscr{A}}~\overline{\mathscr{R}}^T$ is invertible,  from which the desired result follows.
\end{proof}
We now proceed to discuss two applications of LCPs of $\mathtt{R}\check{\mathtt{R}}$-linear codes.  First of all, in the following section, we will present a direct sum masking scheme based on LCPs of $\mathtt{R}\check{\mathtt{R}}$-linear codes and obtain its security threshold against fault injection attacks (FIA)  and side-channel attacks (SCA) \cite{Bringer}. 
\subsection{A direct sum masking scheme based on LCPs of $\mathtt{R}\check{\mathtt{R}}$-linear codes}\label{DSM}
Bringer~\etal~\cite{Bringer} introduced and studied an orthogonal direct sum masking scheme   using LCD codes over finite fields to protect smart cards or trusted platform modules against FIA and SCA. Later,  Ngo \etal~\cite{Ngo} introduced and studied a direct sum masking scheme using LCPs of codes over finite fields, which we outline as follows:  Suppose that   $U_1$ and $U_2$ are two linear codes of length $n$ over $\mathbb{F}_q$ with dimensions $k$ and $n-k$ and generator matrices $G_1$ and $G_2,$ respectively. Suppose that $U_1$ and $U_2$ form an LCP.   Let $u_1 \in \mathbb{F}_{q}^k$ be the sensitive data. In a direct sum masking scheme, the  sensitive data $u_1 \in \mathbb{F}_{q}^k$ is first encoded to the codeword $c=u_1G_1 \in U_1$ and then a random word $u_2 \in \mathbb{F}_{q}^{n-k}$ is chosen to create the word $d=u_2 G_2 \in U_2,$ which is known as the mask and acts as an intentionally added noise.  Since the encoded sensitive information $c$ and the mask $d$ belong to two complementary subspaces $U_1$ and $U_2$ of $\mathbb{F}_q^n$, it is possible to recover both $c$ and $d$ knowing the word $z=c+d \in \mathbb{F}_{q}^n.$  This technique is referred to as a direct sum masking scheme. 
In the same work \cite{Ngo},  Ngo \etal~ showed that the level of resistance (or the security threshold) of the direct sum masking scheme based on the LCP $(U_1,U_2)$ of codes against both SCA and FIA is given by $\min\{d_H(U_1),d_H(U_2^{\perp_0})\},$ where $d_H(U_1)$ and $d_H(U_2^{\perp_0})$ are  Hamming distances of the codes $U_1$ and  $U_2^{\perp_0},$ respectively, (note that $U_2^{\perp_0}$ is the  $0$-Galois (or Euclidean) dual code of the code $U_2$).  This work  can  be similarly extended to LCPs of codes over finite commutative chain rings. 

In this section, we will introduce and study  a direct sum masking scheme using  LCPs of $\mathtt{R}\check{\mathtt{R}}$-linear codes and obtain its security threshold against FIA and SCA.
 For this, we assume, throughout this section, that the $\mathtt{R}\check{\mathtt{R}}$-linear codes $\mathtt{C}$ and $\mathtt{D}$ of block-length $(\mathpzc{a},\mathpzc{b})$ form an LCP, \textit{i.e.,} we have $\mathtt{C}\oplus\mathtt{D}=\mathbf{M}.$ Additionally, suppose that the codes $\mathtt{C}$ and $\mathtt{D}$ have generator matrices $\mathcal{G}$ and $\mathcal{H}$ (as defined by \eqref{Eq6.1}) and parity-check matrices $\hat{\mathcal{G}}$ and $\hat{\mathcal{H}}$ (as defined by \eqref{Eq6.2}), respectively. As $\mathtt{C}\oplus \mathtt{D}=\mathbf{M},$ each element $m \in \mathbf{M}$ can be uniquely expressed as $m=a+b,$ where $a \in \mathtt{C}$ and $b \in \mathtt{D}.$ Further, since $\mathcal{G}$ and $\mathcal{H}$ are generator matrices of the codes $\mathtt{C}$ and $\mathtt{D},$ respectively, there exist unique words $v\in\mathtt{R}^{k_0}\oplus\check{\mathtt{R}}^{\ell_0} $ and $v'\in\mathtt{R}^{\mathpzc{a}-k_0}\oplus\check{\mathtt{R}}^{\mathpzc{b}-\ell_0}$ satisfying $a=v \mathcal{G}$ and $b=v'\mathcal{H}.$ From this, it follows that for each $m \in \mathbf{M},$ there exist unique words  $v\in\mathtt{R}^{k_0}\oplus\check{\mathtt{R}}^{\ell_0} $ and $v'\in\mathtt{R}^{\mathpzc{a}-k_0}\oplus\check{\mathtt{R}}^{\mathpzc{b}-\ell_0}$ such that $m=v\mathcal{G}+v' \mathcal{H}.$ 

Now, let $x=(x_1|x_2)\in\mathtt{R}^{k_0}\oplus\check{\mathtt{R}}^{\ell_0}$  be the sensitive data to be  protected from FIA and SCA, where $x_1\in\mathtt{R}^{k_0}$ and $x_2\in\check{\mathtt{R}}^{\ell_0}.$ To protect $x,$ we will first choose a random word $y=(y_1|y_2)\in\mathtt{R}^{\mathpzc{a}-k_0}\oplus\check{\mathtt{R}}^{\mathpzc{b}-\ell_0},$ where $y_1\in\mathtt{R}^{\mathpzc{a}-k_0}$ and $y_2\in\check{\mathtt{R}}^{\mathpzc{b}-\ell_0}.$ The sensitive word $x$ is first encoded to the word $c=(c_1|c_2)=x\mathcal{G}\in \mathtt{C},$ which is also called the encoded sensitive data. Then the mask $d=(d_1|d_2)=y\mathcal{H}\in \mathtt{D}$ is added to the encoded word $c$ to create the word $z=c+d \in \mathbf{M},$ which is called the direct sum representation. As $\mathtt{C}\oplus \mathtt{D}=\mathbf{M},$ it is theoretically  possible to recover both the sensitive data $x$ and the randomly chosen word $y$ knowing the word $z.$  We will next provide  methods to recover both the sensitive data $x$ and the random word $y$ from the word $z.$ Towards this, let us first define two maps $\Omega_1:\mathtt{R}^{k_0}\oplus\gamma^{e-s}\mathtt{R}^{\ell_0}\rightarrow\mathtt{R}^{k_0}\oplus\check{\mathtt{R}}^{\ell_0}$ and $\Omega_2:\mathtt{R}^{\mathpzc{a}-k_0}\oplus\gamma^{e-s}\mathtt{R}^{\mathpzc{b}-\ell_0}\rightarrow\mathtt{R}^{\mathpzc{a}-k_0}\oplus\check{\mathtt{R}}^{\mathpzc{b}-\ell_0}$ as
\vspace{-2mm}\begin{align}
\Omega_1(u_1,u_2,\ldots,u_{k_0},\gamma^{e-s}u_{k_0+1},\ldots,\gamma^{e-s}u_{k_0+\ell_0})&=(u_1,u_2,\ldots,u_{k_0}|u_{k_0+1},\ldots,u_{k_0+\ell_0})\label{Eq6.5}~\text{and}\\
\Omega_2(w_1,w_2,\ldots,w_{\mathpzc{a}-k_0},\gamma^{e-s}w_{\mathpzc{a}-k_0+1},\ldots,\gamma^{e-s}w_{\mathpzc{a}-k_0+\mathpzc{b}-\ell_0})&=(w_1,w_2,\ldots,w_{\mathpzc{a}-k_0}|w_{\mathpzc{a}-k_0+1},\ldots,w_{\mathpzc{a}-k_0+\mathpzc{b}-\ell_0})
\vspace{-2mm}\end{align}
for all $(u_1,u_2,\ldots,u_{k_0},\gamma^{e-s}u_{k_0+1},\ldots,\gamma^{e-s}u_{k_0+\ell_0})\in\mathtt{R}^{k_0}\oplus\gamma^{e-s}\mathtt{R}^{\ell_0}$ and $(w_1,w_2,\ldots,w_{\mathpzc{a}-k_0},\gamma^{e-s}w_{\mathpzc{a}-k_0+1},\ldots,\\\gamma^{e-s}w_{\mathpzc{a}-k_0+\mathpzc{b}-\ell_0})\in\mathtt{R}^{\mathpzc{a}-k_0}\oplus\gamma^{e-s}\mathtt{R}^{\mathpzc{b}-\ell_0}.$ This also gives rise to  the projection maps $\Omega_1^{(X)}:\mathtt{R}^{k_0}\oplus\gamma^{e-s}\mathtt{R}^{\ell_0}\rightarrow\mathtt{R}^{k_0}$ and $\Omega_1^{(Y)}:\mathtt{R}^{k_0}\oplus\gamma^{e-s}\mathtt{R}^{\ell_0}\rightarrow\check{\mathtt{R}}^{\ell_0}$  of the map $\Omega_1$ as 
\vspace{-2mm}\begin{align*}
\Omega_1^{(X)}(u_1,u_2,\ldots,u_{k_0},\gamma^{e-s}u_{k_0+1},\ldots,\gamma^{e-s}u_{k_0+\ell_0})&=(u_1,u_2,\ldots,u_{k_0})~\text{ and}\\
\Omega_1^{(Y)}(u_1,u_2,\ldots,u_{k_0},\gamma^{e-s}u_{k_0+1},\ldots,\gamma^{e-s}u_{k_0+\ell_0})&=(u_{k_0+1},\ldots,u_{k_0+\ell_0})
\vspace{-2mm}\end{align*}
for all  $(u_1,u_2,\ldots,u_{k_0},\gamma^{e-s}u_{k_0+1},\ldots,\gamma^{e-s}u_{k_0+\ell_0})\in\mathtt{R}^{k_0}\oplus\gamma^{e-s}\mathtt{R}^{\ell_0}.$

Now, by Remark \ref{Rk6.2}, we know that there exist invertible matrices $P_1\in M_{(k_0+\ell_0)\times(k_0+\ell_0)}(\mathtt{R})$ and $P_2\in M_{(\mathpzc{a}-k_0+\mathpzc{b}-\ell_0)\times(\mathpzc{a}-k_0+\mathpzc{b}-\ell_0)}(\mathtt{R})$ such that $(\mathcal{G}\diamond\hat{\mathcal{H}}^T) P_1 =\left[\begin{array}{c|c}
    I_{k_0} & 0 \\
    0 & \gamma^{e-s} I_{\ell_0}
\end{array} \right]$ and $(\mathcal{H}\diamond\hat{\mathcal{G}}^T) P_2 =\left[\begin{array}{c|c}
    I_{\mathpzc{a}-k_0} & 0 \\
    0 & \gamma^{e-s} I_{\mathpzc{b}-\ell_0}
\end{array} \right].$
% In fact, by working as in Theorem 1 of Bajalan \etal~\cite{Bajalan},  the matrices $P_1$ and $P_2$ are given by \small
% \begin{align*}
% P_1&=\left[\begin{array}{cc}
%     Q_1+\gamma^{e-s}Q_1(\mathscr{A}\mathscr{T}^T+\mathscr{B\mathscr{U}^T})Q_2Q_3(\mathscr{C}\mathscr{R}^T+\mathscr{D}\mathscr{S}^T)Q_1 & -\gamma^{e-s}Q_1(\mathscr{A}\mathscr{T}^T+\mathscr{B}\mathscr{U}^T)Q_2Q_3 \\
%     -Q_2Q_3(\mathscr{C}\mathscr{R}^T+\mathscr{D}\mathscr{S}^T)Q_1 & Q_2Q_3
% \end{array} \right]~\text{and}\\
% P_2&=\left[\begin{array}{cc}
%     Q_4+\gamma^{e-s}Q_4(\mathscr{E}\mathscr{P}^T+\mathscr{F\mathscr{Q}^T})Q_5Q_6(\mathscr{G}\mathscr{N}^T+\mathscr{H}\mathscr{K}^T)Q_4 & -\gamma^{e-s}Q_1(\mathscr{E}\mathscr{P}^T+\mathscr{F}\mathscr{Q}^T)Q_5Q_6 \\
%     -Q_5Q_6(\mathscr{G}\mathscr{N}^T+\mathscr{H}\mathscr{K}^T)Q_4 & Q_5Q_6
% \end{array} \right],
% \end{align*}
% \normalsize
% where the matrices $Q_1=(\mathscr{A}\mathscr{R}^T+\gamma^{e-s}\mathscr{B}\mathscr{S}^T)^{-1},$ $Q_2=(\mathscr{D}\mathscr{U}^T+\gamma^{e-s}\mathscr{C}\mathscr{T}^T)^{-1},$ $Q_3=\big(I_{\ell_0}-\gamma^{e-s}(\mathscr{C}\mathscr{R}^T\\+\mathscr{D}\mathscr{S}^T)Q_1(\mathscr{A}\mathscr{T}^T+\mathscr{B}\mathscr{U}^T)Q_2\big)^{-1},$ $Q_4=(\mathscr{E}\mathscr{N}^T+\gamma^{e-s}\mathscr{F}\mathscr{K}^T)^{-1},$ $Q_5=(\mathscr{H}\mathscr{Q}^T+\gamma^{e-s}\mathscr{G}\mathscr{P}^T)^{-1}$ and $Q_6=\big(I_{\ell_0}-\gamma^{e-s}(\mathscr{G}\mathscr{N}^T+\mathscr{H}\mathscr{K}^T)Q_4(\mathscr{E}\mathscr{P}^T+\mathscr{F}\mathscr{Q}^T)Q_4\big)^{-1}.$ Here $V^{-1}$ denote the multiplicative inverse of the matrix $V.$
Further, let us define  maps $\Psi_1:\mathbf{M}\rightarrow\mathtt{R}^{k_0}\oplus\gamma^{e-s}\mathtt{R}^{\ell_0}$ and $\Psi_2:\mathbf{M}\rightarrow\mathtt{R}^{\mathpzc{a}-k_0}\oplus\gamma^{e-s}\mathtt{R}^{\mathpzc{b}-\ell_0}$ as 
\vspace{-2mm}\begin{equation}\label{Psi1} 
\Psi_1(m)=(m\diamond\hat{\mathcal{H}}^T)P_1~\text{ ~and ~}\Psi_2(m)=(m\diamond\hat{\mathcal{G}}^T)P_2 \text{~ for all ~}m\in\mathbf{M}.
\vspace{-2mm} \end{equation}

Now, let $z=c+d,$ where $c=x\mathcal{G} \in \mathtt{C}$ and $d=y\mathcal{H} \in \mathtt{D}$ with $x=(x_1|x_2)\in\mathtt{R}^{k_0}\oplus\check{\mathtt{R}}^{\ell_0}$ and $y=(y_1|y_2)\in\mathtt{R}^{\mathpzc{a}-k_0}\oplus\check{\mathtt{R}}^{\mathpzc{b}-\ell_0}.$ By  \eqref{PCM}, we have $\mathcal{G}\diamond \hat{\mathcal{G}}^T=0$ and $\mathcal{H}\diamond \hat{\mathcal{H}}^T=0.$ Further, using equation \eqref{Psi1} and Remark \ref{Rk6.2},  we get \vspace{-3mm}\begin{align*}
\Psi_1(z)&=(z\diamond\hat{\mathcal{H}}^T)P_1=(x\mathcal{G}\diamond\hat{\mathcal{H}}^T+y\mathcal{H\diamond}\hat{\mathcal{H}}^T)P_1=x(\mathcal{G}\diamond\hat{\mathcal{H}}^T)P_1\\& =x\left[\begin{array}{c|c}
    I_{k_0} & 0 \\
    0 & \gamma^{e-s} I_{\ell_0}
\end{array} \right]=(x_1|\gamma^{e-s}x_2).
\vspace{-2mm}\end{align*}
Similarly, we observe that $\Psi_2(z)=(y_1|\gamma^{e-s}y_2).$  From this,  one can easily observe that
\vspace{-2mm}\begin{equation*}
\Omega_1\circ\Psi_1(z)=(x_1|x_2)=x~\text{~and~}~
\Omega_2\circ\Psi_2(z)=(y_1|y_2)=y.
\vspace{-2mm}\end{equation*}

\noindent\textbf{Computations}: We need to carry out certain computations within the direct sum representation. Since $x$ is the sensitive data, all operations must be performed on the state $z$ only, without involving $x.$ Here, we will see how the operations  performed on $x$ are translated to the corresponding operations on the direct sum representation $z.$  Following are the three main operations involved in the computations:
\begin{itemize}
\item[(a)] \textbf{Key addition}

A key addition to be performed on $x=(x_1|x_2)$ is a map $(x_1|x_2)\mapsto(x_1|x_2)+(f_1|f_2),$ where $(f_1|f_2)\in\mathtt{R}^{k_0}\oplus\check{\mathtt{R}}^{\ell_0}$ is the key. As  $(z_1|z_2)+(f_1|f_2)\mathcal{G}=\big((x_1|x_2)+(f_1|f_2)\big)\mathcal{G}+(y_1|y_2)\mathcal{H},$ the corresponding secure key addition on $z=(z_1|z_2)$ is the map  \vspace{-1mm}$$\vspace{-1mm}z=(z_1|z_2)\mapsto (z_1|z_2)+(f_1|f_2)\mathcal{G}.$$
\item[(b)] \textbf{Linear operation}

A linear operation to be performed on $x=(x_1|x_2)$ is a map $(x_1|x_2)\mapsto(x_1L_1|x_2L_2),$ where $L_1\in M_{k_0\times k_0}(\mathtt{R})$ and $L_2\in M_{\ell_0\times \ell_0}(\check{\mathtt{R}}).$  As $\Omega_1^{(X)}\big((z\diamond\hat{\mathcal{H}}^T)P_1\big)=x_1,$ $\Omega_2^{(Y)}\big((z\diamond\hat{\mathcal{H}}^T)P_1\big)=x_2$ and $\Omega_2\big((z\diamond\hat{\mathcal{G}}^T)P_2\big)=y,$ the corresponding masked linear operation on $z=(z_1|z_2)$ is given by \vspace{-1mm}$$\vspace{-2mm}z=(z_1|z_2)\mapsto \Big(\Omega_1^{(X)}\big((z\diamond\hat{\mathcal{H}}^T)P_1\big)L_1,\Omega_1^{(Y)}\big((z\diamond\hat{\mathcal{H}}^T)P_1\big)L_2\Big)\mathcal{G}+\Omega_2\big((z\diamond\hat{\mathcal{G}}^T)P_2\big)\mathcal{H}.$$

\item[(c)] \textbf{Non-linear operation}

A non-linear operation to be performed on $x=(x_1|x_2)$ is of the form $(x_1|x_2)\mapsto\big(S_1(x_1)|S_2(x_2)\big),$ where $S_1:\mathtt{R}^{k_0}\rightarrow\mathtt{R}^{k_0}$ and $S_2:\check{\mathtt{R}}^{\ell_0}\rightarrow\check{\mathtt{R}}^{\ell_0}$ are some non-linear operations. As $\Omega_1^{(X)}\big((z\diamond\hat{\mathcal{H}}^T)P_1\big)=x_1,$ $\Omega_2^{(Y)}\big((z\diamond\hat{\mathcal{H}}^T)P_1\big)=x_2$ and $\Omega_2\big((z\diamond\hat{\mathcal{G}}^T)P_2\big)=y,$ the corresponding  masked non-linear operation on $z=(z_1|z_2)$ is given by \vspace{-2mm}$$\vspace{-2mm}z=(z_1|z_2)\mapsto \bigg(S_1\Big(\Omega_1^{(X)}\big((z\diamond\hat{\mathcal{H}}^T)P_1\big)\Big),  S_2\Big(\Omega_1^{(Y)}\big((z\diamond\hat{\mathcal{H}}^T)P_1\big)\Big) \bigg) \mathcal{G} + \Omega_2\big((z\diamond\hat{\mathcal{G}}^T)P_2\big)\mathcal{H}.$$

\end{itemize}

We will now study the security of this direct sum masking scheme against FIA and SCA. To do this, we first extend the Hamming weight function $w_H$ defined on $\mathtt{R}^\mathpzc{a}$ and $\check{\mathtt{R}}^{\mathpzc{b}}$  to the Hamming weight function $\textbf{w}_H$ on $\mathbf{M}$ as \vspace{-3mm}$$\vspace{-3mm}\textbf{w}_H(\mathtt{c}_1,\mathtt{c}_2,\ldots,\mathtt{c}_\mathpzc{a}|\mathtt{d}_1,\mathtt{d}_2,\ldots,\mathtt{d}_\mathpzc{b})=\sum_{i=1}^\mathpzc{a}w_H(\mathtt{c}_i)+\sum\limits_{j=1}^\mathpzc{b}w_H(\mathtt{d}_i)$$ for all $(\mathtt{c}_1,\mathtt{c}_2,\ldots,\mathtt{c}_\mathpzc{a}|\mathtt{d}_1,\mathtt{d}_2,\ldots,\mathtt{d}_\mathpzc{b})\in\mathbf{M}.$ The Hamming distance between $m_1,m_2 \in \mathbf{M},$ denoted by $d_H(m_1,m_2),$ is defined as $d_H(m_1,m_2)=\textbf{w}_H(m_1-m_2).$ The Hamming distance of an $\mathtt{R}\check{\mathtt{R}}$-linear code $\mathtt{E}\subseteq\mathbf{M}$ is defined as  $d_H(\mathtt{E})=\min\{d_H(m_1,m_2): m_1, m_2\in \mathtt{E} \text{ and }m_1 \neq m_2\}$. It is easy to see that $d_H(\mathtt{E})= \min\{\textbf{w}_H(m): m (\neq 0)\in \mathtt{E}\}.$
\vspace{2mm}\\
\textbf{Security against FIA}
\vspace{1mm}\\
We will first study the security of  the direct sum masking scheme  constructed using the LCP of the codes $\mathtt{C}$ and $\mathtt{D}$ against FIA. Note that, throughout the computation, the state $z$ is masked by the same word $d=y\mathcal{H}$ for some $y\in\mathtt{R}^{\mathpzc{a}-k_0}\oplus\check{\mathtt{R}}^{\mathpzc{b}-\ell_0},$ which is randomly chosen in the beginning of the computation. Hence the value of the mask $d=y\mathcal{H}$ can be checked from time to time during the computations. 

During an FIA, the state $z=(z_1|z_2)$ may be modified to $z+\epsilon=(z_1+\epsilon_1|z_2+\epsilon_2)$ by injecting the fault  $\epsilon=(\epsilon_1|\epsilon_2)\in\mathbf{M},$ where $\epsilon_1\in\mathtt{R}^\mathpzc{a}$ and $\epsilon_2\in\check{\mathtt{R}}^\mathpzc{b}.$ Here, the main task is to establish whether such a modification due to FIA has occurred or not. For this, we note that $\mathbf{M}=\mathtt{C}\oplus\mathtt{D}$ and $\epsilon\in\mathbf{M}.$ So we can write $\epsilon=\epsilon_1'\mathcal{G}+\epsilon_2'\mathcal{H}$ for some $\epsilon_1'\in\mathtt{R}^{k_0}\oplus\check{\mathtt{R}}^{\ell_0}$ and $\epsilon_2'\in\mathtt{R}^{\mathpzc{a}-k_0}\oplus\check{\mathtt{R}}^{\mathpzc{b}-\ell_0}.$ This implies that $z+\epsilon=(x+\epsilon_1')\mathcal{G}+(y+\epsilon_2')\mathcal{H}.$ Since the variable $y$ is random and does not contain any sensitive information, the first step to determine whether or not any modification due to FIA has occurred on the state $z$ is to compute $\Omega_2\circ\Psi_2(z+\epsilon)=y+\epsilon_2'.$ If $\epsilon_2'\ne0,$ we can conclude that some modification on the original state $z$ has occurred due to FIA. On the other hand, if $\epsilon_2'=0,$ we have $z+\epsilon=(x+\epsilon_1')\mathcal{G}+y\mathcal{H}.$  Here, we see that the modification cannot be detected if $\epsilon\in\mathtt{C}.$ From this, it follows that if the injected fault $\epsilon\in\mathbf{M}$ satisfies $\textbf{w}_H(\epsilon)<d_H(\mathtt{C}),$ then the resulting FIA is detected.  From this, we deduce the following:
\begin{thm}\label{Thm6.4}
Let $\mathtt{C}$ and $\mathtt{D}$ form an LCP of $\mathtt{R}\check{\mathtt{R}}$-linear codes of block-length $(\mathpzc{a},\mathpzc{b})$ used in the direct sum masking scheme against FIA. The injected  fault $\epsilon \in \mathbf{M}$ is detected if $\mathbf{w}_H(\epsilon)<d_H(\mathtt{C}).$   
\end{thm}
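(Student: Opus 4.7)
The plan is to exploit the unique decomposition $\mathbf{M} = \mathtt{C}\oplus\mathtt{D}$ together with the two decoding identities $\Omega_1\circ\Psi_1(z)=x$ and $\Omega_2\circ\Psi_2(z)=y$ that were already established in the discussion preceding the theorem. Because $\Psi_1,\Psi_2$ are $\mathtt{R}$-linear (they are defined by first taking a $\diamond$-product with the fixed matrices $\hat{\mathcal{H}}^T,\hat{\mathcal{G}}^T$ and then right-multiplying by the fixed matrices $P_1,P_2$), injecting an arbitrary fault $\epsilon\in\mathbf{M}$ shifts the reconstructed mask by the $\mathtt{D}$-component of $\epsilon$. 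The strategy is to extract this component from $z+\epsilon$, compare it with the $y$ that was stored at the beginning of the computation, and then argue by a weight lower bound that any residual undetected fault must lie entirely in $\mathtt{C}$ and therefore be zero whenever $\mathbf{w}_H(\epsilon)<d_H(\mathtt{C})$.

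First I would write $\epsilon=\epsilon_1'\mathcal{G}+\epsilon_2'\mathcal{H}$ with unique $\epsilon_1'\in\mathtt{R}^{k_0}\oplus\check{\mathtt{R}}^{\ell_0}$ and $\epsilon_2'\in\mathtt{R}^{\mathpzc{a}-k_0}\oplus\check{\mathtt{R}}^{\mathpzc{b}-\ell_0}$, so that $z+\epsilon=(x+\epsilon_1')\mathcal{G}+(y+\epsilon_2')\mathcal{H}$. Substituting into $\Psi_2$ and using $\mathcal{H}\diamond\hat{\mathcal{G}}^T\cdot P_2 = \left[\begin{array}{c|c} I_{\mathpzc{a}-k_0} & 0 \\ 0 & \gamma^{e-s} I_{\mathpzc{b}-\ell_0}\end{array}\right]$ from Remark \ref{Rk6.2} together with $\mathcal{G}\diamond\hat{\mathcal{G}}^T=0$ from \eqref{PCM}, the cross-term in $(z+\epsilon)\diamond\hat{\mathcal{G}}^T$ vanishes, leaving $\Psi_2(z+\epsilon)=\big(y_1+\epsilon_{2,1}'\,\big|\,\gamma^{e-s}(y_2+\epsilon_{2,2}')\big)$, and applying $\Omega_2$ gives $\Omega_2\circ\Psi_2(z+\epsilon)=y+\epsilon_2'$. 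I would then split into two cases. If $\epsilon_2'\neq 0$, the verifier notices $\Omega_2\circ\Psi_2(z+\epsilon)\neq y$ and flags the fault. If $\epsilon_2'=0$, then $\epsilon=\epsilon_1'\mathcal{G}\in\mathtt{C}$, and the hypothesis $\mathbf{w}_H(\epsilon)<d_H(\mathtt{C})$ combined with $d_H(\mathtt{C})=\min\{\mathbf{w}_H(m):0\neq m\in\mathtt{C}\}$ forces $\epsilon=0$, so no non-trivial undetected fault exists.

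There is no genuine obstacle; the argument is a bookkeeping exercise that hinges on the $\mathtt{R}$-linearity of $\Psi_2$, on the parity-check identity $\mathcal{G}\diamond\hat{\mathcal{G}}^T=0$, and on the normal form of $\mathcal{H}\diamond\hat{\mathcal{G}}^T$ guaranteed by the LCP assumption via Remark \ref{Rk6.2}. The one point that I would state explicitly in the writeup is the implicit assumption on the detection model, namely that the randomly chosen mask $y$ is retained (or can be re-verified periodically) so that the comparison $\Omega_2\circ\Psi_2(z+\epsilon)\stackrel{?}{=}y$ is meaningful; without this the case $\epsilon_2'\neq 0$ would not constitute detection. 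With that convention in place, the proof occupies essentially one display of the computation of $\Omega_2\circ\Psi_2(z+\epsilon)$ followed by the two-line case split described above.
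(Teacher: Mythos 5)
Your proposal is correct and follows essentially the same route as the paper: decompose $\epsilon=\epsilon_1'\mathcal{G}+\epsilon_2'\mathcal{H}$ via $\mathbf{M}=\mathtt{C}\oplus\mathtt{D}$, detect $\epsilon_2'\neq 0$ by recomputing $\Omega_2\circ\Psi_2(z+\epsilon)=y+\epsilon_2'$ against the stored mask, and in the remaining case note that an undetected fault lies in $\mathtt{C}$ and hence vanishes when $\mathbf{w}_H(\epsilon)<d_H(\mathtt{C})$. Your explicit remark about the detection model (that $y$ is retained and periodically re-verified) is exactly the convention the paper adopts when it says the mask ``can be checked from time to time during the computations.''
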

In particular, if $\mathtt{C}$ and $\mathtt{D}$ form an LCP of separable $\mathtt{R}\check{\mathtt{R}}$-linear codes of block-length $(\mathpzc{a},\mathpzc{b}),$ then one can easily observe that $d_H(\mathtt{C})= min\{d_H(\mathtt{C}^{(X)}),d_H(\mathtt{C}^{(Y)})\}.$ From this and by Theorem \ref{Thm6.4}, we deduce the following:
\begin{cor}\label{Cor6.3}
Let $\mathtt{C}$ and $\mathtt{D}$ be an LCP of separable $\mathtt{R}\check{\mathtt{R}}$-linear codes of block-length $(\mathpzc{a},\mathpzc{b})$ used in the direct sum masking scheme against FIA. The injected fault $\epsilon$  is detected if $\mathbf{w}_H(\epsilon)<min\{d_H(\mathtt{C}^{(X)}),d_H(\mathtt{C}^{(Y)})\}.$   
\end{cor}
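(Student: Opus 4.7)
The plan is to reduce this corollary to Theorem \ref{Thm6.4} by establishing the identity $d_H(\mathtt{C})=\min\{d_H(\mathtt{C}^{(X)}),d_H(\mathtt{C}^{(Y)})\}$ for a separable $\mathtt{R}\check{\mathtt{R}}$-linear code $\mathtt{C}.$ Since the corollary only asserts the direct-sum-masking consequence of this identity, essentially all the content is in proving that identity, after which Theorem \ref{Thm6.4} yields the conclusion verbatim.

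First I would unpack the definition of separability. By assumption, $\mathtt{C}$ has a generator matrix of the block-diagonal form $\left[\begin{array}{c|c} \mathscr{W}_1 & 0 \\ 0 & \mathscr{W}_2 \end{array}\right],$ where $\mathscr{W}_1$ has entries in $\mathtt{R}$ and $\mathscr{W}_2$ has entries in $\check{\mathtt{R}}.$ Writing $\mathtt{C}^{(X)}$ (respectively $\mathtt{C}^{(Y)}$) for the linear code over $\mathtt{R}$ (respectively $\check{\mathtt{R}}$) generated by $\mathscr{W}_1$ (respectively $\mathscr{W}_2$), every codeword of $\mathtt{C}$ is of the form $(u|v)$ with $u\in\mathtt{C}^{(X)}$ and $v\in\mathtt{C}^{(Y)},$ and conversely every such $(u|v)$ lies in $\mathtt{C}.$ Thus $\mathtt{C}$ is the cartesian product $\mathtt{C}^{(X)}\times\mathtt{C}^{(Y)}$ as sets in $\mathbf{M}.$

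Next, I would use the definition $\textbf{w}_H(u|v)=w_H(u)+w_H(v)$ on $\mathbf{M}.$ For any non-zero $(u|v)\in\mathtt{C},$ either $u\neq0$ and $v=0,$ or $u=0$ and $v\neq0,$ or both are non-zero. In the first two cases $\textbf{w}_H(u|v)$ equals $w_H(u)\geq d_H(\mathtt{C}^{(X)})$ or $w_H(v)\geq d_H(\mathtt{C}^{(Y)}),$ respectively; in the third case the additive sum is strictly larger than both. Taking the minimum over all non-zero codewords, the minimum is attained by a codeword of the form $(u|0)$ with $w_H(u)=d_H(\mathtt{C}^{(X)})$ or $(0|v)$ with $w_H(v)=d_H(\mathtt{C}^{(Y)}),$ whichever is smaller. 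Hence $d_H(\mathtt{C})=\min\{d_H(\mathtt{C}^{(X)}),d_H(\mathtt{C}^{(Y)})\}.$

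Finally, substituting this equality into Theorem \ref{Thm6.4} yields the conclusion: whenever $\textbf{w}_H(\epsilon)<\min\{d_H(\mathtt{C}^{(X)}),d_H(\mathtt{C}^{(Y)})\}=d_H(\mathtt{C}),$ the injected fault $\epsilon$ is detected. No serious obstacle is anticipated; the only point requiring care is making sure the weight-decomposition $\textbf{w}_H(u|v)=w_H(u)+w_H(v)$ is applied consistently with the mixed-alphabet definition of Hamming weight given just before Theorem \ref{Thm6.4}, since $u$ and $v$ live in different ambient rings.
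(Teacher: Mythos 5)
Your proposal is correct and follows the same route as the paper: the paper simply remarks that for separable codes one can easily observe $d_H(\mathtt{C})=\min\{d_H(\mathtt{C}^{(X)}),d_H(\mathtt{C}^{(Y)})\}$ and then invokes Theorem \ref{Thm6.4}, which is exactly your reduction, with the product decomposition $\mathtt{C}=\mathtt{C}^{(X)}\times\mathtt{C}^{(Y)}$ spelled out in more detail than the paper bothers to give.
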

\vspace{2mm}
\noindent \textbf{Security against SCA}
\vspace{1mm}\\
We will now study the security of the direct sum masking scheme  constructed using the LCP of the codes $\mathtt{C}$ and $\mathtt{D}$ in protection against SCA. In this context, we focus exclusively on vertical attacks, where the attacker must gather a substantial number of traces to extract the sensitive data $x$. Note that both the state $z$ and the mask $d=y\mathcal{H}$ are not sensitive. To do this, let us first suppose that $z=(z_1|z_2)\in\mathbf{M}$ is of the form $z=(z_{1,1},z_{1,2},\ldots,z_{1,\mathpzc{a}}|z_{2,1},z_{2,2},\ldots,z_{2,\mathpzc{b}}),$ where $z_{1,i}\in\mathtt{R}$ for $1\leq i\leq \mathpzc{a}$ and $z_{2,j}\in\check{\mathtt{R}}$ for $1\leq j\leq \mathpzc{b}$. 

Now, for a positive integer $\delta,$ we say that there is a leakage of order $\delta$ in an SCA if there exists $T_1\subseteq\{1,2,\ldots,\mathpzc{a}\}$ and $T_2\subseteq\{1,2,\ldots,\mathpzc{b}\}$ such that $|T_1|+|T_2|=\delta$ and the coordinates $z_{1,i_1},z_{1,i_2},\ldots,z_{1,i_{|T_1|}},z_{2,j_1},z_{2,j_2},\ldots,z_{2,j_{|T_2|}}$ of the state $z$ are leaked. We further assume that $T_1$ and $T_2$ are ordered sets, given by $T_1=\{i_1,i_2,\ldots,i_{|T_1|}\}$  and $T_2=\{j_1,j_2,\ldots,j_{|T_2|}\},$ where  $i_1 < i_2 < \cdots < i_{|T_1|}$ and $j_1 < j_2 < \cdots < j_{|T_2|}.$ Furthermore,  for a vector $(\alpha_1|\alpha_2)=(\alpha_{1,1},\alpha_{1,2},\ldots,\alpha_{1,\mathpzc{a}}|\alpha_{2,1},\alpha_{2,2},\ldots,\alpha_{2,\mathpzc{b}})\in\mathbf{M},$ the vector $(\alpha_1^{(T_1)}|\alpha_2^{(T_2)})$ is defined as $(\alpha_1^{(T_1)}|\alpha_2^{(T_2)})=(\alpha_{1,i_1},\alpha_{1,i_2},\ldots,\alpha_{1,i_{|T_1|}}|\alpha_{2,j_1},\alpha_{2,j_2},\ldots,\alpha_{2,j_{|T_2|}}).$ For a matrix $[E~|~F]\in\mathfrak{M},$ let $[E^{(T_1)}~|~F^{(T_2)}]$ denote the matrix whose $i$-th row is $(\beta_1^{(T_1)}|\beta_2^{(T_2)})$ if $(\beta_1|\beta_2)$ is the $i$-th row of the matrix $[E~|~F].$ 
Our aim is to obtain an upper bound on $\delta$ such that any leakage of the order less than $\delta $ would give no advantage to the adversary in extracting or obtaining even a partial information about the sensitive data $x.$ To do this, we first prove the following lemma.
\begin{lemma}\label{Lem6.4}
Let $\mathtt{D}\subseteq\mathbf{M}$ be an $\mathtt{R}\check{\mathtt{R}}$-linear code with a generator matrix $\mathcal{H}$ as defined  by \eqref{Eq6.1}. Let $\mathtt{D}^{(X)}$ be a linear code of length $\mathpzc{a}$ over $\mathtt{R}$ with a generator matrix $\mathscr{E},$ and let $\mathtt{D}^{(Y)}$ be a linear code of length $\mathpzc{b}$ over $\check{\mathtt{R}}$ with a generator matrix $\mathscr{H}.$ For positive integers $t_1$ and $t_2$ satisfying $t_1\leq \mathpzc{a}-k_0$ and $t_2\leq \mathpzc{b}-\ell_0,$ we have the following:
\begin{enumerate}
\item[(a)] Any $(t_1-1)$ columns of the matrix $\mathscr{E}$ are independent over $\mathtt{R}$ and have period $\gamma^e$ if and only if $\nolinebreak{d_H(\mathtt{D}^{(X)\perp_0})\geq t_1}.$
\item[(b)] Any $(t_2-1)$ columns of the matrix $\mathscr{H}$ are independent over $\check{\mathtt{R}}$ and have period $\gamma^s$ if and only if $\nolinebreak{d_H(\mathtt{D}^{(Y)\perp_0})\geq t_2}.$
\end{enumerate}
\end{lemma}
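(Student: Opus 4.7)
The plan is to view $\mathscr{E}$ as a parity-check matrix of $\mathtt{D}^{(X)\perp_0}$ and adapt the classical correspondence between weight of dual codewords and $\mathtt{R}$-linear dependencies among columns, now in the chain-ring setting. Write $e_1, \ldots, e_\mathpzc{a}$ for the columns of $\mathscr{E}.$ The defining equation $\mathscr{E} c^T = 0$ for $c = (c_1,\ldots,c_\mathpzc{a}) \in \mathtt{R}^{\mathpzc{a}}$ is precisely $\sum_{i=1}^{\mathpzc{a}} c_i e_i = 0,$ so the nonzero codewords of $\mathtt{D}^{(X)\perp_0}$ correspond bijectively to the nontrivial $\mathtt{R}$-relations among the columns of $\mathscr{E},$ supported on the nonzero coordinates of the codeword. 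This is the key bridge between the column-geometric condition and the minimum distance.

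For part (a), in the forward direction I would assume that every choice of $(t_1-1)$ columns of $\mathscr{E}$ is independent over $\mathtt{R}$ and consists of vectors of period $\gamma^e,$ and then suppose for contradiction that there exists $0 \neq c \in \mathtt{D}^{(X)\perp_0}$ with $w_H(c) \leq t_1-1.$ Letting $T$ be its support, the columns $\{e_i\}_{i\in T}$ inherit the independence and period properties from any $(t_1-1)$-element superset, since subsets of an independent set are independent in the sense of the excerpt and period is a per-vector property. The relation $\sum_{i\in T} c_i e_i = 0$ then forces $c_i e_i = 0,$ and since $e_i$ has period $\gamma^e$ this yields $c_i \in \gamma^e\mathtt{R} = \{0\},$ contradicting $i \in T.$ For the reverse direction, given any $t_1-1$ columns $e_{i_1}, \ldots, e_{i_{t_1-1}}$ and a relation $\sum_j r_j e_{i_j} = 0,$ I form $c \in \mathtt{R}^{\mathpzc{a}}$ with $c_{i_j} = r_j$ and zeros elsewhere; then $c \in \mathtt{D}^{(X)\perp_0}$ has weight at most $t_1-1 < d_H(\mathtt{D}^{(X)\perp_0}),$ so $c = 0$ and hence $r_j = 0,$ which certainly gives $r_j e_{i_j} = 0.$ If some $e_{i_j}$ had period $\gamma^p$ with $p < e,$ placing $\gamma^p$ in coordinate $i_j$ (with $\gamma^p \neq 0$ in $\mathtt{R}$) and zeros elsewhere would produce a nonzero dual codeword of weight $1,$ contradicting $d_H \geq t_1 \geq 2$ (the case $t_1=1$ is vacuous). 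Part (b) is carried out verbatim with $\mathscr{H},$ $\mathtt{D}^{(Y)},$ and period $\gamma^s$ in $\check{\mathtt{R}}$ replacing their counterparts, using $\gamma^s \check{\mathtt{R}} = \{0\}$ at the crucial step.

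The only delicate point is reconciling the excerpt's module-theoretic notion of independence (requiring $r_i v_i = 0$ rather than $r_i = 0$) with ordinary linear independence. The maximal-period hypothesis closes this gap automatically: for a vector $v$ of period $\gamma^e$ in $\mathtt{R}$ (respectively $\gamma^s$ in $\check{\mathtt{R}}$), the equation $rv = 0$ forces $r \in \gamma^e \mathtt{R} = \{0\}$ (respectively $r \in \gamma^s \check{\mathtt{R}} = \{0\}$), so on such subsets the two notions agree. I do not foresee a genuine obstacle; the argument is a direct chain-ring analogue of the classical parity-check-columns characterization, and Lemma \ref{Lemma4.5} is available if any subset-extension argument needs to be made fully rigorous.
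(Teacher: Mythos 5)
Your proof is correct and follows essentially the same route as the paper, which simply cites Theorem 4.5.6 of Ling and Xing (the classical parity-check-column characterization of minimum distance) and leaves the chain-ring adaptation implicit. Your identification of $\mathscr{E}$ as a parity-check matrix of $\mathtt{D}^{(X)\perp_0}$, together with the observation that maximal period $\gamma^e$ (resp.\ $\gamma^s$) makes the module-theoretic independence collapse to the usual one, is exactly the intended argument.
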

\begin{proof}
Working as in Theorem 4.5.6 of \cite{Ling}, we get the desired result.
\end{proof}

Now, suppose that there is a leakage of order $\delta=|T_1|+|T_2|,$ where $T_1\subseteq\{1,2,\ldots,\mathpzc{a}\}$ and $T_2\subseteq\{1,2,\ldots,\mathpzc{b}\}$ are ordered sets satisfying $|T_1|<d_H(\mathtt{D}^{(X)\perp_0})$ and $|T_2|<d_H(\mathtt{D}^{(Y)\perp_0}).$ Here, we first observe that 
\vspace{-1mm}\begin{align}\label{eq6.9}
(z_1^{(T_1)}|z_2^{(T_2)})=x\left[\begin{array}{c|c}
    \mathscr{A}^{(T_1)}&\mathscr{B}^{(T_2)} \\ 
    \gamma^{e-s}\mathscr{C}^{(T_1)}& \mathscr{D}^{(T_2)} 
\end{array}\right]+y\left[\begin{array}{c|c}
\mathscr{E}^{(T_1)}&\mathscr{F}^{(T_2)} \\ 
    \gamma^{e-s}\mathscr{G}^{(T_1)}& \mathscr{H}^{(T_2)}
\end{array}\right].
\vspace{-1mm}\end{align}
Now since $|T_1|<d_H(\mathtt{D}^{(X)\perp_0}),$ we see, by Lemma \ref{Lem6.4}(a), that the columns of the matrix $\mathscr{E}^{(T_1)}$ are independent over $\mathtt{R}$ and have period $\gamma^e.$ Similarly, since $|T_2|<d_H(\mathtt{D}^{(Y)\perp_0}),$ we see, by Lemma \ref{Lem6.4} (b), that the columns of the matrix $\mathscr{H}^{(T_2)}$ are independent over $\check{\mathtt{R}}$ and have period $\gamma^s.$ From this, one can easily observe  that the matrix $\left[\begin{array}{c|c}
\mathscr{E}^{(T_1)}&\mathscr{F}^{(T_2)} \\ 
    \gamma^{e-s}\mathscr{G}^{(T_1)}& \mathscr{H}^{(T_2)}
\end{array}\right]$ generates $\mathtt{R}^{|T_1|}\oplus\check{\mathtt{R}}^{|T_2|}.$ From this and using the fact that the word $y\in\mathtt{R}^{\mathpzc{a}-k_0}\oplus\check{\mathtt{R}}^{\mathpzc{b}-\ell_0}$ is randomly chosen, it follows that even if there is a leakage of $\delta$ coordinates $(z_1^{(T_1)}|z_2^{(T_2)})$ of $(z_1|z_2),$   the word $(z_1^{(T_1)}|z_2^{(T_2)})-y\left[\begin{array}{c|c}
\mathscr{E}^{(T_1)}&\mathscr{F}^{(T_2)} \\ 
    \gamma^{e-s}\mathscr{G}^{(T_1)}& \mathscr{H}^{(T_2)}
\end{array}\right]$ is uniformly distributed in $\mathtt{R}^{|T_1|}\oplus\check{\mathtt{R}}^{|T_2|}.$ Now by  equation \eqref{eq6.9}, we see that the vector $(x_1^{(T_1)}|x_2^{(T_2)})=x\left[\begin{array}{c|c}
    \mathscr{A}^{(T_1)}&\mathscr{B}^{(T_2)} \\ 
    \gamma^{e-s}\mathscr{C}^{(T_1)}& \mathscr{D}^{(T_2)} 
\end{array}\right]$ is uniformly distributed in $\mathtt{R}^{|T_1|}\oplus\check{\mathtt{R}}^{|T_2|}$ even if there is a leakage of $\delta$ coordinates $(z_1^{(T_1)}|z_2^{(T_2)})$ of $(z_1|z_2).$ This shows that this direct sum masking scheme protects against SCA with any leakage of order $\delta=|T_1|+|T_2|,$ where $T_1\subseteq\{1,2,\ldots,\mathpzc{a}\}$ and $T_2\subseteq\{1,2,\ldots,\mathpzc{b}\}$ satisfy $|T_1|<d_H(\mathtt{D}^{(X)\perp_0})$ and $|T_2|<d_H(\mathtt{D}^{(Y)\perp_0}).$ From this, we deduce the following:
\begin{thm}\label{Thm6.5}
Let $\mathtt{C}$ and $\mathtt{D}$ form an LCP of $\mathtt{R}\check{\mathtt{R}}$-linear codes of block-length $(\mathpzc{a},\mathpzc{b})$ used in the direct sum masking scheme against SCA. The direct sum masking scheme protects against SCA with any leakage of order $\delta=|T_1|+|T_2|,$ where $T_1\subseteq\{1,2,\ldots,\mathpzc{a}\}$ and $T_2\subseteq\{1,2,\ldots,\mathpzc{b}\}$ satisfy $|T_1|<d_H(\mathtt{D}^{(X)\perp_0})$ and $|T_2|<d_H(\mathtt{D}^{(Y)\perp_0}).$   
\end{thm}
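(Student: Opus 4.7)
The plan is to reduce the statement to a concrete linear-algebra fact about the matrix that encodes the mask into the leaked coordinates, and then exploit the uniform randomness of the mask. Writing $z = x \mathcal{G} + y \mathcal{H}$ with $y \in \mathtt{R}^{\mathpzc{a}-k_0} \oplus \check{\mathtt{R}}^{\mathpzc{b}-\ell_0}$ drawn uniformly at random, and restricting to coordinate positions in $T_1$ and $T_2$, I would obtain
$$(z_1^{(T_1)}|z_2^{(T_2)}) = x \left[\begin{array}{c|c} \mathscr{A}^{(T_1)} & \mathscr{B}^{(T_2)} \\ \gamma^{e-s}\mathscr{C}^{(T_1)} & \mathscr{D}^{(T_2)} \end{array}\right] + y \, N,$$
where $N = \left[\begin{array}{c|c} \mathscr{E}^{(T_1)} & \mathscr{F}^{(T_2)} \\ \gamma^{e-s}\mathscr{G}^{(T_1)} & \mathscr{H}^{(T_2)} \end{array}\right]$. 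The goal is to show that $y \cdot N$ is uniformly distributed on $\mathtt{R}^{|T_1|} \oplus \check{\mathtt{R}}^{|T_2|}$, because then $(z_1^{(T_1)}|z_2^{(T_2)})$ is the sum of the (unknown) sensitive contribution and a uniform random vector, and therefore carries no information about $x$.

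To establish the uniformity of $y \cdot N$, I would prove that the $\mathtt{R}$-linear map $y \mapsto yN$ is surjective onto $\mathtt{R}^{|T_1|} \oplus \check{\mathtt{R}}^{|T_2|}$. This is where Lemma \ref{Lem6.4} enters. Since $|T_1| < d_H(\mathtt{D}^{(X)\perp_0})$, part (a) of the lemma (applied with $t_1 = |T_1|+1$) gives that the columns of $\mathscr{E}^{(T_1)}$ are independent over $\mathtt{R}$ with period $\gamma^e$, so $\mathscr{E}^{(T_1)}$ surjects onto $\mathtt{R}^{|T_1|}$. Similarly, $|T_2| < d_H(\mathtt{D}^{(Y)\perp_0})$ together with part (b) of the lemma implies that $\mathscr{H}^{(T_2)}$ surjects onto $\check{\mathtt{R}}^{|T_2|}$. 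From the block structure of $N$ (where $\mathscr{E}^{(T_1)}$ sits opposite a zero block if we drop the $\mathscr{F}^{(T_2)}$ contribution, and $\mathscr{H}^{(T_2)}$ sits opposite a block in $\gamma^{e-s}\mathtt{R}$), I would argue surjectivity of the full map by choosing preimages block-by-block: first hit any target in $\check{\mathtt{R}}^{|T_2|}$ using the second block of rows via $\mathscr{H}^{(T_2)}$, then absorb the residual effect on the first block using the first block of rows via $\mathscr{E}^{(T_1)}$.

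The main obstacle I anticipate is the mixed-alphabet bookkeeping: the first block of rows of $N$ has period $\gamma^e$ while the second has period $\gamma^s$, and the $\gamma^{e-s}$ factor appearing in the lower-left block must be handled carefully so that the sweep described above terminates in one step without destroying the progress made in the second block. Once this surjectivity is in place, the conclusion is immediate: uniform distribution of $y$ on $\mathtt{R}^{\mathpzc{a}-k_0} \oplus \check{\mathtt{R}}^{\mathpzc{b}-\ell_0}$ pushes forward to a uniform distribution of $yN$ on $\mathtt{R}^{|T_1|} \oplus \check{\mathtt{R}}^{|T_2|}$, so $(z_1^{(T_1)}|z_2^{(T_2)})$ is statistically independent of $x$, proving that the leakage of order $\delta = |T_1| + |T_2|$ reveals nothing about the sensitive data.
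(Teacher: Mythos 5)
Your proposal follows essentially the same route as the paper: restrict the state to the leaked coordinates, invoke Lemma \ref{Lem6.4} (with $t_1=|T_1|+1$ and $t_2=|T_2|+1$) to get that the columns of $\mathscr{E}^{(T_1)}$ and $\mathscr{H}^{(T_2)}$ are independent of full period, conclude that the restricted generator matrix of $\mathtt{D}$ generates $\mathtt{R}^{|T_1|}\oplus\check{\mathtt{R}}^{|T_2|}$, and push the uniform distribution of $y$ forward to the mask's contribution so that the leaked coordinates are statistically independent of $x$. The only place you go beyond the paper (which merely asserts the surjectivity) is the block sweep, and there your one-pass order does not quite close when $e-s<s$: correcting the first block with some $\delta_1\in\gamma^{e-s}\mathtt{R}^{\mathpzc{a}-k_0}$ re-perturbs the second block by $\delta_1\mathscr{F}^{(T_2)}\in\gamma^{e-s}\check{\mathtt{R}}^{|T_2|},$ which need not vanish, so one should either iterate (each correction round deepens the residual error by a factor of $\gamma^{e-s},$ hence the process terminates) or, more cleanly, reduce modulo $\gamma$ and use the block-triangular shape together with Nakayama's lemma.
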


In the following theorem, we obtain the security threshold of this direct sum masking scheme against FIA and SCA.  
\begin{thm}\label{Thm6.04}
An LCP of $\mathtt{R}\check{\mathtt{R}}$-linear codes $\mathtt{C}$ and $\mathtt{D}$ enables a direct sum masking scheme, which offers simultaneous protection against both FIA and SCA with a security threshold, given by \vspace{-1mm}$$\vspace{-1mm}min\{d_H(\mathtt{C}),d_H(\mathtt{D}^{(X)\perp_0}),d_H(\mathtt{D}^{(Y)\perp_0})\}.$$
\end{thm}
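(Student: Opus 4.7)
The plan is to obtain Theorem \ref{Thm6.04} as an immediate synthesis of Theorems \ref{Thm6.4} and \ref{Thm6.5}, which have just been established and which separately handle the FIA and SCA sides of the scheme. The task is simply to combine the two individual thresholds into a single simultaneous one.

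First I would recall from Theorem \ref{Thm6.4} that whenever the injected fault $\epsilon \in \mathbf{M}$ satisfies $\mathbf{w}_H(\epsilon) < d_H(\mathtt{C})$, the fault is detected by checking $\Omega_2 \circ \Psi_2(z+\epsilon)$ against $y$ and using the fact that the only undetectable faults lie in $\mathtt{C}$ itself. Hence the FIA protection threshold equals $d_H(\mathtt{C})$.

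Next I would invoke Theorem \ref{Thm6.5}: the scheme resists SCA as long as the leakage indices $T_1 \subseteq \{1,\ldots,\mathpzc{a}\}$ and $T_2 \subseteq \{1,\ldots,\mathpzc{b}\}$ satisfy $|T_1| < d_H(\mathtt{D}^{(X)\perp_0})$ and $|T_2| < d_H(\mathtt{D}^{(Y)\perp_0})$. To convert this into a single threshold on the total leakage order $\delta = |T_1|+|T_2|$, I would observe that for an arbitrary partition of $\delta$ into $|T_1|$ and $|T_2|$, the attacker can in the worst case concentrate the entire leakage in one block, i.e.\ $(|T_1|,|T_2|)=(\delta,0)$ or $(0,\delta)$. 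Thus both conditions of Theorem \ref{Thm6.5} hold for every admissible partition if and only if $\delta < \min\{d_H(\mathtt{D}^{(X)\perp_0}), d_H(\mathtt{D}^{(Y)\perp_0})\}$, which then is the SCA threshold.

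Finally, I would combine the two: since the scheme must resist FIA and SCA simultaneously, the security threshold is the smaller of the two bounds, namely
\[
\min\{d_H(\mathtt{C}),\, d_H(\mathtt{D}^{(X)\perp_0}),\, d_H(\mathtt{D}^{(Y)\perp_0})\}.
\]
There is essentially no obstacle in this proof; the only subtlety worth stating carefully is the worst-case partition argument used to pass from the two independent block-wise bounds in Theorem \ref{Thm6.5} to the single scalar threshold $\min\{d_H(\mathtt{D}^{(X)\perp_0}), d_H(\mathtt{D}^{(Y)\perp_0})\}$ on the total leakage order.
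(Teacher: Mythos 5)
Your proposal is correct and follows exactly the paper's route: Theorem \ref{Thm6.04} is obtained directly by combining Theorem \ref{Thm6.4} (FIA threshold $d_H(\mathtt{C})$) with Theorem \ref{Thm6.5} (SCA protection under the block-wise leakage bounds), and taking the minimum. The worst-case partition argument you spell out for converting the two block-wise bounds into the single scalar threshold $\min\{d_H(\mathtt{D}^{(X)\perp_0}), d_H(\mathtt{D}^{(Y)\perp_0})\}$ is a reasonable and correct elaboration of a step the paper leaves implicit.
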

\begin{proof}It follows immediately from Theorems \ref{Thm6.4} and \ref{Thm6.5}.
\end{proof}
In particular, if $\mathtt{C}$ and $\mathtt{D}$ form an LCP of separable $\mathtt{R}\check{\mathtt{R}}$-linear codes of block-length $(\mathpzc{a},\mathpzc{b}),$ then we have the following:
\begin{cor}\label{Cor6.4}
An LCP of separable $\mathtt{R}\check{\mathtt{R}}$-linear codes $\mathtt{C}$ and $\mathtt{D}$ enables a direct sum masking scheme, which offers simultaneous protection against both FIA and SCA with a security threshold, given by \vspace{-1mm}$$\vspace{-1mm}min\{d_H(\mathtt{C}^{(X)}),d_H(\mathtt{C}^{(Y)}),d_H(\mathtt{D}^{(X)\perp_0}),d_H(\mathtt{D}^{(Y)\perp_0})\}.$$
\end{cor}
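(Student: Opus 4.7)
The plan is to derive this corollary directly from Theorem \ref{Thm6.04}, combined with the structural fact already invoked to derive Corollary \ref{Cor6.3}. Recall that Theorem \ref{Thm6.04} asserts that an LCP $(\mathtt{C},\mathtt{D})$ of $\mathtt{R}\check{\mathtt{R}}$-linear codes enables a direct sum masking scheme with security threshold $\min\{d_H(\mathtt{C}), d_H(\mathtt{D}^{(X)\perp_0}), d_H(\mathtt{D}^{(Y)\perp_0})\}$. Therefore, in the separable setting it suffices to rewrite $d_H(\mathtt{C})$ in terms of the Hamming distances of the two component codes $\mathtt{C}^{(X)}$ and $\mathtt{C}^{(Y)}$.

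First, I would recall that a separable $\mathtt{R}\check{\mathtt{R}}$-linear code $\mathtt{C}$ of block-length $(\mathpzc{a},\mathpzc{b})$ has, by definition, a generator matrix of the form $\left[\begin{array}{c|c}\mathscr{W}_1 & 0 \\ 0 & \mathscr{W}_2\end{array}\right]$, where $\mathscr{W}_1$ generates $\mathtt{C}^{(X)}$ and $\mathscr{W}_2$ generates $\mathtt{C}^{(Y)}$. Consequently, every codeword of $\mathtt{C}$ is of the form $(u|v)$ with $u\in\mathtt{C}^{(X)}$ and $v\in\mathtt{C}^{(Y)}$ chosen independently. The Hamming weight on $\mathbf{M}$ splits additively as $\mathbf{w}_H(u|v)=w_H(u)+w_H(v)$, so a minimum non-zero weight codeword of $\mathtt{C}$ is obtained by taking either $u=0$ and $v$ a minimum weight non-zero codeword of $\mathtt{C}^{(Y)}$, or $v=0$ and $u$ a minimum weight non-zero codeword of $\mathtt{C}^{(X)}$. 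This immediately yields
\begin{equation*}
d_H(\mathtt{C}) \;=\; \min\{d_H(\mathtt{C}^{(X)}),\, d_H(\mathtt{C}^{(Y)})\},
\end{equation*}
which is precisely the observation used to prove Corollary \ref{Cor6.3}.

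Substituting this identity into the expression $\min\{d_H(\mathtt{C}), d_H(\mathtt{D}^{(X)\perp_0}), d_H(\mathtt{D}^{(Y)\perp_0})\}$ supplied by Theorem \ref{Thm6.04} produces the four-term minimum stated in the corollary, completing the proof. There is essentially no hard step here; the result is a direct specialization of Theorem \ref{Thm6.04} to the separable case, and the only minor verification required is the weight splitting for separable codes, which is routine.
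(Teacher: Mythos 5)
Your proof is correct and is essentially the same argument as the paper's: the paper deduces the corollary by combining Corollary \ref{Cor6.3} with Theorem \ref{Thm6.5}, whereas you combine Theorem \ref{Thm6.04} with the identity $d_H(\mathtt{C})=\min\{d_H(\mathtt{C}^{(X)}),d_H(\mathtt{C}^{(Y)})\}$ for separable codes, but since Theorem \ref{Thm6.04} is itself Theorem \ref{Thm6.4} plus Theorem \ref{Thm6.5}, and Corollary \ref{Cor6.3} is Theorem \ref{Thm6.4} plus that same weight-splitting observation (which the paper states verbatim just before Corollary \ref{Cor6.3}), the two derivations differ only in how the intermediate results are grouped.
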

\begin{proof}It follows immediately from Corollary \ref{Cor6.3} and Theorem \ref{Thm6.5}.
\end{proof}

Working as in Ngo \etal~\cite{Ngo}, one can easily see that the security threshold against FIA and SCA of the direct sum making scheme constructed using an LCP $(\mathcal{C}, \mathcal{D})$ of linear codes over finite commutative chain rings  is $\min\{d_H(\mathcal{C}), d_H(\mathcal{D}^{\perp_0})\}.$  Now,  by applying Corollary \ref{Cor6.4},  we observe that using an LCP of separable $\mathtt{R}\check{\mathtt{R}}$-linear codes $\mathtt{C}$ and $\mathtt{D}$  to construct a direct sum masking scheme for protecting the sensitive data $x=(x_1|x_2)$ against FIA and SCA is equivalent to using two different LCPs $\left({\mathtt{C}^{(X)}}, {\mathtt{D}^{(X)}}\right)$ and $\left({\mathtt{C}^{(Y)}}, {\mathtt{D}^{(Y)}}\right)$ of linear codes over $\mathtt{R}$ and $\check{\mathtt{R}}$ to protect the individual parts $x_1$ and $x_2$ of the sensitive data $x$ against FIA and SCA, respectively. In fact, both the schemes have the same security threshold against FIA and SCA. 
% that the former scheme using LCPs of separable $\mathtt{R}\check{\mathtt{R}}$-linear codes does not seem to perform better in terms of the security threshold against FIA and SCA as compared to the latter scheme that uses  two different LCP of codes over chain rings.

For an LCP of $\mathtt{R}\check{\mathtt{R}}$-linear codes  $\mathtt{C}$ and $\mathtt{D}$ with generator matrices $\mathcal{G}$ and $\mathcal{H}$ (as defined by \eqref{Eq6.1}) respectively, there exists an LCP of separable $\mathtt{R}\check{\mathtt{R}}$-linear codes  $\mathtt{C}_{sep}$ and $\mathtt{D}_{sep}$ with generator matrices of the forms
\vspace{-1mm}\begin{align}\label{Eq6.1sep}
\mathcal{G}_{sep}=\left[\begin{array}{c|c}
    \mathscr{A}&0 \\ 
    0 & \mathscr{D} 
\end{array}\right]\text{ ~ and ~ }\mathcal{H}_{sep}=\left[\begin{array}{c|c}
    \mathscr{E}&0\\ 
     0 & \mathscr{H} 
\end{array}\right],
\end{align} respectively. On the other hand, since the ring $\check{\mathtt{R}}$ can be embedded into $\mathtt{R}, $ one can view the matrices $\mathcal{G}$ and $\mathcal{H}$ as matrices over $\mathtt{R}.$ Let $\mathtt{C}_{emb}$ and $\mathtt{D}_{emb}$ be the linear codes of length $\mathpzc{n}=\mathpzc{a}+\mathpzc{b}$ over $\mathtt{R}$ with generator matrices $\mathcal{G}$ and $\mathcal{H},$ respectively, where the matrices $\mathcal{G}$ and $\mathcal{H}$ are viewed over $\mathtt{R}.$ 
We now provide an example to illustrate that the direct sum masking scheme constructed using the LCP $(\mathtt{C}, \mathtt{D})$ of non-separable $\mathtt{R}\check{\mathtt{R}}$-linear codes  outperforms  its counterparts constructed using the corresponding LCP $(\mathtt{C}_{sep}, \mathtt{D}_{sep})$ of separable $\mathtt{R}\check{\mathtt{R}}$-linear codes  and the corresponding LCP $(\mathtt{C}_{emb}, \mathtt{D}_{emb})$ of linear codes over $\mathtt{R}$ in terms of the security threshold against FIA and SCA.
\begin{example}\label{Eg.6.1}
Let $\mathtt{C}'$ and $\mathtt{D}'$ be $\mathbb{Z}_4\mathbb{Z}_2$-linear codes of block-length $(6,5)$ with generator matrices
$$\vspace{-2mm}\mathcal{G}'=\left[\begin{array}{c|c}
    \mathscr{A}' & \mathscr{B}' \\ 
    2\mathscr{C}' & \mathscr{D}'
\end{array}\right]\text{ ~ and ~ } \mathcal{H}'=\left[\begin{array}{c|c}
    \mathscr{E}' & \mathscr{F}'\\ 
    2\mathscr{G}' & \mathscr{H}'
\end{array}\right],$$
respectively, where $~\mathscr{A}'=[1~1~1~1~1~0], ~~ \mathscr{B}'=[0 ~ 1 ~ 1 ~ 0 ~ 1],~~ \mathscr{C}'=[0 ~ 0 ~ 1 ~ 1 ~ 1 ~ 1],~~ \mathscr{D}'=[1 ~ 1 ~ 1 ~ 0 ~ 0],$ \vspace{2mm}\\$ \mathscr{E}'=\left[\begin{array}{cccccc}
1 & 0 & 0 & 0 & 0 & 3\\
0 & 1 & 0 & 0 & 0 & 3\\
0 & 0 & 1 & 0 & 0 & 3\\
0 & 0 & 0 & 1 & 0 & 3\\
0 & 0 & 0 & 0 & 1 & 3
\end{array}\right],~~ \mathscr{F}'=\left[\begin{array}{ccccc}
 1 & 0 & 0 & 0 & 0\\
1 & 0 & 0 & 0 & 0\\
1 & 0 & 0 & 0 & 0\\
1 & 0 & 0 & 0 & 0\\
1 & 0 & 0 & 0 & 0
\end{array}\right],~~ \mathscr{G}'=\left[\begin{array}{cccccc}
 0 & 0 & 0 & 0 & 0 & 1\\
 0 & 0 & 0 & 0 & 0 & 1\\
 0 & 0 & 0 & 0 & 0 & 1\\
 0 & 0 & 0 & 0 & 0 & 1
\end{array}\right]$ ~ and \\$ \mathscr{H}'=\left[\begin{array}{ccccc}
 1 & 1 & 0 & 0 & 0\\
 0 & 1 & 1 & 0 & 0\\
 0 & 0 & 1 & 1 & 0\\
 0 & 0 & 0 & 1 & 1
\end{array}\right].$ Here, $\mathtt{C}'^{(X)}$ and $\mathtt{D}'^{(X)}$ are linear codes of length $6$ over $\mathbb{Z}_4$ with generator matrices $\mathscr{A}'$ and $\mathscr{E}',$ 
respectively. Similarly, $\mathtt{C}'^{(Y)}$ and $\mathtt{D}'^{(Y)}$ are linear codes of length $5$ over $\mathbb{Z}_2$ with generator matrices $\mathscr{D}'$ and $\mathscr{H}',$ respectively. By Theorem \ref{Thm6.3}, the codes $\mathtt{C}'$ and $\mathtt{D}'$ form an LCP of $\mathbb{Z}_4\mathbb{Z}_2$-linear codes. Note that both the codes $\mathtt{C}'$ and $\mathtt{D}'$ are not separable.

By direct computations, we observe that the Euclidean dual codes $\mathtt{D}'^{(X)\perp_0}$ and $\mathtt{D}'^{(Y)\perp_0}$ have generator matrices $[1~1~1~1~1~1]$ and $[1~1~1~1~1],$ respectively. Further, we see, using Magma, that $d_H(\mathtt{C}')=5,$ $ d_H(\mathtt{D}'^{(X)\perp_0})=6$ and $ d_H(\mathtt{D}'^{(Y)\perp_0})=5.$ Hence, by Theorem \ref{Thm6.04}, the direct sum masking scheme constructed   using the LCP $(\mathtt{C}',\mathtt{D}')$ has security threshold $min\{d_H(\mathtt{C}'),d_H(\mathtt{D}'^{(X)\perp_0}),d_H(\mathtt{D}'^{(Y)\perp_0})\}=5.$ 

Further, corresponding to the LCP $(\mathtt{C}', \mathtt{D}'),$ let $(\mathtt{C}'_{sep}, \mathtt{D}'_{sep})$ be the LCP of separable $\mathbb{Z}_4\mathbb{Z}_2$-linear codes  $\mathtt{C}'_{sep}$ and $ \mathtt{D}'_{sep}$ with generator matrices
\vspace{-3mm}$$\vspace{-2mm}\mathcal{G}'_{sep}=\left[\begin{array}{c|c}
   \mathscr{A}'&0\\
   0&\mathscr{D}'
\end{array}\right]\text{ ~~and ~~}\mathcal{H}'_{sep}=\left[\begin{array}{c|c}
    \mathscr{E}' & 0 \\
    0&\mathscr{H}'
\end{array}\right],$$ respectively.
Here, we observe that $d_H(\mathtt{C}'^{(X)})=5$ and $d_H(\mathtt{C}'^{(Y)})=3.$ This,  by Corollary \ref{Cor6.4},  implies that the direct sum masking scheme constructed using the LCP $(\mathtt{C}'_{sep}, \mathtt{D}'_{sep})$ of separable codes has security threshold $min\{d_H(\mathtt{C}'^{(X)}),d_H(\mathtt{C}'^{(Y)}),d_H(\mathtt{D}'^{(X)\perp_0}),d_H(\mathtt{D}'^{(Y)\perp_0})\}=3.$

On the other hand, the ring $\mathbb{Z}_2$ can be embedded into $\mathbb{Z}_4,$ so one can also view the matrices $\mathcal{G}'$ and $\mathcal{H}'$ as matrices over $\mathbb{Z}_4.$  Let $\mathtt{C}_{emb}'$ and $\mathtt{D}_{emb}'$ be the linear codes of length $11$ over $\mathbb{Z}_4$ with generator matrices $\mathcal{G}'$ and $\mathcal{H}',$ respectively, where the matrices $\mathcal{G}'$ and $\mathcal{H}'$ are viewed over $\mathbb{Z}_4.$  Using  Magma, we see that the codes $\mathtt{C}_{emb}'$ and $\mathtt{D}_{emb}'$ form an LCP of linear codes over $\mathbb{Z}_4$ with $d_H(\mathtt{C}_{emb}')=3$ and  $d_H(\mathtt{D}_{emb}'^{\perp_0})=6.$ Working as in Ngo \etal~\cite{Ngo}, the direct sum making scheme constructed using the LCP $(\mathtt{C}_{emb}', \mathtt{D}_{emb}')$  has security threshold $min\{d_H(\mathtt{C}_{emb}'), d_H(\mathtt{D}_{emb}'^{\perp_0})\}=3.$ 

Thus the direct sum masking scheme constructed using the LCP $(\mathtt{C}',\mathtt{D}')$ of non-separable $\mathbb{Z}_4\mathbb{Z}_2$-linear codes of block-length $(6,5)$ outperforms the direct sum masking schemes constructed using the LCP $(\mathtt{C}'_{sep}, \mathtt{D}'_{sep})$ of separable $\mathbb{Z}_4\mathbb{Z}_2$-linear codes and  the LCP $(\mathtt{C}_{emb}', \mathtt{D}_{emb}')$ of linear codes over $\mathbb{Z}_4$ in terms of security threshold against FIA and SCA.
\end{example}

In the next section, we will discuss another application of  LCPs of $\mathtt{R}\check{\mathtt{R}}$-linear codes  in coding for the noiseless two-user adder channel.

\subsection{An application of LCPs of $\mathtt{R}\check{\mathtt{R}}$-linear codes in coding for the noiseless two-user adder channel}\label{tuser}
In a noiseless two-user adder channel, the two users of this channel send their codewords simultaneously through the channel, and the  receiver receives only the sum of the transmitted codewords. The problem for the receiver is to recover the individual codewords transmitted from the received sum. Massey \cite{Massey} showed that binary LCD codes provide an optimum linear coding solution for the noiseless two-user binary adder channel. Recently, Liu and Hu \cite{Liu2024} also provided a solution to this problem using the LCPs of $\mathbb{Z}_4\mathbb{Z}_2$-linear codes. In this section, we will also provide a coding solution to this problem using LCPs of $\mathtt{R}\check{\mathtt{R}}$-linear codes. For this, we assume, throughout this section, that the codes  $\mathtt{C}$ and $\mathtt{D}$ form an LCP of $\mathtt{R}\check{\mathtt{R}}$-linear codes of block-length $(\mathpzc{a},\mathpzc{b}),$ \textit{i.e.,} we have $\mathtt{C}\oplus\mathtt{D}=\mathbf{M}.$ Additionally, suppose that the codes $\mathtt{C}$ and $\mathtt{D}$ have generator matrices $\mathcal{G}$ and $\mathcal{H}$ (as defined by \eqref{Eq6.1}) and parity-check matrices $\hat{\mathcal{G}}$ and $\hat{\mathcal{H}}$ (as defined by \eqref{Eq6.2}), respectively.

To obtain a projection map from $\mathbf{M}$ onto $\mathtt{C}$ with respect to $\mathtt{D},$  let us first define a map $\Omega_0:\mathtt{R}^{k_0}\oplus\check{\mathtt{R}}^{\ell_0}\rightarrow\mathtt{C}$  as 
\vspace{-2mm}\begin{equation}\label{EQ0}
\Omega_0(u)=u\mathcal{G}~\text{~for all~}~u\in\mathtt{R}^{k_0}\oplus\check{\mathtt{R}}^{\ell_0}.
\vspace{-2mm}\end{equation}
 Then the following lemma provides a projection map from $\mathbf{M}$ onto $\mathtt{C}$ with respect to $\mathtt{D}.$

\begin{lemma}\label{Lem6.5}
The map $\Omega_0\circ\Omega_1\circ\Psi_1$ is a projection map from $\mathbf{M}$ onto $\mathtt{C}$ with respect to $\mathtt{D},$ where the maps $\Omega_1:\mathtt{R}^{k_0}\oplus\gamma^{e-s}\mathtt{R}^{\ell_0}\rightarrow\mathtt{R}^{k_0}\oplus\check{\mathtt{R}}^{\ell_0}$ and $\Psi_1:\mathbf{M}\rightarrow\mathtt{R}^{k_0}\oplus\gamma^{e-s}\mathtt{R}^{\ell_0}$ are as defined by \eqref{Eq6.5} and \eqref{Psi1}, respectively.
\end{lemma}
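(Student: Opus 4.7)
The plan is to reduce the claim to the explicit direct sum decomposition $\mathbf{M} = \mathtt{C}\oplus\mathtt{D}$ and then invoke the identity $\Psi_1(z) = (x_1|\gamma^{e-s}x_2)$ that was already computed earlier in this section. Given any $m\in\mathbf{M}$, the decomposition $\mathtt{C}\oplus\mathtt{D}=\mathbf{M}$ produces unique words $x=(x_1|x_2)\in\mathtt{R}^{k_0}\oplus\check{\mathtt{R}}^{\ell_0}$ and $y=(y_1|y_2)\in\mathtt{R}^{\mathpzc{a}-k_0}\oplus\check{\mathtt{R}}^{\mathpzc{b}-\ell_0}$ such that $m=x\mathcal{G}+y\mathcal{H}$, where $x\mathcal{G}$ lies in $\mathtt{C}$ and $y\mathcal{H}$ lies in $\mathtt{D}$. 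A projection from $\mathbf{M}$ onto $\mathtt{C}$ with respect to $\mathtt{D}$ is, by definition, the map $m\mapsto x\mathcal{G}$, so it suffices to verify that $\Omega_0\circ\Omega_1\circ\Psi_1(m)=x\mathcal{G}$.

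First I would evaluate $\Psi_1(m)=(m\diamond\hat{\mathcal{H}}^T)P_1$ on the decomposition $m=x\mathcal{G}+y\mathcal{H}$. Using $\mathcal{H}\diamond\hat{\mathcal{H}}^T=0$ (from \eqref{PCM}) together with the identity $(\mathcal{G}\diamond\hat{\mathcal{H}}^T)P_1=\left[\begin{array}{c|c} I_{k_0} & 0 \\ 0 & \gamma^{e-s}I_{\ell_0}\end{array}\right]$ guaranteed by Remark \ref{Rk6.2} (since $(\mathtt{C},\mathtt{D})$ is an LCP, Proposition \ref{Thm6.1} and Proposition \ref{Thm6.2} ensure the existence of $P_1$), this reduces to
\[
\Psi_1(m)=x(\mathcal{G}\diamond\hat{\mathcal{H}}^T)P_1=(x_1 \mid \gamma^{e-s}x_2),
\]
exactly mirroring the calculation done in the direct sum masking scheme.

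Next I would apply $\Omega_1$ as defined in \eqref{Eq6.5}; by construction it strips off the factor $\gamma^{e-s}$ from the second block, producing $\Omega_1\circ\Psi_1(m)=(x_1|x_2)=x$. Finally, the definition \eqref{EQ0} of $\Omega_0$ yields $\Omega_0(x)=x\mathcal{G}$, which is precisely the $\mathtt{C}$-component of $m$ in the decomposition $\mathtt{C}\oplus\mathtt{D}$. This demonstrates two things simultaneously: $(\Omega_0\circ\Omega_1\circ\Psi_1)(m)\in\mathtt{C}$ for every $m\in\mathbf{M}$, and the map acts as the identity on $\mathtt{C}$ (when $y=0$) and as the zero map on $\mathtt{D}$ (when $x=0$), i.e., the map is idempotent with kernel $\mathtt{D}$ and image $\mathtt{C}$.

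There is no real obstacle, since all the machinery is in place; the only care needed is to confirm that $\Omega_1$ is well-defined on the image of $\Psi_1$, which follows because the second block of $\Psi_1(m)$ always lies in $\gamma^{e-s}\mathtt{R}^{\ell_0}$ by the shape of $(\mathcal{G}\diamond\hat{\mathcal{H}}^T)P_1$, and $\Omega_1$ is defined precisely on $\mathtt{R}^{k_0}\oplus\gamma^{e-s}\mathtt{R}^{\ell_0}$.
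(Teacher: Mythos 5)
Your proposal is correct and follows essentially the same route as the paper's proof: decompose $m=x\mathcal{G}+y\mathcal{H}$ via $\mathtt{C}\oplus\mathtt{D}=\mathbf{M}$, kill the $\mathtt{D}$-part with $\mathcal{H}\diamond\hat{\mathcal{H}}^T=0$, and unwind $\Psi_1$, $\Omega_1$, $\Omega_0$ to recover $x\mathcal{G}=c$. The only difference is that you write out explicitly the computation $\Psi_1(m)=(x_1|\gamma^{e-s}x_2)$ that the paper merely cites from the earlier direct sum masking discussion.
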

\begin{proof} To prove the result, let $z\in\mathbf{M}$ be fixed. As $\mathbf{M}=\mathtt{C}\oplus \mathtt{D},$ there exist  unique $c\in\mathtt{C}$ and $d\in\mathtt{D}$ such that $z=c+d.$ Further, as $\mathcal{G}$ and $\mathcal{H}$ are generator matrices of $\mathtt{C}$ and $\mathtt{D},$ respectively, there exist $x\in\mathtt{R}^{k_0}\oplus\check{\mathtt{R}}^{\ell_0}$ and $y\in\mathtt{R}^{\mathpzc{a}-k_0}\oplus\check{\mathtt{R}}^{\mathpzc{b}-\ell_0}$ such that $c=x\mathcal{G}$ and $d=y\mathcal{H}.$ From this, we get $z=x\mathcal{G}+y\mathcal{H}.$ Now using \eqref{PCM}, \eqref{Eq6.5}, \eqref{Psi1} and \eqref{EQ0}, we observe that $\Omega_0\circ\Omega_1\circ\Psi_1(z)=c,$ which proves the lemma.
\end{proof}

% Next, let $\mathtt{E}_1$ and $\mathtt{E}_2$ be two $\mathtt{R}\check{\mathtt{R}}$-linear codes of block-length $(\mathpzc{a},\mathpzc{b}).$ In a two-user adder channel, the two users using this channel transmit two codewords $c$ and $d$ simultaneously through the noiseless channel, where $c\in\mathtt{E}_1$ and $d\in\mathtt{E}_2.$ At the receiver's end, the received vector is the sum of the two codewords, \textit{i.e.,} $z=c+d.$ In this noiseless channel, the problem for the receiver is to recover the original codewords $c$ and $d$ from the received sum $z$.

 Now, suppose that the two users of the two-user adder channel transmit two codewords $c\in\mathtt{C}$ and $d\in\mathtt{D}$ simultaneously through the noiseless channel.  At the receiver's end, the received word $z$ is the sum $c+d$ of these two codewords.  Now, to recover the transmitted codeword $c$ from $z,$ the receiver applies the projection map  $\Omega_0\circ\Omega_1\circ\Psi_1$ on $z,$ which, by applying Lemma \ref{Lem6.5}, gives the transmitted codeword $c.$ Further, the other transmitted codeword $d$ can be obtained by computing $z-(\Omega_0\circ\Omega_1\circ\Psi_1)(z).$ Hence, by using LCPs of $\mathtt{R}\check{\mathtt{R}}$-linear codes, one can recover the codewords $c$ and $d$ from the received word $z.$ The following example illustrates this.
\begin{example}\label{Eg.6.2}
Let $\mathbb{F}_8=\{0,1,\xi,\xi^2,\xi^3=\xi+1,\xi^4,\xi^5,\xi^6\}$ be the finite field of order $8$ with a primitive element $\xi.$  Let $\mathtt{R}$ be the quasi-Galois ring $\mathbb{F}_8[u]/\langle u^2\rangle,$ and let  $\check{\mathtt{R}}=\mathbb{F}_8.$ Let $\mathtt{C}_2$ and $\mathtt{D}_2$ be $\mathtt{R}\check{\mathtt{R}}$-linear codes of block-length $(3,3)$ with generator matrices \vspace{-2mm}$$\vspace{-2mm}\mathcal{G}_2=\left[\begin{array}{ccc|ccc}
    1 & 1 & 1 & 0 & 0 & 1\\ 
    0 & u & 0 & 1 & 0 & 0\\
    0 & 0 & u & 0 & 1 & 0
\end{array}\right]\text{ ~ and ~ } \mathcal{H}_2=\left[\begin{array}{ccc|ccc}
    0 & 1 & 0 & 0 & 0 & 0\\ 
    0 & 0 & 1 & 0 & 0 & 0\\
    0 & 0 & 0 & 0 & 0 & 1
\end{array}\right],$$ respectively. By applying Theorem \ref{Thm6.3}, we see that the codes $\mathtt{C}_2$ and $\mathtt{D}_2$ form an LCP.  Further, one can easily see that the code $\mathtt{C}_2$ is of the type $\{1,0;2\}$ and that the parity-check matrix of the code $\mathtt{D}_2$ is given by \vspace{-2mm}$$\vspace{-2mm}\hat{\mathcal{H}}_2=\left[\begin{array}{ccc|ccc}
    1 & 0 & 0 & 0 & 0 & 0\\ 
    0 & 0 & 0 & 1 & 0 & 0\\
    0 & 0 & 0 & 0 & 1 & 0
\end{array}\right].$$
 This implies that  \vspace{-2mm}$$\vspace{-2mm}\mathcal{G}_2\diamond\hat{\mathcal{H}}_2^T=\left[\begin{array}{ccc}
    1 & 0 & 0 \\ 
    0 & u & 0 \\
    0 & 0 & u 
\end{array}\right],$$ which, by Remark \ref{Rk6.2}, further implies that  $P_1=I_{3}$ is the matrix satisfying $(\mathcal{G}_2\diamond\hat{\mathcal{H}_2}^T) P_1 =\left[\begin{array}{c|c}
    1 & 0 \\
    0 & u I_{2}
\end{array} \right].$  

Now, suppose that the codewords $c=(\xi,\xi+u,\xi|1,0,\xi)\in\mathtt{C}_2$ and $d=(0,1,1|0,0,0)\in\mathtt{D}_2$ are transmitted simultaneously through the noiseless two-user adder channel. The receiver receives the word $z=c+d=(\xi,\xi+u+1,\xi+1|1,0,\xi).$
By applying the projection map $\Omega_0\circ\Omega_1\circ\Psi_1$ on $z$ and using equations \eqref{Eq6.5}, \eqref{Psi1}  and \eqref{EQ0}, we get
\vspace{-2mm}\begin{align*}
\Omega_0\circ\Omega_1\circ\Psi_1(z)&=\Omega_0\circ\Omega_1(z \diamond \hat{\mathcal{H}}_2^T)=\Omega_0\circ\Omega_1\Big([\xi~\xi+u+1~\xi+1~|~1~0~\xi]\diamond\left[\begin{array}{ccc|ccc}
    1 & 0 & 0 & 0 & 0 & 0\\ 
    0 & 0 & 0 & 1 & 0 & 0\\
    0 & 0 & 0 & 0 & 1 & 0
\end{array}\right]^T\Big)
\\&=\Omega_0\circ\Omega_1(\xi,u,0)
=\Omega_0(\xi,1,0)
=[\xi~1~0]\mathcal{G}_2
=(\xi,\xi+u,\xi|1,0,\xi)=c.
\vspace{-2mm}\end{align*}
Further, we see that $z-c=(\xi,\xi+u+1,\xi+1|1,0,\xi)-(\xi,\xi+u,\xi|1,0,\xi)=(0,1,1|0,0,0)=d.$ Thus the transmitted codewords $c$ and $d$ are recovered from the received word $z$ using the LCP $(\mathtt{C}_2, \mathtt{D}_2)$ of $\mathtt{R}\check{\mathtt{R}}$-linear codes.
\end{example}
\section{Conclusion and future work}\label{Conclusion}
In this paper, Galois $\mathtt{R}\check{\mathtt{R}}$-LCD codes of an arbitrary block length are studied and characterized.  Further, all Euclidean and Hermitian $\mathtt{R}\check{\mathtt{R}}$-LCD codes of an arbitrary block-length are enumerated.  With the help of these enumeration formulae  and using Magma,  all Euclidean $\mathbb{Z}_4 \mathbb{Z}_{2}$-LCD   and $\mathbb{Z}_9 \mathbb{Z}_{3}$-LCD codes  of block-lengths $(1,1),$ $(1,2),$ $(2,1),$ $(2,2),$ $(3,1)$ and $(3,2)$   and all Hermitian $\frac{\mathbb{F}_{4}[u]}{\langle u^2\rangle} \;\mathbb{F}_{4}$-LCD codes of block-lengths $(1,1),$ $(1,2),$ $(2,1)$ and $(2,2)$ are classified up to monomial equivalence.  Apart from this, LCPs of $\mathtt{R}\check{\mathtt{R}}$-linear codes of an arbitrary block-length are studied. A necessary and sufficient condition under which a pair of $\mathtt{R}\check{\mathtt{R}}$-linear codes form an LCP is also derived. A direct sum masking scheme based on an LCP of $\mathtt{R}\check{\mathtt{R}}$-linear codes is also studied, and its security threshold against FIA and SCA is also obtained. Another application of LCPs of $\mathtt{R}\check{\mathtt{R}}$-linear codes to the two-user adder channel is also studied.  It would be interesting to study Gray images of LCPs of $\mathtt{R}\check{\mathtt{R}}$-linear codes and derive necessary and sufficient  conditions under which these Gray images form an LCP of codes. Another interesting line of  research  would be to construct LCPs of $\mathtt{R}\check{\mathtt{R}}$-linear codes with optimal security thresholds. 
\section{Acknowledgements}
The first author acknowledges the research support provided by the National Board for Higher Mathematics (NBHM), India,  under Grant No. 0203/13(46)/2021-R\&D-II/13176. The second author acknowledges the research support provided by the Department of Science and Technology (DST), India, under Grant No. DST/INT/RUS/RSF/P-41/2021 with TPN 65025.

\vspace{4mm}\noindent\textbf{\Large Appendices}\vspace{4mm}\\
\noindent \textbf{\large A.} \textbf{\large Classification of Euclidean $\mathbb{Z}_4 \mathbb{Z}_{2}$-LCD    codes   of block-length $(\mathpzc{a},\mathpzc{b}),$ where $\mathpzc{a} \in \{1,2,3\}$ and $\mathpzc{b} \in \{1,2\}$}\\

To classify Euclidean $\mathbb{Z}_4 \mathbb{Z}_{2}$-LCD    codes   of block-length $(\mathpzc{a},\mathpzc{b})$ with $\mathpzc{a} \in \{1,2,3\}$ and $\mathpzc{b} \in \{1,2\},$ we first recall that the  Gray map $\phi:\mathbb{Z}_4\rightarrow\mathbb{Z}_2^2$ is  defined as $\phi(0)=(0,0),$ $\phi(1)=(0,1),$ $\phi(2)=(1,1)$ and $\phi(3)=(1,0)$  and that the Lee weight $w_L(a)$ of an element $a\in \mathbb{Z}_{4}$ is defined as the Hamming weight of its Gray image $\phi(a),$ \textit{i.e.,} we have   $w_L(0)=0,$ $w_L(1)=w_{L}(3)=1$ and $w_{L}(2)=2$ (see Hammons \textit{et al.} \cite{Hammons}). Borges \etal~\cite{Borges1} further extended the Gray map $\phi$  to the Gray map $\Phi:\mathbb{Z}_{4}^{\mathpzc{a}}\oplus \mathbb{Z}_{2}^{\mathpzc{b}}\rightarrow\mathbb{Z}_{2}^{2\mathpzc{a}+\mathpzc{b}}$ as $\Phi(c_1,c_2,\ldots,c_{\mathpzc{a}}|d_1,d_2,\ldots,d_{\mathpzc{b}})=(\phi(c_1),\phi(c_2),\ldots,\phi(c_{\mathpzc{a}}),d_1,d_2,\ldots,d_{\mathpzc{b}})$ for all $(c_1,c_2,\ldots,c_{\mathpzc{a}}|d_1,d_2,\ldots,d_{\mathpzc{b}})\in\mathbb{Z}_{4}^{\mathpzc{a}}\oplus \mathbb{Z}_{2}^{\mathpzc{b}}.$ Thus the Lee weight $w_L$ on $\mathbb{Z}_{4}$ can be naturally extended to the Lee weight $w_L$ on $\mathbb{Z}_{4}^{\mathpzc{a}}\oplus \mathbb{Z}_{2}^{\mathpzc{b}}$ as $w_L(m)=\sum\limits_{i=1}^{\mathpzc{a}}w_{L}(c_i)+\sum\limits_{j=1}^{\mathpzc{b}}w_H(d_j)$ for all $m=(c_1,c_2,\ldots,c_{\mathpzc{a}}|d_1,d_2,\ldots,d_{\mathpzc{b}}) \in \mathbb{Z}_{4}^{\mathpzc{a}}\oplus \mathbb{Z}_{2}^{\mathpzc{b}},$ where $w_H$ denotes the Hamming weight function. The Lee distance of a $\mathbb{Z}_{4} \mathbb{Z}_{2}$-linear code $\mathtt{C}$ of block-length $(\mathpzc{a},\mathpzc{b})$ is defined as $d_L(\mathtt{C})=\min\{w_L(c): c (\neq 0)\in \mathtt{C}\}.$ The Gray image of a $\mathbb{Z}_{4} \mathbb{Z}_{2}$-linear code $\mathtt{C}$ of block-length $(\mathpzc{a},\mathpzc{b})$ is a binary code of length $2 \mathpzc{a}+\mathpzc{b},$ which need not be a linear code. We are now ready to classify  all Euclidean $\mathbb{Z}_4 \mathbb{Z}_{2}$-LCD codes   of block-length $(\mathpzc{a},\mathpzc{b}),$ where $\mathpzc{a} \in \{1,2,3\}$ and $\mathpzc{b} \in \{1,2\}.$ From now on,  generator matrices of $\mathbb{Z}_{4} \mathbb{Z}_{2}$-linear codes with Gray images as non-trivial optimal binary Euclidean LCD codes achieving the Griesmer bound are marked $\ast.$ 
\begin{itemize}
\item[I.] There are  5 monomially inequivalent non-zero Euclidean $\mathbb{Z}_4\mathbb{Z}_2$-LCD codes of block-length $(1,1).$ Among these codes, there are
\vspace{-2mm}\begin{enumerate}
% \item[$\bullet$] 2 Euclidean $\mathbb{Z}_4\mathbb{Z}_2$-LCD codes of type $\{1,0;0\}$ with generator matrices $[1~|~1]$ and $[1~|~0];$
% \item[$\bullet$] 2 Euclidean $\mathbb{Z}_4\mathbb{Z}_2$-LCD codes of type $\{0,0;1\}$ with generator matrices $[2~|~1]$ and $[0~|~1];$
% \item[$\bullet$] 1 Euclidean $\mathbb{Z}_4\mathbb{Z}_2$-LCD code of type $\{1,0;1\}$ with a generator matrix $\left[\begin{array}{c|c}
%      1&0  \\  
%      0&1 
% \end{array}\right].$
\item[$\blacklozenge$] 3 Euclidean $\mathbb{Z}_4\mathbb{Z}_2$-LCD codes of Lee distance $1$ with generator matrices $[1~|~0],$ $[0~|~1]$ and $\left[\begin{array}{c|c}
     1&0  \\  
     0&1 
\end{array}\right];$
\item[$\blacklozenge$] 1 Euclidean $\mathbb{Z}_4\mathbb{Z}_2$-LCD code of Lee distance $2$ with a generator matrix $[1~|~1];$ and
\item[$\blacklozenge$] 1 Euclidean $\mathbb{Z}_4\mathbb{Z}_2$-LCD code of Lee distance $3$ with a generator matrix $[2~|~1].$ 
\end{enumerate}
\item[II.] There are  11  monomially inequivalent non-zero Euclidean $\mathbb{Z}_4\mathbb{Z}_2$-LCD codes of block-length $(1,2).$ Among these codes, there are
\vspace{-2mm}\begin{enumerate}
\item[$\blacklozenge$] 7 Euclidean $\mathbb{Z}_4\mathbb{Z}_2$-LCD codes of Lee distance 1 with generator matrices $[0~|~I_2],$ $[0~|~1~ 0],$ $[1~|~0~0],$
$\left[\begin{array}{c|cc}
     1&0&1  \\  
     0&1&0 
\end{array}\right],$ $\left[\begin{array}{c|cc}
     2&1&0  \\  
     0&0&1 
\end{array}\right],$ $\left[\begin{array}{c|cc}
     1&0&0  \\  
     0&1&0 
\end{array}\right]$ and  $\left[\begin{array}{c|cc}
     1&0  \\
     0&I_2
\end{array}\right];$ 
\item[$\blacklozenge$] 3 Euclidean $\mathbb{Z}_4\mathbb{Z}_2$-LCD codes of Lee distance 2 with generator matrices $[1~|~1~ 0],$ $[1~|~1~1]$ and $\left[\begin{array}{c|cc}
     2&1&0  \\  
     2&0&1 
\end{array}\right];$ and
\item[$\blacklozenge$] 1 Euclidean $\mathbb{Z}_4\mathbb{Z}_2$-LCD code of Lee distance 3 with a generator matrix $[2~|~1~ 0].$ 
% \item[$\bullet$] 2 Euclidean $\mathbb{Z}_4\mathbb{Z}_2$-LCD codes of type $\{1,0;1\}$ with generator matrices $\left[\begin{array}{c|cc}
%      1&0&1  \\  
%      0&1&0 
% \end{array}\right]$ and $\left[\begin{array}{c|cc}
%      1&0&0  \\  
%      0&1&0 
% \end{array}\right];$
% \item[$\bullet$] 3 Euclidean $\mathbb{Z}_4\mathbb{Z}_2$-LCD codes of type $\{1,0;0\}$ with generator matrices $[1~|~1~ 0],$ $[1~|~1~1]$ and $[1~|~0~0];$
% \item[$\bullet$] 2 Euclidean $\mathbb{Z}_4\mathbb{Z}_2$-LCD codes of type $\{0,0;1\}$ with generator matrices $[2~|~1~ 0]$ and  $[0~|~1~ 0];$
% \item[$\bullet$] 1 Euclidean $\mathbb{Z}_4\mathbb{Z}_2$-LCD code of type $\{1,0;2\}$ with a generator matrix $\left[\begin{array}{c|cc}
%      1&0  \\
%      0&I_2
% \end{array}\right].$ 
\end{enumerate}
\item[III.] There are  15  monomially inequivalent non-zero Euclidean $\mathbb{Z}_4\mathbb{Z}_2$-LCD codes of block-length $(2,1).$ Among these codes, there are
\begin{enumerate}
\vspace{-2mm}\item[$\blacklozenge$] 8 Euclidean $\mathbb{Z}_4\mathbb{Z}_2$-LCD codes of Lee distance 1 with generator matrices $[1~ 0~|~ 0],$ $[0~ 0~|~ 1],$ $[I_2~|~0],$ 
$\left[\begin{array}{cc|c}
     1&0&1  \\
     0&1&0
\end{array}\right],$ $\left[\begin{array}{cc|c}
     1&0&0  \\  
     0&0&1
\end{array}\right],$  $\left[\begin{array}{cc|c}
     1&2&0  \\  
     0&0&1
\end{array}\right],$ $\left[\begin{array}{cc|c}
     1&0&0  \\  
     0&2&1
\end{array}\right]$ and  $\left[\begin{array}{c|c}
     I_2&0  \\
     0&1
\end{array}\right];$
\item[$\blacklozenge$] 5 Euclidean $\mathbb{Z}_4\mathbb{Z}_2$-LCD codes of Lee distance 2 with generator matrices $[1~ 2~|~ 0],$ $[1~ 2~|~ 1],$ $[1~ 0~|~ 1],$
$\left[\begin{array}{cc|c}
     1&0&1  \\
     0&1&1
\end{array}\right]$ and  $\left[\begin{array}{cc|c}
     1&2&0  \\  
     0&2&1
\end{array}\right];$  
\item[$\blacklozenge$] 1 Euclidean $\mathbb{Z}_4\mathbb{Z}_2$-LCD code of Lee distance 3 with a generator matrix 
$[0~ 2~|~ 1];$ and
\item[$\blacklozenge$] 1 Euclidean $\mathbb{Z}_4\mathbb{Z}_2$-LCD code of Lee distance 5 with a generator matrix 
$[2~ 2~|~ 1].$
% \item[$\blacklozenge$] 3 Euclidean $\mathbb{Z}_4\mathbb{Z}_2$-LCD codes of type $\{2,0;0\}$ with generator matrices $\left[\begin{array}{cc|c}
%      1&0&1  \\
%      0&1&0
% \end{array}\right],$ $\left[\begin{array}{cc|c}
%      1&0&1  \\
%      0&1&1
% \end{array}\right]$ and $\left[\begin{array}{cc|c}
%      1&0&0  \\
%      0&1&0
% \end{array}\right];$
% \item[$\blacklozenge$] 4 Euclidean $\mathbb{Z}_4\mathbb{Z}_2$-LCD codes of type $\{1,0;1\}$ with generator matrices $\left[\begin{array}{cc|c}
%      1&2&0  \\  
%      0&2&1
% \end{array}\right],$  $\left[\begin{array}{cc|c}
%      1&0&0  \\  
%      0&0&1
% \end{array}\right],$  $\left[\begin{array}{cc|c}
%      1&2&0  \\  
%      0&0&1
% \end{array}\right]$ and $\left[\begin{array}{cc|c}
%      1&0&0  \\  
%      0&2&1
% \end{array}\right];$
% \item[$\blacklozenge$] 4 Euclidean $\mathbb{Z}_4\mathbb{Z}_2$-LCD codes of type $\{1,0;0\}$ with generator matrices $[1~ 2~|~ 0],$ $[1~ 2~|~ 1],$ $[1~ 0~|~ 0]$ and $[1~ 0~|~ 1];$
% \item[$\blacklozenge$] 3 Euclidean $\mathbb{Z}_4\mathbb{Z}_2$-LCD codes of type $\{0,0;1\}$ with generator matrices $[0~ 0~|~ 1],$ $[2~ 2~|~ 1]$ and $[0~ 2~|~ 1];$
% \item[$\blacklozenge$] 1 Euclidean $\mathbb{Z}_4\mathbb{Z}_2$-LCD code of type $\{2,0;1\}$ with a generator matrix $\left[\begin{array}{c|c}
%      I_2&0  \\
%      0&1
% \end{array}\right].$ 
\end{enumerate}
\item[IV.] There are  41  monomially inequivalent non-zero Euclidean $\mathbb{Z}_4\mathbb{Z}_2$-LCD codes of block-length $(2,2).$ Among these codes, there are
\vspace{-2mm}\begin{itemize}
\item[$\blacklozenge$] 22 Euclidean $\mathbb{Z}_4\mathbb{Z}_2$-LCD codes of Lee distance 1 with generator matrices as listed below:
\\$\bullet$ $[0~0~|~1~0],$ $[1~0~|~0~0]$ and $\left[\begin{array}{c|c}
     I_2&0  \\
     0&I_2
\end{array}\right];$  
\\$\bullet$ $\left[\begin{array}{cc|cc}
     1&x&y&z  \\
     0&u&v&w
\end{array}\right]$ with $(x,y,z,u,v,w)\in\{(0,1,0,1,0,0),(0,0,0,1,1,1),(0,0,0,1,0,0),(0,0,1,0,1,0),\\(0,0,0,0,1,0),(2,0,1,0,1,0),(2,0,0,0,1,0),(0,0,0,2,1,0)\};$
\\$\bullet$ $\left[\begin{array}{cc|cc}
     x&y&1&0  \\
     z&u&0&1  
\end{array}\right]$ with $(x,y,z,u)\in\{(0,0,0,0),(0,0,2,2),(0,2,0,0)\};$
\\$\bullet$ $\left[\begin{array}{cc|cc}
     1&x&y&z  \\
     0&u&v&w\\  
     0&r&s&t
\end{array}\right]$ with $(x,y,z,u,v,w,r,s,t)\in\{(0,0,0,1,0,0,0,1,0),(0,0,1,1,0,1,0,1,0),(0,0,1,1,\\0,0,0,1,0),(0,0,0,2,1,0,0,0,1),(2,0,0,0,1,0,2,0,1),(2,0,0,0,1,0,0,0,1),(0,0,0,2,1,0,2,0,1),(0,0,\\0,0,1,0,0,0,1)\};$
\item[$\blacklozenge$] 15 Euclidean $\mathbb{Z}_4\mathbb{Z}_2$-LCD codes of Lee distance 2 with generator matrices as listed below:
\\$\bullet$ $[1~ x~|~ y~ z]$ with $(x,y,z)\in\{(2,1,1),(2,0,0),(2,1,0),(0,1,1),(0,1,0)\};$
\\$\bullet$ $\left[\begin{array}{cc|cc}
     1&x&y&z  \\
     0&u&v&w
\end{array}\right]$ with $(x,y,z,u,v,w)\in\{(0,1,1,1,1,1),(0,1,0,1,1,1),(0,0,1,1,0,1),(0,1,0,1,0,1),\\(2,0,1,2,1,0),(2,0,0,2,1,0),(0,0,1,2,1,0)\};$
\\$\bullet$ $\left[\begin{array}{cc|cc}
     x&y&1&0  \\
     z&u&0&1  
\end{array}\right]$ with $(x,y,z,u)\in\{(2,2,2,2),(2,0,2,0)\};$
\\$\bullet$ $\left[\begin{array}{cc|cc}
     1&2&0&0  \\  
     0&2&1&0  \\ 
     0&2&0&1 
\end{array}\right];$

\item[$\blacklozenge$] 3 Euclidean $\mathbb{Z}_4\mathbb{Z}_2$-LCD codes of Lee distance 3 with generator matrices $[0~2~|~1~0],$
$\left[\begin{array}{cc|cc}
     2&0&1&0  \\
     0&2&0&1  
\end{array}\right]$ and $\left[\begin{array}{cc|cc}
     0&2&1&0  \\
     2&2&0&1  
\end{array}\right];$ and  
\item[$\blacklozenge$] 1 Euclidean $\mathbb{Z}_4\mathbb{Z}_2$-LCD code of Lee distance 5 with a generator matrix $[2~2~|~1~0].$
\end{itemize}
\item[V.] There are  49  monomially inequivalent non-zero Euclidean $\mathbb{Z}_4\mathbb{Z}_2$-LCD codes of block-length $(3,1).$ Among these codes, there are
\vspace{-2mm}\begin{enumerate}
\item[$\blacklozenge$] 22 Euclidean $\mathbb{Z}_4\mathbb{Z}_2$-LCD codes of Lee distance 1 with generator matrices as listed below:
\\$\bullet$ $[0~0~0~|~1],$ $[1~0~0~|~0]$ and $\left[\begin{array}{c|c}
     I_3&0  \\
     0&1  
\end{array}\right];$
\\$\bullet$ $\left[\begin{array}{ccc|c}
     1&x&y&z  \\
     0&u&v&w  
\end{array}\right]$ with $(x,y,z,u,v,w)\in\{(2,2,0,0,0,1),(3,1,0,0,0,1),(0,0,0,2,0,1),(0,0,0,2,2,1),\\(0,0,0,0,0,1),(0,2,0,0,0,1),(0,0,1,1,0,0),(0,2,1,1,0,0),(0,0,0,1,0,0),(0,2,0,1,0,1)\};$
\\$\bullet$ $\left[\begin{array}{ccc|c}
     1&0&x&y  \\
     0&1&z&u  \\
     0&0&v&w
\end{array}\right]$ with $(x,y,z,u,v,w)\in\{(2,0,0,0,0,1),(0,0,0,0,0,1),(2,0,0,0,2,1),(0,0,0,0,2,1),\\(1,0,3,0,0,1),(2,0,2,0,0,1),(0,0,0,0,1,1),(0,2,0,0,1,1),(0,0,0,0,1,0)\};$
\item[$\blacklozenge$] 21 Euclidean $\mathbb{Z}_4\mathbb{Z}_2$-LCD codes of Lee distance 2 with generator matrices as listed below:
\\$\bullet$ $[1~x~y~|~z]$ with $(x,y,z)\in\{(2,2,1),(2,2,0),(0,0,1),(0,2,1),(0,2,0)\};$
\\$\bullet$ $\left[\begin{array}{ccc|c}
     1&x&y&z  \\
     0&u&v&w  
\end{array}\right]$ with $(x,y,z,u,v,w)\in\{(2,2,0,2,2,1),(0,2,0,0,2,1),(0,2,0,2,0,1),(2,2,0,2,0,1),\\(0,2,0,2,2,1),(0,2,0,1,2,0),(0,2,0,1,0,1),(0,2,2,1,2,1),(0,1,1,1,3,0),(0,1,0,1,3,0),(0,0,1,1,0,1),\\(0,2,1,1,2,0),(0,2,1,1,0,1)\};$
\\$\bullet$ $\left[\begin{array}{ccc|c}
     1&0&x&y  \\
     0&1&z&u  \\
     0&0&v&w
\end{array}\right]$ with $(x,y,z,u,v,w)\in\{(2,0,2,0,2,1),(3,0,3,0,2,1),(0,1,0,1,1,1)\};$
\item[$\blacklozenge$] 2 Euclidean $\mathbb{Z}_4\mathbb{Z}_2$-LCD codes of Lee distance 3 with generator matrices $[0~0~2~|~1]$ and $\left[\begin{array}{ccc|c}
     1&3&1&0  \\
     0&2&0&1 
\end{array}\right];$
\item[$\blacklozenge$] 2 Euclidean $\mathbb{Z}_4\mathbb{Z}_2$-LCD codes of Lee distance 4 with generator matrices $[1~1~3~|~0]$ and $[1~1~3~|~1];$
\item[$\blacklozenge$] 1 Euclidean $\mathbb{Z}_4\mathbb{Z}_2$-LCD code of Lee distance 5 with a generator matrix $[2~2~0~|~1];$ and
\item[$\blacklozenge$] 1 Euclidean $\mathbb{Z}_4\mathbb{Z}_2$-LCD code of Lee distance 7 with a generator matrix $[2~2~2~|~1];$
\end{enumerate}
\item[VI.] There are  163  monomially inequivalent non-zero Euclidean $\mathbb{Z}_4\mathbb{Z}_2$-LCD codes of block-length $(3,2).$ Among these codes, there are
\vspace{-2mm}\begin{enumerate}
\item[$\blacklozenge$] 76 Euclidean $\mathbb{Z}_4\mathbb{Z}_2$-LCD codes of Lee distance 1 with generator matrices as listed below:
\\$\bullet$ $[0~0~0~|~1~0],$ $[1~0~0~|~0~0]$ and $\left[\begin{array}{c|c}
     I_3&0  \\
     0&I_2 
\end{array}\right];$
\\$\bullet$  $\left[\begin{array}{ccc|cc}
    x&y&z&1&0\\
    u&v&w&0&1     
\end{array}\right]$ with $(x,y,z,u,v,w)\in\{(0,0,0,0,0,0),(2,2,0,0,0,0),(0,2,0,0,0,0),\\(0,0,0,2,2,2)\};$
\\$\bullet$ $\left[\begin{array}{ccc|cc}
1&x&y&0&z\\
0&u&v&1&w
\end{array}\right]$ with $(x,y,z,u,v,w)\in\{(0,0,1,0,0,0),(0,0,0,2,0,0),(0,2,1,0,0,0),\\(3,1,0,0,0,0),(0,0,0,2,2,0),(0,0,0,0,0,0),(0,2,0,0,0,0),(2,2,1,0,0,0),(3,1,1,0,0,0),\\(2,2,0,0,0,0)\};$
\\$\bullet$ $\left[\begin{array}{ccc|cc}
1&0&x&y&z\\
0&1&u&v&w
\end{array}\right]$ with $(x,y,z,u,v,w)\in\{(0,1,0,0,0,0),(2,1,0,0,0,0),(0,0,0,0,1,1),\\(2,1,1,0,0,0),(0,0,0,0,0,0),(2,0,0,0,0,0)\};$
\\$\bullet$ $\left[\begin{array}{ccc|cc}
1&x&y&0&0\\
0&z&u&1&0\\
0&v&w&0&1
\end{array}\right]$ with $(x,y,z,u,v,w)\in\{(2,0,0,0,2,2),(0,0,2,0,0,0),(3,3,0,0,0,0),\\(0,0,2,0,0,2),(3,3,0,2,0,0),(2,0,0,0,0,0),(2,0,0,2,0,0),(0,0,0,0,2,2),(0,0,0,0,0,0),(0,0,0,2,0,2),\\(2,2,0,0,2,0),(2,2,0,0,2,2),(0,0,2,2,2,0),(2,2,0,0,0,0),(0,0,2,2,2,2),(2,0,0,0,2,0)\};$
\\$\bullet$ $\left[\begin{array}{ccc|cc}
1&0&x&0&y\\
0&1&z&0&u\\
0&0&v&1&w
\end{array}\right]$ with $(x,y,z,u,v,w)\in\{(1,0,3,0,0,0),(2,0,0,1,0,0),(2,1,0,1,0,0),\\(2,1,2,1,0,0),(2,0,0,0,0,0),(2,0,2,0,0,0),(2,1,0,0,0,0),(2,0,0,0,2,0),(0,1,0,1,0,0),(2,1,2,0,0,0),\\(2,1,0,0,2,0),(0,0,0,0,0,0),(0,1,0,0,0,0),(0,0,0,0,2,0),(0,1,0,0,2,0),(1,1,3,1,0,0)\};$
\\$\bullet$ $\left[\begin{array}{ccc|cc}
1&0&0&x&y\\
0&1&0&z&u\\
0&0&1&v&w
\end{array}\right]$ with $(x,y,z,u,v,w)\in\{(0,0,0,0,1,0),(0,0,0,1,0,1),(0,0,1,1,1,1),\\(0,1,0,0,1,0),(0,0,1,1,1,0),(0,0,0,0,0,0),(0,0,0,0,1,1)\};$
\\$\bullet$ $\left[\begin{array}{ccc|cc}
1&0&x&0&0\\
0&1&y&0&0\\
0&0&z&1&0\\
0&0&u&0&1
\end{array}\right]$ with $(x,y,z,u)\in\{(2,0,2,0),(0,0,2,2),(0,2,0,0),(2,0,2,2),(2,2,0,0),\\(1,1,0,0),(0,0,0,0),(0,0,2,0),(2,2,0,2),(3,1,2,0)\};$
\\$\bullet$ $\left[\begin{array}{ccc|cc}
1&0&0&0&x\\
0&1&0&0&y\\
0&0&1&0&z\\
0&0&0&1&u
\end{array}\right]$ with $(x,y,z,u)\in\{(0,0,0,0),(0,1,1,0),(1,1,1,0),(0,0,1,0)\};$
\item[$\blacklozenge$] 71 Euclidean $\mathbb{Z}_4\mathbb{Z}_2$-LCD codes of Lee distance 2 with generator matrices as listed below:  
\\$\bullet$ $[1~x~y~|~z~u]$ with $(x,y,z,u)\in\{(2,2,1,1),(2,2,1,0),(0,0,1,1),(2,2,0,0),(0,0,1,0),(0,2,1,1),\\(0,2,1,0),(0,2,0,0)\};$
\\$\bullet$ $\left[\begin{array}{ccc|cc}
     x&y&z&1&0  \\
     u&v&w&0&1
\end{array}\right]$ with $(x,y,z,u,v,w)\in\{(2,2,2,2,2,2),(2,2,0,2,2,0),(2,0,0,2,0,0)\};$
\\$\bullet$ $\left[\begin{array}{ccc|cc}
1&x&y&0&z\\
0&u&v&1&w
\end{array}\right]$ with $(x,y,z,u,v,w)\in\{(0,2,0,2,0,0),(0,2,1,0,2,0),(0,2,0,2,2,0),\\(2,2,1,2,0,0),(0,2,0,0,2,0),(2,2,1,2,2,0),(2,2,0,2,0,0),(2,2,0,2,2,0),(0,0,1,2,0,0),\\(0,2,1,2,0,0),(0,0,1,2,2,0),(0,2,1,2,2,0)\};$ 
\\$\bullet$ $\left[\begin{array}{ccc|cc}
1&0&x&y&z\\
0&1&u&v&w
\end{array}\right]$ with $(x,y,z,u,v,w)\in\{(2,0,0,2,0,0),(2,1,1,0,1,0),(0,1,0,0,0,1),\\(2,1,0,2,0,0),(2,0,0,0,1,0),(0,1,1,0,1,1),(2,1,0,0,0,1),(2,1,1,0,1,1),(2,1,0,2,0,1),\\(2,1,1,2,1,1),(2,0,0,0,1,1),(1,0,1,3,0,1),(2,0,0,2,1,1),(0,1,0,0,1,1),(1,0,0,3,0,0),\\(2,1,0,0,1,1),(2,1,0,2,1,1),(0,0,1,0,0,1),(2,0,1,0,0,1),(1,1,1,3,1,1),(2,0,1,2,0,1)\};$
\\$\bullet$ $\left[\begin{array}{ccc|cc}
1&x&y&0&0\\
0&z&u&1&0\\
0&v&w&0&1
\end{array}\right]$ with $(x,y,z,u,v,w)\in\{(2,2,0,2,0,2),(3,3,0,2,0,2),(2,2,2,2,2,0),\\(2,2,2,2,2,2),(2,0,0,2,0,2),(0,2,0,2,0,2),(2,0,2,2,2,0),(0,2,2,2,2,0),(2,0,2,2,2,2),\\(2,2,0,2,2,0),(2,0,2,0,0,2)\};$
\\$\bullet$ $\left[\begin{array}{ccc|cc}
1&0&x&0&y\\
0&1&z&0&u\\
0&0&v&1&w
\end{array}\right]$ with $(x,y,z,u,v,w)\in\{(2,0,0,1,2,0),(2,1,0,1,2,0),(2,1,2,1,2,0),\\(2,0,2,0,2,0),(2,1,2,0,2,0),(0,1,0,1,2,0),(3,1,3,1,2,0),(3,0,3,0,2,0)\};$
\\$\bullet$ ${}^{\ast}\left[\begin{array}{ccc|cc}
1&0&0&x&y\\
0&1&0&z&u\\
0&0&1&v&w
\end{array}\right]$ with $(x,y,z,u,v,w)\in\{(0,1,1,0,1,0),(0,1,1,1,0,1),(1,0,1,0,1,0),\\(1,1,1,1,1,0),(1,1,1,1,1,1),(0,1,1,1,1,0)\};$
\\$\bullet$ ${}^{\ast}\left[\begin{array}{ccc|cc}
1&0&2&0&0\\
0&1&2&0&0\\
0&0&2&1&0\\
0&0&2&0&1
\end{array}\right]$ and $\left[\begin{array}{ccc|cc}
1&0&1&0&0\\
0&1&3&0&0\\
0&0&2&1&0\\
0&0&2&0&1
\end{array}\right];$\\
% \item[$\blacklozenge$] 10 Euclidean $\mathbb{Z}_4\mathbb{Z}_2$-LCD codes of Lee distance 3 with generator matrices $[0~0~2~|~1~0],$ $[1~1~3~|~0~0],$ $\left[\begin{array}{ccc|cc}
%      0&2&0&1&0 \\
%      2&2&2&0&1
% \end{array}\right],$ $\left[\begin{array}{ccc|cc}
%      0&2&0&1&0 \\
%      2&0&0&0&1
% \end{array}\right],$ $\left[\begin{array}{ccc|cc}
% 0&0&2&1&0\\
% 2&2&0&0&1
% \end{array}\right],$ $\left[\begin{array}{ccc|cc}
% 2&0&0&1&0\\
% 2&0&2&0&1
% \end{array}\right],$\\ $\left[\begin{array}{ccc|cc}
% 1&3&1&0&1\\
% 0&2&0&1&0
% \end{array}\right],$  $\left[\begin{array}{ccc|cc}
% 1&3&1&0&0\\
% 0&2&0&1&0
% \end{array}\right],$ $\left[\begin{array}{ccc|cc}
% 1&0&1&1&0\\
% 0&1&3&0&1
% \end{array}\right]$ and $\left[\begin{array}{ccc|cc}
% 1&1&3&0&0\\
% 0&2&0&1&0\\
% 0&0&2&0&1
% \end{array}\right];$
\item[$\blacklozenge$] 10 Euclidean $\mathbb{Z}_4\mathbb{Z}_2$-LCD codes of Lee distance 3 with  generator matrices as listed below:
\\$\bullet$ $[0~0~2~|~1~0]$ and $[1~1~3~|~0~0];$
\\$\bullet$ $\left[\begin{array}{ccc|cc}
     x&y&z&1&0 \\
     u&v&w&0&1
\end{array}\right]$ with $(x,y,z,u,v,w)\in\{(0,2,0,2,2,2),(0,2,0,2,0,0),(0,0,2,2,2,0),\\(2,0,0,2,0,2)\};$
\\$\bullet$ $\left[\begin{array}{ccc|cc}
1&x&y&0&z\\
0&u&v&1&w
\end{array}\right]$ with $(x,y,z,u,v,w)\in\{(3,1,1,2,0,0),(3,1,0,2,0,0),(0,1,0,1,3,1)\};$
\\$\bullet$ $\left[\begin{array}{ccc|cc}
1&1&3&0&0\\
0&2&0&1&0\\
0&0&2&0&1
\end{array}\right];$
\item[$\blacklozenge$] 2 Euclidean $\mathbb{Z}_4\mathbb{Z}_2$-LCD codes of Lee distance 4 with generator matrices $[1~1~3~|~1~0]$ and \\$\left[\begin{array}{ccc|cc}
      0&2&2&1&0\\
     2&2&2&0&1
\end{array}\right];$  
\item[$\blacklozenge$] 3 Euclidean $\mathbb{Z}_4\mathbb{Z}_2$-LCD codes of Lee distance 5 with generator matrices $[2~2~0~|~1~0],$ $[1~1~3~|~1~1]$ and  $^*\left[\begin{array}{ccc|cc}
      2&2&0&1&0\\
     2&0&2&0&1
\end{array}\right];$ and    
\item[$\blacklozenge$] 1 Euclidean $\mathbb{Z}_4\mathbb{Z}_2$-LCD code of Lee distance 7 with a generator matrix $[2~2~2~|~1~0].$
\end{enumerate}
\end{itemize}
\noindent \textbf{\large B.} \textbf{\large Classification of Euclidean $\mathbb{Z}_9\mathbb{Z}_3$-LCD codes of block-length $(\mathpzc{a},\mathpzc{b}),$ where $\mathpzc{a} \in \{1,2,3\}$ and $\mathpzc{b} \in \{1,2\}$}\\

 We will next classify all Euclidean $\mathbb{Z}_9\mathbb{Z}_3$-LCD codes of block-length $(\mathpzc{a},\mathpzc{b})$  up to monomial equivalence, where $\mathpzc{a}\in\{1,2,3\}$ and $\mathpzc{b}\in\{1,2\}.$ For this, we will place the extended Lee weight $w_L$ defined by Yildiz and {\"O}zger \cite{Yildiz}) on $\mathbb{Z}_{9},$  as it is more suitable for coding theory applications as compared to the Lee weight defined by Bhaintwal and Wasan \cite{Bhaintwal}. The extended Lee weight $w_L$ on $\mathbb{Z}_{9}$ is defined as $w_L(0)=0,$ $w_L(1)=w_{L}(8)=1,$ $w_{L}(2)=w_{L}(7)=2$ and $w_{L}(3)=w_{L}(4)=w_{L}(5)=w_{L}(6)=3.$ It can be naturally extended to the Lee weight on $\mathbb{Z}_9^{\mathpzc{a}} \oplus \mathbb{Z}_{3}^{\mathpzc{b}}$ as $w_L(m)=\sum\limits_{i=1}^{\mathpzc{a}}w_{L}(c_i)+\sum\limits_{j=1}^{\mathpzc{b}}w_H(d_j)$ for all $m=(c_1,c_2,\ldots,c_{\mathpzc{a}}|d_1,d_2,\ldots,d_{\mathpzc{b}}) \in \mathbb{Z}_9^{\mathpzc{a}} \oplus \mathbb{Z}_{3}^{\mathpzc{b}}.$  The Lee distance of a $\mathbb{Z}_{9} \mathbb{Z}_{3}$-linear code $\mathtt{C}$ of block-length $(\mathpzc{a},\mathpzc{b})$ is defined as $d_L(\mathtt{C})=\min\{w_L(c): c (\neq 0)\in \mathtt{C}\}.$ 
% Also, using the extended Lee weight defined by Yildiz and {\"O}zger \cite{Yildiz}, we will compute the minimum Lee distance of each code. Extended Lee weight on $\mathbb{Z}_{9}$ is defined as
% \begin{equation*}
% wt(x)=\left\{\begin{array}{ll}  0  & \text{if $x=0$};\\
% 1  & \text{if $x=1,~8$};\\
% 2  & \text{if $x=2,~7$};\\
% 3  & \text{if $x=3,~4,~5$ and $6$}.\\
% \end{array} \right.
% \end{equation*}

\begin{itemize}
\item[I.] There are  5 monomially inequivalent non-zero Euclidean $\mathbb{Z}_9\mathbb{Z}_3$-LCD codes of block-length $(1,1).$ Among these codes, there are
\vspace{-2mm}\begin{enumerate}
\item[$\blacklozenge$] 3 Euclidean $\mathbb{Z}_9\mathbb{Z}_3$-LCD codes of Lee distance 1 with generator matrices
 $[1~|~0],$ $[0~|~1]$ and $\left[\begin{array}{c|c}
     1&0  \\
     0&1
\end{array}\right];$ 
\item[$\blacklozenge$] 1 Euclidean $\mathbb{Z}_9\mathbb{Z}_3$-LCD code of Lee distance 2 with a generator matrix $[1~|~1];$ and
\item[$\blacklozenge$] 1 Euclidean $\mathbb{Z}_9\mathbb{Z}_3$-LCD code of Lee distance 4 with a generator matrix $[3~|~1].$
%     \item[$\blacklozenge$] 2 Euclidean $\mathbb{Z}_9\mathbb{Z}_3$-LCD codes of type $\{1,0;0\}$ with  generator matrices $[1~|~0]$ and $[1~|~1];$ 
%     \item[$\blacklozenge$] 2 Euclidean $\mathbb{Z}_9\mathbb{Z}_3$-LCD codes of type $\{0,0;1\}$ with  generator matrices $[0~|~1]$ and $[3~|~1];$
%     \item[$\blacklozenge$] 1 Euclidean $\mathbb{Z}_9\mathbb{Z}_3$-LCD code of type $\{1,0;1\}$ with a generator matrix $\left[\begin{array}{c|c}
%      1&0  \\
%      0&1
% \end{array}\right].$ 
\end{enumerate}
\item[II.] There are 15 monomially inequivalent non-zero Euclidean $\mathbb{Z}_9\mathbb{Z}_3$-LCD codes of block-length $(1,2).$ Among these codes, there are
\vspace{-2mm}\begin{enumerate}
\item[$\blacklozenge$] 8 Euclidean $\mathbb{Z}_9\mathbb{Z}_3$-LCD codes of Lee distance 1 with generator matrices
$[1~|~0~0],$ $[0~|~1~0],$
$\left[\begin{array}{c|cc}
        1 & 0 & 0\\  
        0& 1& 0
    \end{array}\right],$ $\left[\begin{array}{c|cc}
        1 & 0 & 0\\   
        0 & 1 & 1
    \end{array}\right],$ $\left[\begin{array}{c|cc}
        1 & 0 & 1\\   
        0 & 1 & 0
    \end{array}\right],$ $\left[\begin{array}{c|cc}
        3 & 1 & 0\\ 
        0 & 0 & 1
    \end{array}\right],$ $\left[\begin{array}{c|cc}
        0 & 1 & 0\\ 
        0 & 0 & 1
    \end{array}\right]$ and $\left[\begin{array}{c|c}
     1&0  \\
     0&I_2
\end{array}\right];$   
\item[$\blacklozenge$] 4 Euclidean $\mathbb{Z}_9\mathbb{Z}_3$-LCD codes of Lee distance 2 with generator matrices 
$[1~|~0~2],$ $[0~|~1~2],$ 
$\left[\begin{array}{c|cc}
        1 & 0 & 1\\   
        0 & 1 & 2
    \end{array}\right]$  and $\left[\begin{array}{c|cc}
        6 & 1 & 0\\
        3 & 0 & 1
\end{array}\right];$ 
\item[$\blacklozenge$] 1 Euclidean $\mathbb{Z}_9\mathbb{Z}_3$-LCD code of Lee distance 3 with a generator matrix $[1~|~1~2];$
\item[$\blacklozenge$] 1 Euclidean $\mathbb{Z}_9\mathbb{Z}_3$-LCD code of Lee distance 4 with a generator matrix $[6~|~1~0];$ and
\item[$\blacklozenge$] 1 Euclidean $\mathbb{Z}_9\mathbb{Z}_3$-LCD code of Lee distance 5 with a generator matrix $[6~|~1~2].$
% \item[$\blacklozenge$] 4 Euclidean $\mathbb{Z}_9\mathbb{Z}_3$-LCD codes of type $\{1,0;1\}$ with  generator matrices $\left[\begin{array}{c|cc}
%         1 & 0 & 0\\  
%         0& 1& 0
%     \end{array}\right],\left[\begin{array}{c|cc}
%         1 & 0 & 0\\   
%         0 & 1 & 1
%     \end{array}\right],\left[\begin{array}{c|cc}
%         1 & 0 & 1\\   
%         0 & 1 & 0
%     \end{array}\right], $ and $\left[\begin{array}{c|cc}
%         1 & 0 & 1\\   
%         0 & 1 & 2
%     \end{array}\right];$
%     \item[$\blacklozenge$] 3 Euclidean $\mathbb{Z}_9\mathbb{Z}_3$-LCD codes of type $\{1,0;0\}$ with  generator matrices  $[1~|~0~2],~
%         [1~|~1~2] $ and $[1~|~0~0]$;
%     \item[$\blacklozenge$] 3 Euclidean $\mathbb{Z}_9\mathbb{Z}_3$-LCD codes of type $\{0,0;2\}$ with  generator matrices  $\left[\begin{array}{c|cc}
%         6 & 1 & 0\\
%         3 & 0 & 1
% \end{array}\right],\left[\begin{array}{c|cc}
%         3 & 1 & 0\\ 
%         0 & 0 & 1
%     \end{array}\right]$ and $\left[\begin{array}{c|cc}
%         0 & 1 & 0\\ 
%         0 & 0 & 1
%     \end{array}\right]$;
%     \item[$\blacklozenge$] 4 Euclidean $\mathbb{Z}_9\mathbb{Z}_3$-LCD codes of type $\{0,0;1\}$ with  generator matrices  $[0~|~1~0],
%         [0~|~1~2],~
%         [6~|~1~2]$ and $   [6~|~1~0]$;
%     \item[$\blacklozenge$] 1 Euclidean $\mathbb{Z}_9\mathbb{Z}_3$-LCD code of type $\{1,0;2\}$ with a generator matrix $\left[\begin{array}{c|c}
%      1&0  \\
%      0&I_2
% \end{array}\right].$  
\end{enumerate}
\item[III.] There are precisely 19  monomially inequivalent non-zero Euclidean $\mathbb{Z}_9\mathbb{Z}_3$-LCD codes of block-length $(2,1).$ Among these codes, there are
\vspace{-2mm}\begin{enumerate}
\item[$\blacklozenge$] 9 Euclidean $\mathbb{Z}_9\mathbb{Z}_3$-LCD codes of Lee distance 1 with generator matrices $[1~0~|~0],$ $[0~0~|~1],$ 
$\left[\begin{array}{cc|c}
    1 &  0 & 0  \\
    0 & 1 & 0
\end{array}\right],$ $\left[\begin{array}{cc|c}
    1 &  0 & 0  \\
    0 & 1 & 2
\end{array}\right],$ $\left[\begin{array}{cc|c}
        1 & 0 & 0  \\  
         0 & 6 & 1
    \end{array}\right],$ $\left[\begin{array}{cc|c}
        1 & 0 & 0  \\  
         0 & 0 & 1
    \end{array}\right],$ $\left[\begin{array}{cc|c}
        1 & 8 & 0  \\  
         0 & 0 & 1
    \end{array}\right],$ $\left[\begin{array}{cc|c}
        1 & 6 & 0  \\  
         0 & 0 & 1
    \end{array}\right]$ and $\left[\begin{array}{c|c}
     I_2&0  \\
     0&1
\end{array}\right];$ 
\item[$\blacklozenge$] 3 Euclidean $\mathbb{Z}_9\mathbb{Z}_3$-LCD codes of Lee distance 2 with generator matrices  $[1~0~|~2],$ 
$\left[\begin{array}{cc|c}
    1 &  0 & 1  \\
    0 & 1 & 2
\end{array}\right]$ and  $\left[\begin{array}{cc|c}
        1 & 3 & 0  \\  
         0 & 3 & 1
    \end{array}\right];$
\item[$\blacklozenge$] 4 Euclidean $\mathbb{Z}_9\mathbb{Z}_3$-LCD codes of Lee distance 3 with generator matrices
 $[1~4~|~0],$  $[1~6~|~0],$ $[1~6~|~2]$ and 
$\left[\begin{array}{cc|c}
        1 & 2 & 0  \\  
         0 & 6 & 1
    \end{array}\right];$
\item[$\blacklozenge$] 2 Euclidean $\mathbb{Z}_9\mathbb{Z}_3$-LCD codes of Lee distance 4 with generator matrices $[1~4~|~2]$ and $[3~0~|~1];$ and
\item[$\blacklozenge$] 1 Euclidean $\mathbb{Z}_9\mathbb{Z}_3$-LCD codes of Lee distance 7 with a generator matrix $[6~3~|~1].$
\end{enumerate}
\item[IV.] There are precisely 71 monomially inequivalent non-zero Euclidean $\mathbb{Z}_9\mathbb{Z}_3$-LCD codes of block-length $(2,2).$ Among these codes, there are
\vspace{-2mm}\begin{enumerate}
\item[$\blacklozenge$] 30 Euclidean $\mathbb{Z}_9\mathbb{Z}_3$-LCD codes of Lee distance 1 with generator matrices as listed below:
\\$\bullet$ $[1~0~|~0~0],$ $[0~0~|~1~0]$ and $\left[\begin{array}{c|c}
     I_2&0  \\
     0&I_2
\end{array}\right];$
\\$\bullet$ $\left[\begin{array}{cc|cc}
    1& 0& x &y\\
    0 &1 &z& u
\end{array}\right]$ with $(x,y,z,u)\in\{(0,0,0,0),(1,2,0,0),(0,0,0,1)\};$
\\$\bullet$ $\left[\begin{array}{cc|cc}
    1& x& 0 &y\\  
    0 &z &1& u
\end{array}\right]$ with $(x,y,z,u)\in\{(0,0,6,0),(0,0,6,1),(8,1,0,0),(8,0,0,0),(0,1,0,0),(0,0,0,0),\\(0,0,0,1),(6,1,0,0),(6,0,0,0)\};$
\\$\bullet$ $\left[\begin{array}{cc|cc}
    x& y& 1 &0\\
    z &u &0& 1
\end{array}\right]$ with $(x,y,z,u)\in\{(0,0,0,0),(3,6,0,0),(6,0,0,0)\};$
\\$\bullet$ $\left[\begin{array}{cc|cc}
    1& 0& 0 &x\\
    0 &1 &0& y\\  
    0 &0& 1& z
\end{array}\right]$ with $(x,y,z)\in\{(0,0,1),(1,0,1),(0,1,0),(1,2,0),(0,0,0)\};$
\\$\bullet$ $\left[\begin{array}{cc|cc}
    1& x& 0 &0\\  
    0 &y &1& 0\\ 
    0 &z& 0& 1
\end{array}\right]$ with $(x,y,z)\in\{(0,6,6),(6,0,0),(7,0,0),(0,0,6),(6,0,6),(1,0,3),(0,0,0)\};$ 
\item[$\blacklozenge$] 19 Euclidean $\mathbb{Z}_9\mathbb{Z}_3$-LCD codes of Lee distance 2 with generator matrices as listed below:
\\$\bullet$ $[1~0~|~ 0~ 2]$ and $[0~0~|~ 1~ 2];$
\\$\bullet$ $\left[\begin{array}{cc|cc}
    1& x& y &z\\
    0 &u &v& w
\end{array}\right]$ with $(x,y,z,u,v,w)\in\{(0,0,2,1,0,2),(0,1,1,1,1,0),(0,1,0,1,0,2),(0,2,1,1,2,1),\\(6,0,1,0,1,2),(3,0,2,3,1,2),(6,0,0,0,1,1),(0,0,1,6,1,0),(0,0,1,0,1,2),(8,0,1,0,1,2),\\(3,0,0,3,1,0),(8,0,0,0,1,1)\};$
\\$\bullet$  $\left[\begin{array}{cc|cc}
    x& y& 1 &0\\
    z &u &0& 1
\end{array}\right]$ with $(x,y,z,u)\in\{(3,6,6,3),(0,3,0,6)\};$
\\$\bullet$ $\left[\begin{array}{cc|cc}
    1& x& 0 &y\\
    0 &z &u& v\\  
    0 &w& r& 1
\end{array}\right]$ with $(x,y,z,u,v,w,r)\in\{(0,1,1,0,1,0,1),(6,0,6,1,0,6,0),(1,0,6,1,0,3,0)\};$ 
\item[$\blacklozenge$] 12 Euclidean $\mathbb{Z}_9\mathbb{Z}_3$-LCD codes of Lee distance 3 with generator matrices as listed below:
\\$\bullet$ $[1~ x~|~ y~ z]$ with $(x,y,z)\in\{(4,0,0),(0,1,2),(6,0,0),(6,0,2),(6,1,2)\};$
\\$\bullet$ $\left[\begin{array}{cc|cc}
    1& x& y &z\\
    0 &u &v& w
\end{array}\right]$ with $(x,y,z,u,v,w)\in\{(0,1,2,1,1,1),(2,0,1,6,1,2),(3,0,1,3,1,2),(2,0,0,6,1,0),\\(3,0,1,3,1,0),(2,0,0,6,1,1),(3,0,0,3,1,1)\};$ 
\item[$\blacklozenge$] 5 Euclidean $\mathbb{Z}_9\mathbb{Z}_3$-LCD codes of Lee distance 4 with generator matrices $ [3~ 0~|~ 1~ 0],$ $ [1~ 4~|~ 0~ 2],$ $\left[\begin{array}{cc|cc}
    1& 2& 0 &0\\  
    0 &6 &1& 1
\end{array}\right],$ $\left[\begin{array}{cc|cc}
    6& 0& 1 &0\\
0 &3 &0& 1
\end{array}\right]$ and $\left[\begin{array}{cc|cc}
    6& 0& 1 &0\\
6 &6 &0& 1
\end{array}\right];$ 
\item[$\blacklozenge$] 3 Euclidean $\mathbb{Z}_9\mathbb{Z}_3$-LCD codes of Lee distance 5 with generator matrices 
$ [1~ 4~|~ 1~ 2],$ $ [3~ 0~|~ 1~ 2]$ and 
$\left[\begin{array}{cc|cc}
    3& 6& 1 &0\\
3 &3 &0& 1
\end{array}\right];$ 
\item[$\blacklozenge$] 1 Euclidean $\mathbb{Z}_9\mathbb{Z}_3$-LCD code of Lee distance 7 with a generator matrix $ [6~ 3~|~ 1~ 0];$ and
\item[$\blacklozenge$] 1 Euclidean $\mathbb{Z}_9\mathbb{Z}_3$-LCD code of Lee distance 8 with a generator matrix $[6~ 3~|~ 1~ 2].$
\end{enumerate}
\item[V.] There are precisely 53 monomially inequivalent non-zero Euclidean $\mathbb{Z}_9\mathbb{Z}_3$-LCD codes of block-length $(3,1).$ Among these codes, there are
\vspace{-2mm}\begin{enumerate}
\item[$\blacklozenge$] 25 Euclidean $\mathbb{Z}_9\mathbb{Z}_3$-LCD codes of Lee distance 1 with generator matrices as listed below:
\\$\bullet$ $[0~0~0~|1],$ $[1~0~0~|~0]$ and $\left[\begin{array}{c|c}
      I_3&0  \\
     0&1
 \end{array}\right];$
\\$\bullet$ $\left[\begin{array}{ccc|c}
1&x&y&0\\
0&z&u&1
\end{array}\right]$ with $(x,y,z,u)\in\{(0,7,0,0),(4,6,0,0),(3,0,0,0),(0,0,0,0),(3,6,0,0)\};$
\\$\bullet$ $\left[\begin{array}{ccc|c}
1&0&x&y\\
0&1&z&u
\end{array}\right]$ with $(x,y,z,u)\in\{(0,0,0,2),(6,1,0,0),(2,0,0,0),(0,0,0,0),(6,0,0,0),(2,1,0,0)\};$
\\$\bullet$ $\left[\begin{array}{ccc|c}
1&0&x&0\\
0&1&y&0\\
0&0&z&1
\end{array}\right]$ with $(x,y,z)\in\{(6,6,0),(0,3,3),(8,3,0),(0,6,0),(0,0,0),(4,0,3),(0,2,0),(0,0,6)\};$
\\$\bullet$  $\left[\begin{array}{ccc|c}
1&0&0&x\\
0&1&0&y\\
0&0&1&z
\end{array}\right]$ with $(x,y,z)\in\{(1,2,0),(0,0,0),(0,2,0)\};$   
\item[$\blacklozenge$] 14 Euclidean $\mathbb{Z}_9\mathbb{Z}_3$-LCD codes of Lee distance 2 with generator matrices as listed below:
\\$\bullet$ $[1~x~y~|~z]$ with $(x,y,z)\in\{(0,0,2),(1,0,0)\};$
\\$\bullet$ $\left[\begin{array}{ccc|c}
1&0&x&y\\
0&1&z&u
\end{array}\right]$ with $(x,y,z,u)\in\{(6,1,0,2),(3,0,3,0),(2,0,0,2),(3,2,3,2),(8,0,3,0),(2,1,0,2),\\(8,0,3,2),(6,0,0,2),(0,1,0,2)\};$
\\$\bullet$ $\left[\begin{array}{ccc|c}
1&0&x&y\\
0&1&z&u\\
0&0&v&1
\end{array}\right]$ with $(x,y,z,u,v)\in\{(8,0,3,0,6),(6,0,3,0,3),(0,1,0,1,1)\};$ 
\item[$\blacklozenge$] 9 Euclidean $\mathbb{Z}_9\mathbb{Z}_3$-LCD codes of Lee distance 3 with generator matrices as listed below:
\\$\bullet$ $[1~x~y~|~z]$ with $(x,y,z)\in\{(3,0,0),(3,0,2),(3,6,0),(3,6,2),(1,0,2)\};$
\\$\bullet$ $\left[\begin{array}{ccc|c}
1&0&x&y\\
0&1&z&u
\end{array}\right]$ with $(x,y,z,u)\in\{(3,1,3,2),(8,1,3,0),(8,1,3,2),(3,0,3,2)\};$
\item[$\blacklozenge$] 1 Euclidean $\mathbb{Z}_9\mathbb{Z}_3$-LCD code of Lee distance 4 with a generator matrix $[0~3~0~|~1];$
\item[$\blacklozenge$] 2 Euclidean $\mathbb{Z}_9\mathbb{Z}_3$-LCD codes of Lee distance 6 with generator matrices $[1~6~7~|~0]$ and $[1~6~7~|~2];$  

\item[$\blacklozenge$] 1 Euclidean $\mathbb{Z}_9\mathbb{Z}_3$-LCD code of Lee distance 7 with a generator matrix $[6~6~0~|~1];$ and
\item[$\blacklozenge$] 1 Euclidean $\mathbb{Z}_9\mathbb{Z}_3$-LCD code of Lee distance 10 with a generator matrix $[6~6~3~|~1].$
\end{enumerate}
\item[VI.] There are precisely 336 monomially inequivalent non-zero Euclidean $\mathbb{Z}_9\mathbb{Z}_3$-LCD codes of block-length $(3,2).$ Among these codes, there are
\vspace{-2mm}\begin{enumerate}
\item[$\blacklozenge$] 120 Euclidean $\mathbb{Z}_9\mathbb{Z}_3$-LCD codes of Lee distance 1 whose generator matrices are listed below:

$\bullet$ $[0~0~0~|~1~0],$ $[1~0~0~|~0~0],$ $\left[\begin{array}{c|c}
I_3&0\\
0&I_2
\end{array}\right];$ 
\\$\bullet$ $\left[\begin{array}{ccc|cc}
x&y&z&1&0\\
0&0&0&0&1
\end{array}\right]$ with $(x,y,z)\in\{(6,0,0),(0,3,6),(3,3,3),(0,0,0)\};$
\\ $\bullet$ $\left[\begin{array}{ccc|cc}
1&x&y&0&z\\
0&u&v&1&w
\end{array}\right]$ with $(x,y,z,u,v,w)\in\{(0,0,0,6,6,0),(0,0,0,0,0,0),(0,0,0,0,6,1),\\(0,0,0,6,6,1),(3,6,0,0,0,0),(0,0,0,0,0,1),(3,0,0,0,0,0),(0,0,1,0,0,0),(0,7,0,0,0,0),\\(4,6,0,0,0,0),(3,0,1,0,0,0),(0,7,1,0,0,0),(0,0,0,0,6,0),(3,6,1,0,0,0)\};$
\\ $\bullet$ $\left[\begin{array}{ccc|cc}
1&0&x&y&z\\
0&1&0&0&u
\end{array}\right]$ with $(x,y,z,u)\in\{(0,0,0,1),(0,1,2,0),(0,0,0,0),(2,0,1,0),(6,0,1,0),\\(2,0,0,0),(6,0,0,0),(2,1,2,0),(6,1,2,0)\};$
\\$\bullet$ $\left[\begin{array}{ccc|cc}
1&0&x&0&y\\
0&1&z&0&u\\
0&0&v&1&w
\end{array}\right]$ with $(x,y,z,u,v,w)\in\{(0,0,3,1,3,0),(0,0,2,0,0,1),(0,2,6,0,0,0),\\
(0,0,0,1,0,0),(6,0,6,0,0,0),(0,0,6,1,0,1),(6,1,6,1,0,0),(8,2,3,0,0,0),(0,0,3,1,3,1),\\(4,2,0,0,3,0),(0,0,6,0,0,0),(0,2,2,0,0,0),(0,0,0,0,6,0),(0,0,3,0,3,0),(0,0,0,0,0,0),\\(0,0,2,1,0,1),(6,1,6,2,0,0),(0,0,6,0,0,1),(0,0,0,0,6,1),(0,0,3,0,3,1),(0,1,6,2,0,0),\\(0,0,0,0,0,1),(4,0,0,0,3,1),(0,1,0,2,0,0),(0,0,2,0,0,1),(8,1,3,2,0,0),(8,0,3,1,0,0),\\(0,1,2,2,0,0),(0,0,2,1,0,0),(0,0,3,1,3,2),(6,0,6,1,0,0),(0,1,0,0,0,1), (4,1,0,0,3,1),\\(0,0,6,1,0,0),(0,0,0,1,6,0),(8,0,3,0,0,0),(4,0,0,0,3,0)\};$
\\$\bullet$  $\left[\begin{array}{ccc|cc}
1 & x & y & 0 & 0 \\
0 & z & u & 1 & 0 \\
0 & v & w & 0 & 1
\end{array}\right]$ with $(x,y,z,u,v,w)\in\{(0,0,0,6,6,6),(6,6,0,6,0,0),(3,0,0,0,0,0),\\(0,8,0,6,0,0),(3,3,0,0,6,3),(0,0,6,3,6,3),(0,1,0,0,0,0),(0,0,0,0,0,0),(0,0,3,0,0,0),\\(0,5,6,6,0,0),(2,6,0,0,0,0),(6,7,0,0,3,0),(0,0,3,3,6,3),(3,6,0,0,3,6),(6,8,0,0,3,6),\\(0,0,0,3,3,0),(0,3,6,0,0,0),(3,3,0,0,0,0),(0,0,0,0,3,3),(0,0,0,6,0,6),(6,0,6,0,0,0),\\(2,6,0,0,6,0),(3,0,3,6,0,0),(0,1,0,0,6,0),(0,0,0,6,6,6)\};$ 
\\$\bullet$ $\left[\begin{array}{ccc|cc}
1 & 0 & 0 & x & y \\
0 & 1 & 0 & z & u \\
0 & 0 & 1 & v & w
\end{array}\right]$ with $(x,y,z,u,v,w)\in\{(0,0,0,0,1,2),(0,0,1,2,2,1),(0,0,0,0,0,0),\\(0,1,0,0,0,0),(1,2,0,0,0,2),(0,1,0,2,0,0),(0,0,0,2,1,0),(0,0,2,2,1,2)\};$
\\$\bullet$  $\left[\begin{array}{ccc|cc}
1 & 0 & x & 0 & 0 \\
0 & 1 & y & 0 & 0 \\
0 & 0 & z & 1 & 0 \\
0 & 0 & u & 0 & 1
\end{array}\right]$ with $(x,y,z,u)\in\{(5,6,0,0),(3,0,0,0),(0,6,6,3),(0,0,0,6),(0,1,0,0),\\(3,6,6,0),(0,0,3,6),(0,6,0,3),(6,5,6,0),(0,0,0,0),(3,6,0,0),(2,0,0,3),(0,2,3,6)\};$
\\$\bullet$ $\left[\begin{array}{ccc|cc}
1 & 0 & 0 & 0 & x \\
0 & 1 & 0 & 0 & y \\
0 & 0 & 1 & 0 & z \\
0 & 0 & 0 & 1 & u
\end{array}\right]$ with $(x,y,z,u)\in\{(0,0,0,0),(0,0,0,2),(1,0,0,2),(2,2,1,0),(0,2,1,2),\\(1,0,2,0),(0,2,0,0)\};$
\item[$\blacklozenge$] 101 Euclidean $\mathbb{Z}_9\mathbb{Z}_3$-LCD codes of Lee distance 2 whose generator matrices are listed below:
\\$\bullet$ $[0~0~0~|~1~2],$ $[1~0~0~|~0~2]$ and $[1~1~0~|~0~0];$ 
\\$\bullet$ $\left[\begin{array}{ccc|cc}
3&0&6&1&0\\
6&0&3&0&1
\end{array}\right],$ $\left[\begin{array}{ccc|cc}
0&0&3&1&0\\
0&0&3&0&1
\end{array}\right]$ and $\left[\begin{array}{ccc|cc}
3&6&6&1&0\\
6&3&3&0&1
\end{array}\right];$ \\
$\bullet$ $\left[\begin{array}{ccc|cc}
1&x&y&0&z\\
0&u&v&1&w
\end{array}\right]$ with $(x,y,z,u,v,w)\in\{(0,7,1,0,0,2),(6,3,0,3,6,0),(0,6,0,0,3,0),\\(4,6,1,0,0,2),(0,0,1,0,6,0),(3,6,0,0,0,1),(0,0,1,6,6,2),(3,0,0,0,0,1),(0,0,1,0,0,2),\\(0,7,0,0,0,1),(0,0,1,6,6,0),(0,6,1,0,3,2),(3,6,1,0,0,2),(4,6,1,0,0,1),(3,0,1,0,0,2)\};$    \\$\bullet$ $\left[\begin{array}{ccc|cc}
1&0&x&y&z\\
0&1&u&v&w
\end{array}\right]$ with $(x,y,z,u,v,w)\in\{(0,2,1,0,2,1),(6,0,2,0,0,2),(3,0,2,3,0,2),\\(8,0,0,3,0,1),(6,1,1,0,1,0),(2,1,0,0,0,2),(6,1,0,0,0,2),(3,0,0,3,0,0),(8,0,0,3,0,0),\\(0,0,2,0,0,2),(2,0,0,0,0,1),(2,0,2,0,0,2),(6,0,0,0,0,1),(0,1,1,0,1,0),(0,1,0,0,0,2),\\(8,0,0,3,1,2),(3,2,1,3,2,1),(2,1,1,0,1,0)\};$   
\\$\bullet$  $\left[\begin{array}{ccc|cc}
1 & 0 & x & 0 & y \\
0 & 1 & z & 0 & u \\
0 & 0 & v & 1 & w
\end{array}\right]$ with $(x,y,z,u,v,w)\in\{(6,1,6,2,0,1),(0,2,3,0,3,0),(8,2,3,0,6,0),\\(6,0,3,0,3,0),(0,1,3,1,3,2),(8,0,3,1,6,1),(8,0,3,0,6,0),(6,1,3,2,3,2),(8,0,3,1,0,1),\\(4,0,0,1,3,1),(6,0,3,0,3,1),(6,1,6,1,0,1),(6,1,3,2,3,0),(8,0,3,0,6,1),(0,1,6,1,0,1),\\(0,1,3,1,3,1),(0,1,0,2,6,0),(0,1,3,2,3,0),(8,0,3,0,0,1),(0,1,0,1,0,1),(8,0,3,1,6,0),\\(8,1,3,1,0,1),(6,1,6,0,0,1),(4,1,0,1,3,1),(6,1,3,2,3,1),(0,1,2,1,0,1),(4,1,0,2,3,0),\\(4,0,0,1,3,0),(6,0,6,0,0,1),(0,1,6,0,0,1),(0,1,3,0,3,1),(6,0,3,1,3,0),(8,1,3,0,0,1),\\(0,1,2,0,0,1)\};$
\\$\bullet$ $\left[\begin{array}{ccc|cc}
1 & x & y & 0 & 0 \\
0 & z & u & 1 & 0 \\
0 & v & w & 0 & 1
\end{array}\right]$ with $(x,y,z,u,v,w)\in\{(6,6,3,6,3,6),(6,3,6,3,6,6),(0,3,6,0,6,0),\\(6,0,6,0,6,0),(8,0,6,3,3,0),(6,6,0,6,0,3),(3,0,3,6,6,3),(3,4,0,3,0,6),(3,8,6,0,6,0),\\(1,6,6,3,3,6),(0,6,6,6,0,6),(0,5,3,0,3,0),(6,6,6,6,3,3),(3,3,6,0,3,3),(6,0,6,0,0,3),\\(7,0,3,6,6,3),(1,0,3,0,6,0)\};$
\\$\bullet$ $\left[\begin{array}{ccc|cc}
1 & 0 & 0 & x & y \\
0 & 1 & 0 & z & u \\
0 & 0 & 1 & v & w
\end{array}\right]$ with $(x,y,z,u,v,w)\in\{(2,1,2,1,1,2),(1,0,0,2,0,2),(1,0,2,0,2,0),\\(2,2,2,2,1,2),(1,2,0,1,0,1),(1,1,1,1,0,2),(1,2,0,2,2,2),(0,2,2,2,1,0)\};$ 
\\$\bullet$  $\left[\begin{array}{ccc|cc}
1 & 0 & 3 & 0 & 0 \\
0 & 1 & 7 & 0 & 0 \\
0 & 0 & 6 & 1 & 0 \\
0 & 0 & 6 & 0 & 1
\end{array}\right],$ 
$\left[\begin{array}{ccc|cc}
1 & 0 & 3 & 0 & 0 \\
0 & 1 & 6 & 0 & 0 \\
0 & 0 & 6 & 1 & 0 \\
0 & 0 & 6 & 0 & 1
\end{array}\right]$ and $\left[\begin{array}{ccc|cc}
1 & 0 & 0 & 0 & 1 \\
0 & 1 & 0 & 0 & 2 \\
0 & 0 & 1 & 0 & 1 \\
0 & 0 & 0 & 1 & 1
\end{array}\right];$
\item[$\blacklozenge$] 77 Euclidean $\mathbb{Z}_9\mathbb{Z}_3$-LCD codes of Lee distance 3 whose generator matrices are listed below:
\\$\bullet$ $[1~x~y~|~z~u]$ with $(x,y,z,u)\in\{(3,6,1,2),(1,0,0,2),(3,6,0,0),(0,0,1,2),(3,0,0,2),(3,6,0,2),\\(3,0,1,2),(3,0,0,0)\};$
\\$\bullet$ $\left[\begin{array}{ccc|cc}
1&x&y&0&z\\
0&u&v&1&w
\end{array}\right]$ with $(x,y,z,u,v,w)\in\{(6,0,1,3,6,2),(6,6,1,3,6,0),(0,5,0,3,6,0),\\(3,3,0,6,0,0),(6,0,1,3,6,0),(0,6,0,6,0,1),(6,0,1,3,6,1),(0,5,0,3,6,1),(3,3,0,6,0,1),\\(6,6,0,3,6,0),(6,3,0,3,6,1),(0,6,0,0,3,1),(7,0,0,0,6,0),(4,6,1,0,0,0),(0,6,1,0,3,2),\\(3,3,1,6,0,0),(6,0,0,3,6,0),(0,6,1,6,0,2),(7,0,0,0,6,1),(6,0,0,3,6,1),(2,0,0,3,0,0),\\(0,6,1,6,0,0),(6,3,1,3,6,0),(0,6,1,0,3,0),(2,0,0,3,0,1),(3,3,2,6,0,2),(6,3,2,3,6,2),\\(6,6,0,3,6,1),(0,6,0,6,0,0)\};$
\\$\bullet$  $\left[\begin{array}{ccc|cc}
1&0&x&y&z\\
0&1&u&v&w
\end{array}\right]$ with $(x,y,z,u,v,w)\in\{(8,1,2,3,1,1),(8,1,0,3,0,2),(2,0,0,0,1,2),\\(3,1,2,3,0,0),(6,0,0,0,1,1),(2,2,1,0,2,1),(3,1,2,3,2,1),(3,0,2,3,0,1),(2,0,2,0,2,1),\\(6,0,2,0,2,1),(8,0,2,3,0,2),(2,1,2,0,1,1),(6,1,2,0,1,1),(8,0,1,3,0,0),(3,0,2,3,2,1),\\(3,1,2,3,1,1),(3,1,1,3,1,0),(8,1,1,3,1,0),(3,1,0,3,0,2),(8,1,2,3,0,0),(6,2,1,0,2,1),\\(0,1,2,0,1,1),(3,0,0,3,0,1),(8,2,1,3,2,1),(8,0,2,3,2,1)\};$
\\$\bullet$ $\left[\begin{array}{ccc|cc}
1 & 0 & x & 0 & y \\
0 & 1 & z & 0 & u \\
0 & 0 & v & 1 & w
\end{array}\right]$ with $(x,y,z,u,v,w)\in\{(6,1,3,1,3,0),(8,1,3,1,6,1),(8,1,3,2,6,0),\\(6,1,3,0,3,1),(8,1,3,0,6,1)\};$
\\$\bullet$ $\left[\begin{array}{ccc|cc}
1 & x & y & 0 & 0 \\
0 & z & u & 1 & 0 \\
0 & v & w & 0 & 1
\end{array}\right]$ with $(x,y,z,u,v,w)\in\{(6,5,3,6,3,3),(6,0,3,3,6,3),(0,7,3,3,3,6),\\(3,6,0,3,3,0),(5,0,0,3,3,0),(6,7,0,6,6,6),(3,6,0,3,3,3),(0,6,3,0,3,6),(4,6,6,3,0,6),\\(7,0,6,3,0,6)\};$
\item[$\blacklozenge$] 19 Euclidean $\mathbb{Z}_9\mathbb{Z}_3$-LCD codes of Lee distance 4 whose generator matrices  are listed below:
\\$\bullet$ $[x~y~z~|~1~u]$ with $(x,y,z,u)\in\{(0,3,0,2),(0,3,0,0),(1,1,0,2)\};$
\\$\bullet$ $\left[\begin{array}{ccc|cc}
x&y&z&1&0\\
u&v&w&0&1
\end{array}\right]$ with $(x,y,z,u,v,w)\in\{(6,0,6,0,3,0),(0,0,6,3,3,6),(0,0,3,6,0,6),\\(0,0,3,3,0,0)\};$
\\$\bullet$ $\left[\begin{array}{ccc|cc}
1&x&y&0&z\\
0&u&v&1&w
\end{array}\right]$ with $(x,y,z,u,v,w)\in\{(7,0,1,0,6,0),(2,0,1,3,0,2),(2,3,1,0,6,0),\\(2,6,1,3,0,0),(2,0,1,3,0,0),(2,3,0,0,6,0),(2,6,0,3,0,0),(0,5,1,3,6,2),(6,8,0,3,3,0),\\(0,5,1,3,6,0),(7,0,1,0,6,2)\};$
\\$\bullet$ $\left[\begin{array}{ccc|cc}
1 & 2 & 3 & 0 & 0 \\
0 & 3 & 0 & 1 & 0 \\
0 & 0 & 3 & 0 & 1
\end{array}\right];$ 

\item[$\blacklozenge$] 10 Euclidean $\mathbb{Z}_9\mathbb{Z}_3$-LCD codes of Lee distance 5 whose generator matrices are listed below:
\\$\bullet$ $\left[\begin{array}{ccc|cc}
x&y&z&1&0\\
u&v&w&0&1
\end{array}\right]$ with $(x,y,z,u,v,w)\in\{(3,3,3,3,6,3),(3,0,3,3,0,6),(3,0,6,3,6,6)\};$
\\$\bullet$ $\left[\begin{array}{ccc|cc}
1&x&y&0&z\\
0&u&v&1&w
\end{array}\right]$ with $(x,y,z,u,v,w)\in\{(2,6,1,3,0,2),(2,3,1,0,6,1),(6,8,1,3,3,0),\\(6,8,1,3,3,2),(2,3,0,0,6,1),(2,6,0,3,0,1),(6,8,0,3,3,1)\};$

\item[$\blacklozenge$] 3 Euclidean $\mathbb{Z}_9\mathbb{Z}_3$-LCD codes of Lee distance 6 whose generator matrices are listed below:
\\$\bullet$ $[1~6~7~|~x~y]$ with $(x,y)\in\{(0,0),(0,2),(1,2)\};$

\item[$\blacklozenge$] 3 Euclidean $\mathbb{Z}_9\mathbb{Z}_3$-LCD codes of Lee distance 7 with generator matrices $[6~6~0~|~1~0],$ $\left[\begin{array}{ccc|cc}
3&0&6&1&0\\
6&6&6&0&1
\end{array}\right]$ and $\left[\begin{array}{ccc|cc}
3&6&0&1&0\\
3&0&3&0&1
\end{array}\right];$

\item[$\blacklozenge$] 1 Euclidean $\mathbb{Z}_9\mathbb{Z}_3$-LCD code of Lee distance 8 with a generator matrices $[6~6~0~|~1~2];$

\item[$\blacklozenge$] 1 Euclidean $\mathbb{Z}_9\mathbb{Z}_3$-LCD code of Lee distance 10 with a generator matrix $[6~6~3~|~1~0];$ and

\item[$\blacklozenge$] 1 Euclidean $\mathbb{Z}_9\mathbb{Z}_3$-LCD code of Lee distance 11 with a generator matrix $[6~6~3~|~1~2].$
\end{enumerate}\end{itemize}
\noindent \textbf{\large C.} \textbf{\large Classification of  Hermitian $\frac{\mathbb{F}_{4}[u]}{\langle u^2\rangle}\;\mathbb{F}_{4}$-LCD codes of block-length $(\mathpzc{a},\mathpzc{b}),$ where $\mathpzc{a} ,\mathpzc{b} \in \{1,2\}$}\\

Finally, we will classify all Hermitian $\frac{\mathbb{F}_{4}[u]}{\langle u^2\rangle}\;\mathbb{F}_{4}$-LCD codes of block-length $(\mathpzc{a},\mathpzc{b})$ up to monomial equivalence, where $\mathpzc{a},\mathpzc{b}\in\{1,2\}.$  To do this, we first recall the  Lee weight $w_L$ on $\mathbb{F}_{4}[u]/\langle u^2\rangle$ defined by Alahmadi \etal~\cite{Leeweight}. The Lee weight $w_L$ on $\mathbb{F}_{4}[u]/\langle u^2\rangle$ is defined as $w_L(a_0+a_1u)=w_H(a_0+a_1)+w_H(a_1)$
for all $a_0+a_1u\in \mathbb{F}_{4}[u]/\langle u^2\rangle.$   It can be extended to the Lee weight on $\frac{\mathbb{F}_{4}[u]}{\langle u^2\rangle}^{\mathpzc{a}} \oplus \mathbb{F}_{4}^{\mathpzc{b}}$ as $w_L(m)=\sum\limits_{i=1}^{\mathpzc{a}}w_{L}(c_i)+\sum\limits_{j=1}^{\mathpzc{b}}w_H(d_j)$ for all $m=(c_1,c_2,\ldots,c_{\mathpzc{a}}|d_1,d_2,\ldots,d_{\mathpzc{b}}) \in \frac{\mathbb{F}_{4}[u]}{\langle u^2\rangle}^{\mathpzc{a}} \oplus \mathbb{F}_{4}^{\mathpzc{b}}.$ The Lee distance of a $\frac{\mathbb{F}_{4}[u]}{\langle u^2\rangle}\;\mathbb{F}_{4}$-linear code $\mathtt{C}$ of block-length $(\mathpzc{a},\mathpzc{b})$ is defined as $d_L(\mathtt{C})=\min\{w_L(c): c (\neq 0)\in \mathtt{C}\}.$ 

\begin{itemize}\item[I.] There are precisely 5 monomially inequivalent non-zero Hermitian $\frac{\mathbb{F}_4[u]}{\langle u^2\rangle}\;\mathbb{F}_4$-LCD codes of block-length $(1,1).$ Among these codes, there are
\vspace{-2mm}\begin{enumerate}
\item[$\blacklozenge$] 3 Hermitian $\frac{\mathbb{F}_4[u]}{\langle u^2\rangle}\;\mathbb{F}_4$-LCD codes of Lee distance 1 with generator matrices $[1~|~0],$ $[0~|~1]$ and $\left[\begin{array}{c|c}
1&0 \\
0&1
\end{array}\right];$
\item[$\blacklozenge$] 1 Hermitian $\frac{\mathbb{F}_4[u]}{\langle u^2\rangle}\;\mathbb{F}_4$-LCD code of Lee distance 2 with a generator matrix $[1~|~1+\xi];$ and
\item[$\blacklozenge$] 1 Hermitian $\frac{\mathbb{F}_4[u]}{\langle u^2\rangle}\;\mathbb{F}_4$-LCD code of Lee distance 3 with a generator matrix $[(1+\xi)u~|~1].$ 
% \item[$\blacklozenge$] 2 Hermitian $\mathbb{F}_4[u]/\langle u^2\rangle\mathbb{F}_4$-LCD codes of type $\{1,0;0\}$ with generator matrices $[1~|~1+\xi]$ and $[1~|~0];$
% \item[$\blacklozenge$] 2 Hermitian $\mathbb{F}_4[u]/\langle u^2\rangle\mathbb{F}_4$-LCD codes of type $\{0,0;1\}$ with generator matrices
% $[(1+\xi)u~|~1]$ and $[0~|~1];$
% \item[$\blacklozenge$] 1 Hermitian $\mathbb{F}_4[u]/\langle u^2\rangle\mathbb{F}_4$-LCD codes of type $\{1,0;1\}$ with a generator matrix
% $\left[\begin{array}{c|c}
% 1&0 \\
% 0&1
% \end{array}\right].$
\end{enumerate}
\item[II.] There are precisely 11 monomially inequivalent non-zero Hermitian $\frac{\mathbb{F}_4[u]}{\langle u^2\rangle}\;\mathbb{F}_4$-LCD codes of block-length $(1,2).$ Among these codes, there are
\vspace{-2mm}\begin{enumerate}
\item[$\blacklozenge$] 7 Hermitian $\frac{\mathbb{F}_4[u]}{\langle u^2\rangle}\;\mathbb{F}_4$-LCD codes of Lee distance 1 with generator matrices  
$[ 1~|~0~0],$ 
 $[ 0~|~1~0],$  
$\left[\begin{array}{c|cc}
1&0&1+\xi \\
0&1&0
\end{array}\right],$ 
$\left[\begin{array}{c|cc}
1&0&0 \\
0&1&0
\end{array}\right],$ $\left[\begin{array}{c|cc}
0&1&0 \\
0&0&1
\end{array}\right],$ $\left[\begin{array}{c|cc}
\xi u&1&0 \\
0&0&1
\end{array}\right]$ and $\left[\begin{array}{c|c}
     1&0  \\
     0&I_2
\end{array}\right];$
\item[$\blacklozenge$] 3 Hermitian $\frac{\mathbb{F}_4[u]}{\langle u^2\rangle}\;\mathbb{F}_4$-LCD codes of Lee distance 2 with  generator matrices $[ 1~|~0~\xi],$ $[ 1~|~\xi~\xi]$ and $\left[\begin{array}{c|cc}
u&1&0 \\
u&0&1
\end{array}\right];$ and 
\item[$\blacklozenge$] 1 Hermitian $\frac{\mathbb{F}_4[u]}{\langle u^2\rangle}\;\mathbb{F}_4$-LCD code of Lee distance 3 with a generator matrix $[u
~|~1~0].$
% \item[$\blacklozenge$] 2 Hermitian $\mathbb{F}_4[u]/\langle u^2\rangle\mathbb{F}_4$-LCD codes of type $\{1,0;1\}$ with generator matrices
% $\left[\begin{array}{c|cc}
% 1&0&1+\xi \\
% 0&1&0
% \end{array}\right]$ and 
% $\left[\begin{array}{c|cc}
% 1&0&0 \\
% 0&1&0
% \end{array}\right];$
% \item[$\blacklozenge$] 3 Hermitian $\mathbb{F}_4[u]/\langle u^2\rangle\mathbb{F}_4$-LCD codes of type $\{1,0;0\}$ with generator matrices
% $[ 1~|~0~\xi],$ 
% $[ 1~|~0~0]$ and 
% $[ 1~|~\xi~\xi];$
% \item[$\blacklozenge$] 3 Hermitian $\mathbb{F}_4[u]/\langle u^2\rangle\mathbb{F}_4$-LCD codes of type $\{0,0;2\}$ with generator matrices
% $\left[\begin{array}{c|cc}
% u&1&0 \\
% u&0&1
% \end{array}\right],$ $\left[\begin{array}{c|cc}
% 0&1&0 \\
% 0&0&1
% \end{array}\right]$ and $\left[\begin{array}{c|cc}
% \xi u&1&0 \\
% 0&0&1
% \end{array}\right];$
% \item[$\blacklozenge$] 2 Hermitian $\mathbb{F}_4[u]/\langle u^2\rangle\mathbb{F}_4$-LCD codes of type $\{0,0;1\}$ with generator matrices $[u
% ~|~1~0]$ and $[0
% ~|~1~0]$
% \item[$\blacklozenge$] 1 Hermitian $\mathbb{F}_4[u]/\langle u^2\rangle\mathbb{F}_4$-LCD code of type $\{1,0;2\}$ with a generator matrix $\left[\begin{array}{c|c}
%      1&0  \\
%      0&I_2
% \end{array}\right].$ 
\end{enumerate}
\item[III.] There are precisely 15 monomially inequivalent non-zero Hermitian $\frac{\mathbb{F}_4[u]}{\langle u^2\rangle}\;\mathbb{F}_4$-LCD codes of block-length $(2,1).$ Among these codes, there are
\vspace{-2mm}\begin{enumerate}
\item[$\blacklozenge$] 8 Hermitian $\frac{\mathbb{F}_4[u]}{\langle u^2\rangle}\;\mathbb{F}_4$-LCD codes of Lee distance 1 with generator matrices   
$[1~0~|~0],$  $[0~0~|~1],$  
$\left[\begin{array}{cc|c}
1&0&0 \\
0&1&0
\end{array}\right],$ $\left[\begin{array}{cc|c}
1&0&0 \\
0&1&1
\end{array}\right],$ $\left[\begin{array}{cc|c}
1&(1+\xi)u&0 \\
0&0&1
\end{array}\right],$ $\left[\begin{array}{cc|c}
1&0&0 \\
0&u&1
\end{array}\right],$ $\left[\begin{array}{cc|c}
1&0&0 \\
0&0&1
\end{array}\right]$  and $\left[\begin{array}{c|cc}
     I_2&0  \\
     0&1
\end{array}\right];$  
\item[$\blacklozenge$] 5 Hermitian $\frac{\mathbb{F}_4[u]}{\langle u^2\rangle}\;\mathbb{F}_4$-LCD code of Lee distance 2 with generator matrices $[ 1~u~|~0],$ $[1~u~|~1+\xi],$ $[1~0~|~1+\xi],$ $\left[\begin{array}{cc|c}
1&u&0 \\
0&u&1
\end{array}\right]$ and $\left[\begin{array}{cc|c}
1&0&1 \\
0&1&\xi
\end{array}\right];$
\item[$\blacklozenge$] 1 Hermitian $\frac{\mathbb{F}_4[u]}{\langle u^2\rangle}\;\mathbb{F}_4$-LCD code of Lee distance 3 with a generator matrix $[\xi u~0~|~1];$ and
\item[$\blacklozenge$] 1 Hermitian $\frac{\mathbb{F}_4[u]}{\langle u^2\rangle}\;\mathbb{F}_4$-LCD code of Lee distance 5 with a generator matrix $[u~u~|~1].$
% \item[$\blacklozenge$] 3 Hermitian $\mathbb{F}_4[u]/\langle u^2\rangle\mathbb{F}_4$-LCD codes of type $\{2,0;0\}$ with generator matrices
% $\left[\begin{array}{cc|c}
% 1&0&0 \\
% 0&1&0
% \end{array}\right],$ $\left[\begin{array}{cc|c}
% 1&0&0 \\
% 0&1&1
% \end{array}\right]$ and $\left[\begin{array}{cc|c}
% 1&0&1 \\
% 0&1&\xi
% \end{array}\right];$
% \item[$\blacklozenge$] 4 Hermitian $\mathbb{F}_4[u]/\langle u^2\rangle\mathbb{F}_4$-LCD codes of type $\{1,0;1\}$ with generator matrices
% $\left[\begin{array}{cc|c}
% 1&(1+\xi)u&0 \\
% 0&0&1
% \end{array}\right],$ $\left[\begin{array}{cc|c}
% 1&0&0 \\
% 0&u&1
% \end{array}\right],$ $\left[\begin{array}{cc|c}
% 1&0&0 \\
% 0&0&1
% \end{array}\right]$ and $\left[\begin{array}{cc|c}
% 1&u&0 \\
% 0&u&1
% \end{array}\right];$
% \item[$\blacklozenge$] 4 Hermitian $\mathbb{F}_4[u]/\langle u^2\rangle\mathbb{F}_4$-LCD codes of type $\{1,0;0\}$ with generator matrices
% $[ 1~u~|~0],$ $[1~0~|~1+\xi],$ 
% $[1~0~|~0]$ and 
% $[1~u~|~1+\xi];$
% \item[$\blacklozenge$] 3 Hermitian $\mathbb{F}_4[u]/\langle u^2\rangle\mathbb{F}_4$-LCD codes of type $\{0,0;1\}$ with generator matrices
% $[\xi u~0~|~1],$
% $[0~0~|~1]$ and $[u~u~|~1];$
% \item[$\blacklozenge$] 1 Hermitian $\mathbb{F}_4[u]/\langle u^2\rangle\mathbb{F}_4$-LCD code of type $\{2,0;1\}$ with a generator matrix $\left[\begin{array}{c|cc}
%      I_2&0  \\
%      0&1
% \end{array}\right].$ 
\end{enumerate}
\item[IV.] There are precisely 43 monomially inequivalent non-zero Hermitian $\frac{\mathbb{F}_4[u]}{\langle u^2\rangle}\;\mathbb{F}_4$-LCD codes of block-length $(2,2).$ Among these codes, there are
\vspace{-2mm}\begin{enumerate}
\item[$\blacklozenge$] 22 Hermitian $\frac{\mathbb{F}_4[u]}{\langle u^2\rangle}\;\mathbb{F}_4$-LCD codes of Lee distance 1 whose generator matrices are listed below:
\\$\bullet$ $[1~0~|~0~0],$ $[0~0~|~1~0]$ and $\left[\begin{array}{c|c}
     I_2&0  \\
     0&I_2
\end{array}\right];$
\\$\bullet$  $\left[\begin{array}{cc|cc}
x&y&1&0 \\
z&v&0&1 
\end{array}\right]$ with $(x,y,z,v)\in\{(0,0,0,0),((1+\xi)u,u,0,0),(0,u,0,0)\};$
\\$\bullet$ $\left[\begin{array}{cc|cc}
1&x&y&z \\
0&u'&v&w 
\end{array}\right]$ with $(x,y,z,u',v,w)\in\{(0,0,0,u,1,0),(0,0,1+\xi,0,1,0),(0,0,0,0,1,0),\\((1+\xi)u,0,0,0,1,0),((1+\xi)u,0,1+\xi,0,1,0),(0,0,0,1,0,0),(0,0,\xi,1,0,0),(0,0,0,1,1+\xi,1)\};$
\\$\bullet$ $\left[\begin{array}{cc|cc}
1&0&0&x \\
0&1&0&y \\
0&0&1&z
\end{array}\right]$  with $(x,y,z)\in\{(0,0,0),(1,1+\xi,0),(0,1+\xi,0)\};$
\\$\bullet$ $\left[\begin{array}{cc|cc}
1&x&0&0 \\
0&y&1&0 \\
0&z&0&1
\end{array}\right]$ with $(x,y,z)\in\{(0,(1+\xi)u,0),((1+\xi)u,0,0),(0,u,u),(0,0,0),((1+\xi)u,\xi u,0)\};$ 
\item[$\blacklozenge$] 16 Hermitian $\frac{\mathbb{F}_4[u]}{\langle u^2\rangle}\;\mathbb{F}_4$-LCD codes of Lee distance 2 whose generator matrices are listed below:
\\$\bullet$ $[1~x~|~y~z]$ with $(x,y,z)\in\{(u,0,\xi),(0,0,\xi),(u,0,0),(u,\xi,\xi),(0,\xi,\xi)\};$
\\$\bullet$ $\left[\begin{array}{cc|cc}
x&y&1&0 \\
z&v&0&1 
\end{array}\right]$ with $(x,y,z,v)\in\{(\xi u,\xi u,u,u),(\xi u,0,u,0)\};$
\\$\bullet$ $\left[\begin{array}{cc|cc}
1&x&y&z \\
0&u'&v&w 
\end{array}\right]$ with $(x,y,z,u',v,w)\in\{(0,0,1+\xi,u,1,0),(0,0,\xi,1,\xi,0),(0,\xi,0,1,1,\xi),\\(0,\xi,\xi,1,1,1+\xi),(0,1+\xi,0,1,\xi,0),(u,0,0,u,1,0),(0,1,\xi,1,1+\xi,1),(u,0,1+\xi,u,1,0)\};$
\\$\bullet$ $\left[\begin{array}{cc|cc}
1&\xi u&0&0 \\
0&u&1&0 \\
0&u&0&1
\end{array}\right];$
\item[$\blacklozenge$] 3 Hermitian $\frac{\mathbb{F}_4[u]}{\langle u^2\rangle}\;\mathbb{F}_4$-LCD codes of Lee distance 3 with generator matrices  $[\xi u~0~|~1~0],$  $\left[\begin{array}{cc|cc}
\xi u&0&1&0 \\
0&u&0&1 
\end{array}\right]$ and $\left[\begin{array}{cc|cc}
u&(1+\xi)u&1&0 \\
(1+\xi)u&0&0&1 
\end{array}\right];$
\item[$\blacklozenge$] 1 Hermitian $\frac{\mathbb{F}_4[u]}{\langle u^2\rangle}\;\mathbb{F}_4$-LCD code of Lee distance 4 with a generator matrix $\left[\begin{array}{cc|cc}
\xi u&\xi u&1&0 \\
\xi u&(1+\xi)u&0&1 
\end{array}\right];$ and
\item[$\blacklozenge$] 1 Hermitian $\frac{\mathbb{F}_4[u]}{\langle u^2\rangle}\;\mathbb{F}_4$-LCD code of Lee distance 5 with a generator matrix $[u~u~|~1~0].$

\end{enumerate}

\end{itemize}

\end{document}